\newcommand{\figrule}{}
\def\withcolor{}
	 \definecolor{haskellblue}{rgb}{0.0, 0.0, 1.0}
	 \definecolor{haskellstr}{rgb}{0.2, 0.2, 0.6}
	 \definecolor{haskellred}{rgb}{1.0, 0.0, 0.0}
     \definecolor{gray_ulisses}{gray}{0.55}
     \definecolor{castanho_ulisses}{rgb}{0.71,0.33,0.14}
     \definecolor{preto_ulisses}{rgb}{0.41,0.20,0.04}
     \definecolor{green_ulises}{rgb}{0.2,0.75,0}
     \definecolor{haskelltypes}{rgb}{0.71,0.33,0.14}
     \definecolor{logiccolor}{rgb}{0.0, 0.0, 1.0}
	\definecolor{haskellblue}{gray}{0.1}
	\definecolor{haskellstr}{gray}{0.1}
	\definecolor{haskellred}{gray}{0.1}
	\definecolor{gray_ulisses}{gray}{0.1}
	\definecolor{castanho_ulisses}{gray}{0.1}
	\definecolor{preto_ulisses}{gray}{0.1}
	\definecolor{green_ulisses}{gray}{0.1}
	\definecolor{haskelltypes}{gray}{0.1}
    \definecolor{logiccolor}{gray}{0,1}
\definecolor{subtleOpHighlight}{rgb}{0.4, 0.2, 0.0}
\definecolor{lcolor}{gray}{0.0}
\definecolor{lappcolor}{gray}{0.0}
\definecolor{lappascolor}{gray}{0.0}
\def\codesize{\small}
\newcommand\showfocus[1]{\color{purple}{\textbf{#1}}}
\newcommand\showlogic[1]{\color{logiccolor}{#1}}
\lstdefinelanguage{HaskellUlisses} {
	basicstyle=\ttfamily\codesize,
	moredelim=[is][\showfocus]{\#}{\#},
	moredelim=[is][\showlogic]{!}{!},
	sensitive=true,
	% morecomment=[l][\color{gray_ulisses}\ttfamily\itshape\codesize]{--},
	% morecomment=[s][\color{gray_ulisses}\ttfamily\itshape\codesize]{\{-}{-\}},
	morecomment=[l][\ttfamily\itshape\codesize]{--},
	morecomment=[s][\ttfamily\itshape\codesize]{\{-}{-\}},
	morestring=[b]",
    %% escapeinside={(*}{*)},
	stringstyle=\color{haskellstr},
	showstringspaces=false,
	numberstyle=\codesize,
	numberblanklines=true,
	showspaces=false,
	breaklines=true,
	showtabs=false,
  %% whitespace hackery
  %% lineskip=-2pt,
  %% % aboveskip=\smallskipamount,
  %% belowskip=0pt,
  literate={
	         {<!}{{{\color{lcolor}<!}}}2
%           {==!}{{{\color{lcolor}==!}}}3
           {`}{{{$^{\backprime}{}$}}}1
           % {QED}{{{\color{lcolor}QED}}}3
           % {***}{{{\color{lcolor}***}}}3
           {?}{{{\color{lcolor}?}}}1
           {<=}{{$\leq$}}1
           {theta}{{$\theta$}}1
           {gf}{{{\color{lappascolor}f}}}1
           {rmap}{{{\color{lappcolor}map}}}3
           {gmap}{{{\color{lappascolor}map}}}3
           {r.}{{{\color{lappcolor}.}}}1
           {g.}{{{\color{lappascolor}.}}}1
           {r++}{{{\showfocus{++}}}}2
           {g++}{{{\color{lappascolor}++}}}2
           %{r>>=}{{{\color{lappcolor}>>=}}}3
           %{g>>=}{{{\color{lappascolor}>>=}}}3
           {gfib}{{{\color{lappascolor}fib}}}3
           {rfib}{{{\color{lappcolor}fib}}}3
           {r++}{{{\color{lappcolor}++}}}2
           % {>>=}{$\ebind$}2
           {op}{{$\odot$}}1
           {env}{{$\Gamma$}}1
           {|-}{{$\vdash$}}1
          % {>=}{{$\geq$}}1
          % {<}{{$<$}}1
          % {>}{{$>$}}1
           {<=!}{{{\color{lcolor}<=!}}}3
          % {\\}{{$\lambda$}}1
           {!=}{{$\neq$}}1
           % {forall}{{$\forall$}}1
           {->}{{$\rightarrow$}}2
           {dollar}{{$\texttt{\$}$}}1
% Testing messing around with these
%----------------------------------------
           %% {++}{{\color{subtleOpHighlight}{++}}}2
           %% {=.}{{\color{subtleOpHighlight}{=.}}}2
           %% {::}{{\color{subtleOpHighlight}{::}}}2
%           {=}{{\color{subtleOpHighlight}{=}}}1
%----------------------------------------
           {Set_mem}{{$\in$}}1
           {Set_cup}{{$\cup$}}1
           {Set_cap}{{$\cap$}}1
           {Set_emp}{{$\emptyset$}}1
           {Set_sub}{{$\subseteq$}}1
           {<=>}{{$\Leftrightarrow$}}3
           {=>}{{$\Rightarrow$}}2
           {1->}{{$\rightarrow$}}1
           {1=>}{{$\Rightarrow$}}1
           {||-}{{$\vdash$}}1
           {|->}{{$\mapsto$}}2
           {<:}{{$\preceq$}}1
           {Inarritu}{Inarritu}8},
%           {Inarritu}{I$\tilde{\text{n}}\acute{\text{a}}$rritu}8,
	emph=
	{[1]
		FilePath,IOError,abs,acos,acosh,all,and,any,appendFile,approxRational,asTypeOf,asin,
		asinh,atan,atan2,atanh,basicIORun,break,catch,ceiling,chr,compare,concat,concatMap,
		cos,cosh,curry,cycle,decodeFloat,denominator,digitToInt,div,divMod,drop,
		dropWhile,either,elem,encodeFloat,enumFrom,enumFromThen,enumFromThenTo,enumFromTo,
		error,even,exponent,fail,mapMaybe,filter,flip,floatDigits,floatRadix,floatRange,floor,
		foldl,foldl1,foldr1,fromDouble,fromEnum,fromInt,fromInteger,fromIntegral,
		fromRational,fst,gcd,getChar,getContents,getLine,head,inRange,index,init,intToDigit,
		interact,ioError,isAlpha,isAlphaNum,isAscii,isControl,isDenormalized,isDigit,isHexDigit,
		isIEEE,isInfinite,isLower,isNaN,isNegativeZero,isOctDigit,isPrint,isSpace,isUpper,iterate,
		last,lcm,length,lex,lexDigits,lexLitChar,lines,log,logBase,lookup,mapM,mapM_,max,
		maxBound,posMax,negMax,maximum,maybe,min,minBound,minimum,mod,negate,notElem,null,numerator,odd,
		or,ord,pi,pred,primExitWith,print,product,putChar,putStr,putStrLn,quot,
		quotRem,range,rangeSize,read,readDec,readFile,readFloat,readHex,readIO,readInt,readList,readLitChar,
		readLn,readOct,readParen,readSigned,reads,readsPrec,realToFrac,recip,rem,repeat,replicate,return,
		reverse,round,scaleFloat,scanl,scanl1,scanr,scanr1,seq,sequence,sequence_,show,showChar,showInt,
		showList,showLitChar,showParen,showSigned,showString,shows,showsPrec,significand,signum,sin,
		sinh,snd,span,splitAt,sqrt,subtract,succ,tail,take,takeWhile,tan,tanh,threadToIOResult,toEnum,
		toInteger,toLower,toRational,toUpper,truncate,uncurry,undefined,unlines,until,unwords,unzip,
		unzip3,userError,words,writeFile,zip,zip3,zipWith,zipWith3,listArray,doParse,empty,for,initTo,
        assert,compose,checkGE,maxEvens,empty,create,get,set,initialize,idVec,fastFib,fibMemo,
        ex1,ex2,ex3,incr,inc,dec,isPos,positives,find,flatten, expand,exAll,
				ind,evenLen,lenAppend,exDistOr,allDistAnd,len,size,union,fromList,initUpto,trim,
        insertSort,decsort,qsort,reverse,append,upperCase, ifM, whileM, get, decrM, diff,
        project, select, elts, keys, dkeys, dfun, addKey, pTrue, emptyRD, rFalse,
                dom, rng, isI, isD, isS, movie1, movie2,  toI, toS, toD, good_titles, runState, ret,
                update, getCtr, setCtr, ctr, rdCtr, wrCtr, ifTest, whileTest, posCtr, zeroCtr, decr, decCtr,
                pread , pwrite , plookup , pcontents, pcreateF , pcreateFP, pcreateD, active, caps, pset, eqP,
                write, contents, alloc, derivP, copyP, createDir, store, copyRec, copySpec,
                forM_, when, flookup, fread, createDir, pcreateFile, isFile, copyFrame, ?
	},
	emphstyle={[1]\color{haskellblue}},
	emph=
	{[2]Show,Eq,Iso,VerifiedOrd,Ord,Num,UpClosed,Comp,Wit,Witness,Inductive,Meet,Flip,TRUE,
	    Peano,Nat,Pos,Neg,IntGE,Plus,List,PAnd, POr, POrL, POrR,
        Bool,Char,Double,Either,Float,IO,Integer,Int,Maybe,Up,Mono,Identity,
        Ordering,Rational,Ratio,ReadS,ShowS,String,Word8,
        InPacket,Tree,Vec,NullTerm,IncrList,DecrList,
        UniqList,BST,MinHeap,MaxHeap,World,RIO,IO,HIO,Post,Pre, OptEq,
        Privilege, Prop, Chain, ChainTy, Range, Dict, RD, Dom, Set, P, Univ, Schema, MovieSchema, RT,
        TDom, TRange, MoviesTable, RTSubEqFlds, RTEqFlds, Disjoint, Union, Ret, Seq, Trans, Map,
        Pure, Then, Else, Exit, Inv, OneState, Priv, Path, FH, Stable,
	    Nat, Monoid, VerifiedMonoid, VerifiedComMonoid, Plus_2_2_eq_4, Plus_2_2, Nat_up, Int_up,
	    AppendNilId, AppendAssoc,MapFusion,
	    Plus_comm, Par, Term,
	    Formula, Assignment, Body, Accel, Real, Accel', RVar, VerifiedCommutativeMonoid, CommutativeMonoid
	},
	emphstyle={[2]\color{castanho_ulisses}},
	emph=
	{[3]
		case,class,newtype,data,deriving,do,else,if,import,in,infixl,infixr,instance,let,
		module,of,primitive,then,refinement,type,where,forall,bound,
		% otherwise,
		measure,reflect,predicate, instance, class
	},
	emphstyle={[3]\color{preto_ulisses}\textbf},
	emph=
	{[4]
		quot,rem,div,mod,elem,notElem,seq
	},
	emphstyle={[4]\color{castanho_ulisses}\textbf},
	emph=
	{[5]
		EQ,GT,LT,Left,Right
		%, False, True, Just, Nothing
	},
	emphstyle={[5]\color{preto_ulisses}\textbf},
	emph=
	{[6]
	    axiomatize, measure, inline
	},
	emphstyle={[6]\color{lcolor}}
}
\lstdefinelanguage{Pseudo} {
	basicstyle=\ttfamily\codesize,
	sensitive=true,
  mathescape=true,
	morecomment=[l][\color{gray_ulisses}\ttfamily\codesize]{--},
	morecomment=[s][\color{gray_ulisses}\ttfamily\codesize]{\{-}{-\}},
	morestring=[b]",
	showstringspaces=false,
	numberstyle=\codesize,
	numberblanklines=true,
	showspaces=false,
	breaklines=true,
	showtabs=false
}
\lstdefinelanguage{java} {
    keywordstyle=[1],
    keywordstyle=[2]\color{ForestGreen},
    keywordstyle=[3]\color{Bittersweet},
    keywordstyle=[4]\color{RoyalPurple},
    morekeywords={region,private,synchronized}
}
        \definecolor{typecol}{rgb}{0.0,0.5,0.0}
        \definecolor{funcol}{rgb}{0.0,0.1,0.9}
        \definecolor{typecol}{gray}{0.0}
        \definecolor{funcol}{gray}{0.0}
\lstdefinelanguage{pseudo2}{
  language=Python,
	basicstyle=\ttfamily\normalsize,
  mathescape=true,
  morekeywords={type,def,do,let},
  emph={[1] \Expr,\Pred, HP, FP, Int},
  emphstyle={[1]\itshape\color{typecol}},
  emph={[2] \pbe,normalize},
  emphstyle={[2]\itshape\color{funcol}},
  % numbers=left,
  emph={[3] repeat,until},
  emphstyle={[3]\textbf}
}
\begin{document}

\setcopyright{acmcopyright}
\acmPrice{}
\acmDOI{10.1145/3158141}
\acmYear{2018}
\copyrightyear{2018}
\acmJournal{PACMPL}
\acmVolume{2}
\acmNumber{POPL}
\acmArticle{53}
\acmMonth{1}

\begin{CCSXML}
<ccs2012>
<concept>
<concept_id>10011007.10010940.10010992.10010998.10010999</concept_id>
<concept_desc>Software and its engineering~Software verification</concept_desc>
<concept_significance>500</concept_significance>
</concept>
</ccs2012>
\end{CCSXML}

\ccsdesc[500]{Software and its engineering~General programming languages}
\keywords{refinement types, theorem proving, Haskell, automatic verification} 

\title[]{Refinement Reflection: Complete Verification with SMT}

\author{Niki Vazou}

\affiliation{
  \institution{University of Maryland}
  \country{USA}
}
\email{nvazou@cs.umd.edu}

\author{Anish Tondwalkar}
\affiliation{
	\institution{University of California, San Diego}
	\country{USA}
}
\email{atondwal@cs.ucsd.edu}

\author{Vikraman Choudhury}
\affiliation{
	\institution{Indiana University}
	\country{USA}
}

\author{Ryan G. Scott}
\affiliation{
	\institution{Indiana University}
	\country{USA}
}

\author{Ryan R. Newton}
\affiliation{
	\institution{Indiana University}
	\country{USA}
}

\author{Philip Wadler}
\affiliation{
	\institution{University of Edinburgh and Input Output HK}
	\country{UK}
}
\email{wadler@inf.ed.ac.uk}

\author{Ranjit Jhala}
\affiliation{
	\institution{University of California, San Diego}
	\country{USA}
}
\email{jhala@cs.ucsd.edu}

\renewcommand{\shortauthors}{N. Vazou, A. Tondwalkar, V. Choudhury, R. G. Scott, R. R. Newton, P. Wadler, and R. Jhala}

\begin{CCSXML}
<ccs2012>
<concept>
<concept_id>10011007.10010940.10010992.10010998.10010999</concept_id>
<concept_desc>Software and its engineering~Software verification</concept_desc>
<concept_significance>500</concept_significance>
</concept>
</ccs2012>
\end{CCSXML}

\ccsdesc[500]{Software and its engineering~General programming languages}

\begin{abstract}
We introduce \emph{Refinement Reflection},
a new framework for building SMT-based
deductive verifiers.
The key idea is to reflect the code
implementing a user-defined function
into the function's (output) refinement
type.
As a consequence, at \emph{uses} of
the function, the function definition
is instantiated in the SMT logic in a precise fashion
that permits decidable verification.
Reflection allows the user to write 
\emph{equational proofs} of programs 
just by writing other programs 
\eg using pattern-matching
and recursion to perform case-splitting
and induction.
Thus, via the propositions-as-types principle,
we show that reflection permits the
\emph{specification} of arbitrary
functional correctness properties.
Finally, we introduce a proof-search algorithm called
\emph{Proof by Logical Evaluation}
that uses techniques from model
checking and abstract interpretation,
to completely automate equational
reasoning.
We have implemented reflection in
\toolname and used it to verify that
the widely used instances of the Monoid,
Applicative, Functor, and Monad typeclasses
actually satisfy key algebraic laws
required to make the clients safe, 
and have used reflection to build the first 
library that actually verifies assumptions 
about associativity and ordering that are 
crucial for safe deterministic parallelism.
\end{abstract}

\maketitle

\section{Introduction}\label{sec:intro}

Deductive verifiers fall roughly
into two camps.
Satisfiability Modulo Theory (SMT)
based verifiers (\eg \dafny and \fstar)
use fast decision procedures to
automate the verification of programs that
only require reasoning over a fixed
set of theories like linear arithmetic,
string, set and bitvector operations.
These verifiers, however, encode the
semantics of user-defined functions
with universally-quantified axioms
and use incomplete (albeit effective)
heuristics to instantiate those axioms.
These heuristics make it difficult to
characterize the kinds of proofs that
can be automated, and hence,
explain why a given proof attempt
fails~\cite{Leino16}.
At the other end, we have
Type-Theory (TT) based theorem provers
(\eg \coq and \agda)
that use type-level computation
(normalization) to facilitate
principled reasoning about
terminating user-defined functions,
but which
% comes at the cost of
require the user to supply
lemmas or rewrite hints
% that add friction to the ubiquitous task of
to discharge proofs
over decidable theories.

%% \RN{As a reader, I don't have that
%%     firm a notion of what the claimed
%%     completeness is at this point.}

We introduce \emph{Refinement Reflection},
a new framework for building SMT-based
deductive verifiers, which permits the
specification of arbitrary properties
and yet enables complete, automated
SMT-based reasoning about user-defined
functions.
In previous work, refinement types
\citep{ConstableS87,Rushby98} --- which decorate
basic types (\eg "Integer") with SMT-decidable
predicates (\eg "{v:Integer | 0 <= v && v < 100}") --- were
used to retrofit so-called shallow
verification, such as array bounds checking, into
several languages:
ML~\cite{pfenningxi98,GordonRefinement09,LiquidPLDI08},
C~\cite{deputy,LiquidPOPL10},
Haskell~\citep{Vazou14},
TypeScript~\cite{Vekris16}, and
Racket~\cite{RefinedRacket}.
In this work, 
we extend refinement types with refinement reflection,
leading to the following three contributions.

\mypara{1. Refinement Reflection}
Our first contribution is the notion of
\emph{refinement reflection}.
To reason about user-defined functions,
the function's implementation is
\emph{reflected} into its (output)
refinement-type specification,
thus converting the function's
type signature into a precise
description of the function's behavior.
This simple idea has a profound consequence:
at \emph{uses} of the function, the standard
rule for (dependent) function application
yields a precise means of reasoning about
the function~(\S~\ref{sec:types-reflection}).

\mypara{2. Complete Specification}
%
% Reflection lets us represent expressive specifications,
% that fall well outside of SMT-decidable logics,
% simply as unit-values refined by logical
% propositions.
%
Our second contribution is a
\emph{library of combinators}
that lets programmers compose
sophisticated \emph{proofs}
from basic refinements and
function definitions.
Our proof combinators let programmers
use existing language mechanisms, like
branches (to encode case splits),
recursion (to encode induction), and
functions (to encode auxiliary lemmas),
to write proofs that look very much like
% transcriptions of
their pencil-and-paper
analogues~(\S~\ref{sec:overview}).
Furthermore, since proofs are literally
just programs, we use the principle of
propositions-as-types~\cite{Wadler15} 
(known as the Curry-Howard isomorphism~\cite{CHIso})
to show
% , via a recipe for encoding proofs with
% alternating quantifiers,
that SMT-based verifiers can
express any natural deduction proof, 
thus providing a pleasant
implementation of
natural deduction
that can be used for
pedagogical purposes~(\S~\ref{sec:higher-order}).

\mypara{3. Complete Verification}
While equational proofs can be very \elegant
and expressive, writing them out can
quickly get exhausting.
Our third contribution is
\emph{Proof by Logical Evaluation} (\pbesym)
a new proof-search algorithm
that automates
%that completely automates
equational reasoning.
The key idea in \pbesym is to mimic
type-level computation within SMT-logics
by representing functions in a
\emph{guarded form} \cite{Dijkstra1975}
and repeatedly unfolding function application
terms by instantiating them with their definition
corresponding to an \emph{enabled} guard.
We formalize a notion of equational proof
and show that the above strategy is
\emph{complete}: \ie it is guaranteed
to find an equational proof if one exists.
Furthermore, using techniques from
the literature on Abstract
Interpretation~\cite{CousotCousot77}
and Model Checking \cite{CGL92},
we show that the above proof search
corresponds to a \emph{universal}
(or \emph{must}) abstraction of
the concrete semantics of the
user-defined functions.
Thus, since those functions are total,
% and terminating,
we obtain the pleasing
guarantee that proof search
terminates~(\S~\ref{sec:pbe}).

\smallskip
% \mypara{4. Evaluation}
%
We evaluate our approach by implementing
refinement reflection and \pbesym in
\toolname~\citep{Vazou14}, thereby
turning Haskell into a theorem prover.
Repurposing an existing programming language allows us
to take advantage of a mature compiler and an ecosystem of
libraries, while keeping proofs and programs in the same
language.
We demonstrate the benefits of this
conversion by proving typeclass laws.
Haskell's typeclass machinery has led
to a suite of expressive abstractions
and optimizations which, for correctness,
crucially require typeclass \emph{instances}
to obey key algebraic laws.
We show how reflection and \pbesym
can be used to verify that widely
used instances of the Monoid,
Applicative, Functor, and Monad
typeclasses satisfy the respective
laws.
Finally, we use reflection to
create the first deterministic
parallelism library that actually
verifies assumptions about
associativity and ordering
that ensure determinism (\S~\ref{sec:evaluation}).

Thus, our results demonstrate
that Refinement Reflection and
Proof by Logical Evaluation
identify a new design for
deductive verifiers which,
by combining the complementary
strengths of SMT- and TT- based
approaches, enables complete
verification of expressive
specifications spanning
decidable theories and
user defined functions.

\section{Overview}
\label{sec:overview}
\label{sec:examples}

% RefTypes
% Reflection
% Equational Reasoning
% Proof Search
% Typeclass Laws

We start with an overview of how SMT-based refinement
reflection lets us write proofs as plain
functions and how \pbesym automates equational reasoning.

\subsection{Refinement Types}\label{subsection:overview:refinements}

First, we recall some preliminaries about specification
and verification with refinement types.

\mypara{Refinement types} are the source program's (here
Haskell's) types refined with logical predicates drawn
from an SMT-decidable logic~\citep{ConstableS87,Rushby98}.
For example, we define "Nat" as the set of "Integer" values
"v" that satisfy the predicate "0 <= v" from
the quantifier-free logic of linear arithmetic and
uninterpreted functions (QF-UFLIA~\cite{SMTLIB2}):
\begin{mcode}
    type Nat = { v:Integer | 0 <= v }
\end{mcode}

\mypara{Specification \& Verification}
%
% We can define and type the textbook Fibonacci function as follows.
Throughout this section,
to demonstrate the proof features we add to \toolname,
we will use the textbook Fibonacci function which we type as follows.
\begin{mcode}
    fib :: Nat -> Nat
    fib 0 = 0
    fib 1 = 1
    fib n = fib (n-1) + fib (n-2)
\end{mcode}
To ensure termination, the input type's
refinement specifies a \emph{pre-condition}
that the parameter must be "Nat".
The output type's refinement specifies
a \emph{post-condition} that the result
is also a "Nat".
Refinement type checking automatically
verifies that if "fib" is invoked with
a non-negative "Integer", then it terminates
and yields a non-negative "Integer".

\mypara{Propositions}
We define a data type representing propositions
as an alias for unit:
\begin{mcode}
    type $\typp$ = ()
\end{mcode}
\VC{Prop is also in HOL, and it's encoded
  as Bool, so this might be confusing,
  maybe this should be Proof instead?}
which can be \emph{refined} with propositions about
the code, \eg that $2 + 2$ equals $4$
\begin{mcode}
    type Plus_2_2 = { v: $\typp$ | 2 + 2 = 4 }
\end{mcode}
For simplicity, in \toolname, we abbreviate the above
to "type Plus_2_2 = { 2 + 2 = 4 }".

\mypara{Universal \& Existential Propositions}
Using the standard encoding of~\citet{CHIso},
known as the Curry-Howard isomorphism,
refinements encode universally-quantified propositions as
\emph{dependent function} types of the form:
\begin{mcode}
    type Plus_comm = x:Integer -> y:Integer -> { x + y = y + x }
\end{mcode}
As "x" and "y" refer to arbitrary inputs,
any inhabitant of the above type is a proof
that "Integer" addition commutes.

Refinements encode existential quantification via
\emph{dependent pairs} of the form:
\begin{mcode}
    type Int_up = n:Integer -> (m::Integer, {n < m})
\end{mcode}
The notation "(m :: t, t')" describes dependent pairs
where the name "m" of the first element can appear
inside refinements of the second element.
Thus, "Int_up" states the proposition that for
every integer "n", \emph{there exists}
one that is larger than "n".

%\mypara{Complete Specification}
% Thus,
While quantifiers cannot appear directly
inside the refinements, dependent functions and
pairs allow us to specify quantified propositions.
One limitation of this encoding is that quantifiers
cannot exist inside refinement's logical connectives (like $\land$ and $\lor$).
In \S~\ref{sec:higher-order}, we describe how to
encode logical connectives using data types,
\eg conjunction as a product and disjunction as
a union, how to specify arbitrary,
quantified propositions
using refinement types, \ie have complete specifications,
and how to verify those propositions using refinement
type checking.

\mypara{Proofs}
We \emph{prove} the above propositions by
writing Haskell programs, for example
\begin{mcode}
    plus_2_2 :: Plus_2_2        plus_comm :: Plus_comm        int_up :: Int_up
    plus_2_2 = ()               plus_comm = \x y -> ()        int_up = \n -> (n+1,())
\end{mcode}
Standard refinement typing reduces the above to
the respective \emph{verification conditions} (VCs)
$$ \ttrue \Rightarrow 2 + 2 = 4 \quad \quad \quad
   \forall \ x,\ y\ .\ \ttrue \Rightarrow x + y = y + x \quad \quad \quad
   \forall \ n\ . n < n+1 $$
%
%%\begin{align*}
                      %%2 + 2 & = 4 \\
  %%\forall \ x,\ y\ .\ x + y & = y + x
%%\end{align*}
%
which are easily deemed valid by the SMT solver, allowing us
to prove the respective propositions.

\mypara{Soundness and Lazy Evaluation}
Readers familiar with Haskell's lazy 
semantics may be concerned that ``bottom'', 
which inhabits all types, makes our proofs suspect.
Fortunately, as described in \citet{Vazou14}, \toolname, 
by default, checks that user defined functions 
provably terminate and are total (\ie return non-bottom values, 
and do not throw any exceptions),
which makes our proofs sound.
\toolname checks that each function is terminating using 
a \textit{termination metric} 
\ie a natural number that decreases at each recursive call. 
For instance, if we generalize the signature 
of "fib" to "Integers", as shown below, then 
\toolname reports a termination error: 
% Assuming the "fib" is typed as follows
%
\begin{mcode}
    fib :: n:Integer -> Integer
\end{mcode}
\toolname generates a type error in the definition of "fib" since in 
the recursive calls "fib (n-1)" and "fib (n-2)"
the arguments "n-1" and "n-2"
cannot be proved to be non-negative and less than "n".
Both these proof obligations are satisfied when 
the domain of "fib" is restricted to natural numbers. 
\begin{mcode}
    fib :: n:Nat -> Nat / [n]
\end{mcode}
The above type signature is explicitly annotated 
with the user specified termination metric "/ [n]"
declaring that "n" is the decreasing natural number. 
\toolname heuristically assumes that the termination metric 
is always the first argument of the function (that can be mapped to natural numbers),
thus, the above explicit termination metric can be omitted. 

Not all Haskell functions terminate. 
The "lazy" annotation 
deactivates termination 
checking, \eg for the 
the "diverge" function 
shown below.
Haskell terms marked as "lazy" 
could \emph{unsoundly} be used 
as proof terms, much like \coq's 
unsoundness with respect to "Admitted". 
For example, the following is 
accepted by \toolname 
\begin{mcode}
    lazy diverge 
    diverge :: x:Integer -> { x = 0 }
    diverge x = diverge x 
\end{mcode}

\mypara{Totality Checking}
\toolname further checks that all user-specified functions are totally defined. 
For instance the below definition 
\begin{mcode}
    fibPartial 0 = 0 
    fibPartial 1 = 1 
\end{mcode}
generates a totality error. 
The above definition is completed, by GHC, 
with an "error" invocation
\begin{mcode}
    fibPartial 0 = 0 
    fibPartial 1 = 1 
    fibPartial _ = error "undefined" 
\end{mcode}
Thus, to check totality, \toolname simply ascribes the precondition "false" to "error"
\begin{mcode}
    error :: { v:String | false } -> a
\end{mcode}
Thus, to typecheck "fibPartial", \toolname needs to prove that the call to "error" 
is dead-code, \ie happens under an inconsistent environment. As this check fails, 
\toolname generates a totality error. 
As with termination checking,
\toolname provides an \textit{unsound}
error function "unsoundError" with no false precondition.

In the sequel (and in our evaluation) 
we assume that proof terms are generated without 
any unsound uses of "lazy" or "unsafeError".
However, note that lazy, diverging functions 
can be soundly verified, and diverging code 
can soundly co-exist with terminating proof 
terms: we refer the reader to \citet{Vazou14} for details.

\subsection{Refinement Reflection}\label{subsec:overview:reflection}

Suppose we wish to prove properties about the "fib"
function, \eg that "{fib 2 = 1}".
Standard refinement type checking runs into two problems.
First, for decidability and soundness, arbitrary
user-defined functions cannot belong in the refinement
logic, \ie we cannot {refer} to "fib" in a refinement.
Second, the only specification that a refinement type checker
has about "fib" is its type "Nat -> Nat" which is
too weak to verify "{fib 2 = 1}".
To address both problems, we "reflect fib" into the
logic which sets the three steps of refinement
reflection in motion.

\mypara{Step 1: Definition}
The annotation creates an \emph{uninterpreted function}
"!fib! :: Integer -> Integer" in the refinement logic.
By uninterpreted, we mean that the logical "!fib!"
is not connected to the program function
"fib"; in the logic, "!fib!" only satisfies the
\emph{congruence axiom}
$\forall n, m.\ n = m\ \Rightarrow\ \fib{n} = \fib{m}$.
On its own, the uninterpreted function is
not terribly useful: we cannot check "{!fib! 2 = 1}"
as the SMT solver cannot prove the
VC 
$\ttrue \Rightarrow \fib{2} = 1$ 
which requires reasoning about "fib"'s
definition.

\mypara{Step 2: Reflection}
In the next key step, we reflect the \emph{definition}
of "fib" into its refinement type by automatically
strengthening the user defined type for "fib" to:
\begin{mcode}
    fib :: n:Nat -> { v:Nat | v = !fib! n && fibP n }
\end{mcode}
where "fibP" is an alias for a refinement
\emph{automatically derived} from the
function's definition:
\begin{mcode}
    fibP n $\doteq$ n == 0 $\Rightarrow$ !fib! n = 0
           $\wedge$ n == 1 $\Rightarrow$ !fib! n = 1
           $\wedge$ n  > 1 $\Rightarrow$ !fib n! = !fib! (n-1) + !fib! (n-2)
\end{mcode}

\mypara{Step 3: Application}
With the reflected refinement
type, each application of "fib"
in the code automatically
\emph{unfolds} the definition
of "fib" \textit{once} in the logic.
We prove "{!fib! 2 = 1}" by:
\begin{mcode}
    pf_fib2 :: { !fib! 2 = 1 }
    pf_fib2 = let { t0 = #fib# 0; t1 = #fib# 1; t2 = #fib# 2 } in ()
\end{mcode}
We write in bold red, "#f#", to highlight places where the
unfolding of "f"'s definition is important.
Via refinement typing, the above yields the
following VC that is discharged by SMT, even
though "!fib!" is uninterpreted:
$$(\fibdef\ 0 \ \wedge\ \fibdef\ 1 \ \wedge\ \fibdef\ 2) \  \Rightarrow\ (\fib{2} = 1)$$
%
% \begin{align*}
   % & (\fibdef\ (\fib\ 0)\ 0) \ \wedge\ (\fibdef\ (\fib\ 1)\ 1) \ \wedge\ \\
   % & (\fibdef\ (\fib\ 2)\ 2) \  \Rightarrow\ (\fib{2} = 1)
% \end{align*}
%
The verification of "pf_fib2" relies
merely on the fact that "fib" is applied
to (\ie unfolded at) "0", "1", and "2".
The SMT solver automatically \emph{combines}
the facts, once they are in the antecedent.
Thus, the following is also verified:
\begin{mcode}
    pf_fib2' :: {v:[Nat] | !fib! 2 = 1 }
    pf_fib2' = [ #fib# 0, #fib# 1, #fib# 2 ]
\end{mcode}

In the next subsection, we will continue to use explicit,
step-by-step proofs as above, but we will introduce tools 
for proof composition.  
Then, in \S~\ref{sec:overview:pbe} we will show how to 
eliminate unnecessary details from such proofs, using 
\emph{Proof by Logical Evaluation} (\pbesym).

%% \RN{Is this updated?  I thought we didn't want to show the old/embarrassing fib
%%   proofs?  Or, at least that we would forward reference the PBE stuff?}

\subsection{Equational Proofs}\label{subsection:overview:equational}

We can {structure} proofs to follow the
style of \emph{calculational} or \emph{equational}
reasoning popularized in classic texts~\cite{Dijkstra76,Bird89}
and implemented in \agda~\citep{agdaequational}
and \dafny~\citep{LeinoPolikarpova16}.
To this end, we have developed a library
of proof combinators that permits reasoning
about equalities and linear arithmetic.

\mypara{``Equation'' Combinators}
We equip \toolname with a family of
equation combinators, "op", for
logical operators in the theory QF-UFLIA,
"op" $\in \{ =, \not =, \leq, <, \geq, > \}$.
In Haskell code, to avoid collisions with existing operators,
we further append a colon ``":"'' to these operators, so that ``$=$''
becomes the Haskell operator "(=:)".
The refinement type of "op"  \emph{requires}
that $x \odot y$ holds and then \emph{ensures}
that the returned value is equal to "x".
For example, we define "(=:)" as:
\begin{mcode}
    (=:) :: x:a -> y:{ a | x = y } -> { v:a | v = x }
    x =: _ = x
\end{mcode}
and use it to write the following equational proof:
\begin{mcode}
    fib2_1 :: { !fib! 2 = 1 }
    fib2_1 =  #fib# 2 =: #fib# 1 + #fib# 0 =: 1 ** QED
\end{mcode} %$
where "** QED" constructs proof terms by
casting expressions to \typp in a post-fix
fashion.
\begin{mcode}
    data $\tqed$ = QED            (**) :: a -> $\tqed$ -> $\typp$
                              _ ** QED = ()
\end{mcode}

\mypara{Proof Arguments}
Often, we need to compose lemmas into larger
theorems. For example, to prove "!fib! 3 = 2" we
may wish to reuse "fib2_1" as a lemma.
We do so by defining a variant of "(=:)", written as "(=?)",
that takes an explicit proof argument: 
\begin{mcode}
    (=?) :: x:a -> y:a -> { $\typp$ | x = y } -> { v:a | v = x }
    x =? _ _ = x
\end{mcode}
We use the "(=?)" combinator to prove that "!fib! 3 = 2".
\begin{mcode}
    fib3_2 :: { !fib! 3 = 2 }
    fib3_2 = (#fib# 3 =: fib 2 + #fib# 1 =? 2 dollar fib2_1) ** QED
\end{mcode}
Here "fib 2" is not important to unfold, because "fib2_1"
already provides the same information.

\mypara{``Because'' Combinators}
Observe that the proof term "fib3_2" needs parentheses, 
since Haskell's "(dollar)" operator has the lowest (\ie 0) precedence. 
To omit parentheses in proof terms, we define
a ``because'' combinator that operates exactly like Haskell's 
"(dollar)" operator, but has the same precedence as the proof combinators 
"(=:)" and "(=?)".
\begin{mcode}
    ($\because$) :: ($\typp$ -> a) -> $\typp$ -> a
    f $\because$ y = f y
\end{mcode}
We use the ``because'' combinator to remove the parentheses from the proof term 
"fib3_2".
\begin{mcode}
    fib3_2 :: { !fib! 3 = 2 }
    fib3_2 = #fib# 3 =: fib 2 + #fib# 1 =? 2 $\because$ fib2_1 ** QED
\end{mcode}

\mypara{Optional Proof Arguments}
Finally, we unify both combinators "(=:)" and "(=?)"
using type classes to define the class method "(=.)" 
that takes an \textit{optional} proof argument, and 
generalize such definitions for each operator in "op". 
We define the class "OptEq" to have one method "(=.)" 
that takes two arguments of type "a", to be compared 
for equality, and returns a value of type "r". 
\begin{mcode}
    class OptEq a r where
      (=.) :: a -> a -> r
\end{mcode}
When instantiating the result type to be the same as the argument type "a", 
"(=.)" behaves as "(=:)".
\begin{mcode}
    instance (a~b) => OptEq a b where
      (=.) :: x:a -> { y:a | x = y } -> { v:b | v = x }
\end{mcode}
When the result type is instantiated to be 
a function that given a proof term "$\typp$"
returns the argument type "a", 
"(=.)" behaves exactly as "(=?)".
\begin{mcode}
    instance (a~b) => OptEq a ($\typp$ -> b) where
      (=.) :: x:a -> y:a -> { $\typp$ | x = y } -> { v:a | v = x }
\end{mcode}

Thus, "(=.)" takes two arguments to be compared for equality
and, optionally, a proof term argument. 
With this, the proof term "fib3_2" is simplified to  
\begin{mcode}
    fib3_2 :: { !fib! 3 = 2 }
    fib3_2 = #fib# 3 =. fib 2 + #fib# 1 =. 2 $\because$ fib2_1 ** QED
\end{mcode}

\mypara{Arithmetic and Ordering}
Next, lets see how we can use arithmetic and ordering to
prove that "fib" is (locally) increasing, \ie for all $n$,
$\fib{n} \leq \fib{(n+1)}$.
\begin{mcode}
    type Up f = n:Nat -> { f n <= f (n + 1) }

    fibUp   :: Up fib
    fibUp 0 =  #fib# 0 $<$. #fib# 1                          ** QED
    fibUp 1 =  #fib# 1 <=. fib 1 + #fib# 0     =. #fib# 2     ** QED
    fibUp n =  #fib# n <=. fib n + #fib# (n-1) =. #fib# (n+1) ** QED
\end{mcode} %$

% \RN{Umm, maybe we could make this a literate document where the code
%   runs? The above looks wrong.  It says that \lstinline{fib n = fib (n+1)} on
%   the last line.}

\mypara{Case Splitting}
The proof "fibUp" works by splitting cases on the value of "n".
In the cases "0" and "1", we simply assert
the relevant inequalities. These are verified as the
reflected refinement unfolds the definition of
"fib" at those inputs.
The derived VCs are (automatically) proved
as the SMT solver concludes $0 < 1$ and $1 + 0 \leq 1$
respectively.
When "n" is greater than one, "fib n" is unfolded
to  "fib (n-1) + fib (n-2)", which, as "fib (n-2)"
is non-negative, completes the proof.

\mypara{Induction \& Higher Order Reasoning}
Refinement reflection smoothly accommodates induction
and higher-order reasoning.
For example, let's prove that every function "f"
that increases locally (\ie "f z <= f (z+1)" for all "z")
also increases globally (\ie "f x <= f y" for all "x < y")
%
%%   fMono :: f:(Nat -> Integer) -> Up f -> x:_ -> y:{x < y} -> {f x <= f y} / [y]
\begin{mcode}
    type Mono = f:(Nat -> Integer) -> Up f -> x:Nat -> y:{ x < y } -> { f x <= f y }

    fMono :: Mono / [y]
    fMono f up x y
      | x+1 == y = f x <=. f (x+1) $\because$ up x <=. f y                          ** QED
      | x+1  < y = f x <=. f (y-1) $\because$ fMono f up x (y-1) <=. f y $\because$ up (y-1) ** QED
\end{mcode}
We prove the theorem by induction
on "y" as specified by the
annotation "/ [y]" which states
that "y" is a well-founded
termination metric that decreases
at each recursive call~\citep{Vazou14}.
If "x+1 == y", then we call the "up x" proof argument.
Otherwise, "x+1 < y", and we use
the induction hypothesis \ie apply
"fMono" at "y-1", after which
transitivity of the less-than
ordering finishes the proof.
We can \emph{apply} the general
"fMono" theorem to prove that
"fib" increases monotonically:
\begin{mcode}
    fibMono :: n:Nat -> m:{ n < m } -> { !fib! n <= !fib! m }
    fibMono = fMono fib fibUp
\end{mcode}

\subsection{Complete Verification: Automating Equational Reasoning}
\label{sec:overview:pbe}

While equational proofs can be
very \elegant, writing them out
can quickly get exhausting.
Lets face it: "fib3_2" is doing
rather a lot of work just to
prove that "fib 3" equals "2"!
Happily, the \emph{calculational}
nature of such proofs allows us
to develop the following proof
search algorithm \pbesym that
is inspired by model checking~\cite{CGL92}
\begin{itemize}[leftmargin=*]
\item \emphbf{Guard Normal Form:}
First, as shown in the definition
of "fibP" in~\S~\ref{subsec:overview:reflection}, each reflected
function is transformed into a
\emph{guard normal form}
$\wedge_i (\pred_i \Rightarrow \fapp{\rfun}{x} = \body_i)$
\ie a collection of \emph{guards} $\pred_i$
and their corresponding definition $\body_i$.

\item \emphbf{Unfolding:}
Second, given a VC of the form
$\Penv \Rightarrow \pred$,
we iteratively \emph{unfold}
function application terms in
$\Penv$ and $\pred$ by
\emph{instantiating} them with
the definition corresponding to
an \emph{enabled} guard, where
we check enabled-ness by
querying the SMT solver.
For example, given the VC
$\ttrue \Rightarrow \fib{3} = 2$,
the guard $3 > 1$ of the application $\fib{3}$ is
trivially enabled,
\ie is true, hence we
strengthen the hypothesis
$\Penv$ with the equality
$\fib{3} = \fib{(3-1)} + \fib{(3-2)}$
corresponding to unfolding the
definition of "fib" at "3".

\item \emphbf{Fixpoint:}
We repeat the unfolding process
until either the VC is proved
or we have reached a fixpoint,
\ie no further unfolding is
enabled.
For example, the fixpoint
computation of $\fib{3}$ unfolds the definition
of "fib" at "3", "2", "1", and "0"
and then stops as no further
guards are enabled.
\end{itemize}

%% \RN{At this point not sure if this fixpoint iteration is an outer
%%   loop repeating SMT invocation on each iteration.}

\mypara{Automatic Equational Reasoning}
In \S~\ref{sec:pbe} we formalize
a notion of \emph{equational proof}
and show that the proof search procedure
\pbesym enjoys two key properties.
First, that it is guaranteed to find
an equational proof if one can be constructed 
from unfoldings of function definitions.
(The user must still provide instantiations 
of lemmas and induction hypotheses.) 
Second, that under certain conditions readily
met in practice, it is guaranteed to terminate.
These two properties allow us to use \pbesym
to predictably automate proofs: the programmer needs
only to supply the relevant induction hypotheses
or helper lemma applications.
The remaining long chains of calculations
are performed automatically via SMT-based
\pbesym.
That is, the user must provide case statements 
and the recursive structure, but can elide 
the long chains of \lstinline{=.} applications.
To wit, with complete proof search,
the proofs of~\S~\ref{subsection:overview:equational} shrink to:
\begin{mcode}
    fib3_2 :: {!fib! 3 = 2}       fMono :: Mono / [y]
    fib3_2 = ()                 fMono f up x y    
                                  | x+1 == y = up x
                                  | x+1  < y = up (y-1) &&& fMono up x (y-1)
\end{mcode}
%%%\begin{mcode}
%%%fib3_2 :: {!fib! 3 = 2}   fibUp  :: Up fib   fMono :: Mono / [y]
%%%fib3_2 = ()             fibUp 0 = ()       fMono f up x y
%%%                        fibUp 1 = ()         | x+1 == y = up x
%%%                        fibUp n = ()         | x+1  < y = up (y-1) &&& fMono up x (y-1)
%%%\end{mcode}
%
where the combinator "p &&& q = ()"
inserts the propositions "p" and "q" to the
VC hypothesis.

\mypara{\pbesym vs. Axiomatization}
Existing SMT based verifiers like
\dafny~\citep{dafny} and \fstar~\citep{fstar}
use the classical \emph{axiomatic} approach
to verify assertions over user-defined
functions like "fib".
In these systems, the function is encoded in the logic
as a universally quantified formula (or axiom):
$\forall n.\ \fibdef\ n$
after which the SMT solver may
instantiate the above axiom at
"3", "2", "1" and "0" in order to
automatically prove "{fib 3 = 2}".

The automation offered by axioms
is a bit of a devil's bargain,
as axioms render VC checking
\emph{undecidable}, and in
practice automatic axiom
instantiation can easily
lead to infinite ``matching loops''.
For example, the existence of
a term \fib{n} in a VC can
trigger the above axiom,
which may then produce
the terms \fib{(n-1)}
and \fib{(n-2)}, which
may then recursively
give rise to further
instantiations
\emph{ad infinitum}.
To prevent matching
loops an expert must
carefully craft ``triggers''
or alternatively, provide
a ``fuel'' parameter
that bounds the depth of
instantiation~\citep{Amin2014ComputingWA}.
Both these approaches
ensure termination,
but can cause the
axiom to not be
instantiated at
the right places,
thereby rendering
the VC checking
\emph{incomplete}.
The incompleteness
is illustrated by
the following example
from the \dafny benchmark
suite~\cite{dafny-github}
\begin{mcode}
    pos n | n < 0     = 0                test   :: y:{y > 5} -> {!pos! n = 3 + !pos! (n-3)}
          | otherwise = 1 + pos (n-1)    test _ = ()
\end{mcode}
%% \RN{Ugh, let's make sure the above 2x2 matrix of code doesn't get
%%   page-broken in the final version; its terrible.}
%
\dafny (and \fstar's) fuel-based approach fails
to check the above, when the fuel value is less
than $3$. One could simply raise-the-fuel-and-try-again
but at what point does the user know when to stop?
In contrast, \pbesym (1)~does not require any fuel
parameter, (2)~is able to automatically perform
the required unfolding to verify this example,
\emph{and} (3)~is guaranteed to terminate.

\begin{figure}[t!]
\centering
\begin{minipage}[t]{0.45\textwidth}
\begin{mcodef}
app_assoc :: AppendAssoc
app_assoc [] ys zs
  = ([] ++ ys) ++ zs
  =. ys ++ zs
  =. [] ++ (ys ++ zs)         ** QED
app_assoc (x:xs) ys zs
  = ((x : xs) ++  ys)  ++ zs
  =. (x : (xs ++  ys)) ++ zs
  =.  x :((xs ++  ys)  ++ zs)
      $\because$ app_assoc xs ys zs
  =.  x : (xs ++ (ys   ++ zs))
  =. (x : xs) ++ (ys   ++ zs) ** QED
\end{mcodef}
\end{minipage}
\hspace{0.14in}
\begin{minipage}[t]{0.49\textwidth}
\begin{mcodef}
app_assoc             :: AppendAssoc
app_assoc []     ys zs = ()
app_assoc (x:xs) ys zs = app_assoc xs ys zs
\end{mcodef}
\begin{mcodef}
app_right_id          :: AppendNilId
app_right_id []        = ()
app_right_id (x:xs)    = app_right_id xs
\end{mcodef}
\begin{mcodef}
map_fusion            :: MapFusion
map_fusion f g []      = ()
map_fusion f g (x:xs)  = map_fusion f g xs
\end{mcodef}
\end{minipage}
\caption{\textbf{(L)} Equational proof of append associativity. \quad
  \textbf{(R)} \pbesym proof, also of append-id and map-fusion.
%% \RN{Compared to the other code in the document, which has some
%%   color, this totally monochrome figure stands out.  Perhaps we
%%   could turn on some slight colorization of basic punctuation.
%%   Or some targeted formatting of proof combinators.  Listings is
%%   very customizable!}
}
\label{fig:list-laws}
\end{figure}

\subsection{Case Study: Laws for Lists}
\label{sec:overview:lists}

Reflection and \pbesym are not limited to integers.  We end the
overview by showing how they verify textbook properties of lists
equipped with append ("++") and "map" functions:
\begin{mcode}
    reflect (++) :: [a] -> [a] -> [a]       reflect map :: (a -> b) -> [a] -> [b]
    []     ++ ys = ys                       map f []     = []
    (x:xs) ++ ys = x : (xs ++ ys)           map f (x:xs) = f x : map f xs
\end{mcode}
In \S~\ref{subsec:embedding}
we will describe how the reflection mechanism
illustrated via "fibP" is extended to account for ADTs
using SMT-decidable selection and projection operations,
which reflect the definition of "xs ++ ys" into the refinement as:
$\ite{\mathtt{isNil}\ \mathit{xs}}
    {\mathit{ys}}
    {\mathtt{sel1}\ \mathit{xs}\
      \dcons\
      (\mathtt{sel2}\ \mathit{xs} \ \mathtt{++}\  \mathit{ys})}$.
%
% We favor selectors to the axiomatic translation of
% HALO~\citep{HALO} to avoid universally quantified
% formulas and the resulting unpredictability.
%
We require an explicit "reflect" annotation
as not all Haskell functions can be 
reflected into logic, either because it 
is unsound to do so (\eg due to divergence)
or because of limitations of our current 
implementation. 
Recall that \toolname verifies that all reflected functions, 
like "(++)" and "map" here, are total~\citep{Vazou14}
and rejects the code otherwise.

\mypara{Laws}
We can specify various laws about lists with refinement types.
For example, the below laws state that
(1)~appending to the right is an \emph{identity} operation,
(2)~appending is an \emph{associative} operation, and
(3)~map \emph{distributes} over function composition:
\begin{mcode}
    type AppendNilId = xs:_ -> { xs ++ [] = xs }
    type AppendAssoc = xs:_ -> ys:_ -> zs:_ -> { xs ++ (ys ++ zs) = (xs ++ ys) ++ zs }
    type MapFusion   = f:_  -> g:_  -> xs:_ -> { map (f . g) xs = map f (map g xs) }
\end{mcode}

\mypara{Proofs}
On the right in Figure~\ref{fig:list-laws}
we show the proofs of these laws using \pbesym,
which should be compared to the classical equational
proof~\eg~by \citet{WadlerCalculating}, shown on the left.
With \pbesym, the user need only
provide the high-level structure
--- the case splits and invocations
of the induction hypotheses ---
after which \pbesym automatically
completes the rest of the equational
proof.
Thus using SMT-based \pbesym, "app_assoc"
shrinks down to its essence: an induction
over the list "xs".
The difference is even more stark with
"map_fusion" whose full equational proof
is omitted, as it is twice as long.

\mypara{\pbesym vs. Normalization}
The proofs in Figure~\ref{fig:list-laws}
may remind readers familiar with Type-Theory
based proof assistants (\eg \toolfont{Coq} or
\toolfont{Agda}) of the notions of
\emph{type-level normalization} and
\emph{rewriting} that permit similar
proofs in those systems.
While our approach of \pbesym is
inspired by the idea of type
level computation, it differs
from it in two significant ways.
First, from a \emph{theoretical} point of view,
SMT logics are not equipped with any notion
of computation, normalization, canonicity
or rewriting.
Instead, our \pbesym algorithm shows how
to \emph{emulate} those ideas by asserting
equalities corresponding to function
definitions (Theorem~\ref{thm:completeness}).
Second, from a \emph{practical} perspective,
the combination of (decidable) SMT-based 
theory reasoning and \pbesym's proof search 
can greatly simplify verification.
For example, consider the "swap" function
from~\citet{appel-vfa}'s \coq textbook:
\begin{mcode}
    swap :: [Integer] -> [Integer]
    swap (x1:x2:xs) = if x1 > x2 then x2:x1:x2 else x1:x2:xs
    swap xs         = xs
\end{mcode}
In Figure~\ref{fig:swap} we show four proofs
that "swap" is idempotent:
Appel's proof using \coq (simplified by
the use of a hint database and the
arithmetic tactic "omega"), its variant in
\agda (for any Decidable Partial Order),
the \pbesym proof, and 
a proof using the \dafny verifier.
It is readily apparent that \pbesym's
proof search, working
hand-in-glove with SMT-based theory
reasoning, makes proving the result
trivial in comparison to \coq or \agda.
Of course, proof assistants like
\agda, \coq, and \toolfont{Isabelle}
emit easily checkable certificates
and have decades-worth of tactics,
libraries, and proof scripts that
enable large scale proof engineering.
On the other hand, \dafny's fuel-based 
axiom instantiation automatically 
unfolds the definition of "swap" twice, 
thereby completing the proof 
without any user input.
These heuristics are 
orthogonal to  \pbesym and can 
be combined with it, if the user 
wishes to trade off predictability 
for even more automation.

% heuristics are compatible with \pbesym, but aren't enabled by default in
% \toolname, \coq, or \agda because they can cause small changes to make proofs fail
% or succeed for seemingly ``magical'' reasons.
%

\begin{figure}[t!]
  \begin{center}
    \includegraphics[width=5.5in]{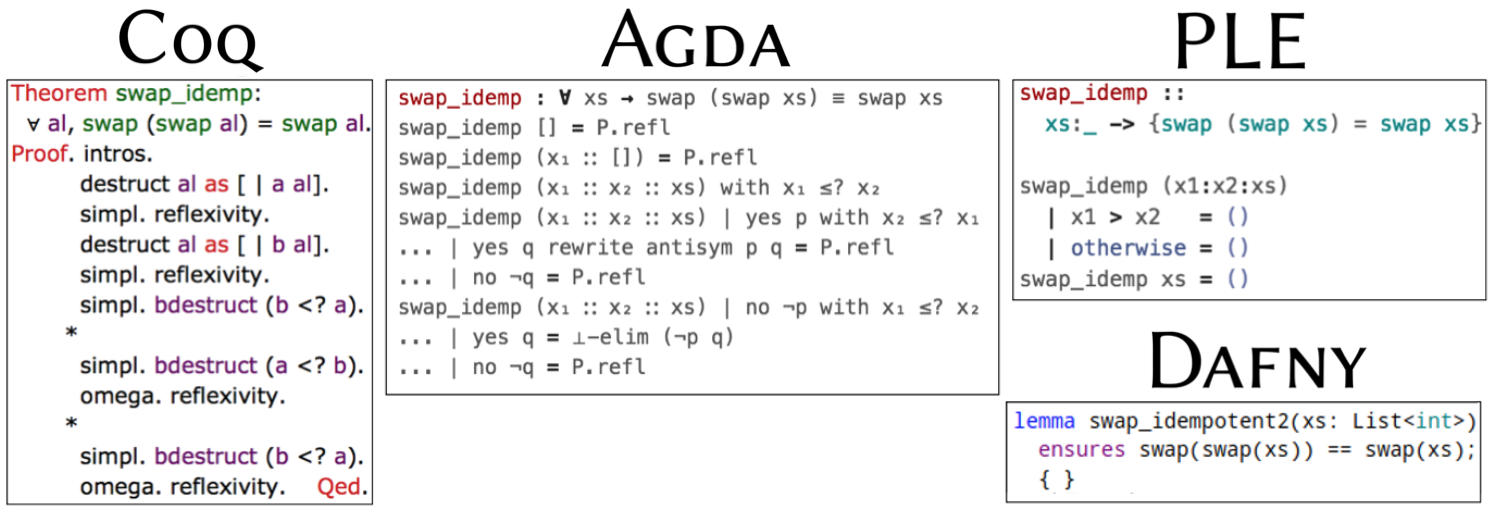}
  \end{center}
  \vspace{-3mm}
  \caption{Proofs that \texttt{swap} is idempotent with
    \coq, \agda, \dafny and \pbesym.
  }
  \label{fig:swap}
\end{figure}

\mypara{Summary}
We saw an overview of an SMT-automated refinement 
type checker that achieves SMT-decidable checking 
by restricting verification conditions to be 
\emph{quantifier-free} and hence, decidable.
In existing SMT-based verifiers (\eg \dafny) there 
are two main reasons to introduce quantifiers, namely
(1)~to express quantified \emph{specifications} and 
(2)~to encode the semantics of \emph{user-defined functions}.
Next, we use propositions-as-types to encode 
quantified specifications and in~\S~\ref{sec:formalism} 
we show how to encode the semantics of user-defined functions 
via refinement reflection.

\newcommand\expror[2]{\ensuremath{\texttt{Either}\ #1\ #2}\xspace}
\newcommand\exprand[2]{\ensuremath{(#1, #2)}\xspace}
\newcommand\exprorleft[1]{\ensuremath{\texttt{Left}\ #1}\xspace}
\newcommand\exprorright[1]{\ensuremath{\texttt{Right}\ #1}\xspace}
\newcommand\tyand[2]{\ensuremath{(#1, #2)}\xspace}
\newcommand\tyor[2]{\ensuremath{\texttt{Either}\ #1\ #2}\xspace}
\newcommand\tyex[3]{\ensuremath{(#1::#2,#3)}\xspace}
\newcommand\tyall[3]{\ensuremath{#1:#2 \rightarrow #3}\xspace}

\section{Embedding Natural Deduction with Refinement Types}
\label{sec:higher-order}

In this section we show how user-provided
\emph{quantified} specifications
can be naturally encoded using
$\lambda$-abstractions and
dependent pairs to encode
universal and existential
quantification, respectively.
Proof terms can be generated
using the standard natural
deduction derivation rules,
following Propositions as Types~\cite{Wadler15}
(also known as the Curry-Howard
isomorphism~\cite{CHIso}).
What is new is that we exploit this encoding to show for the first time that a refinement
type system can represent any proof
in Gentzen's natural deduction~\cite{Gentzen35}
while still taking advantage of
SMT decision procedures to automate the
quantifier-free portion of natural
deduction proofs.
For simplicity, in this section we assume
all terms are total.

\subsection{Propositions: Refinement Types}

Figure~\ref{fig:hol} maps logical predicates
to types constructed over quantifier-free refinements.

\mypara{Native terms}
Native terms consist of all of the
(quantifier-free) expressions
of the refinement languages.
In \S~\ref{sec:formalism}
we formalize refinement typing
in a core calculus \corelan where
refinements include (quantifier-free)
terminating expressions.

\mypara{Boolean connectives}
Implication $\formula_1 \Rightarrow \formula_2$
is encoded as a \emph{function} from the proof of
$\formula_1$ to the proof of $\formula_2$.
Negation is encoded as an implication where
the consequent is "False".
Conjunction $\formula_1 \land \formula_2$
is encoded as the \emph{pair} "($\formula_1$, $\formula_2$)"
that contains the proofs of \emph{both} conjuncts
and disjunction $\formula_1 \lor \formula_2$
is encoded as the \emph{sum} type "Either"
that contains the proofs of \emph{one of}
the disjuncts, \ie where "data Either a b = Left a | Right b".

\mypara{Quantifiers}
Universal quantification $\forall x. \formula$
is encoded as lambda abstraction $x:\typ \rightarrow \formula$
and eliminated by function application.
Existential quantification $\exists x. \formula$
is encoded as a dependent pair "(x::$\typ$, $\formula$)"
that contains the term $x$ and a proof of a formula
that depends on $x$.
Even though refinement type systems do not
traditionally come with explicit syntax for
dependent pairs, one can encode dependent
pairs in refinements using abstract
refinement types~\cite{Vazou13}
which do not add extra complexity
to the system.
Consequently, we add the syntax for
dependent pairs in Figure~\ref{fig:hol}
as syntactic sugar for abstract refinements.

\begin{figure}[t!]
$$
\begin{array}{rcccc}
                    & \quad  & \textbf{Logical Formula} & \quad & \textbf{Refinement Type}  \\
\text{Native Terms} & \quad&   $e$      & & \proofterm{e}    \\
\text{Implication}  & & \formula_1 \Rightarrow \formula_2 &      & {\formula_1} \rightarrow {\formula_2}     \\
\text{Negation}     & & \neg \formula                  & & \formula \rightarrow \proofterm{\efalse} \\
\text{Conjunction}  & & \formula_1 \land \formula_2    & & \tyand{\formula_1}{\formula_2} \\
\text{Disjunction}  & &  \formula_1 \lor \formula_2    & & \tyor{\formula_1}{\formula_2}   \\
\text{Forall}       & &\forall x. \formula             & & \tyall{x}{\typ}{\formula}                \\
\text{Exists}       & & \exists x. \formula            & & \tyex{x}{\typ}{\formula}                 \\
\end{array}
$$
\caption{Mapping from logical predicates to quantifier-free refinement types.
\proofterm{e} abbreviates \prooftermlarge{e}. Function binders are not relevant for
negation and implication, and hence, elided.}
\figrule
\label{fig:hol}
\end{figure}

\subsection{Proofs: Natural Deduction}

We overload \formula to be both a proposition and
a refinement type.
We connect these two meanings of \formula by
using the Propositions as Types~\cite{Wadler15},
to prove that if there exists an expression (or proof term) with
refinement type \formula, then the proposition
\formula is valid.

We construct proofs terms using Gentzen's
natural deduction system~\cite{Gentzen35},
whose rules map directly to refinement type
derivations.
The rules for natural deduction arise from the propositions-as-types reading
of the standard refinement type
checking rule (to be defined in
\S~\ref{sec:formalism}) \ch{\env}{e}{\formula}
as ``\formula is provable under the assumptions of \env''.
%% and
%% \ch{\env}{e}{\typ} as ``$e$ is a term''.
%
We write \gentzenValid{\env}{\formula}
for Gentzen's natural deduction judgment
``under assumption \env, proposition \formula holds''.
Then, each of Gentzen's logical rules
can be recovered from the rules in
Figure~\ref{fig:ch} by rewriting
each judgment \ch{\env}{e}{\formula}
of \corelan as \gentzenValid{\env}{\formula}.
For example, conjunction and universal
elimination can be derived as:
$$
\inference{
	\gentzenValid{\env}{{\formula_1} \lor{\formula_2}} &&
	\gentzenValid{\env,\formula_1}{\formula} &&
	\gentzenValid{\env,\formula_2}{\formula} &&
}{
	\gentzenValid{\env}{\formula}
}
\lor\small{\textsc{\!-E}}
\quad
\inference{
	\gentzenValid{\env}{e_x\ \text{term}} &&
	\gentzenValid{\env}{\forall x. \formula}
}{
	\gentzenValid{\env}{\formula\subst{x}{e_x}}
}
\forall\small{\textsc{\!-E}}
$$

\mypara{Programs as Proofs}
As Figure~\ref{fig:ch} directly maps
natural deduction rules to derivations
that are accepted by refinement typing,
we conclude that if there exists a
natural deduction derivation for a
proposition \formula, then there
exists an expression that has the refinement type \formula.
\begin{theorem}\label{theorem:embedding}
If\ \gentzenValid{\env}{\formula},
then we can construct an $e$ such that\ \ch{\env}{e}{\formula}.
\end{theorem}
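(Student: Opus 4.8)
The plan is to prove the statement by structural induction on the derivation of the natural deduction judgment \gentzenValid{\env}{\formula}, reading each of Gentzen's rules through the propositions-as-types dictionary of Figure~\ref{fig:hol} and exhibiting, for each, the corresponding refinement typing rule of Figure~\ref{fig:ch} that produces the witnessing term $e$. The invariant I maintain throughout is that the natural deduction assumption context and the refinement typing context stay in lock-step: whenever natural deduction carries an assumption $\formula_i \in \env$, the typing context carries a binder $x_i : \formula_i$, so that membership $\formula_i \in \env$ in the logical reading becomes the variable rule \ch{\env}{x_i}{\formula_i} on the typing side.

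The base cases are the assumption rule --- where $\formula \in \env$ is discharged by returning the corresponding proof variable --- and the native-term rule, where $\formula$ is a quantifier-free fact established by the background logic; here the witness is the trivial proof term (\ie $()$ ascribed to the refined unit type $\typp$), whose well-typedness is exactly the quantifier-free verification condition dispatched by the SMT solver. For the inductive step I treat one connective at a time. For the introduction and elimination of implication I produce, respectively, a $\lambda$-abstraction and an application; negation is the special case of implication whose consequent is the proof type of falsity, as recorded in Figure~\ref{fig:hol}. Conjunction introduction and elimination correspond to pairing and projection; disjunction introduction to the injections $\mathtt{Left}$ and $\mathtt{Right}$, and disjunction elimination (the $\lor$-E rule shown in the excerpt) to a $\mathtt{case}$ analysis on an $\mathtt{Either}$ value. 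Universal quantification is handled by $\lambda$-abstraction over $x : \typ$ and its elimination by application to a term (the $\forall$-E rule shown); existential introduction builds a dependent pair \tyex{x}{\typ}{\formula} and its elimination destructs one. In each case I apply the induction hypothesis to the premise or premises, obtaining subterms, and assemble them with the matching term constructor; the resulting typing derivation is precisely the corresponding rule of Figure~\ref{fig:ch}.

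The reason this induction is almost mechanical is by design: Figure~\ref{fig:ch} is arranged so that, premise-for-premise and conclusion-for-conclusion, each typing rule is the propositions-as-types image of exactly one natural deduction rule, so the two systems have the same rule shapes. The only genuinely non-syntactic obligation --- and the step I expect to be the main obstacle --- is the native-term base case. There I must argue that the atomic, quantifier-free validities admitted as leaves of a natural deduction proof coincide with the quantifier-free verification conditions that the refinement typing rule for $()$ discharges via SMT; this requires that the decidable background logic is complete for its quantifier-free fragment, so that every atomic fact used by natural deduction is in fact checkable as a refinement. A secondary, bookkeeping subtlety arises in the quantifier cases --- ensuring that the dependent-pair encoding of $\exists$ (which, per Figure~\ref{fig:hol} and \S~\ref{sec:higher-order}, is desugared from abstract refinement types) genuinely supports both an introduction and an elimination form --- but, granting that sugar, these collapse to the same pairing and destructuring pattern already used for conjunction.
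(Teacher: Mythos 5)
Your proposal is correct and takes essentially the same route as the paper: the paper proves Theorem~\ref{theorem:embedding} precisely by observing that each of Gentzen's rules is the term-erasure of a corresponding rule in Figure~\ref{fig:ch}, so an induction on the natural deduction derivation reconstructs the witnessing term rule-by-rule, exactly as you spell out. The obstacle you anticipate in the native-term base case is not an issue in the paper's setting, since the refinement logic is by construction a decidable quantifier-free theory (so SMT checking is complete for those atomic validities), and in pure Gentzen derivations the leaves are in any case assumptions, handled by your variable-rule invariant.
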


Note that our embedding is \textit{not} an isomorphism, 
since the converse of Theorem~\ref{theorem:embedding} does not hold. 
As a counterexample, the law of the excluded middle
can be proved in our system
(\ie we can construct an "Either" term $e$, so that 
$\ch{\bind{p}{\proofterm{\typbool\ | \ \etrue}}}{e}{p\lor\neg p}$), 
but cannot be proved using natural deduction
(\ie $\gentzenValid{{\proofterm{\etrue}}\not}{p\lor\neg p}$).
The reason for that is that our system is using the classical logic
of the SMTs, which includes the law of the excluded middle.
On the contrary, in intuitionistic systems that also encode natural deduction
(\eg \coq, \idris, \nuPRL)
the law of the excluded middle should be axiomatized.

\subsection{Examples}\label{sec:expressiveness:examples}
Next, we illustrate our encoding
with examples of proofs for
quantified propositions
ranging from
textbook logical tautologies,
properties of datatypes like lists,
and induction on natural numbers.
%
%These and more examples can be found in \url{https://github.com/nikivazou/LiquidHOL}.

\mypara{Natural Deduction as Type Derivation}
We illustrate the mapping from natural deduction rules to typing rules
in Figure~\ref{fig:ex:ch} which uses typing judgments to express
Gentzen's proof of the proposition
$$
\phi \equiv (\exists x. \forall y. (p\ x\ y)) \Rightarrow (\forall y. \exists x. (p\ x\ y))
$$
Read bottom-up, the derivation provides a proof of \formula.
Read top-down, it constructs a \emph{proof} of the formula
as the \emph{term}
%the proof term of the formula to be
$\lambda e\ y. \texttt{case}\ e\ \texttt{of}\ \{
    (x,e_x) \rightarrow (x, e_x\ y)\}$.
This proof term corresponds directly to the following Haskell expression
that typechecks with type \formula.
\begin{mcode}
    exAll :: p:(a -> a -> Bool) -> (x::a, y:a -> {p x y}) -> y:a -> (x::a, {p x y})
    exAll _ = \e y -> case e of {(x, ex) -> (x, ex y)}
\end{mcode}
%
% Pattern matching simplifies the proof to: "existsForall f (x,px) y = (x, px y)".

\begin{figure}
%\begin{framed}
$$
\inference{
	\inference{
		\inference{
			\bind{e}{\formulap},
			\bind{y}{\typ_y}
			\vdash
			 e
			 :
	        \formulap &&
	        \inference{
	        \bind{e}{\formulap},
			\bind{y}{\typ_y},
			\bind{x}{\type_x},
			\bind{e_x}{\formulax}
			\vdash
			 e_x
			 :
	         \formulax
	         \\
	        \bind{e}{\formulap},
			\bind{y}{\typ_y},
			\bind{x}{\type_x},
			\bind{e_x}{\formulax}
			\vdash
			 y
			 :
	         \typ_y
	        }{
	        \bind{e}{\formulap},
			\bind{y}{\typ_y},
			\bind{x}{\type_x},
			\bind{e_x}{\formulax}
			\vdash
			 e_x\ y
			 :
	         p\ x\ y
	        }\forallElim
		}{
			\bind{e}{\formulap},
			\bind{y}{\typ_y}
			\vdash
			\texttt{case}\ e\ \texttt{of}\ \{
                                (x,e_x) \rightarrow (x, e_x\ y)
                        \}
                        :
	        \exists x. (p\ x\ y)
		}\existsElim
	}{
	\bind{e}{\formulap}
	\vdash
	\lambda y. \texttt{case}\ e\ \texttt{of}\ \{
    (x,e_x) \rightarrow (x, e_x\ y)
    \}
    :
	\forall y. \exists x. (p\ x\ y)
	}\forallIntro
}
{
\emptyset
\vdash
\lambda e\ y. \texttt{case}\ e\ \texttt{of}\ \{
    (x,e_x) \rightarrow (x, e_x\ y)
\}
:
{(\exists x. \forall y. (p\ x\ y)) \Rightarrow (\forall y. \exists x. (p\ x\ y))}
}
\impIntro
$$
\caption{Proof of $(\exists x. \forall y. (p\ x\ y)) \Rightarrow (\forall y. \exists x. (p\ x\ y))$
where
$\formulap \equiv \exists x. \forall y. (p\ x\ y)
$,
$
\formulax \equiv \forall y. (p\ x\ y)
$.
}
\figrule
\label{fig:ex:ch}
% \end{framed}
\end{figure}

\mypara{SMT-aided proofs}
The great benefit of using refinement types
to encode natural deduction is that the quantifier-free
portions of the proof can be automated via SMTs.
For every quantifier-free proposition \formula,
you can convert between $\{\formula\}$,
where \formula is treated as an SMT-proposition
and \formula, where \formula is treated as a
type; and this conversion goes \emph{both} ways.
For example, let $\formula \equiv p \land (q \lor r)$.
Then "flatten" converts from \formula to
$\{\formula\}$ and "expand" the other way,
while this conversion is SMT-aided.
\begin{mcode}
    flatten :: p:_ -> q:_ -> r:_-> ({p}, Either {q} {r}) -> {p && (q || r)}
    flatten (pf, Left  qf) = pf &&& qf
    flatten (pf, Right rf) = pf &&& rf

    expand :: p:_ -> q:_ -> r:_ -> {p && (q || r)} -> ({p}, Either {q} {r})
    expand proof | q = (proof, Left  proof)
    expand proof | r = (proof, Right proof)
\end{mcode}

\mypara{Distributing Quantifiers}
% \mypara{Existentials over disjunction}
%
Next, we construct the proof terms needed
to prove two logical properties:
that existentials distribute over disjunctions
% (\ref{eq:ex:or})
and foralls over conjunctions, \ie
% (\ref{eq:ex:and}).
%
\begin{align}
  \formula_\exists \ & \equiv (\exists x. p\ x \lor q\ x) \Rightarrow ((\exists x. p\ x) \lor (\exists x. q\ x))
  \label{eq:ex:or} \\
  \formula_\forall \ & \equiv (\forall x. p\ x \land q\ x) \Rightarrow ((\forall x. p\ x) \land (\forall x. q\ x))
  \label{eq:ex:and}
\end{align}
The specification of these properties requires
nesting quantifiers inside connectives and
vice versa.
The proof of $\formula_\exists$ (\ref{eq:ex:or})
proceeds by existential case splitting and introduction:
\begin{code}
    exDistOr :: p:_ -> q:_ -> (x::a, Either {p x} {q x})
                           -> Either (x::a, {p x}) (x::a, {q x})
    exDistOr _ _ (x, Left  px) = Left  (x, px)
    exDistOr _ _ (x, Right qx) = Right (x, qx)
\end{code}
%
% \PW{Phil suggests to show the proof derivation tree here too}
%
% \mypara{Foralls over conjunction}
% Similarly, we prove that foralls distribute over conjunction:
% %
% $$\formula \equiv \forall x. (p\ x \land q\ x)) \Rightarrow ((\forall x. p\ x)\land (\forall x. q\ x))$$
% %
% The specification of the conclusion now
% requires nesting universal quantification
% over conjunctions.
%
Dually, we prove $\formula_\forall$ (\ref{eq:ex:and}) via
a $\lambda$-abstraction and case spitting
inside the conjunction pair:
\begin{code}
    allDistAnd :: p:_ -> q:_ -> (x:a->({p x}, {q x}))
                             -> ((x:a->{p x}), (x:a->{q x}))
    allDistAnd _ _ andx = ( (\x -> case andx x of (px, _) -> px)
                          , (\x -> case andx x of (_, qx) -> qx) )
\end{code}
The above proof term exactly corresponds to its natural deduction proof derivation
but using SMT-aided verification can get simplified to the following
\begin{code}
    allDistAnd _ _ andx = (pf, pf)
      where pf x = case andx x of (px, py) -> px &&& py
\end{code}

%%\subsubsection{Forall - exists over implication}
%%We prove distribution of foralls over conjunction:
%%$$(\forall x. \exists y. (p\ x \Rightarrow q\ x\ y)) \Rightarrow (\forall x. (p\ x \Rightarrow (\exists y. q\ x\ y))))$$
%%The proof proceeds by forall elimination and existential introduction:
%%
%%\begin{code}
%%forallExistsImpl :: p:(a -> Bool) -> q:(a -> a -> Bool)
%%  -> (x:a -> (y::a, {p x} -> {q x y} ))
%%  -> (x:a -> ({p x} -> (y::a, {q x y})))
%%forallExistsImpl p q f x px
%%  = case f x of
%%      (y, pxToqxy) -> (y, pxToqxy px)
%%\end{code}

% \subsubsection{Properties on user specified domains}

\mypara{Properties of User Defined Datatypes}
As \formula can describe properties
of data types like lists, we can prove
properties of such types, \eg that for
every list "xs", if there exists a list
"ys" such that "xs == ys ++ ys" ,then
"xs" has even length.
$$\formula \equiv \forall xs. ((\exists ys.\ xs = ys \ \texttt{++}\ ys) \Rightarrow (\exists n. \texttt{len}\ xs = n + n))$$
The proof ("evenLen") proceeds by existential elimination and introduction
and uses the "lenAppend" lemma, which in turn uses induction on the
input list and \pbesym to automate equational reasoning.
\begin{code}
    evenLen :: xs:[a]->(ys::[a],{xs = ys ++ ys})->(n::Int,{len xs = n+n})
    evenLen xs (ys,pf) = (len ys, lenAppend ys ys &&& pf)

    lenAppend :: xs:_ -> ys:_ -> {len (xs ++ ys) = len xs + len ys}
    lenAppend []     _  = ()
    lenAppend (x:xs) ys = lenAppend xs ys
\end{code}

\mypara{Induction on Natural Numbers}
Finally, we specify and verify \emph{induction} on natural numbers:
$$
\formula_{ind} \equiv  (p\ 0 \land (\forall n. p\ (n-1) \Rightarrow p\ n) \Rightarrow \forall n. p\ n)
$$
The proof proceeds by induction (\eg case splitting).
In the base case, "n == 0", the proof calls the left conjunct, which contains a proof of the base case.
Otherwise, "0 < n", and the proof \emph{applies} the induction hypothesis to the right conjunct instantiated at "n-1".
\begin{code}
    ind :: p:_ -> ({p 0},(n:Nat -> {p (n-1)} -> {p n})) -> n:Nat -> {p n}
    ind p (p0, pn) 0 = p0
    ind p (p0, pn) n = pn n (ind p (p0, pn) (n-1))
\end{code}

\newcommand\fst[1]{\ensuremath{\texttt{fst}\ #1}\xspace}
\newcommand\snd[1]{\ensuremath{\texttt{snd}\ #1}\xspace}

\begin{figure}[t]
\setlength\tabcolsep{0pt} % default value: 6pt
\begin{tabular}{ccc}
\multirow{2}{*}{
$
\inference{
    \ch{\env}{\fst{e}}{\formula_1} \\
    \ch{\env}{\snd{e}}{\formula_2}
}{
	\ch{\env}{e}{\tyand{\formula_1}{\formula_2}}
}\land\small{\textsc{\!-I}}$}
& \quad &
$\inference{
	\ch{\env}{e}{\tyand{\formula_1}{\formula_2}}
}{
	\ch{\env}{\fst{e}}{\formula_1}
}
\land\small{\textsc{\!-L-E}}$\\
& \quad & $\inference{
	\ch{\env}{e}{\tyand{\formula_1}{\formula_2}}
}{
	\ch{\env}{\snd{e}}{\formula_2}
}
\land\small{\textsc{\!-R-E}}$
\\[0.12in]
$\inference{
    \ch{\env}{e_1}{\formula_1}
}{
	\ch{\env}{\exprorleft{e_1}}{\tyor{\formula_1}{\formula_2}}
}\lor\small{\textsc{\!-L-I}}$
& \quad &
\multirow{2}{*}{$
\inference{
	\ch{\env}{e}{\tyor{\formula_1}{\formula_2}} \\
	\ch{\env,\bind{x_1}{\formula_1}}{e_1}{\formula} &&
	\ch{\env,\bind{x_2}{\formula_2}}{e_2}{\formula}
}{
	\arraycolsep=0.5pt
	\begin{array}{lll}
	\env \vdash \texttt{case}\ e\ \text{of}\ \{&
		\exprorleft{x_1}\rightarrow e_1; & \\
	& \exprorright{x_2}\rightarrow e_2 \}\ & : \formula	\\
	\end{array}
}\lor\small{\textsc{\!-E}}$}\\
\\
%% Implication
$\inference{
    \ch{\env}{e_1}{\formula_2}
}{
	\ch{\env}{\exprorright{e_2}}{\tyor{\formula_1}{\formula_2}}
}\lor\small{\textsc{\!-R-I}}$ & \quad&
\\[0.12in]
$\inference{
	\ch{\env, \bind{x}{\formula_x}}{e}{\formula}
}{
	\ch{\env}{\lambda x. e}{\formula_x \rightarrow \formula}
}\impIntro$ &
\quad &
$\inference{
	\ch{\env}{e}{\formula_x \rightarrow \formula} &&
	\ch{\env}{e_x}{\formula_x}
}{
	\ch{\env}{e\  e_x}{\formula}
}
\Rightarrow\small{\textsc{\!-E}}
$
\\[0.12in]
%% Forall
$
\inference{
	\ch{\env, \bind{x}{\typ}}{e}{\formula}
}{
	\ch{\env}{\lambda x. e}{(x:\typ \rightarrow \formula)}
}\forallIntro$ &
\quad
&
$
\inference{
	\ch{\env}{e_x}{\typ} &&
	\ch{\env}{e}{(x:\typ \rightarrow \formula)}
}{
	\ch{\env}{e\ e_x}{\formula\subst{x}{e_x}}
}
\forallElim
$
\\[0.12in]
%% Exists
$
\inference{
	\ch{\env}{\fst{e}}{\typ}  &&
	\ch{\env, \bind{x}{\typ}}{\snd{e}}{\formula}  &&
}{
	\ch{\env}{e}{\tyex{x}{\typ}{\formula\subst{x}{\fst{e}}}}
}\exists\small{\textsc{\!-I}}$ &\quad \quad &
$
\inference{
	\ch{\env}{e}{\tyex{x}{\typ}{\formula_x}} &&
	\ch{\env, \bind{x}{\typ}, \bind{y}{\formula_x}}{e'}{\formula}
}{
	\ch{\env}{\texttt{case}\ e\ \texttt{of}\ \{ (x, y) \rightarrow e'\}}{\formula}
}
\existsElim
$
\end{tabular}
%
%% Negation
%%$$
%%\inference{
%%	\ch{\env}{e_x}{\formula} &&
%%	\ch{\env}{e}{\formula \rightarrow \proofterm{\efalse}}
%%}{
%%	\ch{\env}{e\ e_x}{\proofterm{\efalse}}
%%}
%%\neg\small{\textsc{\!-E}}
%%\quad
%%\inference{
%%	\ch{\env, \bind{x}{\formula}}{e}{\proofterm{\efalse}}
%%}{
%%	\ch{\env}{\lambda x. e}{\formula \rightarrow \proofterm{\efalse}}
%%}
%%\neg\small{\textsc{\!-I}}
%%$$
%
\caption{Natural deduction rules for refinement types.
With $[\texttt{fst}|\texttt{snd}]\  e \equiv \texttt{case}\ e\ \texttt{of}\ \{(x_1, x_2)
\rightarrow [x_1 | x_2 ] \}$ .}
\label{fig:ch}
\figrule
\end{figure}

\subsection{Consequences}
To summarize, we use the propositions-as-types
principle to make two important contributions.
First, we show that natural deduction reasoning
can smoothly co-exist with SMT-based verification
to automate the decidable, quantifier-free portions
of the proof.
%

\begin{comment}

\begin{code}
    andRefine :: b1:Bool -> b2:Bool -> ({b1}, {b2}) -> {b1 && b2}
    andRefine _ _ (b1, b2) = ()
\end{code}
%
Using "andRefine" the user does not have to explicitly eliminate "PAnd".
For instance the "forallAndDistr" proof from~\S~\ref{sec:expressiveness:examples}
simplifies to the following.
%
\begin{code}
    forallAndDistr _ _ andx = (fAndG, fAndG)
      where fAndG = andRefine (f x) (g x) (andx x)
\end{code}

On the other direction, we reify SMT conjunction by using
$\formula_1 \land \formula_2$
as a proof for each conjunct $\formula_1$ and $\formula_2$
that can later be used in natural deduction derivations.
\begin{code}
    andReify :: b1:Bool -> b2:Bool -> {b1 && b2} -> ({b1}, {b2})
    andReify _ _ b = (b, b)
\end{code}
\end{comment}

Second, we show for the first time how
natural deduction proofs can be encoded
in refinement type systems like \toolname
and we expect this encoding to extend,
in a straightforward manner to other
SMT-based deductive verifiers
(\eg \dafny and \fstar).
This encoding shows that refinement type systems
are expressive enough to encode any intuitionistic natural deduction proof,
gives a guideline for encoding proofs
with nested quantifiers, and provides
a pleasant implementation of
natural deduction that
% can be used
is pedagogically useful.

\section{Refinement Reflection: \corelan}
\label{sec:formalism}
\label{sec:types-reflection}
\label{sec:theory}

Refinement reflection encodes recursive functions
in the quantifier-free, SMT logic and it is
formalized in three steps.
First, we develop a core calculus \corelan with an
\emph{undecidable} type system based on denotational
semantics and prove it sound. 
%%We show how the soundness of the type system
%%allows us to \emph{prove theorems} using \corelan.
%
Next, in \S~\ref{sec:theory:algorithmic} we define
a language \smtlan that soundly approximates \corelan
while enabling decidable, SMT-based type checking.
Finally, in \S~\ref{sec:pbe} we develop a complete
proof search algorithm to automate equational reasoning.

\subsection{Syntax}

\begin{figure}[t]
\begin{minipage}[t]{0.49\textwidth}
\input{fig-theoryR}
\end{minipage}
\ \
\begin{minipage}[t]{0.49\textwidth}
\input{fig-theoryS}
\end{minipage}

\label{fig:lang-syntax}
\caption{\textbf{(Left) Syntax of \corelan}: Denotational Typing.
         \quad
         \textbf{(Right) Syntax of \smtlan:} Algorithmic Typing.
         }
\label{fig:syntax}
\label{fig:smtsyntax}
\figrule
\end{figure}

Figure~\ref{fig:syntax} summarizes the syntax of \corelan,
which is essentially the calculus \undeclang~\cite{Vazou14}
with explicit recursion and a special $\erefname$ binding
to denote terms that are reflected into the refinement logic.
The elements of \corelan are constants, values,
expressions, binders, and programs.

\mypara{Constants}
The constants of \corelan
include primitive relations $\odot$,
here, the set $\{ =, <\}$.
Moreover, they include the
booleans $\etrue$, $\efalse$,
integers $\mathtt{-1}, \mathtt{0}$, $\mathtt{1}$, \etc,
and logical operators $\mathtt{\land}$, $\mathtt{\lnot}$, \etc.

\mypara{Data Constructors}
Data constructors are special constants.
% Each data type has an equality predicate $\haseq{T}$
% that is true only if values of type $T$ can be finitely compared.
For example, the data type \tintlist, which represents
finite lists of integers, has two data constructors: $\dnull$ (nil)
and $\dcons$ (cons).

\mypara{Values \& Expressions}
The values of \corelan include
constants, $\lambda$-abstractions
$\efun{x}{\typ}{e}$, and fully
applied data constructors $D$
that wrap values.
The expressions of \corelan
include values, variables $x$,
applications $\eapp{e}{e}$, and
$\mathtt{case}$ expressions.

\mypara{Binders \& Programs}
A \emph{binder} $\bd$ is a series of possibly recursive
$\mathtt{let}$ definitions, followed by an expression.
A \emph{program} \prog is a series of $\erefname$
definitions, each of which names a function
that is reflected into the refinement
logic, followed by a binder.
The stratification of programs via binders
is required so that arbitrary recursive definitions
are allowed in the program but cannot be inserted into the logic
via refinements or reflection.
(We \emph{can} allow non-recursive $\mathtt{let}$
binders in expressions $e$, but omit them for simplicity.)

\subsection{Operational Semantics}
We define $\hookrightarrow$ to be the small step, call-by-name
$\beta$-reduction semantics for \corelan.
We evaluate reflected terms %$\erefb{x}{\gtyp}{e}{\prog}$
as recursive $\mathtt{let}$ bindings, with termination
constraints imposed by the type system:
$$
\erefb{x}{\gtyp}{e}{\prog}
\hookrightarrow
\eletb{x}{\gtyp}{e}{\prog}
$$
We define $\evalsto{}{}$ to be the reflexive,
transitive closure of $\evals{}{}$.
Moreover, we define $\betaeq{}{}$ to be the reflexive,
symmetric, and transitive closure of $\evals{}{}$.

\mypara{Constants} Application of a constant requires the
argument be reduced to a value; in a single step, the
expression is reduced to the output of the primitive
constant operation, \ie ${c\ v} \hookrightarrow \ceval{c}{v}$.
For example, consider $=$, the primitive equality
operator on integers.
We have $\ceval{=}{n} \defeq =_n$
where $\ceval{=_n}{m}$ equals \etrue
iff $m$ is the same as $n$.
%
%%\VC{This is the first appearance of
%%  $\ceval{c}{v}$, I don't know what it is,
%%  it looks like a relation. Same for
%%  $=_n$.}
%%\NV{This is definition by example, since delta is standard in operational semantics}
%

\mypara{Equality}
We assume that the equality operator
is defined \emph{for all} values,
and, for functions, is defined as
extensional equality.
%
%%\VC{Do we need to assume it, or we could
%%  just define it, since we know what all
%%  those values are?}
%%\NV{You cannot really define operational function equality}
%
That is, for all
$f$ and $f'$,
$\evals{(f = f')}{\etrue}$
   $\mbox{iff}$
  $\forall v.\ \betaeq{f\ v}{f'\ v}$.
We assume source \emph{terms} only contain implementable equalities
over non-function types; while function extensional equality only
appears in \emph{refinements}. % and is approximated by the underlying logic.
\VC{This statement is very unclear to me.}

\subsection{Types}

\corelan types include basic types, which are \emph{refined} with predicates,
and dependent function types.
\emph{Basic types} \btyp comprise integers, booleans, and a family of data-types
$T$ (representing lists, trees \etc).
For example, the data type \tintlist represents lists of integers.
We refine basic types with predicates (boolean-valued expressions \refa) to obtain
\emph{basic refinement types} $\tref{v}{\btyp}{\refa}$.
We use \tlabel to mark provably terminating
computations and use
refinements to ensure that if
${{e}\text{:}{\tref{v}{\btyp^\tlabel}{\refa'}}}$,
then $e$ terminates.
As discussed by~\citet{Vazou14} termination labels
are checked using refinement types and are used
to ensure that refinements cannot diverge as required
for soundness of type checking under lazy evaluation.
Termination checking is crucial for this work,
as combined with checks for exhaustive definitions (\S~\ref{subsection:overview:refinements}),
it ensures totality (well-formedness)
of expressions as required by
propositions-as-types (\S~\ref{sec:higher-order})
and termination of \pbesym (\S~\ref{sec:pbe}).
Finally, we have dependent \emph{function types} $\trfun{x}{\typ_x}{\typ}$
where the input $x$ has the type $\typ_x$ and the output $\typ$ may
refer to the input binder $x$.
We write $\btyp$ to abbreviate $\tref{v}{\btyp}{\etrue}$
and \tfunbasic{\typ_x}{\typ} to abbreviate \trfun{x}{\typ_x}{\typ}, if
$x$ does not appear in $\typ$.
\VC{Kind of hard to read, even though $x$
  isn't bound, the type still has $x$ as a
  subscript. Compare to, $a:A \to B$, and
  $A \to B$}

\mypara{Denotations}
Each type $\typ$ \emph{denotes} a set of expressions $\interp{\typ}$,
that is defined via the operational semantics~\cite{Knowles10}.
Let $\shape{\typ}$ be the type we get if we erase all refinements
from $\typ$ and $\bhastype{}{e}{\shape{\typ}}$ be the
standard typing relation for the typed lambda calculus.
Then, we define the denotation of types as:
\begin{align*}
\interp{\tref{x}{\btyp}{r}} \defeq &\
  \{ e \mid  \bhastype{}{e}{\btyp}
           , \mbox{ if } \evalsto{e}{w} \mbox{ then } \evalsto{r\subst{x}{w}}{\etrue}
  \}\\
\interp{\tref{x}{\btyp^\tlabel}{r}} \defeq &\
  \interp{\tref{x}{\btyp}{r}} \cap \{ e \mid  \exists\ w. \evalsto{e}{w}\}\\
\interp{\trfun{x}{\typ_x}{\typ}} \defeq &\
  \{ e \mid \bhastype{}{e}{\shape{\tfunbasic{\typ_x}{\typ}}}
           , \forall e_x \in \interp{\typ_x}.\ (\eapp{e}{e_x}) \in \interp{\typ\subst{x}{e_x}}
  \}
\end{align*}

\mypara{Constants}
For each constant $c$ we define its type \constty{c}
such that $c \in \interp{\constty{c}}$.
For example,
$$
\begin{array}{lcl}
\constty{3} &\doteq& \tref{v}{\tint^\tlabel}{v = 3}\\
\constty{+} &\doteq& \trfun{\ttx}{\tint^\tlabel}{\trfun{\tty}{\tint^\tlabel}{\tref{v}{\tint^\tlabel}{v = x + y}}}\\
\constty{\leq} &\doteq& \trfun{\ttx}{\tint^\tlabel}{\trfun{\tty}{\tint^\tlabel}{\tref{v}{\tbool^\tlabel}{v \Leftrightarrow x \leq y}}}\\
\end{array}
$$
\VC{The right hand side is now a
  refinement type, not a set like the
  previous one.}

\subsection{Refinement Reflection}\label{subsec:logicalannotations}
Reflection \emph{strengthens} function output types
with a refinement that \emph{reflects} the
definition of the function in the logic.
We do this by treating each
$\erefname$-binder
(${\erefb{f}{\gtyp}{e}{\prog}}$)
as a $\eletname$-binder
(${\eletb{f}{\exacttype{\gtyp}{e}}{e}{\prog}}$)
during type checking (rule $\rtreflect$ in Figure~\ref{fig:typing}).

\mypara{Reflection}
We write \exacttype{\typ}{e} for the \emph{reflection}
of the term $e$ into the type $\typ$, defined as % by strengthening \typ as:
$$
\begin{array}{lcl}
\exacttype{\tref{v}{\btyp{^\tlabel}}{r}}{e}
  & \defeq
  & \tref{v}{\btyp{^\tlabel}}{r \land v = e}\\
\exacttype{\trfun{x}{\typ_x}{\typ}}{\efun{x}{}{e}}
  & \defeq
  & \trfun{x}{\typ_x}{\exacttype{\typ}{e}}
\end{array}
$$
As an example, recall from \S~\ref{sec:overview}
that the "reflect fib" strengthens the type of
"fib" with the refinement "fibP".
That is, let the user specified type of \fibname
be $t_\fibnameblack$ and the its definition be definition
$\lambda n . e_\fibnameblack$.
\begin{align*}
t_\fibnameblack & \defeq {\tfun{\ttnat}{\ttnat}} \\
e_\fibnameblack & \defeq \efib
\end{align*}
Then, the reflected type of \fibname will be:
$$
\exacttype{t_\fibnameblack}{e_\fibnameblack} = \trfun{n}{\ttnat}{\tref{v}{\tint^\tlabel}{0 \leq v \land v = e_\fibname}}
$$

\mypara{Termination Checking}
We defined \exacttype{\cdot}{\cdot} to be a \textit{partial}
function that only reflects provably terminating expressions,
\ie expressions whose result type is marked with \tlabel.
If a non-provably terminating function is reflected
in an \corelan expression then type checking will
fail (with a reflection type error in the implementation).
This restriction is crucial for soundness,
as diverging expressions can lead to inconsistencies.
For example, reflecting the diverging "f x = 1 + f x"
into the logic leads to an inconsistent system that
is able to prove "0 = 1".
%%This choice complies with both the facts that
%%all refinement expressions are checked to be terminating
%%by the type system (by the rule \rwbase) and, as explained in \S~\ref{sec:algorithmic},
%%all reflected functions are encoded as total uninterpreted functions
%%in the SMT logic of \smtlan.

\mypara{Automatic Reflection}
Reflection of \corelan expressions into the refinements happens
automatically by the type system, not manually by the user.
The user simply annotates a function $f$ as $\annotReflect\ f$.
Then, the rule $\rtreflect$ in Figure~\ref{fig:typing}
is used to type check the reflected function by strengthening
$f$'s result via \exacttype{\cdot}{\cdot}.
Finally, the rule \rtlet is used to check that the
automatically strengthened type of $f$ satisfies $f$'s implementation.

\RN{Hmm, the ordering is a bit weird here where there are a couple references to
fig 5, but then the next subsec starts with an introductory tone.}

\subsection{Typing Rules}
\begin{figure}[!t]
%% \begin{minipage}[t]{0.47\textwidth}
\input{typing-rules}
%% \end{minipage}
%% \hspace{0.2in}
%% \begin{minipage}[t]{0.47\textwidth}
\input{wellformedness-rules}
\input{subtyping-rules}
%%\end{minipage}
\caption{Typing of \corelan.} % and algorithmic subtyping of \smtlan}
\label{fig:typing}
\figrule
\end{figure}

Next, we present the type-checking rules of \corelan, as found in Figure~\ref{fig:typing}.

\mypara{Environments and Closing Substitutions}
A \emph{type environment} $\env$ is a sequence of type bindings
$\tbind{x_1}{\typ_1},\ldots,\tbind{x_n}{\typ_n}$. An environment
denotes a set of \emph{closing substitutions} $\sto$ which are
sequences of expression bindings:
$\gbind{x_1}{e_1}, \ldots,$ $\gbind{x_n}{e_n}$ such that:
$$
\interp{\env} \defeq  \{\sto \mid \forall \tbind{x}{\typ} \in \Env.
                                    \sto(x) \in \interp{\applysub{\sto}{\typ}} \}
$$
where $\applysub{\sto}{\typ}$ applies a substitution to a type (and
  likewise $\applysub{\sto}{\prog}$, to a program).

A reflection environment $\RRenv$ is a sequence that binds the names
of the reflected functions with their definitions
$\gbind{f_1}{e_1},\ldots,\gbind{f_n}{e_n}$.
A reflection environment respects a type environment
when all reflected functions satisfy their types:
$$
\env \models \RRenv \defeq
\forall (\gbind{f}{e}) \in \RRenv.\;
\exists \typ.\;
(\tbind{f}{\typ}) \in \env \wedge
(\hastype{\env}{\RRenv}{e}{\typ})
$$

\mypara{Typing}
A judgment \hastype{\env}{\RRenv}{\prog}{\typ} states that
the program $\prog$ has the type $\typ$ in
the type environment $\env$ and the reflection environment $\RRenv$.
That is, when the free variables in $\prog$ are
bound to expressions described by $\env$, the
program $\prog$ will evaluate to a value
described by $\typ$.

\mypara{Rules}
All but two of the rules are standard~\cite{Knowles10,Vazou14}
except for the addition of the reflection environment $\RRenv$ at each rule.
First, rule \rtreflect is used to
extend the reflection environment with the binding of the function name
with its definition ($f \mapsto e$)
and moreover to strengthen the type of each
reflected binder with its definition, as described previously
in \S~\ref{subsec:logicalannotations}.
Second, rule \rtexact strengthens the expression with
a singleton type equating the value and the expression
(\ie reflecting the expression in the type).
This is a generalization of the ``selfification'' rules
from \cite{Ou2004,Knowles10} and is required to
equate the reflected functions with their definitions.
For example, the application "fib 1" is typed as
${\tref{v}{\tint^\tlabel}{\fibdef\ 1 \wedge v = \fib\ 1}}$ where
the first conjunct comes from the (reflection-strengthened)
output refinement of "fib" (\S~\ref{sec:overview}) and
the second comes from rule~\rtexact.

\mypara{Well-formedness}
A judgment \iswellformed{\env}{\typ} states that
the refinement type $\typ$ is well-formed in
the environment $\env$.
Following~\citet{Vazou14}, $\typ$ is well-formed if all
the refinements in $\typ$ are $\tbool$-typed,
provably terminating expressions in $\env$.

%%\mypara{Termination}
%%%
%%Under arbitrary beta-reduction semantics
%%(which includes lazy evaluation), soundness
%%of refinement type checking requires checking
%%termination, for two reasons:
%%%
%%(1)~to ensure that refinements cannot diverge, and
%%(2)~to account for the environment during subtyping~\citep{Vazou14}.
%%%
%%We use \tlabel to mark provably terminating
%%computations, and extend the rules to use
%%refinements to ensure that if
%%${\ahastype{\env}{e}{\tref{v}{\btyp^\tlabel}{r}}}$,
%%then $e$ terminates~\citep{Vazou14}.

\mypara{Subtyping}
A judgment \issubtype{\env}{\RRenv}{\typ_1}{\typ_2} states
that the type $\typ_1$ is a subtype of %the type
$\typ_2$ in the environments $\env$ and $\RRenv$.
%
%%Informally, $\typ_1$ is a subtype of $\typ_2$ if
%%the refinement of $\typ_1$ \emph{implies}
%%the refinement of $\typ_2$
%%under the assumptions described by $\env$.
%
Informally, $\typ_1$ is a subtype of $\typ_2$ if, when
the free variables of $\typ_1$ and $\typ_2$
are bound to expressions described by $\env$,
the denotation of $\typ_1$
is \emph{contained in} the denotation of $\typ_2$.
Subtyping of basic types reduces to denotational
containment checking, shown in rule \rsubbasecore.
That is, $\typ_1$ is a subtype of $\typ_2$ under $\env$
if for any closing substitution $\sto$ in $\interp{\env}$,
$\interp{\applysub{\sto}{\typ_1}}$ is contained in
$\interp{\applysub{\sto}{\typ_2}}$.

\mypara{Soundness}
We prove that 
typing implies denotational
inclusion and 
evaluation preserves typing.
\begin{theorem}{[Soundness of \corelan]}\label{thm:safety}
\begin{itemize}
\item\textbf{Denotations}
If $\;\hastype{\env}{\RRenv}{\prog}{\typ}$, then
$\forall \sto\in \interp{\env}. \applysub{\sto}{\prog} \in \interp{\applysub{\sto}{\typ}}$.
\item\textbf{Preservation}
If \;\hastype{\emptyset}{\emptyset}{\prog}{\typ}
       and $\evalsto{\prog}{w}$, then $\hastype{\emptyset}{\emptyset}{w}{\typ}$.
\end{itemize}
\end{theorem}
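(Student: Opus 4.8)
The plan is to prove the \textbf{Denotations} statement by structural induction on the typing derivation $\hastype{\env}{\RRenv}{\prog}{\typ}$, and then to obtain \textbf{Preservation} as a corollary. Before the induction I would set up the usual scaffolding: (i) a substitution lemma relating syntactic substitution $\applysub{\sto}{\cdot}$ to membership in $\interp{\cdot}$, so that extending $\env$ with $\tbind{x}{\typ_x}$ corresponds to extending a closing substitution $\sto$ with a binding for $x$ drawn from $\interp{\applysub{\sto}{\typ_x}}$; (ii) a closure-under-evaluation lemma, namely that if $e \in \interp{\typ}$ and $\evalsto{e}{w}$ then $w \in \interp{\typ}$, which follows from determinacy of the call-by-name semantics together with the ``if $\evalsto{e}{w}$ then $\dots$'' shape of the denotation; and (iii) soundness of subtyping, which for base types is immediate because the premise of $\rsubbasecore$ \emph{is} denotational containment, and which lifts to function types through $\rsubfun$ by the standard contravariant/covariant argument.

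The induction then proceeds rule by rule. The cases $\rtvar$ and $\rtconst$ are immediate from the definition of $\interp{\env}$ and from the assumption that each constant satisfies $c \in \interp{\constty{c}}$; $\rtsub$ combines the induction hypothesis with subtyping soundness; and $\rtfun$, $\rtapp$, $\rtcase$ are the textbook arguments over the denotations of function and data types, with the substitution lemma handling the dependent result type $\typ\subst{x}{e_x}$ in $\rtapp$ and the constructor-typing premises handling $\rtcase$. The genuinely new case is $\rtexact$: given $e \in \interp{\tref{v}{\btyp}{e_r}}$ I must show $e \in \interp{\tref{v}{\btyp}{e_r \land v = e}}$, i.e.\ that whenever $\evalsto{e}{w}$ the strengthened refinement $e_r\subst{v}{w} \land (w = e)$ reduces to $\etrue$. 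The first conjunct is exactly the induction hypothesis, and the second holds because $\evalsto{e}{w}$ together with determinacy makes $w = e$ reduce to $\etrue$ under the equality semantics; this is what justifies reflecting an expression into its own singleton type.

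The \textbf{main obstacle} is the interaction of recursion with reflection in $\rtlet$ and $\rtreflect$. Because $\eletname$-bindings are recursive, both the bound term and the body are typed in an environment already containing $\tbind{x}{\gtyp_x}$, so a naive structural induction is circular: I cannot assume a closing substitution for $x$ is already in hand. The resolution is to exploit the termination labels $\tlabel$ enforced by the well-formedness premises, which guarantee that every recursively defined (and every reflected) function is provably terminating; this licenses a well-founded induction on the termination metric to show that the value denoted by the recursive binding indeed lies in $\interp{\applysub{\sto}{\gtyp_x}}$. Rule $\rtreflect$ then reduces to $\rtlet$ once the bound function's type is strengthened via $\exacttype{\cdot}{\cdot}$, and its soundness hinges precisely on the fact that $\exacttype{\cdot}{\cdot}$ is \emph{partial}, defined only on provably terminating expressions: this is what prevents a self-referential equation such as the one induced by $f\,x = 1 + f\,x$ from entering the logic and collapsing it (as in the $0 = 1$ example of \S~\ref{subsec:logicalannotations}).

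Finally, \textbf{Preservation} follows from \textbf{Denotations}: instantiating it at $\env = \emptyset$ with the empty substitution gives $\prog \in \interp{\typ}$; closure under evaluation (lemma (ii)) together with $\evalsto{\prog}{w}$ yields $w \in \interp{\typ}$; and since a \emph{value} in the denotation of a type can be re-typed using $\rtconst$ or $\rtfun$ followed by subsumption through $\rtsub$, we conclude $\hastype{\emptyset}{\emptyset}{w}{\typ}$.
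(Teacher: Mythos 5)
Your high-level plan (induction on typing derivations, denotational subtyping handled trivially at base types because \rsubbasecore \emph{is} containment, \rtexact discharged via determinacy of evaluation and the semantics of $=$) matches the structure of the paper's proof, which factors the work by reducing to the soundness proof of \undeclang~\cite{Vazou14} and re-checking only the modified rules. However, your resolution of the recursion obstacle rests on a false premise. In \corelan, $\mathtt{let}$-binders may be \emph{arbitrary, possibly divergent} recursive definitions: the stratification of programs into binders exists precisely so that unrestricted recursion is allowed in programs while being kept out of refinements and reflection. The \tlabel label is enforced only for refinements (by well-formedness) and for reflected bindings (because $\exacttype{\cdot}{\cdot}$ is partial); there is no premise making every $\mathtt{let}$-bound definition terminating, and the core calculus has no ``termination metric'' to perform well-founded induction over (that is a surface-level \toolname device, not part of the formal system). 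The paper instead handles recursion semantically: the operational semantics reads a recursive binding as a $\mathtt{fix}$-redex, and the key inherited lemma is that $\efix{}_\typ \in \interp{(\typ \rightarrow \typ) \rightarrow \typ}$, which is compatible with divergence exactly because denotations are partial-correctness style (``\emph{if} the term reduces to a value, the refinement holds''). Your observation that \rtreflect reduces to \rtlet after exactification is correct and is how the paper argues that case, but the \rtlet case itself needs the $\mathtt{fix}$ lemma, not a termination argument.

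The second gap is your derivation of \textbf{Preservation} from \textbf{Denotations}. From $w \in \interp{\typ}$ you cannot conclude that $w$ is \emph{syntactically} typable at $\typ$: semantic membership does not imply typability. For a $\lambda$-value, \rtfun requires a typing derivation of the body in an extended environment, and for a constructor value $D\ \overline{w}$ one needs derivations for each $w_i$; the semantic fact that all applications of $w$ land in the right denotations provides no such derivations (this is the completeness direction, which the type system does not enjoy). Only your constant/base-type case, via \rtconst and \rtsub, actually goes through. The paper avoids this entirely by proving Preservation as an ordinary subject-reduction argument, independent of Denotations: a single-step preservation lemma is proved by induction on the typing derivation (re-checking \rtexact, \rtlet, \rtreflect, and the $\mathtt{fix}$ rule, using inversion, closure of subtyping under evaluation, and the substitution lemma of \undeclang), and Preservation follows by iterating that lemma along $\evalsto{\prog}{w}$. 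To repair your proposal you would need to either adopt that route or restate Preservation semantically, since the syntactic conclusion is not reachable from the denotational statement alone.
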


The proofs can be found in~\cite{appendix}.
Theorem~\ref{thm:safety} lets us
prove that if \formula is a \corelan type
interpreted as a proposition
(using the mapping of Figure~\ref{fig:hol})
and if there exists a $\prog$ so that
$\hastype{\emptyset}{\emptyset}{\prog}{\formula}$,
the \formula is valid.
%%Theorem~\ref{thm:safety} lets us interpret well
%%typed programs as proofs of propositions.
%
%%\VC{Does the subsumption rule make \corelan\ undecidable?}
%%\NV{Yes}
%
For example, in \S~\ref{sec:overview} we verified
that the term \fibincrname proves
${\trfun{n}{\tnat}{\ttref{\fib{n} \leq \fib{(n+1)}}}}$.
Via soundness of \corelan, we get that for each valid input "n",
the result refinement is valid.
$$
\forall n. \evalsto{0 \leq n}{\etrue} \Rightarrow \evalsto{\texttt{fib}\ n \leq \texttt{fib}\ {(n+1)}}{\etrue}
$$

\section{Algorithmic Checking: \smtlan}
\label{sec:theory:algorithmic}
\label{sec:algorithmic}

% \mypara{Syntax: Terms \& Sorts}
\smtlan is a first order approximation of \corelan
where higher-order features are
approximated with uninterpreted functions and the undecidable
type subsumption rule \rsubbasecore is replaced with a
decidable one (\ie, \rsubbasepbe), yielding an sound
and decidable SMT-based algorithmic type system.
Figure~\ref{fig:smtsyntax} summarizes the syntax
of \smtlan, the \emph{sorted} (SMT-)
decidable logic of quantifier-free equality,
uninterpreted functions and linear
arithmetic (QF-EUFLIA) ~\citep{Nelson81,SMTLIB2}.
The \emph{terms} of \smtlan include
integers $n$,
booleans $b$,
variables $x$,
data constructors $\dc$ (encoded as constants),
fully applied unary \unop and binary \binop operators,
and application $x\ \overline{\pred}$ of an uninterpreted function $x$.
The \emph{sorts} of \smtlan include the built-in
\tint and \tbool to represent
integers and booleans and the 
uninterpreted \tuniv to represent data types.
The interpreted functions of \smtlan, \eg
the logical constants $=$ and $<$,
have the function sort $\sort \rightarrow \sort$.
Other functional values in \corelan, \eg
reflected \corelan functions and
$\lambda$-expressions, are represented as
first-order values with
the uninterpreted sort \tsmtfun{\sort}{\sort}.

\subsection{Transforming \corelan into \smtlan}
\label{subsec:embedding}
\label{subsec:measures}

The judgment
\tologicshort{\env}{e}{\typ}{\pred}{\sort}{\smtenv}{\axioms}
states that a $\corelan$ term $e$ is transformed,
under an environment $\env$, into a
$\smtlan$ term $\pred$.
If ${\tologicshort{\env}{\refa}{}{\pred}{}{}{}}$
and $\env$ is clear from the context we write
\tosmt{\refa} and \tohaskell{\pred}
to denote the translation from \corelan to \smtlan and back.
Most of the transformation rules are identity
and can be found in~\cite{appendix}.
Here we discuss the non-identity ones.

\mypara{Embedding Types}
We embed \corelan types into \smtlan sorts as:
$$\arraycolsep=1.4pt
\begin{array}{lclclclclcl}
\tosmt{\tint}   & \defeq &  \tint  & \qquad\qquad &
\tosmt{T}       & \defeq &  \tuniv & \qquad\qquad &
\tosmt{\tref{v}{\btyp^{[\tlabel]}}{e}} &\defeq & \tosmt{\btyp} \\
\tosmt{\tbool}  & \defeq &  \tbool & \quad\quad &
                &        &         & \quad\quad &
\tosmt{\trfun{x}{\typ_x}{\typ}} & \defeq & \tsmtfun{\tosmt{\typ_x}}{\tosmt{\typ}}
\end{array}$$
\mypara{Embedding Constants}
Elements shared on both \corelan and \smtlan
translate to themselves.
These elements include
booleans,
integers,
variables,
binary
and unary
operators.
SMT solvers do not support currying,
and so in \smtlan, all function symbols
must be fully applied.
Thus, we assume that all applications
to primitive constants and data
constructors are fully applied, \eg by converting
source terms like "(+ 1)" to
"(\z -> z + 1)".

\mypara{Embedding Functions}
As \smtlan is first-order, we
embed $\lambda$s using the
uninterpreted function \smtlamname{}{}.
$$
\inference{
    \tologicshort{\env, \tbind{x}{\typ_x}}{e}{}{\pred}{}{}{} &
        \hastype{\env}{\emptyset}{(\efun{x}{}{e})}{(\trfun{x}{\typ_x}{\typ})}\\
}{
	\tologicshort{\env}{\efun{x}{}{e}}{(\trfun{x}{\typ_x}{\typ})}
	        {\smtlamname{\tosmt{\typ_x}}{\tosmt{\typ}}\ {x}\ {\pred}}
	        {\sort'}{\smtenv, \tbind{f}{\sort'}}{\andaxioms{\{\axioms_{f_1}, \axioms_{f_2}\}}{\axioms}}
}%% [\lgfun]
$$
The term $\efun{x}{}{e}$ of type
${\typ_x \rightarrow \typ}$ is transformed
to
${\smtlamname{\sort_x}{\sort}\ x\ \pred}$
of sort
${\tsmtfun{\sort_x}{\sort}}$, where
$\sort_x$ and $\sort$ are respectively
$\tosmt{\typ_x}$ and $\tosmt{\typ}$,
${\smtlamname{\sort_x}{\sort}}$
is a special uninterpreted function
of sort
${\sort_x \rightarrow \sort\rightarrow\tsmtfun{\sort_x}{\sort}}$,
and
$x$ of sort $\sort_x$ and $\pred$ of sort $\sort$ are
the embeddings of the binder and body, respectively.
As $\smtlamname{}{}$ is an SMT-function,
it \emph{does not} create a binding for $x$.
Instead, $x$ is renamed to
a \emph{fresh} SMT name.

\mypara{Embedding Applications}
We use defunctionalization~\citep{Reynolds72} to 
embed applications. 
% using the uninterpreted $\smtappname{}{}$:
%
$$
\inference{
	\tologicshort{\env}{e_x}{\typ_x}{\pred_x}{\tosmt{\typ_x}}{\smtenv}{\axioms'}
	&
	\tologicshort{\env}{e}{\trfun{x}{\typ_x}{\typ}}{\pred}{\tsmtfun{\tosmt{\typ_x}}{\tosmt{\typ}}}{\smtenv}{\axioms}
	&
	\hastype{\env}{\emptyset}{e}{{\typ_x}\rightarrow{\typ}}
}{
	\tologicshort{\env}{e\ e_x}{\typ}{\smtappname{\tosmt{\typ_x}}{\tosmt{\typ}}\ {\pred}\ {\pred_x}}{\tosmt{\typ}}{\smtenv}{\andaxioms{\axioms}{\axioms'}}
}%% [\lgapp]
$$
The term ${e\ e_x}$, where $e$ and $e_x$ have
types ${\typ_x \rightarrow \typ}$ and $\typ_x$,
is transformed to
${\tbind{\smtappname{\sort_x}{\sort}\ \pred\ \pred_x}{\sort}}$
where
$\sort$ and $\sort_x$ are $\tosmt{\typ}$ and $\tosmt{\typ_x}$,
the
${\smtappname{\sort_x}{\sort}}$
is a special uninterpreted function of sort
${\tsmtfun{\sort_x}{\sort} \rightarrow \sort_x \rightarrow \sort}$,
and
$\pred$ and $\pred_x$ are the respective translations of $e$ and $e_x$.

\mypara{Embedding Data Types}
We embed data constructors to a predefined
\smtlan constant ${\smtvar{\dc}}$ of
sort ${\tosmt{\constty{\dc}}}$:
$
	\tologicshort{\env}{\dc}{\constty{\dc}}{\smtvar{\dc}}{\tosmt{\constty{\dc}}}{\emptyset}{\emptyaxioms}
$.
For each datatype, we 
create reflected functions
that \emph{check} the top-level
constructor and \emph{select}
their individual fields.
For example, for lists, we
create the functions
\begin{mcode}
    isNil []     = True         isCons (x:xs) = True          sel1 (x:xs) = x
    isNil (x:xs) = False        isCons []     = False         sel2 (x:xs) = xs
\end{mcode}
The above selectors can be
modeled precisely in the refinement
logic via SMT support for ADTs~\cite{Nelson81}.
To generalize, let $\dc_i$ be a
data constructor such that
$
\constty{\dc_i} \defeq \typ_{i,1} \rightarrow \dots \rightarrow \typ_{i,n} \rightarrow \typ
$.
Then \emph{check}
${\checkdc{{\dc_i}}}$ has the sort
$\tsmtfun{\tosmt{\typ}}{\tbool}$
and \emph{select}
${\selector{\dc}{i,j}}$ has the sort
$\tsmtfun{\tosmt{\typ}}{\tosmt{\typ_{i,j}}}$.

\mypara{Embedding Case Expressions}
We translate case-expressions
of \corelan into nested $\mathtt{if}$
terms in \smtlan, by using the check
functions in the guards and the
select functions for the binders
of each case.
$$
\inference{
	\tologicshort{\env}{e}{\typ_e}{\pred}{\tosmt{\typ_e}}{\smtenv}{\axioms} &&
	\tologicshort{\env}{e_i\subst{\overline{y_i}}{\overline{\selector{\dc_i}{}\ x}}\subst{x}{e}}{\typ}{\pred_i}{\tosmt{\typ}}{\smtenv}{\axioms_i}
}{
	\tologicshort{\env}{\ecase{x}{e}{\dc_i}{\overline{y_i}}{e_i}}{\typ}
	 {\eif{\smtappname{}{}\ \checkdc{\dc_1}\ \pred}{\pred_1}{\ldots} \ \mathtt{else}\ \pred_n}{\tosmt{\typ}}{\smtenv}
	 {\andaxioms{\axioms}{\axioms_i}}
}%[\lgcase]
$$
The above translation yields the reflected definition
for append "(++)" from~(\S~\ref{sec:overview:lists}).

\mypara{Semantic Preservation}
The translation preserves the semantics of the expressions.
Informally,
if ${\tologicshort{\env}{\refa}{}{\pred}{}{}{}}$,
then for every substitution $\theta$ and every logical model $\sigma$
that respects the environment $\env$
if $\evalsto{\applysub{\theta}{\refa}}{v}$
then $\sigma \models \pred = \tosmt{v}$.
\VC{What's a logical model?}

\subsection{Algorithmic Type Checking}\label{sec:pbe:tc}

We make the type checking from Figure~\ref{fig:typing}
algorithmic by checking subtyping via our novel,
SMT-based \emph{Proof by Logical Evaluation}(\pbesym).
Next, we formalize how \pbesym makes checking algorithmic
and in \S~\ref{sec:pbe} we describe the \pbesym procedure in detail.

\mypara{Verification Conditions}
Recall that in ~\S~\ref{subsec:embedding} we defined
$\tosmt{\cdot}$ as the translation from \corelan
to \smtlan.
Informally, the implication or \emph{verification condition} (VC)
$\tosmt{\env} \Rightarrow {\pred_1} \Rightarrow {\pred_2}$ is \emph{valid} only if the
set of values described by $\pred_1$ is subsumed by
the set of values described by $\pred_2$ under the
assumptions of \env.
$\env$ is embedded into logic by conjoining
the refinements of terminating binders~\cite{Vazou14}:
$$
\tosmt{\env} \defeq \bigcup_{x \in \env} \tosmt{\env, x} %\\
\quad \mbox{where we embed each binder as} \quad
\tosmt{\env, x} \defeq \begin{cases}
                           \tosmt{e}  & \text{if } \env(x)=\tref{x}{\btyp^{\tlabel}}{e}\\
                           \etrue & \text{otherwise}.
                         \end{cases}
$$

\mypara{Validity Checking}
Instead of directly using the VCs to check validity of programs, we
use the procedure \pbesym that strengthens the assumption environment \tosmt{\env}
with equational properties.
Concretely,
given a
reflection environment $\RRenv$,
type environment $\env$, and
expression $e$,
the procedure $\pbe{\tosmt{\RRenv}}{\tosmt{\env}}{\tosmt{e}}$
--- we will define $\tosmt{\RRenv}$ in \S~\ref{sec:pbe:algo} ---
returns \ttrue only when the expression $e$ evaluates
to \etrue under the reflection and type environments
\RRenv and \env.

\mypara{Subtyping via VC Validity Checking}
We make subtyping, and hence, typing decidable,
by replacing the denotational base subtyping
rule \rsubbasecore with the conservative,
algorithmic version \rsubbasepbe that uses
\pbesym to check the validity of the
subtyping.
$$
\inference{
\pbe{\tosmt{\RRenv}}{\tosmt{\env,\tbind{v}{\tref{v}{\btyp^\tlabel}{\refa_1}}}}{\tosmt{\refa_2}}
}{
	\pbeissubtype{\env}{\RRenv}{\tref{v}{\btyp}{\refa_1}}{\tref{v}{\btyp}{\refa_2}}
}[\rsubbasepbe]
$$

This typing rule is sound as functions reflected in \RRenv
always respect the typing environment \env (by construction)
and because \pbename is sound (Theorem~\ref{thm:pbe:sound}).
\begin{lemma}\label{lem:subtyping} %[Conservative Subtyping]
If {\pbeissubtype{\env}{\RRenv}{\tref{v}{\btyp}{e_1}}{\tref{v}{\btyp}{e_2}}}
then {\issubtype{\env}{\RRenv}{\tref{v}{\btyp}{e_1}}{\tref{v}{\btyp}{e_2}}}.
\end{lemma}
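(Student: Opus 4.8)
The plan is to invert both subtyping rules and bridge them using the soundness of \pbesym (Theorem~\ref{thm:pbe:sound}). By inversion of rule \rsubbasepbe, the hypothesis \pbeissubtype{\env}{\RRenv}{\tref{v}{\btyp}{e_1}}{\tref{v}{\btyp}{e_2}} tells us exactly that the call $\pbe{\tosmt{\RRenv}}{\tosmt{\env,\tbind{v}{\tref{v}{\btyp^\tlabel}{e_1}}}}{\tosmt{e_2}}$ returns \ttrue. Dually, by rule \rsubbasecore it suffices to establish the denotational containment $\forall \theta \in \interp{\env}.\ \interp{\applysub{\theta}{\tref{v}{\btyp}{e_1}}} \subseteq \interp{\applysub{\theta}{\tref{v}{\btyp}{e_2}}}$. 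So the whole lemma reduces to turning the ``\pbesym returns \ttrue'' fact into this set inclusion.

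First I would fix an arbitrary $\theta \in \interp{\env}$ and an arbitrary $e \in \interp{\applysub{\theta}{\tref{v}{\btyp}{e_1}}}$, with the goal $e \in \interp{\applysub{\theta}{\tref{v}{\btyp}{e_2}}}$. Unfolding the denotation of base refinement types, left-membership gives $\bhastype{}{e}{\btyp}$ together with the implication that if $\evalsto{e}{w}$ then $\evalsto{\applysub{\theta}{e_1}\subst{v}{w}}{\etrue}$; right-membership requires the same base-typing fact (already in hand) plus the implication that if $\evalsto{e}{w}$ then $\evalsto{\applysub{\theta}{e_2}\subst{v}{w}}{\etrue}$. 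Hence the only real obligation is the latter: assume $\evalsto{e}{w}$ and derive $\evalsto{\applysub{\theta}{e_2}\subst{v}{w}}{\etrue}$.

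The key step is to build the extended closing substitution $\theta' \defeq \theta, \gbind{v}{w}$ and show $\theta' \in \interp{\env, \tbind{v}{\tref{v}{\btyp^\tlabel}{e_1}}}$. Indeed $w$ is a value and therefore terminating, and since $e$ lies in the left denotation with $\evalsto{e}{w}$ we already have $\evalsto{\applysub{\theta}{e_1}\subst{v}{w}}{\etrue}$, so $w \in \interp{\applysub{\theta}{\tref{v}{\btyp^\tlabel}{e_1}}}$; this is precisely the extra binding that lifts $\theta$ from $\interp{\env}$ to $\interp{\env, \tbind{v}{\tref{v}{\btyp^\tlabel}{e_1}}}$. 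Because \RRenv respects this typing environment by construction and the \pbesym call returned \ttrue, Theorem~\ref{thm:pbe:sound} gives that $e_2$ evaluates to \etrue under $\RRenv$ and $\env, \tbind{v}{\tref{v}{\btyp^\tlabel}{e_1}}$, i.e.\ $\evalsto{\applysub{\theta'}{e_2}}{\etrue}$ for every such closing substitution, and in particular for our $\theta'$. Since $v$ is fresh for $\theta$, we have $\applysub{\theta'}{e_2} = \applysub{\theta}{e_2}\subst{v}{w}$, which is exactly the implication we owed; this discharges the containment and, via \rsubbasecore, yields the conclusion \issubtype{\env}{\RRenv}{\tref{v}{\btyp}{e_1}}{\tref{v}{\btyp}{e_2}}.

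I expect the main obstacle to be pinning down the precise reading of Theorem~\ref{thm:pbe:sound} --- ``returns \ttrue only when $e_2$ evaluates to \etrue under $\RRenv$ and $\env$'' --- and lining it up with the closing-substitution semantics demanded by \rsubbasecore. Concretely, \pbesym actually operates on the \emph{translated} terms $\tosmt{e_2}$ and $\tosmt{\env,\ldots}$ in \smtlan, so the argument must route through the semantic-preservation property of $\tosmt{\cdot}$ (\S~\ref{subsec:embedding}) to transfer SMT-level validity back to operational evaluation of the source term. A secondary point of care is the mismatch between the plain sort $\btyp$ occurring in \rsubbasecore and the terminating $\btyp^\tlabel$ occurring in the environment binder of \rsubbasepbe; this is reconciled exactly by the observation above, that any witness $w$ of an evaluating $e$ is a terminating value, so the stronger terminating refinement is available wherever it is needed.
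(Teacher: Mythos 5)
Your proof is correct and takes essentially the same route as the paper's: invert \rsubbasepbe to obtain that \pbesym succeeded on the environment extended with the binder for $v$, extend an arbitrary closing substitution with the witness value $w$ (the diverging case being vacuous), invoke the soundness of \pbesym (Theorem~\ref{thm:pbe:sound}) on that extended substitution, and thereby discharge the denotational containment required by \rsubbasecore. The subtleties you flag—routing through the semantic preservation of the $\tosmt{\cdot}$ translation and the terminating label $\btyp^\tlabel$ on the environment binder—are exactly the points the paper's detailed (appendix) argument handles, so nothing is missing.
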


\mypara{Soundness of \smtlan}
We write $\pbehastype{\env}{\RRenv}{e}{\typ}$
for the judgments that can be derived by the
algorithmic subtyping rule \rsubbasepbe (instead of \rsubbasecore.)
Lemma~\ref{lem:subtyping} implies the soundness of \smtlan.
\begin{theorem}[Soundness of \smtlan]\label{thm:soundness-smt}
If \pbehastype{\env}{\RRenv}{e}{\typ}, then \hastype{\env}{\RRenv}{e}{\typ}.
\end{theorem}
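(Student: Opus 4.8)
The plan is to reduce the theorem to the single-rule difference between the two type systems. The judgments \pbehastype{\env}{\RRenv}{e}{\typ} and \hastype{\env}{\RRenv}{e}{\typ} are derived by identical sets of rules (Figure~\ref{fig:typing}), with the sole exception that the denotational base subtyping rule \rsubbasecore is replaced by its algorithmic counterpart \rsubbasepbe. Hence everything hinges on showing that any subtyping fact established algorithmically is also established denotationally, after which the typing derivations transfer verbatim.

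First I would lift Lemma~\ref{lem:subtyping} from base refinements to arbitrary types, proving that \pbeissubtype{\env}{\RRenv}{\typ_1}{\typ_2} implies \issubtype{\env}{\RRenv}{\typ_1}{\typ_2} by induction on the structure of the subtyping derivation (equivalently, on the common shape of $\typ_1$ and $\typ_2$). The base case, where both types are refinements $\tref{v}{\btyp}{e_1}$ and $\tref{v}{\btyp}{e_2}$, is exactly Lemma~\ref{lem:subtyping}. For the function case, the derivation ends in \rsubfun, whose premises are a contravariant subtyping on the arguments and a covariant subtyping on the results (under an extended environment); applying the induction hypothesis to each premise yields the corresponding denotational subtypings, and reassembling them with the denotational \rsubfun produces \issubtype{\env}{\RRenv}{\typ_1}{\typ_2}. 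The induction is well founded because both premises concern structurally smaller types.

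Next I would prove the theorem itself by induction on the derivation of \pbehastype{\env}{\RRenv}{e}{\typ}. For every rule other than subtyping --- \rtvar, \rtconst, \rtexact, \rtfun, \rtapp, \rtlet, \rtcase, and \rtreflect --- the rule is literally the same in both systems, and its side conditions (well-formedness, the reflection-strengthening via \exacttype{\cdot}{\cdot}, and so on) do not mention subtyping; so I would invoke the induction hypothesis on each typing premise and re-apply the same \corelan rule to conclude \hastype{\env}{\RRenv}{e}{\typ}. The only interesting case is \rtsub, whose algorithmic form has premises \pbehastype{\env}{\RRenv}{e}{\typ'} and \pbeissubtype{\env}{\RRenv}{\typ'}{\typ}: the induction hypothesis converts the first into \hastype{\env}{\RRenv}{e}{\typ'}, the generalized subtyping lemma from the previous step converts the second into \issubtype{\env}{\RRenv}{\typ'}{\typ}, and the denotational \rtsub then yields \hastype{\env}{\RRenv}{e}{\typ}.

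The main obstacle is not the top-level induction, which is routine once Lemma~\ref{lem:subtyping} is in hand, but the generalization step together with the bookkeeping needed to confirm that the two systems truly coincide everywhere except at base subtyping. In particular I would verify that the contravariant argument position in \rsubfun does not break the structural induction (it does not, since the argument types shrink) and that no other rule silently invokes base subtyping in a way that bypasses \rtsub. All the genuine semantic content --- relating SMT-level evaluation via \pbesym to denotational containment --- is already discharged inside Lemma~\ref{lem:subtyping} (which itself rests on the soundness of \pbename, Theorem~\ref{thm:pbe:sound}), so the remaining argument is essentially a syntactic transfer of derivations.
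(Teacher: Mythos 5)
Your proposal is correct and follows essentially the same route as the paper, which simply observes that Lemma~\ref{lem:subtyping} implies the theorem because the two systems differ only in the base subtyping rule. You merely spell out the details the paper leaves implicit --- lifting the base-type lemma through \rsubfun by structural induction and transferring all non-subtyping rules verbatim in the induction over typing derivations --- both of which are routine and sound.
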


\section{Complete Verification: Proof by Logical Evaluation} \label{sec:pbe}

Next, we formalize our
Proof By Logical Evaluation
algorithm \pbesym and show
that it is sound~(\S~\ref{sec:pbe:algo}),
that it is complete with respect to
equational proofs~(\S~\ref{sec:pbe:complete}),
and that it terminates~(\S~\ref{sec:pbe:terminates}).

\begin{figure}[t!]
\begin{tabular}{>{$}r<{$} >{$}r<{$} >{$}r<{$} >{$}l<{$} >{$}l<{$} }
\emphbf{Terms}    & \pred, \expr, \body  & ::=   & \smtlan\ \text{\kw{if}-free terms from Figure~\ref{fig:smtsyntax}}            & \\ [0.05in]

\emphbf{Functions}    & \Rfun  & ::=    & \rdef{x}{\pred}{\body}                   & \\ [0.05in]
\emphbf{Definitional Environment}  & \Renv  & ::=    & \emptyset \spmid \rfun \mapsto \Rfun, \Renv & \\ [0.05in]
\emphbf{Logical Environment} & \Penv  & ::=    & \emptyset \spmid \pred, \Penv               & \\ [0.05in]
\end{tabular}
\caption{{Syntax of Predicates, Terms and Reflected Functions.}}
\label{fig:pbe:syntax}
\end{figure}

\subsection{Algorithm}\label{sec:pbe:algo}

Figure~\ref{fig:pbe:syntax} describes
the input environments for \pbename.
The logical environment $\Penv$ contains
a set of hypotheses $p$, described in
Figure~\ref{fig:smtsyntax}.
The definitional environment $\Renv$
maps function symbols $f$ to their
definitions $\rdef{x}{\pred}{\body}$,
written as $\lambda$-abstractions
over guarded bodies.
Moreover, the body $\body$ and the
guard $\pred$ contain neither
$\lambda$ nor \texttt{if}.
These restrictions do not impact
expressiveness: $\lambda$s can be
named and reflected, and
\kw{if}-expressions can
be pulled out into top-level
guards using $\defIf\cdot$,
defined in~\citep{appendix}.
A definitional environment $\Renv$
can be constructed from $\RRenv$ as
$$
\tosmt{\RRenv} \defeq \{
   \gbind{f}{\lambda\overline{x}. \defIf{\tosmt{\refa}}} |
  (\gbind{f}{\lambda\overline{x}. \refa})\in\RRenv
  \}
$$

\begin{figure}[t!]
$$\begin{array}{lcl}
\toprule
\uinstsym & : & (\Renv, \Penv) \rightarrow \Penv \\
\midrule
\uinst{\Renv}{\Penv} & = &
  \Penv \cup \
  \bigcup_{\issubterm{\fapp{\rfun}{\expr}}{\Penv}}
    \inst{\Renv}{\Penv}{\rfun}{\exprs} \\[0.1in]

%\midrule
%\instsym & : & (\Renv, \Pred, \Rfun, \params{\Expr}) \rightarrow \setof{\Pred} \\
%\midrule
\inst{\Renv}{\Penv}{\rfun}{\exprs} & = &
  \left \{ \SUBSTS{\left(\tosmt{\fapp{\rfun}{x}} = \body_i\right)}{x}{\expr}\;
  \left|\;\, \left( \pred_i \Rightarrow \body_i \right) \in \overline{d}
  ,      \;\; \smtIsValid{\Penv}{\SUBSTS{\pred_i}{x}{\expr}}
  \right.
  \right \}\\
\quad \mbox{where} & & \\
\quad \quad \rbody{x}{d} & = & \Renv(\rfun) \\[0.1in]

\midrule
\pbesym & : & (\Renv, \Penv, \pred) \rightarrow \tbool \\
\midrule
\pbe{\Renv}{\Penv}{\pred} & = & \pbeloop{0}{\,\Penv\cup \
  \bigcup_{\issubterm{\fapp{\rfun}{\expr}}{\pred}}
    \inst{\Renv}{\Penv}{\rfun}{\exprs} } \\
\quad \mbox{where} & & \\
\quad \quad \pbeloop{i}{\Penv_i} & & \\
\quad \quad \spmid \smtIsValid{\Penv_i}{\pred}      & = & \ttrue \\
\quad \quad \spmid \Penv_{i+1} \subseteq \Penv_i    & = & \tfalse \\
\quad \quad \spmid \mbox{otherwise}                 & = & \pbeloop{i+1}{\Penv_{i+1}} \\
\quad \quad \quad  \mbox{where}                     &   & \\
\quad \quad \quad  \quad \Penv_{i+1}                & = & \Penv \cup \uinst{\Renv}{\Penv_i} \\
\bottomrule
\end{array}$$
\caption{\textbf{Algorithm \pbesym:} Proof by Logical Evaluation.}
\label{fig:pbe}
\label{fig:inst}
\label{fig:uinst}
\end{figure}

\mypara{Notation}
We write
$\issubterm{\rfun\ (\expr_1, \dots, \expr_n)}{\Penv}$
if the \smtlan term
$(\text{app} \dots (\text{app}\ f \ \expr_1) \dots \expr_n)$
is a syntactic subterm of some $\expr \in \Penv$.
We abuse notation to write
$\issubterm{\fapp{\rfun}{\expr}}{\expr'}$
for
$\issubterm{\fapp{\rfun}{\expr}}{\{\expr'\}}$.
We write \smtIsValid{\Penv}{\pred}
for SMT validity of the implication
$\Penv \Rightarrow \pred$.

\mypara{Instantiation \& Unfolding}
A term $\predb$ is a \emph{$(\Renv,\Penv)$-instance} if
there exists $\issubterm{\fapp{\rfun}{\expr}}{\Penv}$
such that:
\begin{itemize}
\item $\rlookup{\Renv}{\rfun} \equiv \rdef{x}{\pred_i}{\body_i}$,
\item $\smtIsValid{\Penv}{\SUBSTS{\pred_i}{x}{\expr}}$, and
\item $\predb \equiv \SUBSTS{(\fapp{\rfun}{x} = \body_i)}{x}{\expr}$.
\end{itemize}
A set of terms $Q$ is a \emph{$(\Renv, \Penv)$-instance}
if every $\predb \in Q$ is an $(\Renv, \Penv)$-instance.
The \emph{unfolding} of $\Renv, \Penv$ is the
(finite) set of all $(\Renv, \Penv)$-instances.
Procedure $\uinst{\Renv}{\Penv}$ shown in
Figure~\ref{fig:uinst} computes and returns
the conjunction of $\Penv$ and the unfolding
of $\Renv, \Penv$.
The following properties relate $(\Renv, \Penv)$-instances
to the semantics of \corelan and SMT validity.
Let \refapply{\RRenv}{e} denote the evaluation
of $e$ under the reflection environment $\RRenv$, \ie
$\refapply{\emptyset}{e} \doteq e$
and
$\refapply{(\RRenv, \bind{f}{e_f})}{e} \doteq \refapply{\RRenv}{\text{let rec } f = e_f \text{ in } e}$.

\begin{lemma}\label{lem:pbe:semantics}
For every $\env \models \RRenv$ and $\store\in \embed{\env}$,
\begin{itemize}
\item\textbf{Sat-Inst}  \label{lem:sat-inst}
If $\tosmt{e}$ is a $(\tosmt{\RRenv},\tosmt{\env})$-instance,
then \rsatisfies{\RRenv}{\store}{e}.
\item\textbf{SMT-Approx}\label{lem:smt-approx}
If $\smtIsValid{\tosmt{\env}}{\tosmt{e}}$,
then \rsatisfies{\RRenv}{\store}{e}.
\item\textbf{SMT-Inst} \label{lem:smt-inst}
If $\predb$ is a $(\tosmt{\RRenv},\tosmt{\env})$-instance
and $\smtIsValid{\extendpenv{\tosmt{\env}}{\predb}}{\tosmt{e}}$,
then $\rsatisfies{\RRenv}{\store}{e}$.
\end{itemize}
\end{lemma}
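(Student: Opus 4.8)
The plan is to prove the three parts in a dependency order, with \textbf{SMT-Approx} as the foundational claim and the two instance properties derived from it together with the operational meaning of reflection. Throughout I fix $\env \models \RRenv$ and a closing substitution $\store\in\embed{\env}$, and I rely on two facts established earlier: the semantic-preservation property of the translation $\tosmt{\cdot}$ from \S~\ref{subsec:embedding}, and soundness of \corelan (Theorem~\ref{thm:safety}), which (via the termination labels $\tlabel$) guarantees that the boolean expressions in question are total, hence evaluate to some $v\in\{\etrue,\efalse\}$ under $(\store,\RRenv)$.

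The first step is to manufacture, from the concrete data $(\store,\RRenv)$, a single logical model $\sigma$ that respects $\env$ in the sense required by semantic preservation: interpret each program variable $x$ by $\tosmt{\store(x)}$ and each reflected symbol $f$ by the function it denotes under $\RRenv$. I then show $\sigma\models\tosmt{\env}$: since $\store\in\embed{\env}$, every terminating binder's value satisfies its refinement, and semantic preservation transports each such refinement, evaluated at $\store$, into a fact true in $\sigma$. With this model in hand, \textbf{SMT-Approx} is immediate: from $\smtIsValid{\tosmt{\env}}{\tosmt{e}}$ and $\sigma\models\tosmt{\env}$ we get $\sigma\models\tosmt{e}$; evaluating $e$ under $(\store,\RRenv)$ to a boolean value $v$, semantic preservation gives $\sigma\models\tosmt{e}=\tosmt{v}$, forcing $v=\etrue$, i.e. $\rsatisfies{\RRenv}{\store}{e}$.

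Next I treat \textbf{Sat-Inst}. An instance has the shape $\tosmt{e}\equiv\SUBSTS{(\fapp{\rfun}{x}=\body_i)}{x}{\expr}$, where $\rlookup{\tosmt{\RRenv}}{\rfun}=\rdef{x}{\pred_i}{\body_i}$ and $\smtIsValid{\tosmt{\env}}{\SUBSTS{\pred_i}{x}{\expr}}$. The key sublemma is that this equation is valid in the model $\sigma$ above, i.e. $\sigma\models\tosmt{e}$. This is where the faithfulness of reflection enters: the guard/body pairs recorded in $\tosmt{\RRenv}$ are exactly the guard-normal form (via $\defIf{\cdot}$) of $f$'s \corelan definition, so the $i$-th branch corresponds operationally to the reduction of $\fapp{\rfun}{\expr}$ whenever its guard fires. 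Applying the already-proved \textbf{SMT-Approx} to the guard premise shows the guard holds concretely under $(\store,\RRenv)$; the operational semantics (the recursive-$\mathtt{let}$ unfolding of the reflected $f$ followed by the matching case branch) then reduces $\fapp{\rfun}{\expr}$ to $\SUBSTS{\body_i}{x}{\expr}$, so $\betaeq{\fapp{\rfun}{\expr}}{\SUBSTS{\body_i}{x}{\expr}}$ and the encoded equation holds in $\sigma$. Since the instance is then a valid boolean under $(\store,\RRenv)$, \textbf{Sat-Inst} follows exactly as in the \textbf{SMT-Approx} argument.

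Finally, \textbf{SMT-Inst} is a direct combination: using the same $\sigma$, the instance $\predb$ is valid in $\sigma$ by the sublemma of the previous paragraph, so $\sigma\models\extendpenv{\tosmt{\env}}{\predb}$; the hypothesis $\smtIsValid{\extendpenv{\tosmt{\env}}{\predb}}{\tosmt{e}}$ then yields $\sigma\models\tosmt{e}$, and the evaluation-plus-preservation step concludes $\rsatisfies{\RRenv}{\store}{e}$. I expect the main obstacle to be precisely the sublemma that each instance is valid in $\sigma$: it requires lining up three representations of the same function --- the reflected $\RRenv$-binding together with its operational unfolding, the guard-normal form stored in $\tosmt{\RRenv}$, and the defunctionalized uninterpreted symbol interpreted by $\sigma$ --- and arguing that an enabled guard makes the one-step unfolding equation hold in all three simultaneously. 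Care is also needed to justify the ``evaluates to a value'' step, which is exactly where totality and termination of reflected functions (\S~\ref{sec:formalism}) are indispensable.
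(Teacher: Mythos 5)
The paper never actually proves Lemma~\ref{lem:pbe:semantics}: it is stated in \S~\ref{sec:pbe:algo} and invoked elsewhere (notably its \textbf{SMT-Approx} part inside the proof of Lemma~\ref{lem:step-value}), but the in-paper appendix proves only the other \pbesym lemmas and defers this one to the external appendix~\cite{appendix}. So there is no internal proof to compare against line by line; judged on its own terms, your proposal is sound and assembles exactly the ingredients the paper's framework supplies. Your \textbf{SMT-Approx} argument is the paper's semantic-preservation result (Theorem~\ref{thm:embedding-general} and Corollary~\ref{thm:appendix:embedding}) specialized to a model induced by $\store$; your \textbf{Sat-Inst} derivation (guard SMT-valid $\Rightarrow$ guard concretely true by \textbf{SMT-Approx} $\Rightarrow$ the one-step unfolding equation concretely true, by totality and the correspondence between $\defIf{\cdot}$ guards and operational case-reduction) is the same alignment the paper itself leans on in Lemma~\ref{lem:subterm-eval} and Definition~\ref{asm:total}; and \textbf{SMT-Inst} as the composition of the two is forced. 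The dependency order you chose (\textbf{SMT-Approx} first, then the instance sublemma, then the combination) is the right one and avoids circularity.

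Two refinements you should make explicit. First, the model you ``manufacture'' cannot be an arbitrary interpretation of the defunctionalized symbols: the preservation theorem is stated only for $\beta$-models $\sigma^\beta$ arising from a lifted substitution $\store^\perp$ (Definition~\ref{def:beta-model}), so your $\sigma$ must be taken from that family — interpreting $\smtlamname{}{}$/$\smtappname{}{}$ so that $\beta$-equivalence holds — otherwise the step from $\sigma\models\tosmt{e}$ back to concrete evaluation is unjustified. Second, the step ``$e$ evaluates to a boolean value'' is not free for arbitrary $e$: it holds because the expressions fed to \pbesym are well-formed refinements, i.e.\ $\tbool^{\tlabel}$-typed terminating terms, and because reflected bodies are total; you gesture at this but the lemma's statement silently assumes it, so the proof should say so. Neither point changes your structure; both are the kind of bookkeeping the paper's own (omitted) proof would have to do as well.
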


%%\begin{lemma}[\textbf{Sat-Inst}] \label{lem:sat-inst}
%%  If   $\satisfies{\Renv}{\store}{\Penv}$ and
%%       $Q$ is a $\Renv,\Penv$-instance,
%%  then $\satisfies{\Renv}{\store}{\Penv \wedge Q}$.
%%\end{lemma}
%%
%%\begin{lemma}[\textbf{SMT-Approx}] \label{lem:smt-approx}
%%If $\smtIsValid{\Penv}{\pred}$ then %forall $\store$,
%%   $\satisfies{\Renv}{\store}{\Penv}$ implies
%%   $\satisfies{\Renv}{\store}{\pred}$.
%%\end{lemma}
%%
%%\begin{lemma}[\textbf{SMT-Inst}] \label{lem:smt-inst}
%%If   (1)~$\satisfies{\Renv}{\store}{\Penv}$,
%%     (2)~$Q$ is an $\Renv,\Penv$-instance,
%%     (3)~$\smtIsValid{\Penv \wedge Q}{\pred}$
%%then $\satisfies{\Renv}{\store}{\pred}$.
%%\end{lemma}

\mypara{The Algorithm}
Figure~\ref{fig:pbe} shows our proof search algorithm
$\pbe{\Renv}{\Penv}{\pred}$ which takes as input
a set of \emph{reflected definitions} $\Renv$,
an \emph{hypothesis} $\Penv$, and
a \emph{goal} $\pred$.
The \pbename procedure recursively \emph{unfolds}
function application terms by invoking $\uinstsym$
until either the goal can be proved using the
unfolded instances (in which case the search
returns $\ttrue$) \emph{or} no new instances
are generated by the unfolding (in which case
the search returns $\tfalse$).

\mypara{Soundness}
First, we prove the soundness of \pbesym. %(and hence, our type system.)

\begin{theorem}[\textbf{Soundness}] \label{thm:pbe:sound}
If   $\pbe{\tosmt{\RRenv}}{\tosmt{\env}}{\tosmt{e}}$
then $\forall \store\in\embed{\env}$,
\evalsto{\apply{\store}{\refapply{\RRenv}{e}}}{\etrue}.
% we have $\satisfies{\Renv}{\store}{\Penv \Rightarrow \pred}$.
\end{theorem}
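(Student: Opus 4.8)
The plan is to show that whenever the fixpoint loop of Figure~\ref{fig:pbe} returns $\ttrue$, the goal $\tosmt{e}$ is SMT-entailed by a hypothesis set all of whose members are semantically valid under the reflection environment, and then to transfer this entailment to the concrete evaluation using Lemma~\ref{lem:pbe:semantics}. Fix an arbitrary $\store\in\embed{\env}$; since $\env \models \RRenv$ by the standing assumption, it suffices to establish $\rsatisfies{\RRenv}{\store}{e}$, which unfolds to the desired $\evalsto{\apply{\store}{\refapply{\RRenv}{e}}}{\etrue}$.

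First I would unwind the definition of $\pbesym$. The call $\pbe{\tosmt{\RRenv}}{\tosmt{\env}}{\tosmt{e}}$ returns $\ttrue$ only through the branch $\smtIsValid{\Penv_i}{\tosmt{e}} = \ttrue$, for some iteration $i$, where $\Penv_0 = \tosmt{\env}\cup\bigcup_{\issubterm{\fapp{\rfun}{\expr}}{\tosmt{e}}}\inst{\tosmt{\RRenv}}{\tosmt{\env}}{\rfun}{\exprs}$ and $\Penv_{j+1} = \tosmt{\env}\cup\uinst{\tosmt{\RRenv}}{\Penv_j}$. Hence the returned $\ttrue$ witnesses SMT validity of $\Penv_i \Rightarrow \tosmt{e}$, and by construction of $\uinstsym$ every member of $\Penv_i$ outside $\tosmt{\env}$ is a $(\tosmt{\RRenv},\Penv_j)$-instance for some $j < i$.

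The heart of the argument is the invariant that each accumulated hypothesis is semantically true under $\RRenv,\store$, proved by induction on the iteration index $j \le i$: for every $\pred \in \Penv_j$, $\rsatisfies{\RRenv}{\store}{\tohaskell{\pred}}$ holds. In the base case the refinements collected into $\tosmt{\env}$ hold because $\store\in\embed{\env}$ embeds exactly the terminating binders of $\env$, and the instances triggered by $\tosmt{e}$ are satisfied by the Sat-Inst part of Lemma~\ref{lem:pbe:semantics}. In the inductive step the fresh members of $\Penv_{j+1}$ are instances $\SUBSTS{(\fapp{\rfun}{x} = \body_k)}{x}{\expr}$ whose guard $\SUBSTS{\pred_k}{x}{\expr}$ is SMT-valid under $\Penv_j$; since $\Penv_j$ is semantically true by the induction hypothesis, the guard is semantically true as well, and then the call-by-name unfolding of the reflected $\rfun$ validates the asserted equality. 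This is precisely the reasoning underlying Sat-Inst, lifted from $\tosmt{\env}$ to the larger context $\Penv_j$. With the invariant in hand, the entire antecedent $\Penv_i$ is true under $\RRenv,\store$, so combining it with $\smtIsValid{\Penv_i}{\tosmt{e}}$ via the SMT-Approx and SMT-Inst parts of Lemma~\ref{lem:pbe:semantics} (soundness of the solver with respect to the induced model) yields $\rsatisfies{\RRenv}{\store}{e}$, as required.

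I expect the main obstacle to be the inductive step's lifting of guard validity. Lemma~\ref{lem:pbe:semantics} is stated for instances relative to $\tosmt{\env}$, whereas the loop generates instances relative to the growing context $\Penv_j$ and applies SMT-Inst to a whole finite set of previously-derived instances rather than a single one. I would therefore first prove mild generalizations of Sat-Inst and SMT-Inst parameterized by an arbitrary semantically valid hypothesis set, and take care that the guarded normal form stored in $\tosmt{\RRenv}$ (obtained via $\defIf\cdot$) faithfully tracks the operational unfolding of the reflected function, so that every equality the algorithm asserts is genuinely validated by the $\beta$-reduction semantics of \corelan.
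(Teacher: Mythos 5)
Your proposal takes essentially the same route as the paper: the paper's proof of Theorem~\ref{thm:pbe:sound} consists precisely of invoking Lemma~\ref{lem:pbe:semantics} (Sat-Inst, SMT-Approx, SMT-Inst) to transfer SMT validity over the accumulated instances to the concrete semantics, treating applications as uninterpreted. Your explicit induction over the loop iterations, and your observation that the lemma must be mildly generalized from instances over $\tosmt{\env}$ to instances over the growing hypothesis sets $\Penv_j$, is exactly the detail the paper leaves implicit, so there is no gap.
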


We prove Theorem~\ref{thm:pbe:sound}
using the Lemma~\ref{lem:pbe:semantics} % ,~\ref{lem:smt-approx},~\ref{lem:smt-inst}
that relates instantiation, SMT validity,
and the exact semantics.
Intuitively, \pbesym is sound as it
reasons about a finite set of instances
by \emph{conservatively} treating all
function applications as \emph{uninterpreted}~\cite{Nelson81}.

\subsection{Completeness} \label{sec:pbe:complete}

Next, we show that our proof search
is \emph{complete} with respect to
equational reasoning.
We define a notion of equational
proof
\eqpf{\Renv}{\Penv}{\expr}{\expr'}
and prove that
if there exists such a proof,
then
\pbe{\Renv}{\Penv}{\expr = \expr'}
is guaranteed to return \ttrue.
To prove this theorem, we introduce
the notion of \emph{bounded unfolding}
which corresponds to unfolding
definitions $n$ times.
We show that unfolding preserves
congruences, and hence, that an equational
proof exists \textit{iff} the goal can be proved with
\emph{some} bounded unfolding.
Thus, completeness follows by showing that
the proof search procedure computes the limit
(\ie fixpoint) of the bounded unfolding.
In \S~\ref{sec:pbe:terminates}
we show that the fixpoint is computable:
there exists an unfolding depth at which \pbesym
reaches a fixpoint and hence terminates.

\mypara{Bounded Unfolding}
For every $\Renv, \Penv$, and $0 \leq n$,
the \emph{bounded unfolding of depth} $n$ is
defined by:
$$\begin{array}{lcl}
  \binst{\Renv}{\Penv}{0}     & \doteq & \Penv \\
  \binst{\Renv}{\Penv}{n+1}   & \doteq & \Penv_n \cup \uinst{\Renv}{\Penv_n}
  \quad \mbox{where}\ \Penv_n = \binst{\Renv}{\Penv}{n}
\end{array}$$
That is, the unfolding at depth $n$ essentially
performs $\uinstsym$ upto $n$ times.
The bounded-unfoldings yield a monotonically
non-decreasing sequence of formulas such that
if two consecutive bounded unfoldings coincide,
then all subsequent unfoldings are the same.

\begin{lemma}[\textbf{Monotonicity}] \label{lem:bounded:mono}
$\forall 0 \leq n.\ \binst{\Renv}{\Penv}{n} \subseteq \binst{\Renv}{\Penv}{n + 1}$.
\end{lemma}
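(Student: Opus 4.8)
The plan is to prove the inclusion directly from the recursive definition of the bounded unfolding; no induction on $n$ is needed, since the claim for each fixed $n$ follows in a single step. The only structural fact in play is that the unfolding operator $\uinstsym$ is \emph{inflationary}, which is manifest in its definition in Figure~\ref{fig:uinst}: since $\uinst{\Renv}{\Penv} = \Penv \cup \bigcup_{\issubterm{\fapp{\rfun}{\expr}}{\Penv}} \inst{\Renv}{\Penv}{\rfun}{\exprs}$, we have $\Penv \subseteq \uinst{\Renv}{\Penv}$ for every $\Renv$ and $\Penv$.

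Concretely, I would fix arbitrary $\Renv$, $\Penv$, and $n$ with $0 \leq n$, and abbreviate $\Penv_n \defeq \binst{\Renv}{\Penv}{n}$. Unfolding the definition of the $(n{+}1)$-st bounded unfolding gives $\binst{\Renv}{\Penv}{n+1} = \Penv_n \cup \uinst{\Renv}{\Penv_n}$. Because any set is contained in any union it participates in, $\Penv_n \subseteq \Penv_n \cup \uinst{\Renv}{\Penv_n}$, that is $\binst{\Renv}{\Penv}{n} \subseteq \binst{\Renv}{\Penv}{n+1}$. As $n$ was arbitrary, this establishes the lemma. The base case $n = 0$ is subsumed, since there $\Penv_0 = \Penv$ and $\binst{\Renv}{\Penv}{1} = \Penv \cup \uinst{\Renv}{\Penv} \supseteq \Penv$.

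There is no genuine obstacle here: the explicit union with $\Penv_n$ in the recursive clause makes monotonicity automatic, and one need not even invoke the inflationary character of $\uinstsym$ separately. The substantive role of this lemma is downstream rather than in its proof — it is the base fact that makes $\{\binst{\Renv}{\Penv}{n}\}_{n}$ a monotonically non-decreasing chain of formula sets, which is precisely what later lets one conclude that two coinciding consecutive unfoldings force a fixpoint (used for completeness in \S~\ref{sec:pbe:complete} and for termination in \S~\ref{sec:pbe:terminates}). If a more reusable phrasing were desired, I would isolate the inflationary property $\Penv \subseteq \uinst{\Renv}{\Penv}$ as a standalone observation and then read monotonicity off as an immediate consequence, but this adds nothing essential to the argument above.
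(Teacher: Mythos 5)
Your proof is correct and matches the paper's treatment: the paper states this lemma without an explicit proof precisely because, as you observe, $\binst{\Renv}{\Penv}{n+1} = \binst{\Renv}{\Penv}{n} \cup \uinst{\Renv}{\binst{\Renv}{\Penv}{n}}$ contains $\binst{\Renv}{\Penv}{n}$ by construction, so the inclusion is immediate from the definition of bounded unfolding. Your further remark that the inflationary property of $\uinstsym$ need not even be invoked separately is also accurate, and your framing of the lemma's downstream role (feeding the fixpoint Lemma~\ref{lem:bounded:fix} and the completeness and termination arguments) is consistent with how the paper uses it.
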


\begin{lemma}[\textbf{Fixpoint}] \label{lem:bounded:fix}
Let $\Penv_i \doteq \binst{\Renv}{\Penv}{i}$.
If $\Penv_n = \Penv_{n+1}$,
then $\forall n < m.\ \Penv_m = \Penv_n$.
\end{lemma}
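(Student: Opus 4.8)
The plan is to reduce the claim to the single observation that, once the sequence of unfoldings stops growing, the deterministic operator producing the next term necessarily reproduces the same term, so the sequence is constant from that point on. First I would simplify the defining recurrence. Since $\uinstsym$ returns the conjunction of its input with the freshly generated instances, we always have $\Penv_k \subseteq \uinst{\Renv}{\Penv_k}$, so the definition $\binst{\Renv}{\Penv}{k+1} = \Penv_k \cup \uinst{\Renv}{\Penv_k}$ collapses to
$$\Penv_{k+1} = \uinst{\Renv}{\Penv_k}.$$
The crucial point is that, with $\Renv$ fixed, $\uinst{\Renv}{\cdot}$ is a \emph{function} of its argument: the set of application subterms $\issubterm{\fapp{\rfun}{\expr}}{\Penv}$ and every SMT validity check $\smtIsValid{\Penv}{\cdot}$ appearing in $\inst{\Renv}{\Penv}{\rfun}{\exprs}$ are determined entirely by $\Penv$. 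Hence equal inputs yield equal outputs.

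With the recurrence and determinism established, the statement follows by induction on $m$. The base case $m = n+1$ is exactly the hypothesis $\Penv_{n+1} = \Penv_n$. For the inductive step, assuming $\Penv_m = \Penv_n$ for some $m$ with $n < m$, I would compute
$$\Penv_{m+1} = \uinst{\Renv}{\Penv_m} = \uinst{\Renv}{\Penv_n} = \Penv_{n+1} = \Penv_n,$$
where the first equality is the simplified recurrence, the second uses the induction hypothesis together with the determinism of $\uinst{\Renv}{\cdot}$, the third is again the recurrence, and the last is the lemma's hypothesis. This closes the induction and gives $\Penv_m = \Penv_n$ for every $m > n$.

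I do not anticipate a genuine obstacle: this is the standard fact that a deterministic operator which has reached a fixpoint remains there. The only step deserving care is the justification that $\uinst{\Renv}{\cdot}$ is truly single-valued, i.e.\ that instance generation in $\inst{\Renv}{\Penv}{\rfun}{\exprs}$ depends on $\Penv$ only through the (decidable) validity checks and the syntactic subterm relation, with no hidden state or nondeterminism. Once that is granted, note that Monotonicity (Lemma~\ref{lem:bounded:mono}) is not even required for this particular claim, although it confirms that the collapsed recurrence is consistent with the sequence being non-decreasing.
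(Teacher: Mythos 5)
Your proof is correct, and it matches the paper's (implicit) reasoning: the paper states this lemma without an explicit proof, treating it as an immediate consequence of the definitions, and your argument — collapsing the recurrence to $\Penv_{k+1} = \uinst{\Renv}{\Penv_k}$ since $\uinstsym$ already contains its input, observing that $\uinst{\Renv}{\cdot}$ is a deterministic function of its argument, and then inducting on $m$ — is exactly the standard reasoning being elided. Your remark that Lemma~\ref{lem:bounded:mono} (Monotonicity) is not needed for this particular claim is also accurate; it is used instead in the proof of Lemma~\ref{lem:fixpoint}, where the two lemmas are invoked together.
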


\mypara{Uncovering}
Next, we prove that every function application
term that is \emph{uncovered} by unfolding to
depth $n$ is congruent to a term in the $n$-depth
unfolding.

\begin{lemma}[\textbf{Uncovering}] \label{lem:uncovering}
Let $\Penv_n \equiv \binst{\Renv}{\extendpenv{\Penv}{\vv = \expr}}{n}$.
If   $\smtIsValid{\Penv_n}{\vv = \expr'}$,
then for every $\issubterm{\fapp{\rfun}{t'}}{\expr'}$
       there exists $\issubterm{\fapp{\rfun}{t}}{\Penv_n}$
         such that $\smtIsValid{\Penv_n}{t_i = t'_i}$.
\end{lemma}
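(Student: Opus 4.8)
The plan is to prove the contrapositive via a model-theoretic (congruence-closure) argument: assuming that some application subterm $\fapp{\rfun}{t'}$ of $\expr'$ is \emph{uncovered} by $\Penv_n$, I construct a model of $\Penv_n$ that falsifies $\vv = \expr'$, contradicting $\smtIsValid{\Penv_n}{\vv = \expr'}$. Formally, suppose that for the application $\issubterm{\fapp{\rfun}{t'}}{\expr'}$ there is \emph{no} $\issubterm{\fapp{\rfun}{t}}{\Penv_n}$ with $\smtIsValid{\Penv_n}{t_i = t'_i}$ for all $i$. Among all such uncovered applications I would pick an \emph{innermost} one, so that every application occurring strictly inside the arguments $t'$ is already covered, and hence its value is pinned down by congruence; this is what lets me treat the argument tuple $t'$ as effectively fixed across the models I manipulate.

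The heart of the argument is the construction of the separating model. The relevant theory is quantifier-free equality, uninterpreted functions, and linear arithmetic, in which the reflected symbol $\rfun$ is \emph{uninterpreted}. I would take the canonical congruence-closure model $\sigma$ of $\Penv_n$ over the subterms of $\Penv_n \cup \{\vv = \expr'\}$, in which two terms receive equal values exactly when $\Penv_n$ entails their equality. Because $\fapp{\rfun}{t'}$ is uncovered, in $\sigma$ its argument tuple is \emph{simultaneously} distinct from that of every $\fapp{\rfun}{t}$ appearing in $\Penv_n$ --- this is precisely the property the canonical model buys us, and it is why one model suffices rather than a separate one per competing application. I can therefore re-interpret $\rfun$ at this single argument tuple with a fresh value (the integer and uninterpreted sorts are infinite, so such a value exists), obtaining $\sigma'$. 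Since no application constrained by $\Penv_n$ shares that argument tuple, $\sigma'$ still satisfies every hypothesis in $\Penv_n$, while the value of the occurrence $\fapp{\rfun}{t'}$ inside $\expr'$ changes, so $\sigma' \not\models \vv = \expr'$, the desired contradiction.

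The main obstacle is making the ``single separating model'' step rigorous in the \emph{combined} theory rather than in pure equality with uninterpreted functions: arithmetic can force equalities among argument terms, so I must argue that the congruence-closure / Nelson--Oppen canonical model genuinely realizes all non-entailed argument disequalities at once, appealing to stable infiniteness of the component theories to extend the equality skeleton to a full model. A secondary subtlety is ensuring that perturbing $\rfun$ at $t'$ actually changes the value of $\expr'$, which is why the innermost, occurrence-sensitive choice above matters; the monotonicity and fixpoint structure of the bounded unfoldings (Lemmas~\ref{lem:bounded:mono} and~\ref{lem:bounded:fix}) keep the set of relevant applications finite and stable throughout. Once the separating model is in hand, the contradiction with $\smtIsValid{\Penv_n}{\vv = \expr'}$ is immediate, completing the contrapositive.
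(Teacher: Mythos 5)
Your route is genuinely different from the paper's, and it has two gaps that I do not think can be closed in the logic this lemma is stated over (quantifier-free EUF plus linear \emph{integer} arithmetic, with $\smtIsValid{\Penv}{\pred}$ defined as full SMT validity). The first gap is the final inference ``the value of the occurrence $\fapp{\rfun}{t'}$ inside $\expr'$ changes, so $\sigma' \not\models \vv = \expr'$.'' Changing the value of a subterm need not change the value of the term containing it, because the context surrounding $\fapp{\rfun}{t'}$ in $\expr'$ can be arithmetic or boolean rather than uninterpreted: for $\expr' \equiv \rfun(t') - \rfun(t')$, or for an $\expr'$ containing the boolean subterm $\rfun(t') = \rfun(t')$, \emph{every} re-interpretation of $\rfun$ leaves the value of $\expr'$ fixed, so no perturbation falsifies $\vv = \expr'$. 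Your innermost-occurrence choice does not address this, since the problem is the context above the occurrence, not the applications below it. The second gap is the ``single separating model'' step: you need one model that simultaneously falsifies, for every $\issubterm{\fapp{\rfun}{t}}{\Penv_n}$, some non-entailed equality $t_i = t'_i$. That property is precisely \emph{convexity} of the theory, not stable infiniteness (stable infiniteness is the Nelson--Oppen combination hypothesis and does not yield it), and linear integer arithmetic is non-convex. Concretely, take $\Penv = \{0 \leq a,\ a \leq 1,\ \rfun(0) = 5,\ \rfun(1) = 5\}$, $\expr \equiv 5$, $n = 0$, so $\Penv_0 = \extendpenv{\Penv}{\vv = 5}$, and $\expr' \equiv \rfun(a)$. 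Then $\smtIsValid{\Penv_0}{\vv = \expr'}$ holds, since every model puts $a = 0$ or $a = 1$; yet $\rfun(a)$ is covered by neither $\rfun(0)$ nor $\rfun(1)$, because neither $a = 0$ nor $a = 1$ is entailed. So no separating model exists and the contrapositive cannot get started. (This example also shows that, read over full theory validity, the statement itself fails; the lemma is really a statement about equational consequence, which is how the paper's proof treats it.)

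The paper sidesteps both issues by never reasoning about models of the full theory. It proves the lemma by induction on $n$, and discharges the inductive step with Lemma~\ref{lem:congruence}, a structural induction on $\expr'$ that inspects the \emph{last link} --- either a congruence link or an equality link of $\Penv$ --- in the congruence-closure chain connecting the equivalence class of $\vv$ to $\expr'$. That argument manipulates only the equational (congruence-closure) consequences of $\Penv_n$, the fragment in which the lemma is both true and actually used downstream for equational proofs. To repair your argument you would have to replace ``valid in all models'' by ``derivable by congruence closure'' throughout; once you do, the canonical-model perturbation collapses into exactly the syntactic case analysis of Lemma~\ref{lem:congruence}, \ie you would have reconstructed the paper's proof.
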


We prove the above lemma by induction on $n$
where the inductive step uses the following
property of congruence closure, which itself
is proved by induction on the structure of
$\expr'$:

\begin{lemma}[\textbf{Congruence}] \label{lem:congruence}
If   $\smtIsValid{\extendpenv{\Penv}{\vv = \expr}}{\vv = \expr'}$
     and
     $\vv \not \in \Penv, \expr, \expr'$,
then for every $\issubterm{\fapp{\rfun}{t'}}{\expr'}$
       there exists $\issubterm{\fapp{\rfun}{t}}{\Penv, \expr}$
         such that $\smtIsValid{\Penv}{t_i = t'_i}$.
\end{lemma}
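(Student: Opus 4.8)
The plan is to reduce the statement to a property of congruence closure and then establish it by structural induction on $\expr'$, exactly as the surrounding text anticipates. First I would exploit the freshness side condition: since $\vv$ does not occur in $\Penv$, $\expr$, or $\expr'$, substituting $\expr$ for $\vv$ shows that $\smtIsValid{\extendpenv{\Penv}{\vv = \expr}}{\vv = \expr'}$ holds iff $\smtIsValid{\Penv}{\expr = \expr'}$. I would then switch from the semantic notion of validity to the syntactic congruence-closure characterisation of EUF entailment: build the congruence relation $\cong$ generated by the equalities of $\Penv$ over the set $S$ of all subterms of $\Penv, \expr, \expr'$, so that $\smtIsValid{\Penv}{s = t}$ for $s,t \in S$ coincides with $s \cong t$. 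Under this reading the hypothesis becomes $\expr \cong \expr'$, and the goal is that every reflected application $\fapp{\rfun}{t'}$ occurring in $\expr'$ has a ``twin'' $\fapp{\rfun}{t}$ occurring in $\Penv, \expr$ with $t_i \cong t'_i$.

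The core observation driving the induction is that congruence closure can merge two applications of the same reflected symbol $\rfun$ only by an application of the congruence rule, which in turn demands that their arguments already be merged; and that a reflected symbol occurring only inside $\expr'$ is completely unconstrained by $\Penv, \expr$, so no model can be forced to assign such an application a particular value. I would isolate this as a small sublemma and then run the structural induction on $\expr'$. When $\expr'$ is an atom (a variable, literal, or nullary constructor) it contains no reflected application and the claim is vacuous. When $\expr' = \fapp{\rfun}{t'}$ with a reflected head, the equality $\expr \cong \expr'$ together with the sublemma produces a twin $\fapp{\rfun}{t} \sqsubseteq \Penv, \expr$ with $t_i \cong t'_i$; I then handle the reflected applications nested inside the arguments by invoking the induction hypothesis on each $t'_i$, using the freshly derived $\smtIsValid{\Penv}{t_i = t'_i}$ (with witness $t_i \sqsubseteq \Penv, \expr$) as its premise. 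Applications with an interpreted head, and compound terms generally, are treated by descending structurally into their immediate subterms.

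The step I expect to be the main obstacle is the interaction between congruence and the interpreted arithmetic theory in the descent through interpreted operators. Pure congruence closure makes the matching argument go through verbatim, but linear arithmetic can in principle equate terms with no congruence witness: a subterm of the shape $\fapp{\rfun}{t'} - \fapp{\rfun}{t'}$ collapses to $0$ regardless of the value of $\fapp{\rfun}{t'}$, so a naive reading would demand a twin for a symbol that need not appear in $\Penv, \expr$ at all. I would discharge this by a Nelson--Oppen style purification: abstract each maximal reflected application as a fresh constant shared between the EUF and LIA solvers, observe that the only equalities LIA imports across the interface are the argument-driven congruences supplied by EUF, and argue that every reflected application genuinely needed to justify $\expr \cong \expr'$ is one of these shared constants, whose defining occurrence furnishes the required twin. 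Making this precise --- in particular ruling out that arithmetic cancellation severs the need for a twin, by appealing to the restricted form in which the goal $\expr = \expr'$ arises from unfolding reflected definitions --- is where the real work lies; the remaining cases of the induction are routine.
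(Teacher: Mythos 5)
Your proposal is correct and its core is the same as the paper's: a structural induction on $\expr'$ whose crux is that two applications of the same reflected symbol can become equal only through the congruence rule or through an explicit equality already present in $\Penv$. The paper phrases this crux as a case split on the \emph{last link} connecting the equivalence class of $\vv$ (and $\expr$) to $\expr'$ --- a congruence link yields the twin $\fapp{\rfun}{t}$ with $\smtIsValid{\Penv}{t_i = t'_i}$, an equality link puts $\fapp{\rfun}{t'}$ itself inside $\Penv$ --- which is exactly your sublemma in a different guise; your preliminary reduction from $\smtIsValid{\extendpenv{\Penv}{\vv = \expr}}{\vv = \expr'}$ to $\smtIsValid{\Penv}{\expr = \expr'}$ via freshness of $\vv$ is also implicit in the paper's phrasing. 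Where you genuinely diverge is in thoroughness, and the divergence is to your credit. The paper's proof treats only the cases $\expr' \equiv x$, $\expr' \equiv c$, and $\expr' \equiv \fapp{\rfun}{t'}$; it never recurses into the arguments $t'$ nor handles compound terms built from interpreted operators, both of which your induction covers explicitly. More importantly, the paper tacitly equates QF-EUFLIA validity with pure congruence closure, and your worry about arithmetic is not pedantry: taking $\Penv$ empty, $\expr \equiv 0$, and $\expr' \equiv \fapp{\rfun}{a} - \fapp{\rfun}{a}$, the hypothesis $\smtIsValid{\extendpenv{\Penv}{\vv = \expr}}{\vv = \expr'}$ holds by cancellation while no twin of $\fapp{\rfun}{a}$ exists in $\Penv, \expr$, so the lemma as literally stated fails outside the pure EUF fragment. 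Your Nelson--Oppen purification, together with the appeal to the restricted shape of goals produced by unfolding reflected definitions, is precisely the repair the paper would need and does not attempt; the cost is that this step remains a proof obligation in your plan rather than a finished argument, but it is an obligation the published proof silently incurs as well.
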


\mypara{Unfolding Preserves Equational Links}
We use the uncovering Lemma~\ref{lem:uncovering}
and congruence to show that \emph{every instantiation}
that is valid after $n$ steps is subsumed by the $n+1$
depth unfolding.
That is, we show that every possible \emph{link} in
any equational chain can be proved equal to
the source expression via bounded unfolding.

\begin{lemma}[\textbf{Link}] \label{lem:link}
If   $\smtIsValid{\binst{\Renv}{\extendpenv{\Penv}{\vv = \expr}}{n}}{\vv = \expr'}$,
then $\smtIsValid{\binst{\Renv}{\extendpenv{\Penv}{\vv = \expr}}{n+1}}{\uinst{\Renv}{\extendpenv{\Penv}{\vv = \expr'}}}$.
\end{lemma}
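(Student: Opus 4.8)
The plan is to establish the conclusion by splitting the set $\uinst{\Renv}{\extendpenv{\Penv}{\vv=\expr'}}$ into its conjuncts and showing each is entailed by the $(n{+}1)$-depth unfolding of the source. Write $\Penv_n \doteq \binst{\Renv}{\extendpenv{\Penv}{\vv=\expr}}{n}$, and note that since $\uinstsym$ always returns a superset of its argument we have $\Penv_{n+1} = \Penv_n \cup \uinst{\Renv}{\Penv_n} = \uinst{\Renv}{\Penv_n}$. Because $\uinstsym$ returns a conjunction, the goal $\smtIsValid{\Penv_{n+1}}{\uinst{\Renv}{\extendpenv{\Penv}{\vv=\expr'}}}$ reduces to showing that $\Penv_{n+1}$ entails (i) each hypothesis of $\extendpenv{\Penv}{\vv=\expr'}$ and (ii) each instance generated from an application occurring in $\extendpenv{\Penv}{\vv=\expr'}$.

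For (i) I would use Monotonicity (Lemma~\ref{lem:bounded:mono}): from $\Penv \subseteq \binst{\Renv}{\extendpenv{\Penv}{\vv=\expr}}{0} \subseteq \Penv_n \subseteq \Penv_{n+1}$ we get that $\Penv_{n+1}$ already contains $\Penv$, while the lemma's hypothesis $\smtIsValid{\Penv_n}{\vv=\expr'}$ transfers to $\Penv_{n+1}$ by the same inclusion. Hence $\Penv_{n+1}$ entails all of $\extendpenv{\Penv}{\vv=\expr'}$, and in particular $\Penv_n$ itself entails $\extendpenv{\Penv}{\vv=\expr'}$, a fact I reuse below.

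The substance is (ii). Fix a target instance coming from some $\issubterm{\fapp{\rfun}{t'}}{\extendpenv{\Penv}{\vv=\expr'}}$ with $\rlookup{\Renv}{\rfun} = \rdef{x}{\pred_i}{\body_i}$ and enabled guard $\smtIsValid{\extendpenv{\Penv}{\vv=\expr'}}{\SUBSTS{\pred_i}{x}{t'}}$, so the target is $\SUBSTS{(\fapp{\rfun}{x} = \body_i)}{x}{t'}$. I split on where $\fapp{\rfun}{t'}$ sits. If $\issubterm{\fapp{\rfun}{t'}}{\Penv}$, then it is already a subterm of $\Penv_n$; since $\Penv_n$ entails $\extendpenv{\Penv}{\vv=\expr'}$, the guard $\SUBSTS{\pred_i}{x}{t'}$ is valid under $\Penv_n$, so this very instance is an $(\Renv,\Penv_n)$-instance and lies in $\Penv_{n+1}$. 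The interesting case is $\issubterm{\fapp{\rfun}{t'}}{\expr'}$: here I invoke Uncovering (Lemma~\ref{lem:uncovering}), which from $\smtIsValid{\Penv_n}{\vv=\expr'}$ yields a witness application $\issubterm{\fapp{\rfun}{t}}{\Penv_n}$ with $\smtIsValid{\Penv_n}{t_i = t'_i}$. Congruence then transports the guard: since $\Penv_n$ entails $\SUBSTS{\pred_i}{x}{t'}$ and $t = t'$ under $\Penv_n$, it entails $\SUBSTS{\pred_i}{x}{t}$, so the witness instance $\SUBSTS{(\fapp{\rfun}{x}=\body_i)}{x}{t}$ is an $(\Renv,\Penv_n)$-instance and belongs to $\Penv_{n+1}$. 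Finally, as $\Penv_{n+1} \supseteq \Penv_n$ still proves $t_i = t'_i$, congruence closure on the uninterpreted symbol $\rfun$ gives $\fapp{\rfun}{t}=\fapp{\rfun}{t'}$ and $\SUBSTS{\body_i}{x}{t} = \SUBSTS{\body_i}{x}{t'}$, whence $\Penv_{n+1}$ entails the desired $\SUBSTS{(\fapp{\rfun}{x}=\body_i)}{x}{t'}$.

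I expect the crux to be exactly this last transport step in the $\expr'$ case: marrying Uncovering with the congruence behaviour of the encoding so that the instance generated at the witness arguments $t$ found inside $\Penv_n$ is recognised as the instance required at the target arguments $t'$ appearing in $\expr'$. Both halves of that argument --- preservation of guard-enabledness under provable argument equality, and preservation of the unfolded body equation --- rest on treating $\rfun$ as an uninterpreted, hence congruent, symbol, which is precisely the reasoning packaged in the Congruence lemma (Lemma~\ref{lem:congruence}); everything else is routine monotonicity bookkeeping.
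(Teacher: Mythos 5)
Your proof is correct and follows essentially the same route as the paper's: for each target instance generated at an application $\fapp{\rfun}{t'}$, use the Uncovering lemma to find a congruent witness $\fapp{\rfun}{t}$ inside $\Penv_n$, transport the guard by congruence so the witness instance lands in $\Penv_{n+1}$, and then close with congruence on the uninterpreted symbol $\rfun$ to recover the instance at $t'$. Your additional bookkeeping (entailing the base hypotheses, and splitting on whether $\fapp{\rfun}{t'}$ occurs in $\Penv$ or in $\expr'$) just makes explicit the trivial cases the paper's proof leaves implicit, since its appeal to Uncovering literally covers only subterms of $\expr'$.
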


\begin{figure}[t]
$$
\inference{
}{
	\eqpf{\Renv}{\Penv}{\expr}{\expr}
}[\rulename{Eq-Refl}]
$$

$$\inference
  { \eqpf{\Renv}{\Penv}{\expr}{\expr''}
    \quad
    \Penv' = \uinst{\Renv}{\extendpenv{\Penv}{\vv = \expr''}}
    \quad
    \smtIsValid{\Penv'}{\vv = \expr'}
  }
  {\eqpf{\Renv}{\Penv}{\expr}{\expr'}}
 [\rulename{Eq-Trans}]
$$

$$\inference
  { \eqpf{\Renv}{\Penv}{\expr_1}{\expr_1'}
    \quad
		\eqpf{\Renv}{\Penv}{\expr_2}{\expr_2'}
		\quad
    \smtIsValid{\Penv}{\expr_1' \binop \expr_2'}
  }
  {\symeqpf{\Renv}{\Penv}{\expr_1}{\binop}{\expr_2}}
  [\rulename{Eq-Proof}]
$$
\caption{\textbf{Equational Proofs:} rules for equational reasoning.}
\label{fig:equational-proof}
\figrule
\end{figure}

\mypara{Equational Proof}
Figure~\ref{fig:equational-proof} formalizes
our rules for equational reasoning.
Intuitively, there is an \emph{equational proof}
that $\expr_1 \binop \expr_2$ under $\Renv, \Penv$,
written by the judgment
$\symeqpf{\Renv}{\Penv}{\expr_1}{\binop}{\expr_2}$,
if by some sequence of repeated function unfoldings,
we can prove that $\expr_1$ and $\expr_2$ are
respectively equal to $\expr_1'$ and $\expr_2'$
such that,
$\smtIsValid{\Penv}{\expr_1' \binop \expr_2'}$
holds.
Our notion of equational proofs
adapts the idea of type level
computation used in TT-based
proof assistants to the setting of
SMT-based reasoning, via the
directional unfolding judgment
\eqpf{\Renv}{\Penv}{\expr}{\expr'}.
In the SMT-realm, the explicit
notion of a normal or canonical
form is converted to the implicit
notion of the equivalence classes
of the SMT solver's congruence
closure procedure (post-unfolding).

\mypara{Completeness of Bounded Unfolding}
We use the fact that unfolding preserves
equational links to show that bounded unfolding
is \emph{complete} for equational proofs.
That is, we prove by induction on the structure
of the equational proof that whenever there is
an \emph{equational proof} of $\expr = \expr'$,
there exists some bounded unfolding that
suffices to prove the equality.

\begin{lemma}\label{lem:bounded-completeness}
  If   $\eqpf{\Renv}{\Penv}{\expr}{\expr'}$,
  then $\exists 0 \leq n.\ \smtIsValid{\binst{\Renv}{\extendpenv{\Penv}{\vv = \expr}}{n}}{\vv = \expr'}$.
\end{lemma}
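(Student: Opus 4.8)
The plan is to induct on the structure of the derivation of the equational-proof judgment $\eqpf{\Renv}{\Penv}{\expr}{\expr'}$. This judgment is produced only by the rules Eq-Refl and Eq-Trans of Figure~\ref{fig:equational-proof} (rule Eq-Proof yields the distinct judgment $\symeqpf{\Renv}{\Penv}{\expr_1}{\binop}{\expr_2}$ and is therefore irrelevant here), so there are exactly two cases. Throughout I abbreviate $\Penv_k \doteq \binst{\Renv}{\extendpenv{\Penv}{\vv = \expr}}{k}$, so that the goal is to exhibit some $n$ with $\smtIsValid{\Penv_n}{\vv = \expr'}$.

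For the base case Eq-Refl we have $\expr' = \expr$, so the goal formula $\vv = \expr'$ is literally the hypothesis $\vv = \expr$, which by definition of bounded unfolding already belongs to $\Penv_0 = \binst{\Renv}{\extendpenv{\Penv}{\vv = \expr}}{0} = \extendpenv{\Penv}{\vv = \expr}$. Hence $n = 0$ discharges this case at once.

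For the inductive case Eq-Trans, the premises supply an equational proof $\eqpf{\Renv}{\Penv}{\expr}{\expr''}$ with a strictly smaller derivation, together with $\Penv' = \uinst{\Renv}{\extendpenv{\Penv}{\vv = \expr''}}$ and $\smtIsValid{\Penv'}{\vv = \expr'}$. Applying the induction hypothesis to the first premise yields some $m$ with $\smtIsValid{\Penv_m}{\vv = \expr''}$. The crucial step is then to invoke the Link lemma (Lemma~\ref{lem:link}) with its generic $\expr'$ instantiated to our intermediate $\expr''$ and its depth set to $m$: from $\smtIsValid{\Penv_m}{\vv = \expr''}$ it delivers $\smtIsValid{\Penv_{m+1}}{\uinst{\Renv}{\extendpenv{\Penv}{\vv = \expr''}}}$, that is $\smtIsValid{\Penv_{m+1}}{\Penv'}$, meaning the depth-$(m{+}1)$ unfolding entails every formula of $\Penv'$. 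Chaining this with the premise $\smtIsValid{\Penv'}{\vv = \expr'}$ by transitivity of SMT entailment gives $\smtIsValid{\Penv_{m+1}}{\vv = \expr'}$, so $n = m+1$ discharges the case.

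The substantive content is entirely concentrated in the appeal to the Link lemma, which is precisely engineered to convert the single $\uinstsym$-step hidden inside rule Eq-Trans into one increment of the bounded-unfolding depth. I expect this matching to be the one delicate point, since it relies on reading $\smtIsValid{\Penv_{m+1}}{\Penv'}$ as entailment of the \emph{full} conjunction $\Penv'$, so that the final composition with $\smtIsValid{\Penv'}{\vv = \expr'}$ is justified by transitivity of entailment. Everything else --- the reflexive base case and the bookkeeping of the depth index $n$ --- is routine; in particular the Link lemma already absorbs the use of the Monotonicity lemma (Lemma~\ref{lem:bounded:mono}) needed to pass from depth $m$ to depth $m+1$, so no separate monotonicity argument is required at this level.
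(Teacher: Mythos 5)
Your proposal is correct and follows essentially the same route as the paper's own proof: induction on the derivation, with \rulename{Eq-Refl} discharged at depth $0$ and \rulename{Eq-Trans} handled by the induction hypothesis, one application of Lemma~\ref{lem:link} (instantiated at the intermediate term $\expr''$), and transitivity of entailment --- the step the paper calls modus ponens. Your added remarks (that \rulename{Eq-Proof} produces a different judgment and so contributes no case, and that Lemma~\ref{lem:link} already subsumes any monotonicity bookkeeping) are accurate glosses on the same argument.
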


\mypara{\pbesym is a Fixpoint of Bounded Unfolding}
We show that the proof search procedure \pbesym computes
the least-fixpoint of the bounded unfolding and hence,
returns \ttrue \textit{iff} there exists \emph{some} unfolding
depth $n$ at which the goal can be proved.

\begin{lemma}[\textbf{Fixpoint}]\label{lem:fixpoint}
$\pbe{\Renv}{\Penv}{\expr = \expr'}$ iff
$\exists n.\ \smtIsValid{\binst{\Renv}{\extendpenv{\Penv}{\vv = \expr}}{n}}{\vv = \expr'}$.
\end{lemma}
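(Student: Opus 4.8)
The plan is to show that the loop in Figure~\ref{fig:pbe} computes the least fixpoint of the operator $\uinst{\Renv}{-}$ started from a goal-seeded environment, and that this fixpoint certifies $\expr = \expr'$ exactly when some bounded unfolding does. Throughout I write $\Penv_i$ for the environments explored by the loop, so $\Penv_0 \doteq \Penv \cup \bigcup_{\issubterm{\fapp{\rfun}{\exprs}}{\expr = \expr'}}\inst{\Renv}{\Penv}{\rfun}{\exprs}$ and $\Penv_{i+1} \doteq \Penv \cup \uinst{\Renv}{\Penv_i}$, and I write $\Penv'_n \doteq \binst{\Renv}{\extendpenv{\Penv}{\vv = \expr}}{n}$ for the bounded unfoldings on the right-hand side.

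First I would reduce the loop to a statement about its iterates: $\pbe{\Renv}{\Penv}{\expr = \expr'} = \ttrue$ iff $\smtIsValid{\Penv_i}{\expr = \expr'}$ for some $i$. Since $\Penv_i \subseteq \Penv_{i+1}$ for all $i$ (each step only adds instances, exactly as in Lemma~\ref{lem:bounded:mono}) and SMT validity is monotone in the hypothesis set, if any iterate is valid then so are all later ones. The loop returns $\tfalse$ only on detecting a syntactic fixpoint $\Penv_{i+1} \subseteq \Penv_i$; by the same reasoning as Lemma~\ref{lem:bounded:fix} the sequence is then constant from $i$ onwards, so no subsequent iterate could become valid either. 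Hence the loop returns $\ttrue$ precisely when some iterate certifies the goal.

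Next I would relate the two iterate sequences. Both $\Penv_i$ and $\Penv'_n$ are generated by iterating the operator $\uinst{\Renv}{-}$, which is monotone because enlarging the hypotheses can only enable more guards and hence produce more instances; the sequences differ only in their seed. Because $\vv$ is fresh and occurs in $\Penv'_n$ only through $\vv = \expr$, validity of the right-hand goal can be rephrased without $\vv$: $\smtIsValid{\Penv'_n}{\vv = \expr'}$ holds iff $\smtIsValid{\Penv'_n \setminus \{\vv = \expr\}}{\expr = \expr'}$ holds (substitute $\vv \mapsto \expr$ for one direction, use $\vv = \expr$ and transitivity for the other). This aligns both validity checks on the common goal $\expr = \expr'$. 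For the direction from bounded unfolding to the loop I would show by induction on $n$ that $\Penv'_n \setminus \{\vv = \expr\} \subseteq \Penv_n$: the applications of $\expr$ introduced by the seed $\vv = \expr$ already occur as subterms inside the goal-seeded $\Penv_0$, so monotonicity of $\uinst{\Renv}{-}$ propagates the inclusion at each step. Validity then transfers from $\Penv'_n$ to $\Penv_n$, establishing that the right-hand side implies $\pbe{\Renv}{\Penv}{\expr = \expr'} = \ttrue$.

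The main obstacle is the converse direction, because the reverse set inclusion fails: the loop eagerly unfolds the applications in the far endpoint $\expr'$ of the goal, whereas $\Penv'_n$ only ever unfolds applications reachable from $\Penv$ and $\expr$. The point is that these far-endpoint instances are redundant for certifying the equality, which is exactly what the Link Lemma (Lemma~\ref{lem:link}) supplies: once a bounded unfolding proves an intermediate equality, one further unfolding reproduces every instance that a direct unfolding of the far endpoint would have generated. Concretely, I would argue by induction on the loop index $i$, using Uncovering (Lemma~\ref{lem:uncovering}) together with Congruence (Lemma~\ref{lem:congruence}) to replace each goal-side instance appearing in $\Penv_i$ by a congruent instance already present in some $\Penv'_n$; iterating Link along the resulting equational chain then shows that $\smtIsValid{\Penv_i}{\expr = \expr'}$ implies $\smtIsValid{\Penv'_n}{\vv = \expr'}$ at a suitable depth $n$. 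Since both sequences stabilize at a fixpoint (Lemmas~\ref{lem:bounded:mono} and~\ref{lem:bounded:fix}), the existential quantifiers on the two sides are witnessed at finite, mutually bounded depths, which closes the biconditional.
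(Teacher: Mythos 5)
Your overall architecture is the same as the paper's: first characterize the loop of Figure~\ref{fig:pbe} --- its iterates grow monotonically, and the $\Penv_{i+1} \subseteq \Penv_i$ exit means the sequence has stabilized, so the loop returns $\ttrue$ exactly when some iterate validates the goal --- and then identify the loop iterates with the bounded unfoldings. The paper performs the second step by simply asserting, ``by induction on $i$'', that $\Penv_i$ equals $\binst{\Renv}{\extendpenv{\Penv}{\vv = \expr}}{i}$, after which the left-to-right direction is immediate and the right-to-left direction invokes Lemmas~\ref{lem:bounded:mono} and~\ref{lem:bounded:fix} to rule out an early $\tfalse$ --- which is exactly the content of your first paragraph. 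You instead observe, correctly, that the two sequences are \emph{not} equal, and you try to bridge them by mutual subsumption; your freshness-of-$\vv$ reduction and the inclusion argument are the honest version of what the paper leaves implicit for the right-to-left direction.

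The genuine gap is in your left-to-right direction, at precisely the point you call the main obstacle. The claim that the loop's unfolding of applications inside $\expr'$ is ``redundant for certifying the equality'' is false. Let $\Renv$ contain a single reflected function $\rfun$ whose only guard is $\ttrue$ and whose body is the constant $0$; take $\Penv = \emptyset$, $\expr \equiv 0$, and $\expr' \equiv \fapp{\rfun}{1}$. The loop seeds itself with the goal-application instance $\fapp{\rfun}{1} = 0$ and returns $\ttrue$ at iteration $0$; but $\binst{\Renv}{\extendpenv{\Penv}{\vv = 0}}{n} = \{\vv = 0\}$ for every $n$, since no application term ever occurs in that environment, so $\smtIsValid{\binst{\Renv}{\extendpenv{\Penv}{\vv = 0}}{n}}{\vv = \fapp{\rfun}{1}}$ holds for no $n$. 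Thus the far-endpoint instances can constitute the \emph{entire} proof, and no congruent counterparts of them exist on the bounded-unfolding side. Your proposed tools cannot repair this: Lemma~\ref{lem:uncovering} and Lemma~\ref{lem:link} both take $\smtIsValid{\binst{\Renv}{\extendpenv{\Penv}{\vv=\expr}}{n}}{\vv = \expr'}$ as a \emph{hypothesis}, which is the very statement this direction must produce, so invoking them here is circular. (The same asymmetry --- a symmetric loop versus a bounded unfolding seeded only from $\expr$ --- is hidden inside the paper's unproven identification of the two sequences, so the difficulty you located is real; but your specific patch cannot close it.) A smaller soft spot: your inclusion for the other direction assumes the goal's applications remain unfolding candidates at every iteration, whereas in the figure as written they are instantiated only once under $\Penv$-valid guards; an application of $\expr$ whose guard becomes enabled only later is retried by the bounded unfolding but never by the loop, so the inclusion needs this more generous reading of the algorithm to go through.
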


The proof follows by observing that
$\pbe{\Renv}{\Penv}{\expr = \expr'}$
computes the \emph{least-fixpoint}
of the sequence
$\Penv_i \doteq \binst{\Renv}{\Penv}{i} \label{eq:fix:1}$.
Specifically, we prove by induction on $i$
that at each invocation of $\pbeloop{i}{\Penv_i}$
in Figure~\ref{fig:pbe}, $\Penv_i$ is equal to
$\binst{\Renv}{\extendpenv{\Penv}{\vv = \expr}}{i}$,
which then yields the result.

\mypara{Completeness of \pbesym}
Finally, we combine Lemmas~\ref{lem:fixpoint}
and~\ref{lem:link} to show that
\pbesym is complete, \ie if there is an equational proof
that $\expr \binop \expr'$ under $\Renv, \Penv$,
then \pbe{\Renv}{\Penv}{\expr \binop \expr'} returns $\ttrue$.

\begin{theorem}[\textbf{Completeness}] \label{thm:completeness}
  If \symeqpf{\Renv}{\Penv}{\expr}{\binop}{\expr'}
  then \pbe{\Renv}{\Penv}{\expr \binop \expr'} = \ttrue.
\end{theorem}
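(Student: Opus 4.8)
The plan is to peel off the outermost rule of the equational proof and then reduce the two resulting directional unfolding judgments to bounded unfoldings that can be merged inside the fixpoint computed by \pbesym. First I would invert the only rule that can derive $\symeqpf{\Renv}{\Penv}{\expr}{\binop}{\expr'}$, namely \rulename{Eq-Proof} of Figure~\ref{fig:equational-proof}. This yields witnesses $\expr_1'$ and $\expr_2'$ together with directional proofs $\eqpf{\Renv}{\Penv}{\expr}{\expr_1'}$ and $\eqpf{\Renv}{\Penv}{\expr'}{\expr_2'}$, and the base fact $\smtIsValid{\Penv}{\expr_1' \binop \expr_2'}$. Thus the relational goal is decomposed into two equalities provable by unfolding, plus a single SMT-valid relation between the endpoints.

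Next I would discharge each directional proof with the bounded-completeness Lemma~\ref{lem:bounded-completeness}, obtaining depths $n_1, n_2$ with $\smtIsValid{\binst{\Renv}{\extendpenv{\Penv}{\vv = \expr}}{n_1}}{\vv = \expr_1'}$ and $\smtIsValid{\binst{\Renv}{\extendpenv{\Penv}{\vv = \expr'}}{n_2}}{\vv = \expr_2'}$. Substituting the fresh binder $\vv$ by the term it names turns these into the statements that $\expr = \expr_1'$ and $\expr' = \expr_2'$ are entailed by their respective unfoldings. Setting $N = \max(n_1, n_2)$ and appealing to Monotonicity (Lemma~\ref{lem:bounded:mono}) then lets me carry out the rest of the argument at a single depth.

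The combining step is where the real content lies. I would argue that the unfolding performed by $\pbe{\Renv}{\Penv}{\expr \binop \expr'}$ is seeded on the function-application subterms of \emph{both} $\expr$ and $\expr'$, since these are subterms of the goal; hence its depth-$N$ environment contains every $(\Renv, \Penv)$-instance used to derive $\expr = \expr_1'$ and $\expr' = \expr_2'$, and therefore SMT-entails both equalities. Because that environment also contains $\Penv$ and we have $\smtIsValid{\Penv}{\expr_1' \binop \expr_2'}$, congruence closure substitutes equals-for-equals to yield $\expr \binop \expr'$ at depth $N$. Finally I would invoke the Fixpoint Lemma~\ref{lem:fixpoint} read right-to-left: its proof shows that \pbesym computes the least fixpoint of the bounded unfoldings and returns $\ttrue$ as soon as the goal becomes valid, and since the per-iteration validity check $\smtIsValid{\Penv_i}{\pred}$ treats the goal $\pred$ as a black box, that characterization applies verbatim with $\expr \binop \expr'$ in place of $\vv = \expr'$. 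This gives $\pbe{\Renv}{\Penv}{\expr \binop \expr'} = \ttrue$.

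The main obstacle is the reconciliation inside the combining step: the two bounded unfoldings are seeded from distinct fresh hypotheses $\vv = \expr$ and $\vv = \expr'$, whereas \pbesym unfolds from a single environment seeded by the goal's subterms. I would need to show that this fresh-variable bookkeeping is inessential—that each instance generated from $\vv = \expr$ corresponds to one generated directly from the occurrences of $\expr$'s subterms in the goal, and likewise for $\expr'$—so that the two chains coexist in the one fixpoint environment. Concretely this amounts to checking that $(\Renv,\Penv)$-instantiation is stable under replacing the witness variable by the term it names, which follows from the definition of $\uinst{\Renv}{\Penv}$ in terms of syntactic subterms, while the Link Lemma~\ref{lem:link} supplies the propagation of the equalities established along each chain into the shared environment.
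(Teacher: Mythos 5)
Your proposal is correct and takes essentially the same route as the paper: invert \rulename{Eq-Proof} to obtain the two directional judgments and the SMT-valid base fact, discharge each directional judgment via Lemma~\ref{lem:bounded-completeness} (itself the product of the Link Lemma~\ref{lem:link}), and conclude with the fixpoint characterization of \pbesym from Lemma~\ref{lem:fixpoint}. Your explicit handling of the fresh-variable seeding, the $\max(n_1,n_2)$ merge via Lemma~\ref{lem:bounded:mono}, and the generalization of the fixpoint argument from $=$ goals to arbitrary $\binop$ goals spells out bookkeeping that the paper's one-line combination of Lemmas~\ref{lem:fixpoint} and~\ref{lem:link} leaves implicit.
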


\subsection{\pbesym Terminates} \label{sec:pbe:terminates}

So far, we have shown that our proof search
procedure $\pbesym$ is both sound and complete.
Both of these are easy to achieve
simply by \emph{enumerating} all
possible instances and repeatedly
querying the SMT solver.
Such a monkeys-with-typewriters
approach is rather impractical: it
may never terminate.
Fortunately, we show that in
addition to being sound and complete
with respect to equational proofs,
if the hypotheses are transparent,
then our proof search procedure always
terminates.
Next, we describe transparency and
explain intuitively why \pbesym
terminates.
We then develop the formalism needed
to prove the termination
Theorem~\ref{thm:termination}.

\mypara{Transparency}
An environment $\env$ is \emph{inconsistent}
if $\smtIsValid{\tosmt{\env}}{\tfalse}$.
An environment \env is \emph{inhabited}
if there exists some $\store \in \embed{\env}$.
We say $\env$ is \emph{transparent}
if it is either inhabited \emph{or}
inconsistent.
As an example of a \emph{non-transparent}
$\Penv_0$ consider the predicate
$\kw{lenA xs} = 1 + \kw{lenB xs}$,
where \kw{lenA} and \kw{lenB} are both
identical definitions of the list length
function.
Clearly there is no $\store$ that causes
the above predicate to evaluate to $\ttrue$.
At the same time, the SMT solver cannot
(using the decidable, quantifier-free theories)
prove a contradiction as that requires
induction over \kw{xs}.
Thus, non-transparent environments are
somewhat pathological and, in practice,
we only invoke $\pbesym$ on transparent
environments.
Either the environment is inconsistent,
\eg when doing a proof-by-contradiction,
or \eg when doing a proof-by-case-analysis
we can easily find suitable concrete
values via random~\cite{Claessen00}
or SMT-guided generation~\cite{Seidel15}.

\mypara{Challenge: Connect Concrete and Logical Semantics}
As suggested by its name, the
\pbesym algorithm aims to lift the
notion of evaluation or computations
into the level of the refinement logic.
Thus, to prove termination, we must
connect the two different notions of
evaluation,
the \emph{concrete} (operational) semantics
and
the \emph{logical} semantics being used by \pbesym.
This connection is trickier than appears at first glance.
In the concrete realm totality ensures that every
reflected function $\rfun$ will terminate when
run on any \emph{individual} value $\val$.
However, in the logical realm, we are working with
\emph{infinite} sets of values, compactly represented
via logical constraints.
In other words, the logical realm can be viewed
(informally) as an \emph{abstract interpretation}
of the concrete semantics. We must carefully argue
that despite the \emph{approximation} introduced
by the logical abstraction, the abstract
interpretation will also terminate.

\mypara{Solution: Universal Abstract Interpretation}
We make this termination argument in three steps.
First, we formalize how \pbesym performs
computation at the logical level via
\emph{logical steps} and \emph{logical traces}.
We show (Lemma~\ref{lem:step-reductions})
that the logical steps form a so-called
\emph{universal} (or \emph{must})
abstraction of the concrete
semantics~\cite{CousotCousot77,CGL92}.
Second, we show that if \pbesym diverges,
it is because it creates a strictly
increasing infinite chain,
$\binst{\Renv}{\Penv}{0} \subset \binst{\Renv}{\Penv}{1} \ldots$
which corresponds to an \emph{infinite logical trace}.
Third, as the logical computation
is a universal abstraction we use
inhabitation to connect the two
realms, \ie to show that an
infinite logical trace
corresponds to an infinite
concrete trace.
The impossibility of the latter
must imply the impossibility of
the former, \ie \pbesym terminates.
Next, we formalize the above
to obtain Theorem~\ref{thm:termination}.

\mypara{Totality}
A function is \emph{total} when its evaluation
reduces to exactly one value.
The totality of \RRenv can and is checked
by refinement types (\S~\ref{sec:formalism}).
Hence, for brevity, in the sequel
we \emph{implicitly assume}
that $\RRenv$ is total under $\env$.
\begin{definition}[\textbf{Total}]\label{asm:total}
Let $\body \equiv \rdef{x}{\tosmt{\pred}}{\tosmt{e}}$.
$\body$ is \emph{total} under \env and \RRenv, if
forall $\store\in \embed{\env}$:
\begin{enumerate}
\item if $\rsatisfies{\RRenv}{\store}{\pred_i}$, then
$ \exists \val.\ \evalsto{{\apply{\store}{\refapply{\RRenv}{e_i}}}}{\val}$,
\item if $\rsatisfies{\RRenv}{\store}{\pred_i}$ and $\rsatisfies{\Renv}{\store}{\pred_j}$, then $i=j$, and
\item there exists an $i$ so that  $\rsatisfies{\RRenv}{\store}{\pred_i}$.
\end{enumerate}

$\RRenv$ is \emph{total} under \env,
if every $\body \in \tosmt{\RRenv}$
is total under \env and \RRenv.
\end{definition}

\mypara{Subterm Evaluation}
As the reflected functions are total,
the Church-Rosser theorem implies that
evaluation order is not important.
To prove termination, we require
an evaluation strategy, \eg CBV,
in which if a reflected
function's guard is satisfied,
then the evaluation of the
corresponding function body
requires evaluating
\emph{every subterm}
inside the body.
As $\defIf\cdot$ hoists
\kw{if}-expressions out
of the body and into the
top-level guards, the below
fact follows from the
properties of CBV:

\begin{lemma}\label{lem:subterm-eval}
Let $\body \equiv \rdef{x}{\tosmt{\pred}}{\tosmt{e}}$ and
    $\rfun \in \RRenv$.
For every \env, \RRenv, and $\store \in \embed{\env}$,
if $\rsatisfies{\RRenv}{\store}{\pred_i}$ and
   $\issubterm{\fapp{\rfun}{\tosmt{e'}}}{\tosmt{e_i}}$,
then
   $\evalsto{{\apply{\store}{\refapply{\RRenv}{e_i}}}}{\ctxapp{\ctx}{\rawapp{\rfun}{\apply{\store}{\refapply{\RRenv}{\params{e'}}}}}}$.
\end{lemma}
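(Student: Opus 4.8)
The plan is to prove this as a standard \emph{call-by-value forces every application subterm} property, specialized to the crucial fact that the guarded body is \kw{if}-free. First I would transport the hypothesis from $\smtlan$ back to $\corelan$: since the embedding $\tosmt{\cdot}$ of \S~\ref{subsec:embedding} is the identity on applications (modulo the defunctionalizing $\smtappname{}{}$), a syntactic subterm $\fapp{\rfun}{\tosmt{e'}}$ of $\tosmt{e_i}$ is the image of an application subterm $\rfun\ e'$ of the $\corelan$ body $e_i$. Moreover, because $\body \equiv \rdef{x}{\tosmt{\pred}}{\tosmt{e}}$ is produced by $\defIf\cdot$, which hoists all \kw{if}-expressions into the top-level guards, each body $e_i$ is \kw{if}-free. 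Thus the claim reduces to a statement about $\corelan$ reduction: for an \kw{if}-free expression $e_i$ containing an application subterm $\rfun\ e'$, the instance $\apply{\store}{\refapply{\RRenv}{e_i}}$ reduces under CBV to a configuration of shape $\ctxapp{\ctx}{\rawapp{\rfun}{\apply{\store}{\refapply{\RRenv}{\params{e'}}}}}$.

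I would establish this by structural induction on $e_i$, tracking the position of the distinguished subterm $\rfun\ e'$. Termination and non-stuckness of the whole evaluation is supplied directly by totality: with $\pred_i$ satisfied, Definition~\ref{asm:total}(1) gives $\evalsto{\apply{\store}{\refapply{\RRenv}{e_i}}}{\val}$ for some $\val$, so the required reduction prefix genuinely exists. In the base case $e_i \equiv \rfun\ e'$ the empty context works. The only non-leaf case is application, $e_i \equiv e_a\ e_b$, with $\rfun\ e'$ lying in $e_a$ or in $e_b$. Fixing a CBV order (say arguments before function), evaluation reduces the operand containing the subterm while preserving the surrounding application frame; the induction hypothesis applied to that operand yields a context $\ctx'$ with $\ctxapp{\ctx'}{\rawapp{\rfun}{\apply{\store}{\refapply{\RRenv}{\params{e'}}}}}$, and composing $\ctx'$ with the frame produces the desired $\ctx$. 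Because $e_i$ is \kw{if}-free, neither operand is a branch that CBV discards, so evaluation really does descend into whichever operand carries the subterm rather than skipping past it.

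The main obstacle is the bookkeeping needed to show that the combined closing-and-reflecting operation $\apply{\store}{\refapply{\RRenv}{\cdot}}$ commutes with the application structure well enough that $\rfun\ e'$ survives as $\rfun\ \apply{\store}{\refapply{\RRenv}{\params{e'}}}$ inside the composed context. Substitution replaces the free variables (the $\env$-variables and the parameter $x$) by closed terms, while $\refapply{\RRenv}{\cdot}$ installs the recursive \texttt{let}-bindings for the functions of $\RRenv$ and so distributes over the term only up to sharing of those bindings; I must check that this neither erases the subterm nor duplicates it into a position that CBV abandons. This is exactly where \kw{if}-freeness is indispensable, since a conditional discards one branch whereas an \kw{if}-free term has no dead positions, and where totality is indispensable, since it rules out divergence before the subterm is reached. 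Once commutation and context composition are set up, the inductive step is routine and these two properties close the argument.
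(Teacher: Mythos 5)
Your proposal is correct and takes essentially the same route as the paper, which gives no detailed proof of this lemma at all: it justifies it exactly as you do, by noting that $\defIf\cdot$ leaves guarded bodies \kw{if}-free (and, per Figure~\ref{fig:pbe:syntax}, also $\lambda$-free, which your induction implicitly relies on since CBV skips $\lambda$-bodies just as it skips dead branches), that CBV then forces evaluation of every application subterm, and that totality guarantees the required reduction prefix exists. Your structural induction with context composition, and your flagging of the $\apply{\store}{\refapply{\RRenv}{\cdot}}$ bookkeeping, is a faithful elaboration of what the paper dismisses as ``follows from the properties of CBV.''
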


\mypara{Logical Step}
A pair
$\symstep{\fapp{\rfun}{\expr}}{\fapp{\rfun'}{\expr'}}$
is a
$\Renv,\Penv$-\emph{logical step} (abbrev. step),
if
\begin{itemize}
  \item $\rlookup{\Renv}{\rfun} \equiv \rdef{x}{\pred}{\body}$,
  \item $\smtIsValid{\Penv \wedge Q}{\pred_i}$ for some $(\Renv, \Penv)$-instance $Q$, and
  \item $\issubterm{\fapp{\rfun'}{\expr'}}{\SUBSTS{\body_i}{x}{\expr}}$.
\end{itemize}

\mypara{Steps and Reductions}
Next, using Lemmas~\ref{lem:subterm-eval},~\ref{lem:pbe:semantics},
and the definition of logical steps, we show that every
logical step corresponds to a \emph{sequence} of steps
in the concrete semantics:

\begin{lemma}[\textbf{Step-Reductions}]\label{lem:step-reductions}
If   $\symstep{\fapp{\rfun}{\tosmt{e}}}{\fapp{\rfun'}{\tosmt{e'}}}$
     is a logical step under $\tosmt{\RRenv}, \tosmt{\env}$
		 and
		 $\store\in \embed{\env}$,
then $\evalsto{{\fapp{\rfun}{\apply{\store}{\refapply{\RRenv}{e}}}}}
              {\ctxapp{\ctx}{\rawapp{f}{\apply{\store}{\refapply{\RRenv}{\params{e'}}}}}}$
for some context $\ctx$.
\end{lemma}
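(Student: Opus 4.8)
The plan is to realize the single logical step as a \emph{composition of two concrete reductions}: first unfold the outer call $\fapp{\rfun}{\cdot}$ to the body selected by the enabled guard, and then run that body far enough to expose the inner call $\fapp{\rfun'}{\cdot}$, the latter handled by the subterm-evaluation Lemma~\ref{lem:subterm-eval}. The bridge from the logical side (validity of a guard $\pred_i$, which is what makes the pair a step) to the concrete side (that guard actually firing during evaluation) is supplied by the SMT-Inst part of Lemma~\ref{lem:pbe:semantics}. Throughout I identify the $f$ written in the goal with the step's target $\rfun'$.

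First I would unpack the definition of a $\tosmt{\RRenv},\tosmt{\env}$-logical step, obtaining $\rlookup{\tosmt{\RRenv}}{\rfun} \equiv \rdef{x}{\pred}{\body}$, an index $i$ together with a $(\tosmt{\RRenv},\tosmt{\env})$-instance $Q$ satisfying $\smtIsValid{\extendpenv{\tosmt{\env}}{Q}}{\pred_i}$, and a subterm occurrence $\issubterm{\fapp{\rfun'}{\tosmt{e'}}}{\SUBSTS{\body_i}{x}{\tosmt{e}}}$. Writing $\refa_i$ for the concrete guard whose SMT embedding is $\pred_i$ (by construction of $\tosmt{\RRenv}$ via $\defIf\cdot$, the logical guards are exactly the embeddings of the branch conditions), I would apply SMT-Inst of Lemma~\ref{lem:pbe:semantics} to $Q$ and the validity $\smtIsValid{\extendpenv{\tosmt{\env}}{Q}}{\pred_i}$ to conclude $\rsatisfies{\RRenv}{\store}{\refa_i}$: the $i$-th guard is satisfied under $\store$ in the concrete semantics.

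Next I would argue that, under the recursive-$\mathtt{let}$ semantics of reflected binders and (by totality and Church--Rosser) a call-by-value schedule, the application $\fapp{\rfun}{\apply{\store}{\refapply{\RRenv}{e}}}$ reduces to the $i$-th body with the evaluated arguments bound to $x$, i.e.\ to $\apply{\store'}{\refapply{\RRenv}{e_i}}$, where $\store'$ extends $\store$ by mapping $x$ to $\apply{\store}{\refapply{\RRenv}{e}}$; the satisfaction of $\refa_i$ is exactly what selects this branch, and the extended value inhabits $\typ_x$ so $\store'\in\embed{\env,\tbind{x}{\typ_x}}$. Then, since $\fapp{\rfun'}{\tosmt{e'_0}}$ is a subterm of $\body_i = \tosmt{e_i}$ (with $\SUBSTS{\tosmt{e'_0}}{x}{\tosmt{e}} = \tosmt{e'}$) and the guard holds under $\store'$, Lemma~\ref{lem:subterm-eval} furnishes a context $\ctx$ with $\evalsto{\apply{\store'}{\refapply{\RRenv}{e_i}}}{\ctxapp{\ctx}{\rawapp{\rfun'}{\apply{\store'}{\refapply{\RRenv}{\params{e'_0}}}}}}$. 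Composing the two reductions by transitivity of $\evalsto{}{}$, and noting $\apply{\store'}{\refapply{\RRenv}{\params{e'_0}}} = \apply{\store}{\refapply{\RRenv}{\params{e'}}}$, delivers the claimed reduction.

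The main obstacle is the argument-substitution bookkeeping across the logical and concrete worlds. The step exhibits the inner call inside $\SUBSTS{\body_i}{x}{\tosmt{e}}$, whereas Lemma~\ref{lem:subterm-eval} speaks of a subterm of the \emph{un-substituted} body $\tosmt{e_i}$; reconciling these needs a small substitution lemma showing that $\tosmt{\cdot}$ commutes with argument substitution and that a post-substitution occurrence of $\fapp{\rfun'}{\tosmt{e'}}$ descends from an occurrence $\fapp{\rfun'}{\tosmt{e'_0}}$ in $\body_i$ (the only alternative, $\rfun'$ being the bound parameter $x$, cannot arise since reflected functions are fixed names, not parameters). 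The secondary subtlety is justifying the faithful guard translation $\pred_i = \tosmt{\refa_i}$ modulo $\defIf\cdot$-hoisting, so that SMT-Inst yields concrete guard satisfaction in precisely the form Lemma~\ref{lem:subterm-eval} consumes; this is where I would lean on the semantic-preservation property of $\tosmt{\cdot}$ from \S~\ref{subsec:embedding}.
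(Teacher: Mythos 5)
Your proposal is correct and follows essentially the same route as the paper, which proves this lemma only by the one-line remark that it follows from Lemma~\ref{lem:subterm-eval}, Lemma~\ref{lem:pbe:semantics}, and the definition of logical steps. Your elaboration --- using SMT-Inst to transfer guard validity to concrete guard satisfaction under the argument-extended store, unfolding the call to the selected body, and then invoking Lemma~\ref{lem:subterm-eval} to expose the inner application (with the substitution bookkeeping made explicit) --- is exactly the intended argument, handled at least as carefully as the paper's analogous proof of Lemma~\ref{lem:step-value}.
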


\mypara{Logical Trace}
A sequence
$\fapp{\rfun_0}{\expr_0},\fapp{\rfun_1}{\expr_1},\fapp{\rfun_2}{\expr_2},\ldots$
is a $\Renv,\Penv$-\emph{logical trace} (abbrev. trace),
if % $\params{\expr_0} \equiv \params{x_0}$ for some variables $\params{x_0}$, and
$\symstep{\fapp{\rfun_i}{\expr_i}}{\fapp{\rfun_{i+1}}{\expr_{i+1}}}$
is a $\Renv,\Penv$-step, for each $i$.
Our termination proof hinges upon the following
key result: inhabited environments only have
\emph{finite} logical traces.
We prove this result by contradiction.
Specifically, we show by Lemma~\ref{lem:step-reductions}
that an infinite $(\tosmt{\RRenv},\tosmt{\env})$-trace
combined with the fact that \env is inhabited
yields \textit{at least one}
\emph{infinite concrete trace},
which contradicts the totality assumption.
Hence, all the $(\tosmt{\RRenv},\tosmt{\env})$
logical traces must be finite.

\begin{theorem}[\textbf{Finite-Trace}]\label{thm:finite-trace}
If % $\RRenv$ is total under \env
   \ \env is inhabited,
then every $(\tosmt{\RRenv},\tosmt{\env})$-trace is finite.
\end{theorem}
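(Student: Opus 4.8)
The plan is to argue by contradiction, converting an infinite \emph{logical} trace into an infinite \emph{concrete} reduction sequence and then using totality of $\RRenv$ to rule the latter out, exactly as sketched before the statement. Suppose toward a contradiction that there is an infinite $(\tosmt{\RRenv},\tosmt{\env})$-trace $\fapp{\rfun_0}{\expr_0}, \fapp{\rfun_1}{\expr_1}, \fapp{\rfun_2}{\expr_2}, \ldots$, so that each $\symstep{\fapp{\rfun_i}{\expr_i}}{\fapp{\rfun_{i+1}}{\expr_{i+1}}}$ is a logical step. Since $\env$ is inhabited, I would fix a witness $\store \in \embed{\env}$ and, for each $i$, write $T_i$ for the concrete term $\fapp{\rfun_i}{\apply{\store}{\refapply{\RRenv}{e_i}}}$ associated with $\fapp{\rfun_i}{\expr_i}$ (recall $\expr_i = \tosmt{e_i}$).

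First I would localize. Applying the Step-Reductions lemma (Lemma~\ref{lem:step-reductions}) to the $i$-th logical step yields a context $\ctx_i$ with $\evalsto{T_i}{\ctxapp{\ctx_i}{T_{i+1}}}$. I would additionally observe that each such segment is \emph{non-empty}: a logical step requires $\fapp{\rfun_{i+1}}{\expr_{i+1}}$ to occur inside the substituted body of $\rfun_i$, and, by the shape supplied in Lemma~\ref{lem:subterm-eval}, reaching that occurrence in the concrete semantics forces at least one unfolding of $\rfun_i$'s reflected definition, i.e. at least one $\hookrightarrow$-step.

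Next I would glue the local segments into a single infinite derivation. Because $\hookrightarrow$ is a congruence, closed under evaluation contexts, the reduction $\evalsto{T_{i+1}}{\ctxapp{\ctx_{i+1}}{T_{i+2}}}$ can be replayed underneath $\ctx_i$, giving $\evalsto{\ctxapp{\ctx_i}{T_{i+1}}}{\ctxapp{\ctx_i}{\ctxapp{\ctx_{i+1}}{T_{i+2}}}}$. Composing these over all $i$ produces one reduction sequence issuing from $T_0$ that never halts, since each segment contributes at least one step; hence $T_0$ admits an infinite concrete reduction. This contradicts totality: Definition~\ref{asm:total} guarantees that when $\rfun_0$ is applied to the value of its argument under $\store$, some guard is enabled (exhaustiveness), the enabled guard is unique (mutual exclusivity), and the selected body evaluates to a value in finitely many steps; together with determinism of the reduction used by Step-Reductions, this forces every reduction out of $T_0$ to be finite. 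The contradiction shows no infinite logical trace exists, so every $(\tosmt{\RRenv},\tosmt{\env})$-trace is finite.

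I expect the gluing step to be the main obstacle. The delicate points are (i) that $\hookrightarrow$ is closed under the evaluation contexts $\ctx_i$, so that progress made on $T_{i+1}$ counts as progress on $\ctxapp{\ctx_i}{T_{i+1}}$ and the segments truly compose, and (ii) that each segment is strictly non-empty, which is precisely where the subterm condition of a logical step and the call-by-value shape from Lemma~\ref{lem:subterm-eval} are indispensable; were a segment allowed to take zero steps, the constructed sequence could stall and the contradiction with totality would evaporate.
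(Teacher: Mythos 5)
Your proposal is correct and takes essentially the same route as the paper's proof: fix an inhabitant $\store \in \embed{\env}$, use Lemma~\ref{lem:step-reductions} to turn the assumed infinite logical trace into concrete reduction segments, compose them into an infinite concrete reduction, and contradict totality. The only difference is packaging: the paper cites its Divergence lemma (Lemma~\ref{lem:div}) for your gluing step and Lemma~\ref{lem:step-value} (inhabitation plus Definition~\ref{asm:total}) for the fact that the first application term reduces to a value, whereas you inline both arguments, making explicit the non-emptiness and context-closure facts those lemmas rely on.
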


\mypara{Ascending Chains and Traces}
If unfolding $\Renv, \Penv$ yields
an infinite chain $\Penv_0 \subset \ldots \subset \Penv_n \ldots$,
then $\Renv, \Penv$ has an infinite logical trace.
We construct the trace
by selecting at level $i$
(\ie in $\Penv_i$)
an application term
$\fapp{\rfun_i}{\expr_i}$
that was created by
unfolding an application
term at level $i-1$
(\ie in $\Penv_{i-1}$).

\begin{lemma} [\textbf{Ascending Chains}] \label{lem:chain}
Let $\Penv_i \doteq \binst{\Renv}{\Penv}{i}$.
If there exists an (infinite) ascending chain
$\Penv_0 \subset \ldots \subset \Penv_n \ldots$,
then there exists an (infinite) logical trace
$\fapp{\rfun_0}{\expr_0},\ldots,\fapp{\rfun_n}{\expr_n},\ldots$.
\end{lemma}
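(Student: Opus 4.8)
The plan is to construct the required infinite logical trace via a König's-lemma argument on an \emph{unfolding forest} whose nodes are the application terms that appear across the chain and whose edges are the logical steps. Throughout, I write $N_i$ for the set of application subterms $\fapp{\rfun}{\expr}$ occurring in $\Penv_i \doteq \binst{\Renv}{\Penv}{i}$, and recall from Monotonicity (Lemma~\ref{lem:bounded:mono}) that $N_i \subseteq N_{i+1}$.

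First I would show that an infinite strictly ascending chain forces $\bigcup_i N_i$ to be infinite. The key observation is that a new formula can only enter the unfolding as an instance $\SUBSTS{(\fapp{\rfun}{x} = \body_j)}{x}{\expr}$ obtained by unfolding some $\fapp{\rfun}{\expr} \in N_i$. If the term set were to stabilize, i.e. $N_m = N_{m+1} = \cdots$ from some level $m$ onward, then every instance generated beyond level $m$ would be drawn from the \emph{fixed, finite} collection of pairs (application term in $N_m$, guard index); there are only finitely many such instance formulas, so the monotone chain is bounded above and must reach a point where $\uinst{\Renv}{\Penv_k} \subseteq \Penv_k$. By the Fixpoint Lemma (Lemma~\ref{lem:bounded:fix}) the chain then becomes constant, contradicting strict ascent. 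Hence $N_i$ never stabilizes and $\bigcup_i N_i$ is infinite.

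Next I would organize $\bigcup_i N_i$ into a forest. Each term $t \notin N_0$ is \emph{born} at a unique least level $m>0$, where it occurs as a subterm of some instance $\SUBSTS{(\fapp{\rfun}{x}=\body_j)}{x}{\expr'}$ added at step $m-1 \to m$; I designate the unfolded term $\fapp{\rfun}{\expr'} \in N_{m-1}$ as the \emph{parent} of $t$. Since a parent is born strictly earlier, this relation is acyclic; the roots are exactly $N_0$, which is finite because $\Penv$ contains finitely many subterms; and every node has finite out-degree, since its children are among the finitely many application subterms of a single guarded body $\body_j$. An infinite, finitely branching forest with finitely many roots has a root whose subtree is infinite, so König's Lemma yields an infinite path $\fapp{\rfun_0}{\expr_0}, \fapp{\rfun_1}{\expr_1}, \ldots$ in which each $\fapp{\rfun_{i+1}}{\expr_{i+1}}$ is the child of $\fapp{\rfun_i}{\expr_i}$.

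Finally I would verify that this path is a $\Renv,\Penv$-logical trace, i.e. that each consecutive pair is a step. Two of the three conditions are immediate from the forest construction: $\fapp{\rfun_i}{\expr_i}$ was unfolded, so $\rlookup{\Renv}{\rfun_i} \equiv \rdef{x}{\pred}{\body}$, and $\fapp{\rfun_{i+1}}{\expr_{i+1}}$ is by definition a subterm of $\SUBSTS{\body_j}{x}{\expr_i}$. The main obstacle is the enabledness (guard) condition: the step relation requires $\smtIsValid{\Penv \wedge Q}{\SUBSTS{\pred_j}{x}{\expr_i}}$ for some single-level $(\Renv,\Penv)$-instance $Q$, whereas the chain only tells us the guard held under the full depth-$m$ environment $\Penv_m$. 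I would discharge this by taking $Q$ to be the instance equation that introduced the parent term together with those responsible for evaluating $\expr_i$, appealing to the instance/semantics relationships of Lemma~\ref{lem:pbe:semantics} to argue that exactly the facts that fired the guard in the chain can be repackaged as an admissible $Q$. Making this repackaging precise — rather than the combinatorial skeleton above — is where the real work lies.
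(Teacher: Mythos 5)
Your combinatorial skeleton is correct, and it is considerably more careful than the paper's own treatment: the paper gives no appendix proof of this lemma at all, only the two-sentence remark preceding its statement (``select at level $i$ a term created by unfolding a term at level $i-1$''), which does not explain why such per-level selections can be threaded into a single infinite path --- a new term at level $i+1$ need not be a child of the term selected at level $i$. Your step 1 (strict ascent forces infinitely many application terms, since a stabilized term set admits only finitely many instance formulas, whence Lemma~\ref{lem:bounded:fix} would freeze the chain) and step 2 (the birth-level forest has finitely many roots and finite out-degree, so K\"onig's lemma yields an infinite parent--child path) supply exactly the combinatorics the paper elides, and both are sound.

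However, the enabledness gap you flag in step 3 is genuine, and the repair you sketch cannot close it. The facts that fire the guard of $\rfun_i$ at $\expr_i$ at some level $\ell$ are, in general, instances created at levels $1,\ldots,\ell-1$ of the chain; these are \emph{not} $(\Renv,\Penv)$-instances, because that definition requires the unfolded term to be a subterm of $\Penv$ itself and its guard to be valid under $\Penv$ itself, so they cannot be ``repackaged'' into an admissible single-level $Q$, no matter how Lemma~\ref{lem:pbe:semantics} is invoked. Concretely, take $\Renv$ with $h \mapsto \lambda x.\,(\ttrue \Rightarrow f\ x)$, $f \mapsto \lambda x.\,(\ttrue \Rightarrow 0)$, and $g \mapsto \lambda x.\,(h\ x = 0 \Rightarrow g\ (x{+}1) + h\ (x{+}1)\,;\ \lnot (h\ x = 0) \Rightarrow 0)$, with $\Penv = \{v = g\ 0 + h\ 0\}$: the bounded unfoldings ascend strictly forever, but the only infinite parent--child path is $g\ 0,\ g\ (0{+}1),\ldots$, and enabling its first guard $h\ 0 = 0$ needs both $h\ 0 = f\ 0$ (a depth-$1$ instance) and $f\ 0 = 0$, which is only a depth-$2$ instance because $f\ 0$ is not a subterm of $\Penv$; hence no single-level $Q$ validates the guard, and under the literal definitions this pair is not even a $\Renv,\Penv$-step. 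The only way to finish the proof is to take $Q$ to be the accumulated iterated instances $\binst{\Renv}{\Penv}{\ell}\setminus\Penv$, \ie to read (or amend) the definition of a logical step so that $Q$ ranges over iterated instances. This reading is what the lemma actually requires, and it is harmless downstream: Lemma~\ref{lem:step-value}, Lemma~\ref{lem:step-reductions}, and Theorem~\ref{thm:finite-trace} use only the fact that every member of $Q$ holds under every $\store \in \embed{\env}$, a property iterated instances enjoy by induction on depth via the Sat-Inst and SMT-Approx parts of Lemma~\ref{lem:pbe:semantics}. The paper silently glosses over this point; you deserve credit for surfacing it, but as written your proof stops exactly where the lemma's real difficulty lies, and the fix you propose points in the wrong direction.
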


\mypara{Logical Evaluation Terminates}
Finally, we prove that the proof search
procedure $\pbesym$ terminates.
If $\pbesym$ loops forever,
there must be an infinite
strictly ascending chain of
unfoldings $\Penv_i$, and
hence, by Lemma~\ref{lem:chain},
an infinite logical trace, which,
by Theorem~\ref{thm:finite-trace},
is impossible.
\begin{theorem}[\textbf{Termination}] \label{thm:termination}
  If $\env$ is transparent,
  then $\pbe{\tosmt{\RRenv}}{\tosmt{\env}}{\pred}$ terminates.
\end{theorem}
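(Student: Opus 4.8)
The plan is to reduce non-termination of \pbesym to the existence of an infinite, strictly ascending chain of logical environments, and then to use transparency of \env to rule such a chain out. First I would inspect the loop $\pbeloop{\cdot}{\cdot}$ in Figure~\ref{fig:pbe}: each iteration forms $\Penv_{i+1} = \Penv \cup \uinst{\Renv}{\Penv_i}$ and halts as soon as either the goal is discharged ($\smtIsValid{\Penv_i}{\pred}$) or a fixpoint is detected ($\Penv_{i+1} \subseteq \Penv_i$). Since the environments produced by repeated unfolding are non-decreasing by the Monotonicity Lemma~\ref{lem:bounded:mono}, the test $\Penv_{i+1} \subseteq \Penv_i$ is equivalent to $\Penv_{i+1} = \Penv_i$. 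Consequently \pbesym runs forever exactly when neither stopping condition ever fires, that is, when it produces an infinite strictly ascending chain $\Penv_0 \subset \Penv_1 \subset \cdots$. It then suffices to show transparency forbids this.

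Next I would split on the two disjuncts of transparency. If \env is \emph{inconsistent}, then $\smtIsValid{\tosmt{\env}}{\tfalse}$; since the initial environment contains $\tosmt{\env}$ it is itself inconsistent, so the very first validity check $\smtIsValid{\Penv_0}{\pred}$ succeeds (an inconsistent hypothesis entails any goal) and \pbesym returns \ttrue at iteration $0$. If \env is \emph{inhabited}, I would argue by contradiction: assuming \pbesym diverges, the reduction above yields an infinite strictly ascending chain, whence the Ascending-Chains Lemma~\ref{lem:chain} supplies an infinite $(\tosmt{\RRenv},\tosmt{\env})$-logical trace. But the Finite-Trace Theorem~\ref{thm:finite-trace} asserts that every such trace is finite when \env is inhabited --- a contradiction. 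Hence \pbesym terminates in this case as well, and combining the two cases settles the claim.

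The genuinely hard work sits upstream, in the Finite-Trace Theorem that I am allowed to assume: its proof has to bridge the \emph{logical} (abstract) unfolding semantics driving \pbesym and the \emph{concrete} operational semantics, using inhabitation to turn an infinite logical trace into an infinite concrete reduction sequence and thereby contradict totality of the reflected functions. Granting that theorem, the only delicate points local to this statement are (i) correctly matching the loop's stopping test against strict containment via monotonicity, so that divergence really does force a strictly ascending chain, and (ii) isolating the inconsistent case, where termination holds for the orthogonal reason that the goal is immediately valid rather than because any chain stabilizes.
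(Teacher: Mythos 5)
Your proof is correct and follows essentially the same route as the paper's: a case split on transparency, with the inconsistent case terminating immediately because an inconsistent hypothesis validates any goal, and the inhabited case argued by contradiction via Lemma~\ref{lem:chain} (divergence yields an infinite ascending chain, hence an infinite logical trace) against Theorem~\ref{thm:finite-trace}. Your extra observation that monotonicity makes the loop's test $\Penv_{i+1} \subseteq \Penv_i$ equivalent to equality, so divergence really forces a strictly ascending chain, is a detail the paper leaves implicit but is entirely in the spirit of its argument.
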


% In practice, to check if \env is inhabited
% we need only find suitable concrete values
% via random~\cite{Claessen00} or SMT-guided
% generation~\cite{Seidel15}.

\section{Evaluation}\label{sec:evaluation}

\newcommand\numbertodo[1]{{\color{red}{#1}}}
\def\exectotal{1176}
\def\spectotal{1279}
\def\lhtimetotal{148.76}
\def\lhsmttotal{1626\xspace}
\def\lhproofstotal{1524\xspace}
\def\pbetimetotal{150.88}
\def\pbesmttotal{4068\xspace}
\def\pbeproofstotal{638\xspace}

\def\lhtotaltotal{843}
\def\pbetotaltotal{306}

\newcolumntype{?}{!{\vrule width 1pt}}
\setlength{\tabcolsep}{3pt}

\begin{small}
\begin{table*}

\begin{center}
  \begin{tabular}{ ? l  ? r|r ? r|r|r ? r|r|r? }
    \hline
        \multirow{2}{*}{\textbf{Benchmark}}  & \multicolumn{2}{c?}{\textbf{Common}} & \multicolumn{3}{c?}{\textbf{Without PLE Search}} & \multicolumn{3}{c?}{\textbf{With PLE Search}} \\\cline{2-9}
          & Impl~(l) & Spec~(l)
          & Proof~(l)& Time~(s) & SMT~(q)
          & Proof~(l) & Time~(s) & SMT~(q) \\
    \hline
     \multicolumn{9}{?l?}{\textbf{Arithmetic}}\\
    \hline
    Fibonacci
    & 7 & 10
    & 38& 2.74 & 129
    & 16& 1.92 & 79
    \\
    Ackermann
    & 20 & 73
    & 196& 5.40        & 566
    & 119& 13.80       & 846
    \\
    \hline
    \multicolumn{9}{?l?}{\textbf{Class Laws}  Fig~\ref{fig:laws}}\\
    \hline
    Monoid
    & 33 & 50
    & 109& 4.47 & 34
    & 33& 4.22 & 209
    \\
    Functor
    & 48 & 44
    & 93& 4.97 & 26
    & 14& 3.68 & 68
    \\
    Applicative
    & 62 & 110
    & 241& 12.00 & 69
    & 74& 10.00 & 1090
    \\
    Monad
    & 63 & 42
    & 122& 5.39           & 49
    & 39& 4.89           & 250
    \\
    \hline
     \multicolumn{9}{?l?}{\textbf{Higher-Order Properties}}\\
    \hline
    Logical Properties %% this is NatInduction + NaturalDeduction
    & 0 & 20 % 2+18
    & 33& 2.71 & 32
    & 33& 2.74 & 32
    \\
    \hline
    Fold Universal
    & 10 & 44
    & 43& 2.17 & 24
    & 14& 1.46 & 48
    \\
    \hline
     \multicolumn{9}{?l?}{\textbf{Functional Correctness}}\\
    \hline
    SAT-solver
    & 92 & 34
    & 0& 50.00 & 50
    & 0& 50.00 & 50
    \\
    Unification
    & 51 & 60
    & 85& 4.77 & 195
    & 21& 5.64 & 422
    \\
    \hline
     \multicolumn{9}{?l?}{\textbf{Deterministic Parallelism}}\\
    \hline
    Conc. Sets
    & 597 & 329
    & 339& 40.10  & 339
    & 229& 40.70  & 861
    \\
    $n$-body
    & 163 & 251
    & 101& 7.41 & 61
    & 21& 6.27 & 61
    \\
    Par. Reducers
    & 30 & 212
    & 124& 6.63 & 52
    & 25& 5.56 & 52
    \\
     \hline
    \textbf{Total}
    & \exectotal   & \spectotal
    & \lhproofstotal& \lhtimetotal & \lhsmttotal
    & \pbeproofstotal& \pbetimetotal & \pbesmttotal
    \\
    \hline
	\end{tabular}
\end{center}
  \caption{We report verification \textbf{Time}
           (in seconds, on a 2.3GHz Intel{\textregistered} Xeon{\textregistered}
           CPU E5-2699 v3 with 18 physical cores and 64GiB RAM.),
           the number of \textbf{SMT} queries and size of
           \textbf{Proofs} (in lines).
           The \textbf{Common} columns show sizes of common
           \textbf{Implementations} and \textbf{Specifications} .
           We separately consider proofs \textbf{Without} and
           \textbf{With PLE Search}.
           %
           % \toolname (\lhtotaltotal~LoC in total) and
           % \toolname with \pbesym (\pbetotaltotal~LoC in total) into
           % \textbf{specifications}, \textbf{proof terms} and \textbf{executable} code.
         }
\label{table:evaluation}
\end{table*}
\end{small}

We have implemented reflection and \pbesym
in \toolname~\cite{Vazou14}.
Table~\ref{table:evaluation} summarizes our
evaluation which aims to determine
(1)~the kinds of programs and properties that can be verified,
(2)~how \pbesym simplifies \emph{writing} proofs, and
(3)~how \pbesym affects the verification time.

\mypara{Benchmarks}
%
% We used our approach to implement and verify
% a wide variety of programs.
%
We summarize our benchmarks below,
see \citep{appendix} for details.
\begin{itemize}[leftmargin=*]
\item\textbf{Arithmetic}
We proved arithmetic properties for the textbook
Fibonacci function (\cf~\S~\ref{sec:overview})
and the 12 properties of the Ackermann function
from~\cite{ackermann}.
\item\textbf{Class Laws}
We proved the monoid laws for the "Peano",
"Maybe" and "List" data types and the Functor,
Applicative, and Monad laws, summarized
in~Figure~\ref{fig:laws}, for the "Maybe",
"List" and "Identity" monads.
\item\textbf{Higher Order Properties}
We used natural deduction to prove textbook
logical properties as in~\S~\ref{sec:higher-order}.
%
% and this http://hume.ucdavis.edu/mattey/phi112/112dedurles_ho.pdf
%
We combined natural deduction principles with
\pbesym-search to
prove universality of right-folds, as
described in~\citep{foldrHutton} and
formalized in \agda~\citep{agdaequational}.
\item\textbf{Functional Correctness}
We proved correctness of a SAT solver
and a unification algorithm as implemented
in Zombie~\citep{Zombie}.
We proved that the SAT solver takes as
input a formula "f" and either returns "Nothing"
or an assignment that satisfies "f",
by reflecting the notion of satisfaction.
Then, we proved that if the
unification "unify s t" of two terms
"s" and "t" returns a substitution
"su", then applying "su" to "s" and "t"
yields identical terms.
Note that, while the unification function can itself
diverge, and hence cannot be reflected, our method
allows terminating and diverging functions
to soundly coexist.

\item\textbf{Deterministic Parallelism}
Retrofitting verification
onto an existing language with a mature parallel run-time
allows us to create three deterministic
parallelism libraries that, for the first time,
verify implicit assumptions about associativity
and ordering that are critical for determinism 
(\cf~\cite{appendix} for extended description).
First, we proved that the \emph{ordering laws} hold for keys
inserted into LVar-style concurrent sets \cite{kuper2014freeze}.
Second, we used "monad-par" \cite{monad-par} to implement an $n$-body
simulation, whose correctness relied upon proving that a triple of "Real"
(implementing) 3-d acceleration was a "Monoid".
Third, we built a DPJ-style \cite{DPJ} parallel-reducers library
whose correctness relied upon verifying that the reduced
arguments form a "CommutativeMonoid", and which was the basis of
a parallel array sum.
\end{itemize}

\mypara{Proof Effort}
We split the total lines of code of our benchmarks
into three categories:
\textbf{Spec} represents the refinement types that
encode theorems, lemmas, and function \emph{specifications};
\textbf{Impl} represents the rest of the Haskell
code that defines executable functions;
\textbf{Proofs} represent the sizes of the Haskell proof terms
(\ie functions returning \typp).
Reflection and \pbesym are optionally
enabled using pragmas; the latter is enabled
either for a whole file/module or per
top-level function.

\mypara{Runtime Overhead}
Proof terms have \emph{no} runtime overhead
as they will never be evaluated.
When verification of an executable term
depends on theorems, we use the below 
"withTheorem" function
\begin{mcode}
    withTheorem :: x:a -> $\typp$ -> { v:a | v = x }
    withTheorem x _ = x  
\end{mcode}
that inserts the proof argument into the static 
verification environment, relying upon laziness
to not actually evaluate the proof.
For example, when verification depends on the associativity 
of append on the lists "xs", "ys", and "zs", 
the invocation "withTheorem xs (app_assoc xs ys zs)"
extends the (static) SMT verification environment
with the instantiation of the associativity theorem of 
Figure~\ref{fig:list-laws}. 
This invocation adds no runtime overhead, since 
even though "app_assoc xs ys zs" is an expensive 
recursive function, it will never actually get evaluated. 
To ensure that proof terms are not evaluated in runtime, 
without using laziness, one can add one rewrite rule for 
each proof term that replaces the term with unit.
For example, the rewrite rule for "app_assoc" is 
\begin{code}
    RULES "assoc/runtime"  forall xs ys zs. app_assoc xs ys zs = () 
\end{code}
Such rules are sound, since each proof term is total, 
thus provably reduces to unit.

\mypara{Results}
The highlights of our evaluation are the following.
(1)~Reflection allows for the specification and verification
    of a wide variety of important properties of programs.
(2)~\pbesym drastically reduces the proof effort: by a factor
    of $2-5\times$ --- shrinking the total lines of
    proof from \lhproofstotal to \pbeproofstotal ---
    making it quite modest, about the size of
    the specifications of the theorems.
    Since \pbesym searches for equational
    properties, there are some proofs, that
    rarely occur in practice, that
    \pbesym cannot simplify, \eg the
    logical properties
    from~\S~\ref{sec:higher-order}.
(3)~\pbesym does not impose a performance penalty: even though
    proof search can make an order of magnitude many more SMT
    queries --- increasing the total SMT queries from
    \lhsmttotal without \pbesym to \pbesmttotal with \pbesym---
    most of these queries are simple and it is typically
    \emph{faster} to type-check the compact proofs enabled by
    \pbesym than it is to type-check the $2-5\times$ longer
    explicit proofs written by a human.

\begin{figure}[t!]
\begin{center}
\small
\begin{tabular}{rlrl}
\toprule

\multicolumn{2}{c}{\textbf{Monoid} (for Peano, Maybe, List)}
&
\multicolumn{2}{c}{\textbf{Functor} (for Maybe, List, Id)} \\

{Left Id.}  & $\emempty\ x\ \emappend\  \equiv x$
&
{Id.}    & $\efmap\ \eid\ xs \equiv \eid\ xs$ \\

{Right Id.} & $x\ \emappend\ \emempty \equiv x$
&
{Distr.} & $\efmap\ (g\ecompose\ h)\ xs \equiv (\efmap\ g\ \ecompose\ \efmap\ h)\ xs$\\

{Assoc.}     & $(x\ \emappend\ y)\ \emappend\ z \equiv x\ \emappend\ (y\ \emappend\ z)$
&
& \\

\midrule

\multicolumn{2}{c}{\textbf{Applicative} (for Maybe, List, Id)}
&
\multicolumn{2}{c}{\textbf{Monad} (for Maybe, List, Id)} \\

{Id.}    & $\epure \eid \eseq\ v \equiv v$
&
{Left Id.}   & $\ereturn\ a \ebind f \equiv f\ a$ \\

{Comp.}  & $\epure (\ecompose) \eseq u \eseq v \eseq w \equiv u \eseq (v \eseq w)$
&
{Right Id.}  & $m \ebind \ereturn \equiv m$ \\

{Hom.}   & $\epure\ f\ \eseq\ \epure\ x \equiv \epure\ (f\ x)$
&
{Assoc.}         & $(m\ebind f) \ebind g \equiv m\ebind (\lambda x \rightarrow f\ x \ebind g)$\\

{Inter.} & $u\ \eseq\ \epure\ y \equiv \epure\ (\$\ y) \ \eseq \ u$
&
& \\

\midrule

\multicolumn{2}{c}{\textbf{Ord} (for Int, Double, Either, $(,)$)}
&
\multicolumn{2}{c}{\textbf{Commutative Monoid} (for Int, Double, $(,)$)} \\

{Refl.}    & $x \leq x$
&
{Comm.}   & $x\ \emappend\ y \equiv y\ \emappend\ x$ \\

{Antisym.} & $x \leq y \land y \leq x \implies x \equiv y$
\\
{Trans.}   & $x \leq y \land y \leq z \implies x \leq z$
&
& \emph{(including Monoid laws)}
\\
{Total.}   & $x \leq y \lor y \leq x$ \\
\bottomrule
\end{tabular}
\end{center}
\caption{Summary of Verified Typeclass Laws.}
\label{fig:laws}

\end{figure}

\section{Related Work}\label{sec:related}

\mypara{SMT-Based Verification}
SMT-solvers have been extensively
used to automate program verification
via Floyd-Hoare logics~\cite{Nelson81}.
\toolfont{Leon} introduces an
SMT-based algorithm that is
complete for catamorphisms
(folds) over ADTs~\cite{Suter2010}
and a semi-decision procedure that 
is guaranteed to find satisfying assignments 
(models) for queries over arbitrary recursive
functions, if they exist~\cite{Suter2011}.
Our work is inspired by \dafny's Verified
Calculations~\citep{LeinoPolikarpova16} 
but differs in
(1)~our use of reflection instead of axiomatization,
(2)~our use of refinements to compose proofs, and
(3)~our use of \pbesym to automate reasoning
about user-defined functions.
\dafny (and \fstar~\citep{fstar})
encode user-functions as axioms and use
a fixed fuel to instantiate functions
upto some fixed unfolding
depth~\cite{Amin2014ComputingWA}.
While the fuel-based approach is incomplete,
even for equational or calculational reasoning,
it may, although rare in practice, quickly
time out after a fixed, small number of
instantiations rather than perform an
exhaustive proof search like \pbesym.
Nevertheless, \pbesym demonstrates that it
is possible to develop complete and practical
algorithms for reasoning about user-defined
functions.

\mypara{Proving Equational Properties}
Several authors have proposed tools for proving
(equational) properties of (functional) programs.
Systems of~\citet{sousa16} and~\citet{KobayashiRelational15}
extend classical safety verification algorithms,
respectively based on Floyd-Hoare logic and refinement types,
to the setting of relational or $k$-safety properties
that are assertions over $k$-traces of a program.
Thus, these methods can automatically prove that
certain functions are associative, commutative \etc
but are restricted to first-order properties and
are not programmer-extensible.
Zeno~\citep{ZENO} generates proofs by term
rewriting and Halo~\citep{HALO} uses an axiomatic
encoding to verify contracts.
Both the above are automatic, but unpredictable and not
programmer-extensible, hence, have been limited to
far simpler properties than the ones checked here.
HERMIT~\citep{Farmer15} proves equalities by rewriting
the GHC core language, guided by user specified scripts.
Our proofs are Haskell programs, SMT solvers
automate reasoning, and importantly, we
connect the validity of proofs with the
semantics of the programs.

\mypara{Dependent Types in Programming}
Integration of dependent types into Haskell
has been a long standing goal~\citep{EisenbergS14}
that dates back to Cayenne~\citep{cayenne},
a Haskell-like, fully dependent type
language with undecidable type checking.
%
% In a recent line of work~\citet{EisenbergS14}
% aim to allow fully dependent
% programming within Haskell, by making
% ``type-level programming ... at least as
%   expressive as term-level programming''.
%
Our approach differs significantly in that
reflection and \pbesym use SMT-solvers
to drastically simplify proofs over decidable
theories.
Zombie~\citep{Sjoberg2015} investigates
the design of a dependently typed language where
SMT-style congruence closure is used to reason
about the equality of terms.
However, Zombie explicitly eschews type-level
computation, as the authors write
% note that type level computation, or to be precise
``equalities that follow from $\beta$-reduction''
are ``incompatible with congruence closure''.
Due to this incompleteness, the programmer must
use explicit @join@ terms to indicate where
normalization should be triggered, even
so, equality checking is based on fuel,
hence, is incomplete.
%
% In contrast, reflection and \pbesym show
% how congruence closure and computation can
% be very profitably combined, freeing the
% programmer from having to provide
% explicit @join@ operations.
% \RJ{THE ABOVE IS TOO LONG}

% Second, refinements are completely erased at run-time.
% That is, while both systems automatically lift Haskell
% code to either uninterpreted logical functions
% or type families, with refinements, the logical
% functions are not accessible at run-time and
% promotion cannot affect the semantics of
% the program.
%
% As an advantage (resp. disadvantage), refinements
% cannot degrade (resp. optimize)
% the performance of programs.

\mypara{Theorem Provers}
Reflection shows how to retrofit deep
specification and verification in the
style of \agda~\citep{agda},
\coq~\citep{coq-book} and
\isabelle~\cite{NPW2002} into existing
languages via refinement typing and
\pbesym shows how type-level computation
can be made compatible with SMT solvers'
native theory reasoning yielding a
powerful new way to automate
proofs~(\S~\ref{sec:overview:lists}).
An extensive comparison~\cite{Vazou17}
between our approach and mature theorem 
provers like \coq, \agda, and \isabelle
reveals that these provers 
have two clear advantages
over our approach:
they emit certificates,
so they rely on a small
trusted computing base,
and they have decades-worth
of tactics, libraries,
and proof scripts that
enable large scale proof
engineering.
Some tactics even enable
embedding of SMT-based
proof search heuristics,
\eg \toolfont{Sledgehammer}~\cite{sledgehammer},
that is widely used in
\isabelle.
However, this search
does not have the
completeness
guarantees of \pbesym.
The issue of extracting checkable
certificates from SMT solvers is well
understood~\cite{Necula97,chen-pldi-2010}
and easy to extend to our setting.
However, the question of extending SMT-based
verifiers with tactics and scriptable proof
search, and more generally, incorporating
\emph{interactivity} in the style of
proof-assistants, perhaps enhanced by
proof-completion hints, remains an
interesting direction for future work.

\section{Conclusions and Future Work}

Our results identify a new design for
deductive verifiers wherein:
(1)~via Refinement Reflection, 
we can encode natural deduction 
proofs as SMT-checkable refinement 
typed programs and
(2)~via Proof by Logical Evaluation
we can combine the complementary
strengths of SMT- (\ie, decision procedures)
and TT- based approaches (\ie, type-level computation)
to obtain completeness guarantees 
when verifying properties 
of user-defined functions.
However, the increased automation 
of SMT and proof-search can 
sometimes make it harder for 
a user to debug \emph{failed} proofs.
In future work, it would be 
interesting to investigate how 
to add interactivity to SMT based 
verifiers, in the form of tactics 
and scripts or algorithms for 
synthesizing proof hints, and to 
design new ways to explain and 
fix refinement type errors.

\begin{acks} 
We thank the anonymous referees,
our shepherd Koen Claessen, and 
Rustan Leino for their feedback 
on earlier versions of this paper.
Special thanks to George Karachalias 
and Valentin Robert for their input 
on proof combinators, 
and to Joachim Breitner and Eric Seidel 
for their suggestion on run-time proof 
elimination.
%
% and to Nikhil Swamy 
% for insisting that 
% functions are most 
% precisely specified 
% by their definitions.  
%
This work was supported by the 
EPSRC programme grant EP/K034413/1,
the \grantsponsor{GS100000001}{National Science
Foundation}{http://dx.doi.org/10.13039/100000001}
under Grant Numbers
\grantnum{GS100000001}{CCF-1618756}, 
\grantnum{GS100000001}{CNS-1518765},
\grantnum{GS100000001}{CCF-1518844},
\grantnum{GS100000001}{CCF-1422471},
\grantnum{GS100000001}{CCF-1223850}, and
\grantnum{GS100000001}{CCF-1218344},
and a generous gift from Microsoft Research.
\end{acks}

\bibliography{sw}
\newpage
\appendix

\section{Proof of MapFusion without \pbesym} \label{sec:map-fusion-full}

\begin{mcode}
  map_fusion f g []
  =  (map f . map g) []
  =.  map f (map g [])
  =.  map f []
  =.  map (f . g) []
  **  QED
map_fusion f g (C x xs)
  =  map (f . g) (x : xs)
  =. ((f . g) x) : (map (f . g) xs)
  =. ((f . g) x) : (((map f) . (map g)) xs)
     ? map_fusion f g xs
  =. ((f . g) x) : (map f (map g xs))
  =. (f (g x)) : (map f (map g xs))
  =. map f ((g x) : (map g xs))
  =. map f ((g x) : (map g xs))
  =. map f (map g (x : xs))
  =. map f ((map g) (x : xs))
  =. ((map f) . (map g)) (x : xs)
  ** QED
\end{mcode}

\section{Proofs for \pbesym}

Proofs for \S~\ref{sec:pbe}

\begin{proof} (Of lemma~\ref{lem:uncovering})
We prove the result by induction on $n$.

\smallskip{\emph{Case $n=0$:}}
Immediate as $\expr \equiv \expr'$.

\smallskip{\emph{Case $n=k+1$:}}
Consider any $\expr'$ such that $\smtIsValid{\Penv_{k+1}}{\vv = \expr'}$.
By definition $\Penv_{k+1} = \uinst{\Renv}{\Penv_k}$, hence
$\smtIsValid{\uinst{\Renv}{\Penv_k}, v = \expr}{\vv = \expr'}$.
Consider any $\issubterm{\fapp{\rfun}{t'}}{\expr'}$;
Lemma~\ref{lem:congruence} completes the proof.
\end{proof}

\begin{proof} (Of lemma~\ref{lem:congruence})
Proof by induction on the structure of $\expr'$:
Case $\expr' \equiv x$:
  There are no subterms, hence immediate.
Case $\expr' \equiv c$:
  There are no subterms, hence immediate.
Case $\expr' \equiv \fapp{\rfun}{t'}$:
  Consider the \emph{last link} in $\Penv$
  connecting the equivalence class
  of $\vv$ (and $\expr$) to $\expr'$.
  Suppose the last link is a \emph{congruence link}
  of the form $\expr = \expr'$ where
  $\expr \equiv \fapp{\rfun}{t}$ and
  $\smtIsValid{\Penv}{t = t'}$.
  Then $\issubterm{\fapp{\rfun}{t}}{\Penv, \expr}$
  and we are done.
  Suppose instead, the last link is an \emph{equality link}
  in $\Penv$ of the form $z = \fapp{\rfun}{t'}$.
  In this case, $\issubterm{\fapp{\rfun}{t'}}{\Penv}$
  and again, we are done.
\end{proof}

\begin{proof} (Of lemma~\ref{lem:link})
\begin{align}
\intertext{Let $G_k \doteq \binst{\Renv}{\Penv}{k}$. Let us assume that}
  & \smtIsValid{\binst{\Renv}{\Penv, \vv = \expr}{n}}{\vv = \expr'}
  \label{eq:link:1} \\
\intertext{Consider any instance}
  & \predb \equiv \fapp{\rfun}{t'} = \SUBSTS{\body_i}{x}{t'} \ \mbox{in}\ \uinst{\Renv}{\Penv \wedge \vv = \expr'}
  \label{eq:link:2} \\
\intertext{By the definition of \uinstsym, we have}
  & \issubterm{\fapp{\rfun}{t'}}{\Penv \wedge \vv = \expr'}
    \ \mbox{such that}\ \smtIsValid{\Penv}{\SUBSTS{\pred_i}{x}{t'}}
    \label{eq:link:3} \\
\intertext{By (\ref{eq:link:1}) and Lemma~\ref{lem:uncovering}
  there exists \issubterm{\fapp{\rfun}{t}}{\Penv_n} such that}
  & \smtIsValid{\Penv_n}{t = t'} \notag \\
\intertext{As $\Penv \subseteq \Penv_n$ and (\ref{eq:link:3}), by congruence}
  & \smtIsValid{\Penv_n}{\SUBSTS{\pred_i}{x}{t}} \notag \\
\intertext{Hence, the instance}
 & \fapp{\rfun}{t} = \SUBSTS{\body_i}{x}{t} \ \mbox{is in} \Penv_{n+1} \notag \\
\intertext{That is}
 & \smtIsValid{\Penv_{n+1}}{t = t' \wedge \fapp{\rfun}{t} = \SUBST{\body_i}{x}{t}} \notag \\
\intertext{And so by congruence closure}
 & \smtIsValid{\Penv_{n+1}}{\predb} \notag \\
\intertext{As the above holds for every instance, we have}
 & \smtIsValid{\Penv_{n+1}}{\uinst{\Renv}{\Penv \wedge \vv = \expr'}} \notag
\end{align}
\end{proof}

\begin{proof} (Of lemma~\ref{lem:bounded-completeness})
The proof follows by induction on the structure
of $\eqpf{\Renv}{\Penv}{\expr}{\expr'}$.

\smallskip \emph{Base Case \rulename{Eq-Refl}:}
Follows immediately as $\expr \equiv \expr'$.

\smallskip \emph{Inductive Case \rulename{Eq-Trans}}
\begin{align}
\intertext{In this case, there exists $\expr''$ such that}
  & \eqpf{\Renv}{\Penv}{\expr}{\expr''}
    \label{eq:bc:1} \\
  & \smtIsValid{\uinst{\Renv}{\Penv \wedge \vv = \expr''}}{\vv = \expr'}
    \label{eq:bc:2} \\
\intertext{By the induction hypothesis (\ref{eq:bc:1}) implies there exists $0 \leq n$ such that}
  & \smtIsValid{\binst{\Renv}{\Penv \wedge \vv = \expr}{n}}{\vv = \expr''}
    \notag \\
\intertext{By Lemma~\ref{lem:link} we have}
  & \smtIsValid{\binst{\Renv}{\Penv, \vv = \expr}{n+1}}{\uinst{\Renv}{\Penv \wedge \vv = \expr''}}
    \notag \\
\intertext{Thus, by (\ref{eq:bc:2}) and modus ponens we get}
  & \smtIsValid{\binst{\Renv}{\Penv, \vv = \expr}{n+1}}{\vv = \expr'} \notag
\end{align}
\end{proof}

\begin{proof} (Of Lemma~\ref{lem:fixpoint})
Let $\Penv' = \Penv \wedge \vv = \expr$.

\emph{Case $\Rightarrow$:}
Assume that $\pbe{\Renv}{\Penv}{\expr = \expr'}$.
That is, at some iteration $i$ we have
$\smtIsValid{\Penv_i}{\vv = \expr'}$,
\ie by (\ref{eq:fix:1}) we have
$\smtIsValid{\binst{\Renv}{\Penv'}{i}}{\vv = \expr'}$.

\emph{Case $\Leftarrow$:}
Pick the smallest $n$ such that
$\smtIsValid{\binst{\Renv}{\Penv'}{n}}{\vv = \expr'}$.
Using Lemmas~\ref{lem:bounded:mono}~and~\ref{lem:bounded:fix}
we can then show that forall $0 \leq k < n$, we have
$$
\binst{\Renv}{\Penv'}{k} \subset \binst{\Renv}{\Penv'}{k+1}
$$
and
$$\binst{\Renv}{\Penv'}{k} \not \vdash {\vv = \expr'}$$
Hence, after $n$ iterations of the recursive loop,
$\pbe{\Renv}{\Penv}{\expr = \expr'}$, returns $\ttrue$.
\end{proof}

\mypara{Steps and Values}
Next, we show that if
$\symstep{\fapp{\rfun}{y}}{\expr'}$
is a logical step under an $\env$
that is inhabited by $\store$
then \fapp{\rfun}{y} reduces
to a value under $\store$.
The proof follows by observing
that if $\env$ is inhabited by
$\store$, and a particular step
is possible, then the guard
corresponding to that step
must also be true under
$\store$ and hence, by totality,
the function must reduce to
a value under the given store.

% \begin{lemma}[\textbf{Step-Value}]\label{lem:step-value}
% If $\satisfies{\Renv}{\store}{\Penv}$, and
   % $\symstep{\fapp{\rfun}{y}}{\expr'}$ is a $\Renv,\Penv$ step
% then $\trans{\Renv}{\apply{\store}{\fapp{\rfun}{y}}}{\val}$.
% \end{lemma}

\begin{lemma}[\textbf{Step-Value}]\label{lem:step-value}
If   $\store \in \embed{\env}$
     and
     $\symstep{\fapp{\rfun}{y}}{\expr'}$
	   is a $\tosmt{\RRenv}, \tosmt{\env}$ step
then \rjtranstore{\RRenv}{\store}{\fapp{\rfun}{y}}{\val}.
\end{lemma}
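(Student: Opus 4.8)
The plan is to take the guard-enabledness condition that is recorded purely at the SMT level in the definition of a logical step, transfer it down to the concrete operational semantics using the semantic-preservation results of Lemma~\ref{lem:pbe:semantics}, and then invoke the totality of $\RRenv$ to conclude that $\fapp{\rfun}{y}$ actually reduces to a value under $\store$.

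First I would unfold the hypothesis that $\symstep{\fapp{\rfun}{y}}{\expr'}$ is a $\tosmt{\RRenv},\tosmt{\env}$-step. By definition this supplies a reflected definition $\rlookup{\tosmt{\RRenv}}{\rfun} \equiv \rdef{x}{\pred}{\body}$, a guard index $i$, and a $(\tosmt{\RRenv},\tosmt{\env})$-instance $Q$ such that $\smtIsValid{\tosmt{\env} \wedge Q}{\SUBSTS{\pred_i}{x}{y}}$. In words, the $i$-th guard of $\rfun$ at argument $y$ is SMT-valid once the hypotheses are strengthened with the unfolding instances in $Q$.

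The crux is to push this validity into the concrete semantics. Since $\store \in \embed{\env}$, the store is a model of $\tosmt{\env}$; and since every member of $Q$ is a $(\tosmt{\RRenv},\tosmt{\env})$-instance, the \textbf{Sat-Inst} part of Lemma~\ref{lem:pbe:semantics} shows $\store$ also satisfies each element of $Q$ under $\RRenv$. Applying the \textbf{SMT-Inst} part of the same lemma (extended in the obvious way from a single instance to the finite set $Q$, whose elements are individually sound by \textbf{Sat-Inst}) to the validity $\smtIsValid{\tosmt{\env} \wedge Q}{\SUBSTS{\pred_i}{x}{y}}$ then yields $\rsatisfies{\RRenv}{\store}{\SUBSTS{\pred_i}{x}{y}}$: the selected guard genuinely holds under the concrete store. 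This transfer step is the main obstacle, since it is exactly where the SMT implication $\smtIsValid{\tosmt{\env}\wedge Q}{\pred_i}$ must be reconciled with concrete satisfaction, and it requires care in treating $Q$ as a set rather than a single instance and in using the fact that $\store$ models the embedded environment.

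Finally I would appeal to totality, which is implicitly assumed of $\RRenv$ under $\env$ throughout this section (Definition~\ref{asm:total}). Condition (1) states precisely that whenever a guard holds under $\store$, the corresponding body reduces to a value, so $\apply{\store}{\refapply{\RRenv}{\SUBSTS{\body_i}{x}{y}}}$ evaluates to some $\val$. It then remains to observe that evaluating $\fapp{\rfun}{y}$ under $\store$ and $\RRenv$ first unfolds $\rfun$'s (recursive-$\mathtt{let}$) definition and beta-reduces, and then, because guard $\SUBSTS{\pred_i}{x}{y}$ is enabled while by totality conditions (2)--(3) exactly one guard is, the CBV evaluation selects precisely the branch $\SUBSTS{\body_i}{x}{y}$. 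By transitivity of $\evalsto{}{}$, the application $\fapp{\rfun}{y}$ reduces to the same value $\val$, giving $\rjtranstore{\RRenv}{\store}{\fapp{\rfun}{y}}{\val}$ as required; this last part is routine once the guard transfer and totality are in hand.
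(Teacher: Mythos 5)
Your proposal is correct and takes essentially the same route as the paper's proof: unfold the definition of a logical step, transfer the SMT validity of the enabled guard to concrete satisfaction under $\store$ via Lemma~\ref{lem:pbe:semantics}, and conclude by the Totality assumption (Definition~\ref{asm:total}). The only differences are cosmetic: the paper introduces an extended store $\store^*$ binding the formal parameters to $\apply{\store}{\params{y}}$ rather than working with substitution instances as you do, and your appeal to the Sat-Inst and SMT-Inst parts of Lemma~\ref{lem:pbe:semantics} is, if anything, the more precise citation, since the paper invokes its SMT-Approx item even though the hypothesis is strengthened by the instance $Q$.
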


\begin{proof} (Of Lemma~\ref{lem:step-value})
\begin{align}%{3}
\bcoz{Assume that}
  & \store \in \embed{\env} \label{eq:stepval:1} \\
\bcoz{Let}
  & \store^*  \doteq \store \mapsingle{\params{x}}{\apply{\store}{\params{y}}} \label{eq:stepval:2}\\
  & \rlookup{\Renv}{\rfun} \doteq \rdef{x}{\pred}{\body} \label{eq:stepval:3} \\
\intertext{As $\symstep{\fapp{\rfun}{y}}{\expr'}$ is a $\tosmt{\RRenv}{\env}$ step,
for some $i$, \tosmt{\RRenv}-instance $Q$ we have}
  & \smtIsValid{\tosmt{\env} \wedge Q}{\SUBSTS{\pred_i}{x}{y}} \notag \\
\bcoz{Hence, by (\ref{eq:stepval:1}) and Lemma~\ref{lem:smt-approx}}
  & \rsatisfies{\RRenv}{\store}{\SUBSTS{\pred_i}{x}{y}} \label{eq:stepval:4}\\
\bcoz{As}
  & \apply{\store^*}{\pred_i} \equiv \apply{\store}{\SUBSTS{\pred_i}{x}{y}} \\
\bcoz{The fact (\ref{eq:stepval:4}) yields}
  & \rsatisfies{\RRenv}{\store^*}{\pred_i} \notag \\
\bcoz{By the Totality Assumption~\ref{asm:total}}
  & \rjtranstore{\RRenv}{\store^*}{\fapp{\rfun}{x}}{\val} \notag \\
\bcoz{That is}
  & \rjtranstore{\RRenv}{\store}{\fapp{\rfun}{y}}{\val} \\
\end{align}
\end{proof}

\mypara{Divergence}
A closed term $\expr$ \emph{diverges under}
$\RRenv$ if there is no $\val$ such that
$\rjtrans{\RRenv}{\expr}{\val}$ .

\begin{lemma}[\textbf{Divergence}] \label{lem:div} % TRIVIAL
If $\ \forall 0 \leq i$ we have
$\rjtrans{\RRenv}{\expr_i}{\ctxapp{\ctx}{\expr_{i+1}}}$
then $\expr_0$ diverges under $\Renv$.
\end{lemma}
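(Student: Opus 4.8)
The plan is to argue by contradiction, converting the infinite chain of hypotheses into an impossible infinite descent on reduction lengths. First I would assume that $\expr_0$ does \emph{not} diverge under $\RRenv$, \ie that $\evalsto{\refapply{\RRenv}{\expr_0}}{\val}$ for some value $\val$, and derive a contradiction. Since the reflected functions in $\RRenv$ are total, the Church--Rosser property lets me fix the deterministic call-by-value strategy of \S~\ref{sec:pbe:terminates}; then for any term $t$ that reduces to a value I may write $|t|$ for the (now well-defined, by determinism) number of small steps in its reduction to that value. Throughout I read each $\ctx$ in the hypothesis as an \emph{evaluation} context with $\expr_{i+1}$ in evaluation position, consistent with the way such sequences are produced by Lemma~\ref{lem:step-reductions} and Lemma~\ref{lem:subterm-eval}.

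The crux is a single descent step: assuming $\refapply{\RRenv}{\expr_i}$ terminates, I would show that $\refapply{\RRenv}{\expr_{i+1}}$ terminates as well and that $|\expr_{i+1}| < |\expr_i|$. The hypothesis gives $\evalsto{\refapply{\RRenv}{\expr_i}}{\ctxapp{\ctx}{\expr_{i+1}}}$, and this reduction unfolds the head function application of $\expr_i$, so it costs at least one concrete step. By determinism the unique terminating reduction of $\refapply{\RRenv}{\expr_i}$ passes through $\ctxapp{\ctx}{\expr_{i+1}}$, and the standard call-by-value property of evaluation contexts forces that reduction to drive $\expr_{i+1}$ all the way to a value before reducing any of the surrounding $\ctx$. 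Hence $\expr_{i+1}$ terminates and $|\expr_i| \geq 1 + |\expr_{i+1}|$, which is the strict decrease.

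Iterating this descent from the assumed termination of $\expr_0$ yields an infinite strictly decreasing sequence $|\expr_0| > |\expr_1| > |\expr_2| > \cdots$ of natural numbers, which is impossible; therefore $\expr_0$ diverges under $\RRenv$. I expect the main obstacle to be the careful justification of the evaluation-context step: I must argue, from the chosen call-by-value strategy, that the focused occurrence $\expr_{i+1}$ genuinely sits in evaluation position and must be fully evaluated first, and I must establish the \emph{positivity} of each link (at least one unfolding step), since that is exactly what turns a merely non-increasing measure into a strictly decreasing one. A secondary, more bookkeeping obstacle is identifying the occurrence of $\expr_{i+1}$ as a subterm of a reduct of $\expr_i$ with the freestanding evaluation $\refapply{\RRenv}{\expr_{i+1}}$; this is routine because the reflected definitions remain in scope as recursive $\mathtt{let}$-bindings throughout the reduction.
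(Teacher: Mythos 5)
You should know at the outset that the paper contains no proof of this lemma: it is stated bare in the appendix and invoked exactly once, in the proof of Theorem~\ref{thm:finite-trace}, so your argument fills a gap rather than paralleling an existing derivation. Your infinite-descent proof is essentially correct, and the side conditions you flag are not cosmetic --- they are exactly what the lemma needs in order to be true. Read literally against the call-by-name semantics of \corelan and the stated context grammar, the statement is false: take $\expr_2$ divergent with its own infinite chain, $\expr_1 \doteq D\ \expr_2$, and $\expr_0$ a case expression scrutinizing $\expr_1$ whose branch ignores the constructor argument; every hypothesis link holds (with case-scrutinee and constructor-argument contexts, padded by an identity redex if one insists on nonempty reductions), yet $\expr_0$ steps to a value, because call-by-name case analysis forces only the head constructor of its scrutinee. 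Likewise, since $\rjtrans{\RRenv}{\expr}{\expr'}$ is reflexive, trivial chains with $\ctx = \bullet$ and $\expr_{i+1} = \expr_i$ satisfy the hypothesis even when $\expr_0$ is a value. So any correct proof must, as yours does, (i) adopt the strict CBV strategy that \S~\ref{sec:pbe:terminates} fixes (justified there by totality and Church--Rosser), (ii) require each hole to sit in evaluation position, and (iii) secure at least one concrete step per link --- precisely the properties of the chains produced by Lemmas~\ref{lem:subterm-eval} and~\ref{lem:step-reductions}, which is the only way the lemma is ever used. Two refinements would make your argument airtight. First, the claim that ``the unique terminating reduction passes through $\ctxapp{\ctx}{\expr_{i+1}}$'' needs the hypothesis reduction itself to follow the fixed deterministic strategy, since determinism linearizes only reductions of one and the same strategy; you handle this by stipulation, but it should be promoted to an explicit hypothesis of the lemma. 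Second, your assertion that nothing in $\ctx$ is reduced before $\expr_{i+1}$ is finished is stronger than necessary and not literally true under, say, left-to-right CBV with an unevaluated term to the left of the hole; what you actually need, and what does hold, is that the CBV run of $\ctxapp{\ctx}{\expr_{i+1}}$ contains a complete run of $\expr_{i+1}$ to a value as a subsequence, which together with positivity of the link still yields the strict decrease $|\expr_i| \geq 1 + |\expr_{i+1}|$ and hence the contradiction.
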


\begin{proof} (Of Theorem~\ref{thm:finite-trace})
\begin{align}
\bcoz{Assume that}
   & \store \in \embed{\env}
   \label{eq:fin:0}\\
\bcoz{and assume an infinite $\tosmt{\RRenv}, \tosmt{\env}$ trace:}
   & \fapp{f_0}{\expr_0}, \fapp{f_1}{\expr_1}, \ldots
   \label{eq:fin:1}\\
\bcoz{Where additionally}
   & \params{\expr_0} \equiv \params{x_0}
   \label{eq:fin:2} \\
\bcoz{Define}
   & \expr^*_i        \equiv \apply{\store}{\fapp{\rfun_i}{\expr_i}}
   \label{eq:fin:3} \\
\bcoz{By Lemma~\ref{lem:step-reductions}, for every $i \in \nats$}
   & \rjtrans{\RRenv}{\expr^*_i}{\ctxapp{\ctx_i}{\expr^*_{i+1}}}
   \notag \\
\bcoz{Hence, by Lemma~\ref{lem:div}}
   & \expr^*_0\ \mbox{diverges under $\Renv$}
   \notag \\
\bcoz{\ie, by (\ref{eq:fin:2}, \ref{eq:fin:3})}
   & \apply{\store}{\fapp{\rfun_0}{x_0}}\ \mbox{diverges under $\Renv$}
   \label{eq:fin:5} \\
\bcoz{But by (\ref{eq:fin:0}) and Lemma~\ref{lem:step-value}}
   & \rjtranstore{\RRenv}{\store}{\fapp{\rfun_0}{x_0}}{\val}
   \quad \mbox{contradicting (\ref{eq:fin:5})} \notag
\end{align}
Hence, assumption (\ref{eq:fin:1}) cannot hold, \ie all
all the $\Renv, \Penv$ symbolic traces must be finite.
\end{proof}

% \RJ{HEREHEREHEREHERE}

\begin{proof} (Of Theorem~\ref{thm:termination})
As $\Penv$ is transparent, there are two cases.
\begin{align}
\intertext{\quad \emph{Case: $\env$ is inconsistent.}}
\bcoz{By definition of inconsistency}
  & \smtIsValid{\tosmt{\env}}{\tfalse}
    \notag \\
\bcoz{Hence}
  & \smtIsValid{\tosmt{\env}}{\pred}
    \notag \\%[0.1in]
\bcoz{That is}
  & \pbe{\tosmt{\RRenv}}{\tosmt{\env}}{\pred}\ \mbox{terminates immediately.}
    \notag \\
\intertext{\quad \emph{Case: $\env$ is inhabited.}}
\bcoz{That is, exists $\store$ s.t.}
  & \store \in \embed{\env}
    \label{eq:term:1} \\
\intertext{\quad Suppose that $\pbe{\tosmt{\RRenv}}{\tosmt{\env}}{\pred}$ does not terminate.}
\bcoz{That is, there is an infinitely increasing chain:}
  & \Penv_0 \subset \ldots \subset \Penv_n \ldots
  \label{eq:term:2} \\
\bcoz{By Lemma~\ref{lem:chain}}
  & \tosmt{\RRenv},\tosmt{\env}\ \mbox{has an infinite trace}
    \notag
\end{align}
Which, by (\ref{eq:term:1}) contradicts
Theorem~\ref{thm:finite-trace}.
Thus, (\ref{eq:term:2}) is impossible,
\ie $\pbe{\tosmt{\RRenv}}{\tosmt{\env}}{\pred}$
terminates.
\end{proof}

\section{Proof of Section: Embedding Natural Deduction with Refinement Types}

\begin{lemma}[Validity]\label{lemma:validity}
If there exists $e\in\embed{\formula}$\ then \formula is valid.
\end{lemma}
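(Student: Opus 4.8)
The plan is to proceed by induction on the structure of the formula $\formula$, reading each logical connective through the type-to-formula mapping of Figure~\ref{fig:hol} and using the denotational definition of $\embed{\cdot}$ from Section~\ref{sec:formalism}. Since this section assumes all terms are total, every well-typed term reduces to a value, a fact I will use repeatedly. The first thing I would pin down is the intended reading of ``valid'': I interpret $\formula$ as a (classical) proposition whose atoms are native terms $e$, where such an atom holds iff $\evalsto{e}{\etrue}$, and whose connectives and quantifiers are the standard Tarskian ones, so that excluded middle is available in the metatheory (matching the SMT semantics).

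The key realization is that the forward implication does not induct cleanly on its own, because the negation and implication cases invert the direction of the argument. Concretely, $\neg\formula$ is mapped to the type $\formula\rightarrow\proofterm{\efalse}$, and to derive that $\neg\formula$ is valid from an inhabitant of this type I need that $\formula$ being valid would force $\embed{\formula}$ to be nonempty, i.e.\ exactly the converse of the lemma. I would therefore strengthen the statement to the biconditional $\embed{\formula}\neq\emptyset \iff \formula\ \text{valid}$ and prove it by simultaneous structural induction; Lemma~\ref{lemma:validity} is then its left-to-right direction.

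The cases then proceed as follows. For a native atom, $\embed{\proofterm{e}}$ is the set of unit-typed terms $e'$ with $\evalsto{e'}{w}\Rightarrow\evalsto{e}{\etrue}$; by totality $\evalsto{e'}{()}$, so this set is nonempty exactly when $\evalsto{e}{\etrue}$, which is the validity of the atom; in particular $\embed{\proofterm{\efalse}}=\emptyset$. For conjunction $\tyand{\formula_1}{\formula_2}$ and disjunction $\tyor{\formula_1}{\formula_2}$ I use that a pair is inhabited iff both components are (via $\fst{\cdot}$ and $\snd{\cdot}$ one way, pairing the other) and that an \texttt{Either} is inhabited iff one component is (a total inhabitant reduces to a $\texttt{Left}$ or a $\texttt{Right}$, with injection for the converse), then invoke the induction hypothesis. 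For $\forall$, mapped to $\tyall{x}{\typ}{\formula}$, the function type is inhabited iff $\embed{\formula\subst{x}{e_x}}$ is nonempty for every $e_x\in\embed{\typ}$ (apply the function one way, assemble a function the other way), matching the universal reading of validity; existentials are dual, using dependent pairs. The two remaining cases combine classical reasoning with totality: $\embed{\formula_1\rightarrow\formula_2}$ is nonempty iff $\embed{\formula_1}=\emptyset$ or $\embed{\formula_2}\neq\emptyset$ (a function into the codomain exists iff the domain is empty or the codomain is inhabited), and by the induction hypothesis this is precisely $\neg(\formula_1\ \text{valid})\lor(\formula_2\ \text{valid})$, i.e.\ the validity of $\formula_1\Rightarrow\formula_2$; negation is the instance with $\formula_2=\proofterm{\efalse}$, whose denotation is empty.

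The main obstacle I anticipate is exactly the negation/implication case: it is what forces the biconditional formulation (so that the inductive hypothesis supplies the ``completeness'' direction for the antecedent), and it relies essentially on totality, since without it $\embed{\proofterm{\efalse}}$ could contain a divergent term and the equivalence between inhabitation and evaluation-to-$\etrue$ would fail. A secondary point to state carefully is the backward direction for $\forall$ and for implication, where assembling an inhabiting function requires choosing, for each argument in $\embed{\typ}$ (resp.\ $\embed{\formula_1}$), an inhabitant of the nonempty result denotation; this is unproblematic in the classical metatheory but should be made explicit.
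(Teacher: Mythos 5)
Your diagnosis of the contravariance problem is sharp, and your forward cases (native terms, pairs, sums, application for $\forall$, dependent pairs for $\exists$) coincide with the paper's. But the paper does not resolve the contravariance issue the way you do. Its proof is a case analysis in which the problematic reverse step never arises, because validity of an antecedent is \emph{read as} inhabitation: in the implication case it argues ``if there exists an expression $e_1 \in \embed{\formula_1}$ that makes $\formula_1$ valid, then $f\ e_1$ makes $\formula_2$ valid'', and similarly for negation. That is, at negative positions validity is interpreted realizability-style, through $\embed{\cdot}$, rather than through classical Tarskian truth. You instead commit to the Tarskian reading and try to prove the missing converse as part of a strengthened biconditional, and that is exactly where your proof fails.

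The biconditional $\embed{\formula}\neq\emptyset \iff \formula\ \text{valid}$ is false, and the failure is at the step you dismissed as ``unproblematic in the classical metatheory''. The elements of $\embed{\tyall{x}{\typ}{\formula}}$ are \emph{expressions of \corelan}; the function assembled by choosing, for each $e_x\in\embed{\typ}$, some inhabitant of the nonempty set $\embed{\formula\subst{x}{e_x}}$ is a set-theoretic object, not a term, so it witnesses nothing. Worse, no witnessing term can exist in general: let $q$ be a total boolean term such that $q\ x\ y$ decides ``the machine coded by $x$ has not halted within $y$ steps'', let $\phi(x) \equiv \tyall{y}{\tint}{\proofterm{q\ x\ y}}$, and let $\Phi \equiv \tyall{x}{\tint}{\tyor{\phi(x)}{(\phi(x)\rightarrow\proofterm{\efalse})}}$. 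Classically $\Phi$ is valid. However, $\embed{\phi(n)}$ is nonempty exactly when machine $n$ diverges, and $\embed{\phi(n)\rightarrow\proofterm{\efalse}}$ is nonempty exactly when $\embed{\phi(n)}$ is empty, i.e.\ when $n$ halts; so any $f\in\embed{\Phi}$ would decide the halting problem by evaluating $f\ n$ and checking whether the resulting value is \exprorleft{e_1} or \exprorright{e_2}. Hence $\embed{\Phi}=\emptyset$ while $\Phi$ is valid, and your simultaneous induction collapses at the $\forall$ case. The same example shows the paper's reading is not merely convenient but necessary: $\embed{\Phi\rightarrow\proofterm{\efalse}}$ is inhabited vacuously, yet $\neg\Phi$ is classically false, so under your Tarskian reading even the original forward statement of the lemma is false. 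The repair is to abandon both the Tarskian reading and the biconditional, and to prove only the forward direction by induction, interpreting the antecedent clauses of implication and negation via inhabitation, as the paper's case analysis does.
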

\begin{proof}
We prove the lemma by case analysis in the shape of \formula. 
\begin{itemize}
\item $\formula \equiv \proofterm{p}$.
Since the set
 $\embed{\proofterm{p}} = \{e | \evalsto{p}{\etrue} \}$
 is not empty, then  \evalsto{p}{\etrue}.
\item $\formula \equiv {\formula_1} \rightarrow {\formula_2}$. 
By assumption, there exists an expressions $f$ so that 
$\forall e_x \in \embed{\formula_1}, f\ e_x \in \embed{\formula_2}$. 
So, if there exists an expression 
$e_1 \in \embed{\formula_1}$ that makes $\formula_1$ valid 
then $f\ e_1$ makes $\formula_2$ valid. 
\item $\formula \equiv {\formula} \rightarrow {\efalse}$. 
By assumption, there exists an expressions $f$ so that 
$\forall e_x \in \embed{\formula}, f\ e_x \in \embed{\proofterm{\efalse}}$. 
So, if there exists an expression 
$e_1 \in \embed{\formula}$ that makes $\formula$ valid 
then $f\ e_1$ makes $\proofterm{\efalse}$ valid, which is impossible, 
thus \formula cannot be valid. 
\item $\formula \equiv \tyand{\formula_1}{\formula_2}$. 
If there exists a total expression $e \in \embed{\formula}$
then $e$ evaluates to \exprand{e_1}{e_2}. 
\item $\formula \equiv \tyor{\formula_1}{\formula_2}$. 
If there exists a total expression $e \in \embed{\formula}$
then $e$ evaluates to either 
\exprorleft{e'} or \exprorright{e'}. 
\item $\formula \equiv \tyall{x}{\typ}{\formula}$.
By assumption, there exists an expressions $f$ so that 
$\forall e_x \in \embed{\typ}, f\ e_x \in \embed{\formula\subst{x}{e_x}}$. 
So, if there exists an expression 
$e_1 \in \embed{\typ}$
then $f\ e_1$ makes $\formula\subst{x}{e_1}$ valid. 
\item $\formula \equiv \tyex{x}{\typ}{\formula}$.
By assumption, there exists an expressions $p$ 
that evaluates to a pair $(e_x, e_y)$ 
so that $e_x\in \embed{\typ}$ and $e_y\in\embed{\formula\subst{x}{e_x}}$.
\end{itemize} 
\end{proof}

\begin{theorem}[Validity]
If \hastype{\emptyset}{\emptyset}{e}{\formula} then \formula is valid.
\end{theorem}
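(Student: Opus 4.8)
The plan is to obtain this theorem as an immediate corollary of two results already established: the \emph{Denotations} half of the Soundness of \corelan (Theorem~\ref{thm:safety}) and the Validity Lemma~\ref{lemma:validity}. Recall that Lemma~\ref{lemma:validity} reduces validity of \formula to the mere existence of an inhabitant of its denotation, that is, of some $e' \in \embed{\formula}$ (the set $\embed{\formula}$ being the type denotation written $\interp{\formula}$ in Theorem~\ref{thm:safety}). Thus it suffices to exhibit one such inhabitant, and the typing hypothesis together with soundness hands us exactly one, namely $e$ itself.

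Concretely, I would first instantiate the Denotations part of Theorem~\ref{thm:safety} with $\prog := e$, $\typ := \formula$, and $\env = \RRenv = \emptyset$, using the hypothesis $\hastype{\emptyset}{\emptyset}{e}{\formula}$. This yields $\applysub{\sto}{e} \in \embed{\applysub{\sto}{\formula}}$ for every $\sto \in \interp{\emptyset}$. The second step is to observe that the empty type environment admits the empty (identity) closing substitution: by the definition of $\interp{\cdot}$, the condition on bindings is vacuous when $\env = \emptyset$, so the empty substitution lies in $\interp{\emptyset}$, and applying it is the identity on both expressions and types. Hence the inclusion specializes to $e \in \embed{\formula}$. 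Finally, I would apply Lemma~\ref{lemma:validity} to this inhabitant to conclude that \formula is valid.

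Since the substantive work is carried entirely by the soundness theorem and the validity lemma, there is essentially no obstacle in the derivation itself. The only point demanding a moment's care is the specialization to empty environments: one must confirm that $\interp{\emptyset}$ is non-empty (so that the universally quantified conclusion of soundness is not vacuously satisfied) and that the relevant closing substitution acts as the identity, leaving $e$ and \formula untouched. With that bookkeeping dispatched, the chain from $\hastype{\emptyset}{\emptyset}{e}{\formula}$ to $e \in \embed{\formula}$ and thence to the validity of \formula closes immediately.
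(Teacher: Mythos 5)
Your proof is correct and follows exactly the paper's route: the paper discharges this theorem as a direct consequence of Lemma~\ref{lemma:validity} together with the Denotations half of Theorem~\ref{thm:safety}, which is precisely your chain from $\hastype{\emptyset}{\emptyset}{e}{\formula}$ to $e \in \embed{\formula}$ to validity. The only difference is that you spell out the empty-environment bookkeeping (the identity substitution lying in $\interp{\emptyset}$), which the paper leaves implicit.
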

\begin{proof}
By direct implication of Lemma~\ref{lemma:validity} and soundness of \corelan (Theorem~\ref{thm:safety}).
\end{proof}
%% OLD soundsness proofs to be included in the Appendix

\renewcommand\hastype[3]{\ensuremath{#1 \vdash #2 : #3}}
\renewcommand\issubtype[3]{\ensuremath{#1 \vdash #2 \preceq #3}}
\renewcommand\sub{\ensuremath{\theta}}
\renewcommand\aissubtype[3]{\ensuremath{#1 \vdash_{S} #2 \preceq #3}}
\renewcommand\ahastype[3]{\ensuremath{#1 \vdash_{S} #2 : #3}}
\renewcommand\tref[3]{\ensuremath{\{ \tbind{#1}{#2} \mid #3 \}}}
\renewcommand\tfun[3]{\ensuremath{\tbind{#1}{#2} \rightarrow #3}}

\section{Refinement Reflection: \corelan: Extended Version with Proofs}
\label{sec:appendix:formalism}
\label{sec:appendix:types-reflection}

Our first step towards formalizing refinement
reflection is a core calculus \corelan with an
\emph{undecidable} type system based on
denotational semantics.
We show how the soundness of the type system
allows us to \emph{prove theorems} using \corelan.

%
%% Note that \smtlan programs are a subset of
%% \corelan programs derivations
%%
%% a subset of \corelan where the that forms a
%% decidable logic of
%% refinements, and use it to obtain \corelan with
%% decidable SMT-based algorithmic typing.

\subsection{Syntax}
\begin{figure}[t!]
\centering
$$
\begin{array}{rrcl}
\emphbf{Operators} \quad
  & \odot
  & ::= & = \spmid  <
\\[0.03in]

\emphbf{Constants} \quad
  & c
  & ::=
  & \land \spmid \lnot \spmid \odot \spmid +,-,\dots  \\
  && \spmid & \etrue\spmid \efalse \spmid 0, 1,-1, \dots
\\[0.03in]

\emphbf{Values} \quad
  & w & ::=&  c
             \spmid \efun{x}{\typ}{e} \spmid D\ \overline{w}
\\[0.03in]

\emphbf{Expressions} \quad
  & e & ::=    & w \spmid x \spmid \eapp{e}{e}      \\
  &   & \spmid & \ecase{x}{e}{\dc}{\overline{x}}{e}
  %% &   & \spmid & \eletb{x}{\gtyp}{e}{e}
\\[0.03in]

\emphbf{Binders} \quad
  & \bd & ::= & e \spmid \eletb{x}{\gtyp}{\bd}{\bd}
\\[0.03in]

\emphbf{Program} \quad
  & \prog & ::= & \bd \spmid \erefb{x}{\gtyp}{e}{\prog}
\\[0.03in]

% \emphbf{Logical Labels} \quad
%  & L & ::= & \la \spmid \lm
%\\[0.03in]

\emphbf{Basic Types} \quad
  & \btyp
  & ::=
  & \tint \spmid \tbool \spmid T
\\[0.03in]

\emphbf{Refined Types} \quad
  & \typ
  & ::=      & \tref{v}{\btyp}{\reft} \spmid \tfun{x}{\typ}{\typ}
\\[0.05in]
\end{array}
$$
\caption{\textbf{Syntax of \corelan}}
\label{fig:appendix:syntax}
\end{figure}

\begin{figure}[t!]
\centering

\emphbf{Contexts}\hfill{{$\fbox{\textit{C}}$}}
$$
\begin{array}{rcl}
  C & ::=    & \bullet \\
    & \spmid & C\ e \spmid c\ C \spmid D\ \overline{e}\ C\ \overline{e}\\
    & \spmid & \ecase{y}{C}{\dc_i}{\overline{x}}{e_i}
  \\[0.03in]
\end{array}
$$

\emphbf{Reductions}\hfill{{$\fbox{\goesto{\prog}{\prog'}}$}}
$$
\begin{array}{rcl}
C[\prog]
  & \hookrightarrow
  & C[\prog'],\quad \text{if}\ \goesto{\prog}{\prog'}
  \\
{c\ v}
  & \hookrightarrow
  & {\ceval{c}{v}}
  \\
{({\efun{x}{\typ}{e})}\ {e'}}
  & \hookrightarrow
  & {\SUBST{e}{x}{e'}}
  \\
{\ecase{y}{\dc_j\ \overline{e}}{\dc_i}{\overline{x_i}}{e_i}}
  & \hookrightarrow
  & {\SUBST{\SUBST{e_j}{y}{\dc_j\ \overline{e}}}{\overline{x_i}}{\overline{e}}}
  \\
{\erefb{x}{\gtyp}{e}{\prog}}
  & \hookrightarrow
  & {\SUBST{\prog}{x}{\efix{(\efun{x}{\gtyp}{e})}}}
  \\
{\eletb{x}{\gtyp}{\bd_x}{\bd}}
  & \hookrightarrow
  % & {\SUBST{\prog}{x}{e}}
  & {\SUBST{\bd}{x}{\efix{(\efun{x}{\gtyp}{\bd_x})}}}
  \\
%% {\eletrec{x}{e_x}{\gtyp}{[L]}{\prog}}
  %% & \hookrightarrow
  %% & {\SUBST{\prog}{x}{\efix{\efun{x}{\gtyp}{e_x}}}}
  %% \\
{\efix{\prog}}
  & \hookrightarrow
  & {\prog\ (\efix{\prog})}
\end{array}
$$
\caption{\textbf{Operational Semantics of \corelan}}
\label{fig:semantics}
\end{figure}

Figure~\ref{fig:appendix:syntax} summarizes the syntax of \corelan,
which is essentially the calculus \undeclang~\cite{Vazou14}
with explicit recursion and a special $\erefname$ binding form
to denote terms that are reflected into the refinement logic.
In \corelan refinements $r$ are arbitrary expressions $e$
(hence $r ::= e$ in Figure~\ref{fig:appendix:syntax}).
This choice allows us to prove preservation and progress,
but renders typechecking undecidable.
In \S~\ref{sec:appendix:algorithmic} we will see how to recover
decidability by soundly approximating refinements.

The syntactic elements of \corelan are layered into
primitive constants, values, expressions, binders
and programs.

\mypara{Constants}
The primitive constants of \corelan
include all the primitive logical
operators $\op$, here, the set $\{ =, <\}$.
Moreover, they include the
primitive booleans $\etrue$, $\efalse$,
integers $\mathtt{-1}, \mathtt{0}$, $\mathtt{1}$, \etc,
and logical operators $\mathtt{\land}$, $\mathtt{\lor}$, $\mathtt{\lnot}$, \etc.

\mypara{Data Constructors}
We encode data constructors as special constants.
% Each data type has an equality predicate $\haseq{T}$
% that is true only if values of type $T$ can be finitely compared.
For example the data type \tintlist, which represents
finite lists of integers, has two data constructors: $\dnull$ (``nil'')
and $\dcons$ (``cons'').
% and satisfies $\haseq{\tintlist}$.

%% NV Arity is not used anywhere
%%Each data type has an arity $\arity{T}$ that represents
%%the exact number of data constructors that return
%%a value of type $T$.
%
%%For example the data type \tintlist, which represents
%%lists of integers, has two data constructors: $\dnull$ (``nil'')
%%and $\dcons$ (``cons'') and so has arity $2$.

\mypara{Values \& Expressions}
The values of \corelan include
constants, $\lambda$-abstractions
$\efun{x}{\typ}{e}$, and fully
applied data constructors $D$
that wrap values.
The expressions of \corelan
include values and variables $x$,
applications $\eapp{e}{e}$, and
$\mathtt{case}$ expressions.

\mypara{Binders \& Programs}
A \emph{binder} $\bd$ is a series of possibly recursive
let definitions, followed by an expression.
A \emph{program} \prog is a series of $\erefname$
definitions, each of which names a function
that can be reflected into the refinement
logic, followed by a binder.
The stratification of programs via binders
is required so that arbitrary recursive definitions
are allowed but cannot be inserted into the logic
via refinements or reflection.
(We \emph{can} allow non-recursive $\mathtt{let}$
binders in $e$, but omit them for simplicity.)

\subsection{Operational Semantics}

Figure~\ref{fig:appendix:syntax} summarizes the small
step contextual $\beta$-reduction semantics for
\corelan.
%
%%There are two points to note.
%%%
%%First, we allow for reductions under
%%data constructors, and thus, values may
%%be further reduced.
%%%
%%Second, for simplicity, we treat both
%%$\eletname$ and $\erefname$ as possibly
%%recursive (\ie $\mathtt{let\ rec}$) binders.
%% Note that, for simplicity, we treat each
%% $\eletname$ as a possibly
%% recursive (\ie $\mathtt{let\ rec}$) binder.
%
We write \evalj{e}{e'}{j} if there exist
$e_1,\ldots,e_j$ such that $e$ is $e_1$,
$e'$ is $e_j$ and $\forall i,j, 1 \leq i < j$,
we have $\evals{e_i}{e_{i+1}}$.
We write $\evalsto{e}{e'}$ if there exists
some finite $j$ such that $\evalj{e}{e'}{j}$.
We define $\betaeq{}{}$ to be the reflexive,
symmetric, transitive closure of $\evals{}{}$.

\mypara{Constants} Application of a constant requires the
argument be reduced to a value; in a single step the
expression is reduced to the output of the primitive
constant operation.
For example, consider $=$, the primitive equality
operator on integers.
We have $\ceval{=}{n} \defeq =_n$
where $\ceval{=_n}{m}$ equals \etrue
iff $m$ is the same as $n$.
%
%\mypara{Equality}
%
We assume that the equality operator
is defined \emph{for all} values,
and, for functions, is defined as
extensional equality.
That is, for all
$f$ and
$f'$
we have
$\evals{(f = f')}{\etrue}
  \quad \mbox{iff} \quad
  \forall v.\ \betaeq{f\ v}{f'\ v}$.
We assume source \emph{terms} only contain implementable equalities
over non-function types; the above only appears in \emph{refinements}
and allows us to state and prove facts about extensional
equality~\S~\ref{subsec:extensionality}.
%% % \RJ{CHECK}

%%That is, for all
%%$f \defeq \efun{x}{\typ}{e}$ and
%%$f' \defeq \efun{x}{\typ}{e'}$
%%we have
%%$$\evals{(f = f')}{\etrue}
%%  \quad \mbox{iff} \quad
%%  \forall v.\ \evalsto{(\SUBST{e}{x}{v} = \SUBST{e'}{x}{v})}{\etrue}
%%$$

\subsection{Types}

\corelan types include basic types, which are \emph{refined} with predicates,
and dependent function types.
\emph{Basic types} \btyp comprise integers, booleans, and a family of data-types
$T$ (representing lists, trees \etc.)
For example the data type \tintlist represents lists of integers.
We refine basic types with predicates (boolean valued expressions \refa) to obtain
\emph{basic refinement types} $\tref{v}{\btyp}{\refa}$.
Finally, we have dependent \emph{function types} $\tfun{x}{\typ_x}{\typ}$
where the input $x$ has the type $\typ_x$ and the output $\typ$ may
refer to the input binder $x$.
We write $\btyp$ to abbreviate $\tref{v}{\btyp}{\etrue}$,
and \tfunbasic{\typ_x}{\typ} to abbreviate \tfun{x}{\typ_x}{\typ} if
$x$ does not appear in $\typ$.
We use $r$ to refer to refinements.

\mypara{Denotations}
Each type $\typ$ \emph{denotes} a set of expressions $\interp{\typ}$,
that are defined via the dynamic semantics~\cite{Knowles10}.
Let $\shape{\typ}$ be the type we get if we erase all refinements
from $\typ$ and $\bhastype{}{e}{\shape{\typ}}$ be the
standard typing relation for the typed lambda calculus.
Then, we define the denotation of types as:
\begin{align*}
\interp{\tref{x}{\btyp}{r}} \defeq &
    \{e \mid  \bhastype{}{e}{\btyp},
              \mbox{ if } \evalsto{e}{w}
              \mbox{ then } \evalsto{r\subst{x}{w}}{\etrue} \}\\
\interp{\tfun{x}{\typ_x}{\typ}} \defeq &
    \{e \mid  \bhastype{}{e}{\shape{\tfunbasic{\typ_x}{\typ}}},
              \forall e_x \in \interp{\typ_x}.\ \eapp{e}{e_x} \in \interp{\typ\subst{x}{e_x}}
    \}
\end{align*}

\mypara{Constants}
For each constant $c$ we define its type \constty{c}
such that $c \in \interp{\constty{c}}$.
For example,
$$
\begin{array}{lcl}
\constty{3} &\doteq& \tref{v}{\tint}{v = 3}\\
\constty{+} &\doteq& \tfun{\ttx}{\tint}{\tfun{\tty}{\tint}{\tref{v}{\tint}{v = x + y}}}\\
\constty{\leq} &\doteq& \tfun{\ttx}{\tint}{\tfun{\tty}{\tint}{\tref{v}{\tbool}{v \Leftrightarrow x \leq y}}}\\
\end{array}
$$
So, by definition we get the constant typing lemma
\begin{lemma}{[Constant Typing]}\label{lemma:constants}
Every constant $c \in \interp{\constty{c}}$.
\end{lemma}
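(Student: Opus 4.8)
The plan is to observe first that the statement is, by design, almost definitional: for each constant $c$ the type $\constty{c}$ is \emph{constructed} precisely so that $c$ inhabits $\interp{\constty{c}}$, so the proof amounts to unwinding the two clauses of the type denotation and checking the chosen refinements against the $\delta$-reduction rule $c\ v \hookrightarrow \ceval{c}{v}$. I would carry this out by case analysis on the syntactic form of $c$, grouping the cases into (i) base-type constants (booleans and integers), (ii) fully-applied data constructors, and (iii) function-typed constants (the arithmetic, comparison, and logical operators).

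For a base-type constant, e.g.\ an integer $n$ with $\constty{n} = \tref{v}{\tint}{v = n}$, membership follows directly. By the base typing relation we have $\bhastype{}{n}{\tint}$, and since $n$ is already a value with $n \evalsto n$, the only obligation imposed by $\interp{\tref{v}{\tint}{v = n}}$ is $\evalsto{(v = n)\subst{v}{n}}{\etrue}$, i.e.\ $\evalsto{n = n}{\etrue}$, which holds by the definition of $\ceval{=}{\cdot}$. The boolean constants $\etrue, \efalse$ are identical, and fully-applied data constructors are values whose structural refinements hold by construction of $\constty{\dc}$.

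For a function-typed constant such as $+$ with $\constty{+} = \tfun{x}{\tint}{\tfun{y}{\tint}{\tref{v}{\tint}{v = x + y}}}$, I would unfold the function clause of $\interp{\cdot}$ and argue by induction on the arity. Given $e_x \in \interp{\tint}$ and $e_y \in \interp{\tint}$, I must show $+\ e_x\ e_y \in \interp{\tref{v}{\tint}{v = x + y}\subst{x}{e_x}\subst{y}{e_y}}$. The shape obligation $\bhastype{}{+\ e_x\ e_y}{\tint}$ is immediate. The subtlety for the refinement is that the argument denotations contain \emph{arbitrary} expressions, not just values, whereas the $\delta$-rule fires only on values. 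This is resolved by the \emph{conditional} form of the base-type denotation: if $+\ e_x\ e_y$ diverges the obligation is vacuous, while if $\evalsto{+\ e_x\ e_y}{w}$ then the contextual reduction rules force $e_x \evalsto v_x$ and $e_y \evalsto v_y$ for integer values $v_x, v_y$, whence $+\ v_x\ v_y$ reduces by the $\delta$-rule to $\ceval{\ceval{+}{v_x}}{v_y}$ and so $w$ is the integer sum of $v_x$ and $v_y$. Combining $e_x \evalsto v_x$, $e_y \evalsto v_y$ with the fact that $\evals{}{}$ is a congruence then gives $\evalsto{(v = e_x + e_y)\subst{v}{w}}{\etrue}$, discharging the obligation. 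The comparison operators $\odot \in \{=,<\}$ and the logical connectives $\land, \lnot$ are handled the same way, reading off the corresponding output refinement from the definition of $\ceval{\cdot}{\cdot}$.

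The main obstacle I anticipate is exactly this mismatch between the value-restricted $\delta$-rule and the expression-valued argument denotations: the proof hinges on using the ``if it reduces to a value'' guard in $\interp{\tref{v}{\btyp}{r}}$ to restrict attention to converging arguments, together with a small congruence lemma stating that $\evals{}{}$ is preserved under contexts (so that reducing the arguments inside the equality test is sound). Everything else is a routine reading-off of the refinements from the definition of $\constty{c}$, which is why the result is essentially immediate from the chosen constant typings.
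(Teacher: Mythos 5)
Your proof is correct, but it does genuinely different work than the paper's. The paper does not prove this lemma by case analysis at all: it \emph{defines} $\constty{c}$ subject to the requirement that $c \in \interp{\constty{c}}$ --- the lemma holds ``by definition,'' as a constraint that any constant added to the language must satisfy --- and then merely records the consequence that if $\constty{c} \defeq \tfun{x}{\typ_x}{\typ}$ then $\ceval{c}{w} \in \interp{\typ\subst{x}{w}}$ for every value $w \in \interp{\typ_x}$ (for data constructors it likewise asserts that the measure refinements hold ``by construction''). Your proposal instead treats the listed signatures ($\constty{3}$, $\constty{+}$, $\constty{\leq}$, the constructors) as given and verifies membership for each, which is precisely the obligation the paper leaves implicit. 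Your handling of the one real subtlety is right: the function clause of $\interp{\cdot}$ quantifies over arbitrary argument \emph{expressions}, while the rule $c\ v \hookrightarrow \ceval{c}{v}$ fires only on values, and this is reconciled by the guarded (``if it reduces to a value'') base-type clause together with inversion of the contextual semantics (convergence of $c\ e_x\ e_y$ forces $e_x$ and $e_y$ to converge) and congruence of $\evalsto{}{}$ inside the refinement. What the paper's stipulative approach buys is brevity and modularity --- soundness is parameterized by the requirement, and new constants may be added provided they meet it; what your approach buys is an actual check that the concrete constant types the calculus ships with do meet it.
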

Thus, if $\constty{c} \defeq \tfun{x}{\typ_x}{\typ}$,
then for every value $w \in \interp{\typ_x}$, we require
$\ceval{c}{w} \in \interp{\typ\subst{x}{w}}$.

%% \mypara{Equality}
%% The equality predicate
%% $\haseq{B}$, is defined to be true for \tint and \tbool,
%% and for each type constructor $T$
%% whose values can be finitely compared.
%% %
%% So, by definition we get the equality lemma
%% %
%% \begin{lemma}{[Equality]}\label{lemma:equality}
%% If $\haseq{B}$ then for each value $\bhastype{\emptyset}{w}{B}$
%% \evalsto{w = w}{\etrue}
%% \end{lemma}

\subsection{Refinement Reflection}
\label{subsec:appendix:logicalannotations}
The simple, but key idea in our work is to
\emph{strengthen} the output type of functions
with a refinement that \emph{reflects} the
definition of the function in the logic.
We do this by treating each
$\erefname$-binder:
${\erefb{f}{\gtyp}{e}{\prog}}$
as a $\eletname$-binder:
${\eletb{f}{\exacttype{\gtyp}{e}}{e}{\prog}}$
during type checking (rule $\rtreflect$ in Figure~\ref{fig:typing}).

\mypara{Reflection}
We write \exacttype{\typ}{e} for the \emph{reflection}
of term $e$ into the type $\typ$,  defined by strengthening
\typ as:
$$
\begin{array}{lcl}
\exacttype{\tref{v}{\btyp}{r}}{e}
  & \defeq
  & \tref{v}{\btyp}{r \land v = e}\\
\exacttype{\tfun{x}{\typ_x}{\typ}}{\efun{y}{}{e}}
  & \defeq
  & \tfun{x}{\typ_x}{\exacttype{\typ}{e\subst{y}{x}}}
\end{array}
$$
As an example, recall from \S~\ref{sec:overview}
that the "reflect fib " strengthens the type of
"fib" with the reflected refinement "fibP".
%% NV In Overview, we have fibP v n = v = fib n && fibR v n
%% NV Here we get the reflection part (fibR v n)
%% NV which we can verify
%% NV at each fix invocation we also get the v = fib n portion
%% NV via the exact rule
%% NV We can not add the v = fib n as a port condition, because
%% NV we cannot prove it.

\mypara{Consequences for Verification}
Reflection has two consequences for verification.
First, the reflected refinement is \emph{not trusted};
it is itself verified (as a valid output type)
during type checking.
Second, instead of being tethered to quantifier
instantiation heuristics or having to program
``triggers'' as in Dafny~\citep{dafny} or
\fstar~\citep{fstar}
the programmer can predictably ``unfold'' the
definition of the function during a proof simply
by ``calling'' the function, which we have found
to be a very natural way of structuring
proofs~\S~\ref{sec:evaluation}.

\subsection{Refining \& Reflecting Data Constructors with Measures}
\label{subsec:appendix:measures}
\label{subsec:list}

% We reuse the notion of \emph{measures}~\cite{Vazou14}
% to reflect functions over datatypes into the refinement
% logic.

We assume that each data type is equipped with
a set of \emph{measures} which are \emph{unary}
functions whose (1)~domain is the data type, and
(2)~body is a single case-expression over the
datatype~\cite{Vazou14}:
$$\emeasb{f}
         {\gtyp}
         {\efun{x}{\typ}{\ecase{y}{x}{\dc_i}{\overline{z}}{e_i}}}$$
For example, "len" measures the size of an $\tintlist$:
\begin{code}
  measure len :: [Int] -> Nat
  len = \x -> case x of
                []     -> 0
                (x:xs) -> 1 + len xs
\end{code}

\mypara{Checking and Projection}
We assume the existence of measures that
\emph{check} the top-level constructor,
and \emph{project} their individual fields.
%
%\NV{Remove this pointer since we removed the text}
In \S~\ref{subsec:appendix:embedding} we show how to
use these measures to reflect functions over
datatypes.
%
% Such measures are straightforward to generate
% automatically from the data-type definition.)
%
For example, for lists, we assume the existence of measures:
\begin{code}
  isNil []      = True
  isNil (x:xs)  = False

  isCons (x:xs) = True
  isCons []     = False

  sel1 (x:xs)   = x
  sel2 (x:xs)   = xs
\end{code}

\mypara{Refining Data Constructors with Measures}
We use measures to strengthen the types
of data constructors, and we use these
strengthened types during construction
and destruction (pattern-matching).
Let:
(1)~$\dc$ be a data constructor,
   with \emph{unrefined} type
   $\tfun{\overline{x}}{\overline{\gtyp}}{T}$
(2)~the $i$-th measure definition with
   domain $T$ is:
$$\emeasb{f_i}
         {\gtyp}
         {\efun{x}{\typ}{\ecase{y}{x}{\dc}{\overline{z}}{e_{i}}}}
$$
Then, the refined type of $\dc$ is defined:
$$
\constty{\dc} \defeq
   \tfun{\overline{x}}
        {\overline{\typ}}
        {\tref{v}{T}{ \wedge_i f_i\ v = \SUBST{e_{i}}{\overline{z}}{\overline{x}}}}
$$

Thus, each data constructor's output type is refined to reflect
the definition of each of its measures.
For example, we use the measures "len", "isNil", "isCons", "sel1",
and "sel2" to strengthen the types of $\dnull$ and $\dcons$ to:
\begin{align*}
\constty{\dnull}  \defeq & \tref{v}{\tintlist}{r_{\dnull}} \\
\constty{\dcons}  \defeq & \tfun{x}{\tint}
                                   {\tfun{\mathit{xs}}
                                         {\tintlist}
                                         {\tref{v}{\tintlist}{r_\dcons}}}
\intertext{where the output refinements are}
r_{\dnull} \defeq &\ \mathtt{len}\ v = 0
             \land  \mathtt{isNil}\ v
             \land  \lnot \mathtt{isCons}\ v \\
r_{\dcons} \defeq &\ \mathtt{len}\ v = 1 + \mathtt{len}\ \mathit{xs}
             \land  \lnot \mathtt{isNil}\ v
             \land  \mathtt{isCons}\ v \\
             \land & \  \mathtt{sel1}\ v = x
             \land  \mathtt{sel2}\ v = \mathit{xs}
\end{align*}
It is easy to prove that Lemma~\ref{lemma:constants}
holds for data constructors, by construction.
For example, $\mathtt{len}\ \dnull = 0$ evaluates to $\tttrue$.

%%
%% The above annotation \emph{strengthens} the types of data constructors
%% $\dc_i$ to reflect the
%% behavior of $f$:
%%
%%
%% \preproc{\eletrecoptsmall{f}{\efun{x}{\typ}{\ecase{y}{x}{\dc_i}{\overline{z}}{e_i}}}{\gtyp}{M}{\prog}}
%%
%% %
%% Where \exacttypefun{f}{\gtyp}{e} strengthens the result $v$ of the type \typ to exactly
%% describe that $f\ v = e$:
%% %
%% \begin{align*}
%% \exacttypefun{f}{\tref{v}{\btyp}{r}}{e} &= \tref{v}{\btyp}{r \land f\ v = e}\\
%% \exacttypefun{f}{\tfun{x}{\typ_x}{\typ}}{\efun{y}{}{e}} &=\tfun{x}{\typ_x}{\exacttype{\typ}{e\subst{y}{x}}}
%% \end{align*}

\subsection{Typing Rules}
\begin{figure}[!t]
\emphbf{Typing}\hfill{\fbox{\hastype{\env}{\prog}{\typ}}}\\
$$
\inference{
	\tbind{x}{\gtyp}\in\env
}{
	\hastype{\env}{x}{\gtyp}
}[\rtvar]
\qquad
\inference{
}{
	\hastype{\env}{c}{\constty{c}}
}[\rtconst]
$$

$$
\inference{
	\hastype{\env}{\prog}{\typ'} &
	\issubtype{\env}{\typ'}{\typ}
}{
	\hastype{\env}{\prog}{\typ}
}[\rtsub]
$$
$$
\inference{
    % \haseq{\btyp} &
	\hastype{\env}{e}{\tref{v}{\btyp}{\reft_r}}
}{
	\hastype{\env}{e}{\tref{v}{\btyp}{\reft_r\land v = e}}
}[\rtexact]
$$
$$
\inference{
	\hastype{\env, \tbind{x}{\typ_x}}{e}{\typ}
}{
	\hastype{\env}{\efun{x}{\typ}{e}}{\tfun{x}{\typ_x}{\typ}}
}[\rtfun]
$$

$$
\inference{
	\hastype{\env}{e_1}{(\tfun{x}{\typ_x}{\typ})} &&
	\hastype{\env}{e_2}{\typ_x}
}{
	\hastype{\env}{e_1\ e_2}{\typ}
}[\rtapp]
$$

$$
\inference
	{\hastype{\env, \tbind{x}{\gtyp_x}}{\bd_x}{\gtyp_x} &
	 \iswellformed{\env, \tbind{x}{\gtyp_x}}{\typ_x} \\
	 \hastype{\env, \tbind{x}{\gtyp_x}}{\bd}{\gtyp} &
	 \iswellformed{\env}{\typ}}
	{\hastype{\env}{\eletb{x}{\gtyp_x}{\bd_x}{\bd}}{\typ}}
	[\rtlet]
$$

$$
\inference
	{\hastype{\env}
	 				 {\eletb{f}{\exacttype{\gtyp_f}{e}}{e}{\prog}}
					 {\typ}
	}
	{\hastype{\env}
					 {\erefb{f}{\gtyp_f}{e}{\prog}}
					 {\typ}
	}[\rtreflect]
$$

$$
\inference{
	\hastype{\env}{e}{\tref{v}{T}{e_r}} & \iswellformed{\env}{\typ} \\
	& \forall i. \constty{\dc_i} = \tfunbasic{\overline{\tbind{y_j}{\typ_j}}}{\tref{v}{T}{e_{r_i}}} \\
	& \hastype{\env, \overline{\tbind{y_j}{\typ_j}}, \tbind{x}{\tref{v}{T}{e_r \land e_{r_i}}} }{e_i}{\typ}
}{
	\hastype{\env}{\ecase{x}{e}{\dc_i}{\overline{y_i}}{e_i}}{\typ}
}[\rtcase]
$$
\emphbf{Well Formedness}\hfill{\fbox{\iswellformed{\env}{\typ}}}\\

$$
\inference{
  \hastype{\env,\tbind{v}{\btyp}}{\refa}{\tbool^{\tlabel}}
}{
	\iswellformed{\env}{\tref{v}{\btyp}{\refa}}
}[\rwbase]
$$
$$
\inference{
	\iswellformed{\env}{\typ_x} &
	\iswellformed{\env,\tbind{x}{\typ_x}}{\typ}
}{
	\iswellformed{\env}{\tfun{x}{\typ_x}{\typ}}
}[\rwfun]
$$

\emphbf{Subtyping}\hfill{\fbox{\issubtype{\env}{\typ_1}{\typ_2}}}\\

$$
\inference{
\env' \defeq \env,\tbind{v}{\{\btyp^\tlabel | \refa\}} \\
\tologicshort{\env'}{\refa'}{\tbool}{\pred'}{}{}{} &
\smtvalid{\vcond{\env'}{\pred'}}
%%	\forall \sub\in\interp{\env}.
%%	\interp{\applysub{\sub}{\tref{v}{\btyp}{\refa_1}}}
%%	\subseteq
%%	\interp{\applysub{\sub}{\tref{v}{\btyp}{\refa_2}}}
}{
	\issubtype{\env}{\tref{v}{\btyp}{\refa}}{\tref{v}{\btyp}{\refa'}}
}[\rsubbase]
$$

$$
\inference{
	\issubtype{\env}{\typ_x'}{\typ_x} &
	\issubtype{\env,\tbind{x}{\typ_x'}}{\typ}{\typ'}
}{
	\issubtype{\env}{\tfun{x}{\typ_x}{\typ}}{\tfun{x}{\typ_x'}{\typ'}}
}[\rsubfun]
$$
\caption{Typing of \corelan}
\label{fig:typing}
\end{figure}
Next, we present the type-checking
judgments and rules of \corelan.

\mypara{Environments and Closing Substitutions}
A \emph{type environment} $\env$ is a sequence of type bindings
$\tbind{x_1}{\typ_1},\ldots,\tbind{x_n}{\typ_n}$. An environment
denotes a set of \emph{closing substitutions} $\sto$ which are
sequences of expression bindings:
$\gbind{x_1}{e_1}, \ldots, \gbind{x_n}{e_n}$ such that:
$$
\interp{\env} \defeq  \{\sto \mid \forall \tbind{x}{\typ} \in \Env.
                                    \sto(x) \in \interp{\applysub{\sto}{\typ}} \}
$$

\mypara{Judgments}
We use environments to define three kinds of
rules: Well-formedness, Subtyping,
and Typing~\citep{Knowles10,Vazou14}.
%
%\mypara{Well-formedness}
A judgment \iswellformed{\env}{\typ} states that
the refinement type $\typ$ is well-formed in
the environment $\env$.
Intuitively, the type $\typ$ is well-formed if all
the refinements in $\typ$ are $\tbool$-typed in $\env$.
%
%\mypara{Subtyping}
A judgment \issubtype{\env}{\typ_1}{\typ_2} states
that the type $\typ_1$ is a subtype of %the type
$\typ_2$ in the environment $\env$.
Informally, $\typ_1$ is a subtype of $\typ_2$ if, when
the free variables of $\typ_1$ and $\typ_2$
are bound tomeasures expressions described by $\env$,
the denotation of $\typ_1$
is \emph{contained in} the denotation of $\typ_2$.
Subtyping of basic types reduces to denotational containment checking.
%
%\mypara{Implication}
%%A judgment \issubref{\Env}{p_1}{p_2} states
%%that the predicate $p_1$ \emph{implies}
%%the predicate $p_2$ in the environment $\Env$.
%
That is, for any closing substitution $\sto$
in the denotation of $\env$, for every expression $e$,
if $e \in \interp{\applysub{\sto}{\typ_1}}$ then
$ e \in \interp{\applysub{\sto}{\typ_2}}$.
%
%\mypara{Typing}
A judgment \hastype{\env}{\prog}{\typ} states that
the program $\prog$ has the type $\typ$ in
the environment $\env$.
That is, when the free variables in $\prog$ are
bound to expressions described by $\env$, the
program $\prog$ will evaluate to a value
described by $\typ$.

\mypara{Rules}
All but three of the rules are standard~\cite{Knowles10,Vazou14}.
First, rule \rtreflect is used to strengthen the type of each
reflected binder with its definition, as described previously
in \S~\ref{subsec:appendix:logicalannotations}.
%
% \NV{FIX:Eq}
% applies only to expressions that can be finitely compared
% (\ie whose type satisfies the \haseq{B} predicate) and
Second, rule \rtexact strengthens the expression with
a singleton type equating the value and the expression
(\ie reflecting the expression in the type).
This is a generalization of the ``selfification'' rules
from \cite{Ou2004,Knowles10}, and is required to
equate the reflected functions with their definitions.
For example, the application $(\fib\ 1)$ is typed as
${\tref{v}{\tint}{\fibdef\ v\ 1 \wedge v = \fib\ 1}}$ where
the first conjunct comes from the (reflection-strengthened)
output refinement of \fib~\S~\ref{sec:overview}, and
the second conjunct comes from rule~\rtexact.
Finally, rule \rtfix is used to type the intermediate
$\texttt{fix}$ expressions that appear, not in the
surface language but as intermediate terms in the
operational semantics.

\mypara{Soundness}
Following \undeclang~\citep{Vazou14}, we can show that
evaluation preserves typing and that typing implies
denotational inclusion.
\begin{theorem}{[Soundness of \corelan]}\label{thm:appendix:safety}
\begin{itemize}
\item\textbf{Denotations}
If $\hastype{\env}{\prog}{\typ}$ then
$\forall \sto\in \interp{\env}. \applysub{\sto}{\prog} \in \interp{\applysub{\sto}{\typ}}$.
\item\textbf{Preservation}
If \hastype{\emptyset}{\prog}{\typ}
       and $\evalsto{\prog}{w}$ then $\hastype{\emptyset}{w}{\typ}$.
\end{itemize}
\end{theorem}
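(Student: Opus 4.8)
The plan is to establish the two clauses in sequence, treating the \textbf{Denotations} clause as the main result and deriving \textbf{Preservation} from it. Following the recipe for \undeclang~\citep{Vazou14}, I would prove the Denotations clause by induction on the derivation of $\hastype{\env}{\prog}{\typ}$, universally quantified over all closing substitutions $\sto \in \interp{\env}$. Before the induction I would set up three supporting lemmas: (i) the Constant Typing lemma (Lemma~\ref{lemma:constants}), $c \in \interp{\constty{c}}$, which is already available; (ii) a substitution lemma relating $\interp{\applysub{\sto}{\typ\subst{x}{e_x}}}$ to the denotation obtained by extending $\sto$ with $\gbind{x}{e_x}$, needed for the application rule; and (iii) a subtyping-soundness lemma, that $\issubtype{\env}{\typ_1}{\typ_2}$ implies $\interp{\applysub{\sto}{\typ_1}} \subseteq \interp{\applysub{\sto}{\typ_2}}$ for every $\sto \in \interp{\env}$. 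For the base rule \rsubbase this is where the translation $\tosmt{\cdot}$ enters: I would use its semantic preservation to argue that validity of the rule's verification condition entails the required denotational containment.

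With these in hand, most cases of the induction are routine: \rtvar follows from the definition of $\interp{\env}$, \rtconst from Lemma~\ref{lemma:constants}, \rtsub from subtyping soundness, \rtfun and \rtapp from the induction hypotheses and the substitution lemma, and \rtcase from the measure-refined constructor types (\S~\ref{subsec:appendix:measures}), which propagate the branch-specific refinement into each case environment. The genuinely new cases are \rtexact and \rtreflect. For \rtexact I must show that if $\applysub{\sto}{e} \in \interp{\applysub{\sto}{\tref{v}{\btyp}{r}}}$ then it also inhabits $\interp{\applysub{\sto}{\tref{v}{\btyp}{r \land v = e}}}$; this holds because whenever $\applysub{\sto}{e}$ reduces to a value $w$, the added conjunct $v = e$ instantiated at $w$ reduces to $\etrue$ by determinism of evaluation and reflexivity of equality on values (and, for function results, by the assumed extensional equality). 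The \rtreflect case is handled by desugaring the $\erefname$-binder into the corresponding $\eletname$-binder whose result type is strengthened by $\exacttype{\cdot}{\cdot}$ and then invoking the \rtlet case; the crux is that the reflected equation $v = e$ is satisfied by the function's own fixpoint \emph{by construction}.

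For \textbf{Preservation} I would specialise the Denotations clause to the empty environment and identity substitution to obtain $\prog \in \interp{\typ}$, and then appeal to closure of denotations under reduction: if $e \in \interp{\typ}$ and $\evalsto{e}{e'}$ then $e' \in \interp{\typ}$. This is immediate for refined base types from the ``if $\evalsto{e}{w}$ then $\evalsto{r\subst{x}{w}}{\etrue}$'' clause of the denotation, and for function types it follows from determinism of the call-by-name semantics together with closure of denotations under $\betaeq{}{}$. Applying this to $\evalsto{\prog}{w}$ yields $w \in \interp{\typ}$, and a final step re-packages membership of a value in its denotation as the judgment $\hastype{\emptyset}{w}{\typ}$ via \rtconst/\rtfun, \rtexact, and \rtsub; alternatively Preservation can be obtained directly by subject reduction using the substitution and inversion lemmas. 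Recursion is threaded through all of this by the $\efix$ reductions of Figure~\ref{fig:semantics}, whose intermediate $\texttt{fix}$ terms are typed by rule \rtfix.

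The main obstacle I anticipate is the soundness of reflection itself, i.e.\ the interplay of \rtreflect, \rtexact, and the recursive $\efix$ unfolding. Strengthening a function's output type with an equation that mentions the function is superficially circular, and it is \emph{unsound} for non-terminating definitions --- reflecting $f\,x = 1 + f\,x$ would let one derive $0 = 1$. The argument must therefore lean essentially on the termination guarantee encoded by the $\tlabel$ marker: reflection is permitted (\S~\ref{subsec:appendix:logicalannotations}) only for provably terminating, $\tlabel$-typed functions, so every reflected application reduces to a genuine value at which the reflected equation is discharged by the \rtexact reasoning above. Making this precise --- showing the fixpoint inhabits the reflected type while respecting the admissibility conditions needed to take fixpoints of the denotation --- is the delicate part; the remaining cases adapt the \undeclang development with little change.
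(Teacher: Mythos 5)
Your treatment of the \textbf{Denotations} clause is essentially the paper's own proof: the paper likewise proceeds by induction on the typing derivation, inherits every unchanged case from the \undeclang development~\citep{Vazou14}, and re-checks only the modified rules --- \rtexact via precisely your argument (if $\evalsto{\applysub{\sto}{e}}{w}$ then the added conjunct instantiated at $w$ reduces to $w = w$ and hence to $\etrue$), \rtreflect by desugaring the $\erefname$-binder into the \rtlet form strengthened by $\exacttype{\cdot}{\cdot}$, and the intermediate fixpoints via $\efix{}_\typ \in \interp{(\typ \rightarrow \typ) \rightarrow \typ}$ imported from the prior development.

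The gap is in your primary route to \textbf{Preservation}. The paper does not extract Preservation from Denotations; it proves a separate single-step preservation (subject reduction) lemma --- if $\hastype{\emptyset}{\prog}{\typ}$ and $\evals{\prog}{\prog'}$ then $\hastype{\emptyset}{\prog'}{\typ}$ --- again by induction on the typing derivation with only the modified rules re-checked, and iterates it along the reduction sequence. Your route instead goes through the semantics: from $\prog \in \interp{\typ}$ and closure of denotations under evaluation you get $w \in \interp{\typ}$, and you then propose to ``re-package'' this as the judgment $\hastype{\emptyset}{w}{\typ}$. That step is the problem: membership in $\interp{\typ}$ is a purely semantic fact and does not produce a typing derivation. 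For a function type, $w \in \interp{\trfun{x}{\typ_x}{\typ}}$ constrains only the behaviour of the applications $w\ e_x$, whereas deriving $\hastype{\emptyset}{w}{\trfun{x}{\typ_x}{\typ}}$ by \rtfun requires typing the body of $w$ under $\tbind{x}{\typ_x}$; in general ``semantic typing implies syntactic typing'' is a completeness property that the paper never establishes and that this system gives you no reason to expect. Your one-line fallback (``alternatively \ldots by subject reduction'') is in fact the paper's actual proof, but it cannot be left as an aside: the delicate case is again \rtexact, where after a step $\evals{\prog}{\prog'}$ the reflected conjunct $v = \prog$ must be re-established --- the paper does this by applying \rtexact to $\prog'$ to obtain the conjunct $v = \prog'$ and then converting back via \rtsub, using the fact that subtyping is closed under evaluation. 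So the Denotations half stands as written, but the Preservation half needs to be redone as a syntactic subject-reduction induction rather than derived from the semantic statement.
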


\subsection{From Programs \& Types to Propositions \& Proofs}

The denotational soundness Theorem~\ref{thm:appendix:safety}
lets us interpret well typed programs as proofs of
propositions.

%\NV{say that definition is a context that will be used later}
\mypara{``Definitions''}
A \emph{definition} $\defn$ is a sequence of reflected binders:
$$\defn \ ::= \ \bullet \spmid \erefb{x}{\gtyp}{e}{\defn}$$
A \emph{definition's environment} $\env(\defn)$ comprises
its binders and their reflected types:
\begin{align*}
\aenv(\bullet)                    \defeq & \emptyset \\
\aenv(\erefb{f}{\gtyp}{e}{\defn}) \defeq & (f, \exacttype{\gtyp}{e}),\ \env(\defn) \\
\end{align*}
A \emph{definition's substitution} $\sto(\defn)$ maps each binder
to its definition:
\begin{align*}
\sto(\bullet)                     \defeq & \emptysto \\
\sto(\erefb{f}{\gtyp}{e}{\defn})  \defeq & \extendsto{f}{\efix{f}\ e}{\sto(\defn)}
\end{align*}

\mypara{``Propositions''}
A \emph{proposition} is a type
$$\tbind{x_1}{\typ_1} \rightarrow \ldots
  \rightarrow \tbind{x_n}{\typ_n}
  \rightarrow \tref{v}{\tunit}{\ppn}$$
For brevity, we abbreviate propositions like the above to
$\tfun{\overline{x}}{\overline{\typ}}{\ttref{\ppn}}$
and we call $\ppn$ the \emph{proposition's refinement}.
For simplicity we assume that $\freevars{\typ_i} = \emptyset$.

\mypara{``Validity''}
%
%\NV{add termination: proofs should provably terminates}

A proposition $\tfun{\overline{x}}{\overline{\typ}}{\ttref{\ppn}}$
is \emph{valid under} $\defn$ if
$$\forall \overline{w} \in \interp{\overline{\typ}}.\
  \evalsto{\applysub{\sto(\defn)}{\SUBST{\ppn}{\overline{x}}{\overline{w}}}}{\etrue}$$
That is, the proposition is valid if its refinement
evaluates to $\etrue$ for every (well typed)
interpretation for its parameters $\overline{x}$
under $\defn$.

\mypara{``Proofs''}
A binder $\bd$ \emph{proves} a proposition $\gtyp$ under $\defn$ if
$$\hastype{\emptyset}{\defn[\eletb{x}{\typ}{\bd}{\eunit}]}{\tunit}$$
That is, if the binder $\bd$ has the proposition's type $\gtyp$
under the definition $\defn$'s environment.

\begin{theorem}{[Proofs]} \label{thm:validity}
If $\bd$ proves $\typ$ under $\defn$ then $\typ$ is valid under $d$.
\end{theorem}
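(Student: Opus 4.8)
The plan is to reduce the statement to the denotational soundness of \corelan (Theorem~\ref{thm:appendix:safety}). Unfolding the definitions, the hypothesis ``$\bd$ proves $\typ$ under $\defn$'' is literally the judgment $\hastype{\emptyset}{\defn[\eletb{x}{\typ}{\bd}{\eunit}]}{\tunit}$, and, writing $\typ = \tfun{\overline{x}}{\overline{\typ}}{\ttref{\ppn}}$, the goal is that $\evalsto{\applysub{\sto(\defn)}{\SUBST{\ppn}{\overline{x}}{\overline{w}}}}{\etrue}$ for every $\overline{w} \in \interp{\overline{\typ}}$. First I would invert the typing derivation. The leading $\erefname$ binders are each discharged by rule $\rtreflect$, which retypes $\erefb{f_i}{\gtyp_i}{e_i}{\cdot}$ as the $\eletname$ binder $\eletb{f_i}{\exacttype{\gtyp_i}{e_i}}{e_i}{\cdot}$; running all of them through $\rtlet$ accumulates exactly the bindings $\overline{(f_i, \exacttype{\gtyp_i}{e_i})}$, i.e. the definition environment $\env(\defn)$. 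The innermost $\rtlet$ for $\eletb{x}{\typ}{\bd}{\eunit}$ then yields the working judgment $\hastype{\env(\defn), \tbind{x}{\typ}}{\bd}{\typ}$.

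Next I would apply the Denotations half of Theorem~\ref{thm:appendix:safety} to this judgment using the canonical closing substitution $\sto(\defn)$, which maps each $f_i$ to its recursive definition $\efix{(\efun{f_i}{\gtyp_i}{e_i})}$ (extended on $x$ by the fixpoint of $\bd$). Assuming $\sto(\defn) \in \interp{\env(\defn)}$ (see below), soundness gives $\applysub{\sto(\defn)}{\bd} \in \interp{\applysub{\sto(\defn)}{\typ}}$. Because $\freevars{\typ_i} = \emptyset$, the substitution leaves each argument type $\typ_i$ untouched, so $\applysub{\sto(\defn)}{\typ} = \tfun{\overline{x}}{\overline{\typ}}{\ttref{\applysub{\sto(\defn)}{\ppn}}}$; crucially, the $f_i$ occurring in $\ppn$ are replaced by their definitions, matching the $\sto(\defn)$ in the validity statement.

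Then I would unfold the two relevant clauses of the denotation. By the function-type clause, for any $\overline{w} \in \interp{\overline{\typ}}$ the application $\applysub{\sto(\defn)}{\bd}\ \overline{w}$ lies in $\interp{\ttref{\applysub{\sto(\defn)}{\SUBST{\ppn}{\overline{x}}{\overline{w}}}}}$. By the refined-base clause at $\tunit$, this says: whenever $\applysub{\sto(\defn)}{\bd}\ \overline{w}$ reduces to a value $w'$, then $\applysub{\sto(\defn)}{\SUBST{\ppn}{\overline{x}}{\overline{w}}}\subst{v}{w'}$ reduces to $\etrue$. Since proof terms are total (\S~\ref{sec:formalism} enforces termination and totality), the application does reduce to a value, and as the unit value is the unique inhabitant of $\tunit$ the substitution $\subst{v}{w'}$ is inert; this gives exactly $\evalsto{\applysub{\sto(\defn)}{\SUBST{\ppn}{\overline{x}}{\overline{w}}}}{\etrue}$, the required validity.

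The hard part is the side condition $\sto(\defn) \in \interp{\env(\defn)}$: I must show that the recursive, reflected definitions actually inhabit their own reflected types $\exacttype{\gtyp_i}{e_i}$ under the substitution that replaces every $f_j$ by its definition. This is delicate because of the self-reference—one cannot invoke the Denotations lemma directly, since that lemma presupposes a closing substitution already known to lie in $\interp{\env(\defn)}$. I would instead establish it as a separate lemma, built binder-by-binder, appealing at each stage to the premise $\hastype{\env(\defn), \tbind{x}{\typ}}{e_i}{\exacttype{\gtyp_i}{e_i}}$ extracted from the corresponding $\rtreflect$/$\rtlet$ step, with termination of the reflected functions (required by reflection, \S~\ref{subsec:appendix:logicalannotations}) supplying the well-foundedness needed to close each fixpoint. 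This is precisely where the totality restriction on $\exacttype{\cdot}{\cdot}$ pays off: it is what makes strengthening an output type with $v = e_i$ denotationally sound rather than vacuous.
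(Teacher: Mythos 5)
Your proposal is correct and takes essentially the same route as the paper's own proof: reduce everything to the denotational soundness theorem (Theorem~\ref{thm:appendix:safety}), instantiate the typing of $\bd$ at the canonical substitution $\sto(\defn)$, and unfold the denotation of the proposition type to read off validity. The only difference is bookkeeping rather than substance: the side condition $\sto(\defn) \in \interp{\env(\defn)}$, which you isolate as a separate binder-by-binder lemma, is exactly what the paper extracts in one step by applying the soundness theorem to the whole-program judgment (whose proof already performs your induction over the $\erefname$/$\eletname$ prefix via the fixpoint typing), and your explicit appeal to totality of proof terms just spells out what the paper compresses into its terse ``by the definition of type denotations'' conclusion.
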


\begin{proof}
As $\bd$ proves $\typ$ under $\defn$, we have
\begin{align}
\hastype{\emptyset}{\defn[\eletb{x}{\typ}{\bd}{\eunit}]}{\tunit}
\label{pf:1} \\
\intertext{By Theorem~\ref{thm:appendix:safety} on \ref{pf:1} we get}
\sto(\defn) \in \interp{\env(\defn)} \label{pf:2}\\
\intertext{Furthermore,  by the typing rules \ref{pf:1}
implies $\hastype{\env(\defn)}{\bd}{\typ}$ and hence, via Theorem~\ref{thm:appendix:safety}}
\forall \sub \in \interp{\env(\defn)}.\ \applysub{\sub}{\bd} \in \interp{\applysub{\sub}{\typ}}
\label{pf:3} \\
\intertext{Together, \ref{pf:2} and \ref{pf:3} imply}
\applysub{\sto(\defn)}{\bd} \in \interp{\applysub{\sto(\defn)}{\typ}}
\label{pf:4}
\intertext{By the definition of type denotations, we have}
\interp{\applysub{\sto(\defn)}{\typ}}
  \defeq \{ f\ |\ \typ \mbox{ is valid under}\ \defn\}
  \label{pf:5}
\end{align}
By \ref{pf:4}, the above set is not empty, and hence $\typ$ is valid under $\defn$.
\end{proof}

\mypara{Example: Fibonacci is increasing}
In \S~\ref{sec:overview} we verified that
under a definition $\defn$ that includes \fibname,
the term \fibincrname proves
$${\tfun{n}{\tnat}{\ttref{\fib{n} \leq \fib{(n+1)}}}}$$
Thus, by Theorem~\ref{thm:validity} we get
%that the \fib{}, as defined in $\defn$ is increasing
%
%$$
%\forall n\in\interp{\tnat}. \evalsto{\fib{n} \leq \fib{(n+1)}}{\etrue}
%$$
%Equivalently
$$
\forall n. \evalsto{0 \leq n}{\etrue} \Rightarrow \evalsto{\fib{n} \leq \fib{(n+1)}}{\etrue}
$$

\section{Proof of Soundness}

We prove Theorem~\ref{thm:safety}
of \S~\ref{sec:appendix:types-reflection}
by reduction to Soundness of \undeclang~\citep{Vazou14}. 

\begin{theorem}{[Denotations]}~\label{tech:thm:denotations}
If $\hastype{\env}{\prog}{\typ}$ then
$\forall \sto\in \interp{\env}. \applysub{\sto}{\prog} \in \interp{\applysub{\sto}{\typ}}$.
\end{theorem}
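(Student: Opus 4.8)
The plan is to proceed by reduction to the denotational soundness of \undeclang~\citep{Vazou14}, exploiting the fact that \corelan is precisely \undeclang augmented with explicit recursion (the $\texttt{fix}$ reductions) and the reflection binder $\erefname$. The argument is an induction on the derivation of $\hastype{\env}{\prog}{\typ}$. For every rule shared with \undeclang --- namely \rtvar, \rtconst, \rtsub, \rtfun, \rtapp, \rtlet, and \rtcase --- the case is identical to the corresponding case of \undeclang's soundness proof: for a fixed $\sto\in\interp{\env}$, membership of $\applysub{\sto}{\prog}$ in $\interp{\applysub{\sto}{\typ}}$ follows directly from the induction hypotheses, using Lemma~\ref{lemma:constants} for constants and the definition of $\interp{\cdot}$ on function and data types for the rest. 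The real work is therefore confined to the two rules that are new to \corelan, \rtexact and \rtreflect, together with the typing of the intermediate $\texttt{fix}$ terms produced by the recursive reductions.

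First I would dispatch \rtexact. Suppose $\hastype{\env}{e}{\tref{v}{\btyp}{r}}$ and fix $\sto\in\interp{\env}$; by the induction hypothesis $\applysub{\sto}{e}\in\interp{\applysub{\sto}{\tref{v}{\btyp}{r}}}$. I must show $\applysub{\sto}{e}\in\interp{\applysub{\sto}{\tref{v}{\btyp}{r\land v = e}}}$, i.e.\ that the erased type is unchanged (inherited) and that whenever $\applysub{\sto}{e}\evalsto w$ both $\evalsto{r\subst{v}{w}}{\etrue}$ and $\evalsto{(v = e)\subst{v}{w}}{\etrue}$ hold. The first conjunct is exactly the induction hypothesis. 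The second, after substitution and closing under $\sto$, is the closed equality $w = \applysub{\sto}{e}$; since $\applysub{\sto}{e}\evalsto w$, determinism of the operational semantics yields $\evalsto{(w = \applysub{\sto}{e})}{\etrue}$. This is the generalized ``selfification'' argument of \S~\ref{subsec:appendix:logicalannotations}, and it is the source of the equalities that let reflected functions be equated with their definitions.

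The crux of the proof is \rtreflect. Here a derivation of $\hastype{\env}{\erefb{f}{\gtyp_f}{e}{\prog}}{\typ}$ is obtained from a derivation of $\hastype{\env}{\eletb{f}{\exacttype{\gtyp_f}{e}}{e}{\prog}}{\typ}$, so once this premise is available the case collapses to the already-handled \rtlet case. The obligation that \rtlet imposes, and that must be discharged, is that $e$ genuinely inhabits the \emph{reflected} type $\exacttype{\gtyp_f}{e}$ under the recursive binding of $f$. I would establish this with an auxiliary lemma: for every $\sto$ and every terminating $e$ with $\applysub{\sto}{e}\in\interp{\applysub{\sto}{\gtyp_f}}$ one has $\applysub{\sto}{e}\in\interp{\applysub{\sto}{\exacttype{\gtyp_f}{e}}}$, proved by induction on the structure of $\gtyp_f$ following the two defining clauses of $\exacttype{\cdot}{\cdot}$ --- the base clause reduces to the selfification fact just proved, and the arrow clause pushes the strengthening under the binder and appeals to the induction hypothesis. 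This is where the \textbf{main obstacle} lies: because the strengthened refinement mentions $f$ itself and $f$ is bound recursively via $\texttt{fix}$, soundness depends on the termination label $\tlabel$ ensuring that $\exacttype{\cdot}{\cdot}$ is applied only to provably terminating bodies. Without it the recursive unfolding need not reach a value, so the equation $v = e$ would be validated vacuously rather than by evaluation, admitting inconsistencies such as a proof of $0 = 1$. Hence the termination restriction on $\exacttype{\cdot}{\cdot}$ is exactly what makes the inhabitation lemma --- and therefore this case --- go through, after which the remaining $\texttt{fix}$ machinery is handled by the inherited \undeclang soundness result.
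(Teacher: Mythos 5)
Your overall skeleton matches the paper's proof: both arguments reduce to the denotational soundness of \undeclang (Lemma~4 of~\citep{Vazou14-tech}), proceed by induction on the typing derivation, keep the shared rules verbatim, and re-verify only the modified rules; your \rtexact case is exactly the paper's (IH for the original refinement, then the closed equality between the value $w$ and the closed instance of $e$ discharged by evaluation). The gap is in \rtreflect. The premise of \rtreflect is a derivation of $\env \vdash \eletb{f}{\exacttype{\gtyp_f}{e}}{e}{\prog} : \typ$, in which $e$ is \emph{already} checked against the reflected type $\exacttype{\gtyp_f}{e}$, under $f$ bound at that very type: the type system verifies the strengthened refinement, it does not trust it. Hence by inversion the induction hypothesis applies to that premise as it stands, and there is no residual semantic obligation that ``$e$ genuinely inhabits the reflected type.'' Your auxiliary inhabitation lemma --- from $\applysub{\sto}{e} \in \interp{\applysub{\sto}{\gtyp_f}}$ conclude $\applysub{\sto}{e} \in \interp{\applysub{\sto}{\exacttype{\gtyp_f}{e}}}$ --- has a hypothesis the induction never supplies (no sub-derivation of the given derivation types $e$ at the bare $\gtyp_f$) and a conclusion the induction already supplies, so it is both unusable where you want to use it and unnecessary. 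What your write-up omits is the step the case actually needs: $\applysub{\sto}{\erefb{f}{\gtyp_f}{e}{\prog}}$ and $\applysub{\sto}{\eletb{f}{\exacttype{\gtyp_f}{e}}{e}{\prog}}$ contract to the same term (annotations aside), so one must invoke closure of denotations under evaluation to transfer membership in $\interp{\applysub{\sto}{\typ}}$ from the let-term to the reflect-term. That observation, plus the IH on the premise, is the entire case in the paper.

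Two smaller points. First, your claim that this case hinges on the termination label is misplaced for \emph{this} theorem: denotations only constrain terminating evaluations, so a diverging reflected body satisfies the conjunct $v = e$ vacuously without falsifying the denotational statement; the termination side-condition on $\exacttype{\cdot}{\cdot}$ is what keeps the \emph{logical} embedding consistent (where every term denotes and functions are total, so reflecting a diverging definition lets one derive $0 = 1$), not what makes this induction go through. Second, because \corelan's $\eletname$-binders are recursive, \rtlet is itself a modified rule: the paper re-proves it using the typing of $\efix{}$, namely $\efix{}_\typ \in \interp{(\typ \rightarrow \typ) \rightarrow \typ}$, whereas you fold it into the ``shared'' rules --- harmless in spirit, but the fix-typing is precisely where the recursion introduced by both \rtlet and \rtreflect is discharged, so it deserves an explicit case.
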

\begin{proof}
We use the proof from~\citep{Vazou14-tech} and specifically Lemma 4
that is identical to the statement we need to prove. 
Since the proof proceeds by induction in the type derivation, 
we need to ensure that all the modified rules satisfy the statement. 
\begin{itemize}
\item\rtexact
 Assume 
 	\hastype{\env}{e}{\tref{v}{\btyp}{\reft_r\land v = e}}.
 By inversion
	\hastype{\env}{e}{\tref{v}{\btyp}{\reft_r}}(1). 
 By (1) and IH we get 
 $\forall \sto\in \interp{\env}. 
   \applysub{\sto}{e} \in \interp{\applysub{\sto}{\tref{v}{\btyp}{\reft_r}}}$.
 We fix a $\sto\in \interp{\env}$
 We get that if \evalsto{\applysub{\sto}{e}}{w}, 
 then $\evalsto{\applysub{\sto}{\reft_r}\subst{v}{w}}{\etrue}$.  
 By the Definition of $=$ we get that 
 $\evalsto{w = w}{\etrue}$. 
 Since $\evalsto{\applysub{\sto}{(v = e)}\subst{v}{w}}{w = w}$, 
 then $\evalsto{\applysub{\sto}{(\reft_r\land v = e)}\subst{v}{w}}{\etrue}$.  
 Thus
   $\applysub{\sto}{e} \in \interp{\applysub{\sto}{\tref{v}{\btyp}{\reft_r\land v = e}}}$
  and since this holds for any fixed $\sto$,  
 $\forall \sto\in \interp{\env}. 
   \applysub{\sto}{e} \in \interp{\applysub{\sto}{\tref{v}{\btyp}{\reft_r\land v = e}}}$.
\item\rtlet
  Assume 
	\hastype{\env}{\eletb{x}{\gtyp_x}{e_x}{\prog}}{\typ}.  
  By inversion
	\hastype{\env, \tbind{x}{\gtyp_x}}{e_x}{\gtyp_x} (1), 
	\hastype{\env, \tbind{x}{\gtyp_x}}{\prog}{\gtyp} (2), and
    \iswellformed{\env}{\typ} (3). 
 By IH 
	$\forall \sto\in \interp{\env, \tbind{x}{\gtyp_x}}. 
	\applysub{\sto}{e_x} \in \interp{\applysub{\sto}{\gtyp_x}}$ (1')
	$\forall \sto\in \interp{\env, \tbind{x}{\gtyp_x}}. 
	\applysub{\sto}{\prog} \in \interp{\applysub{\sto}{\gtyp}}$ (2'). 
 By (1') and by the type of $\efix{}$ 
	$\forall \sto\in \interp{\env, \tbind{x}{\gtyp_x}}. 
	\applysub{\sto}{\efix{x}\ e_x} \in \interp{\applysub{\sto}{\gtyp_x}}$. 
 By which,  (2') and (3)
	$\forall \sto\in \interp{\env}. 
	\applysub{\sto}{\SUBST{\prog}{x}{\efix{x}\ {e_x}}} \in \interp{\applysub{\sto}{\gtyp}}$.  
%% \NV{CHECK}
\item\rtreflect
  Assume 
  \hastype{\env}{\erefb{f}{\gtyp_f}{e}{\prog}}
			    {\typ}. 
  By inversion, 
    \hastype{\env}{\eletb{f}{\exacttype{\gtyp_f}{e}}{e}{\prog}}
			     {\typ}. 
  By IH, 
 	$\forall \sto\in \interp{\env}. 
	\applysub{\sto}{\eletb{f}{\exacttype{\gtyp_f}{e}}{e}{\prog}} \in \interp{\applysub{\sto}{\gtyp}}$.  
  Since denotations are closed under evaluation, 
	$\forall \sto\in \interp{\env}. 
	\applysub{\sto}{\erefb{f}{\exacttype{\gtyp_f}{e}}{e}{\prog}} \in \interp{\applysub{\sto}{\gtyp}}$.  

\item\rtfix
  In Theorem 8.3 from~\citep{Vazou14-tech} (and using the textbook proofs from~\citep{PLC})
  we proved that for each type $\typ$, $\efix{}_\typ \in \interp{(\typ \rightarrow \typ) \rightarrow \typ}$.
\end{itemize}
\end{proof}

\begin{theorem}{[Preservation]}
If \hastype{\emptyset}{\prog}{\typ}
       and $\evalsto{\prog}{w}$ then $\hastype{\emptyset}{w}{\typ}$.
\end{theorem}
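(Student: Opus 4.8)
The plan is to establish preservation by first proving the single-step version and then lifting it to the reflexive–transitive closure $\evalsto{}{}$ by a routine induction on the number of steps. Concretely, I would show: if $\hastype{\emptyset}{\prog}{\typ}$ and $\goesto{\prog}{\prog'}$, then $\hastype{\emptyset}{\prog'}{\typ}$. Following the same strategy as the companion Denotations theorem, I would not reprove single-step preservation from scratch but instead \emph{reduce to the preservation theorem of} \undeclang~\citep{Vazou14,Vazou14-tech}: since that proof proceeds by induction on the typing derivation together with a case analysis on the contextual reduction rules of Figure~\ref{fig:semantics}, all cases whose typing and reduction rules are shared with \undeclang go through verbatim. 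It therefore suffices to re-verify the cases for the constructs that are \emph{new} in \corelan, namely the reduction rules for $\erefname$, let, and fix, and the modified typing rules \rtexact, \rtlet, \rtreflect, and \rtfix.

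For the new reduction rules I would appeal to the standard substitution lemma (typing is preserved when a variable is replaced by a closed expression of its type) together with the typing of $\efix$. The rule $\goesto{\efix{\prog}}{\prog\ (\efix{\prog})}$ is handled by \rtfix, using the fact — already established in the Denotations proof via Theorem 8.3 of~\citep{Vazou14-tech} — that $\efix{}_\typ \in \interp{(\typ \rightarrow \typ) \rightarrow \typ}$, so $\prog$ is typeable at $(\typ \rightarrow \typ) \rightarrow \typ$ and the application yields the result type $\typ$. The let- and $\erefname$-unfoldings both reduce to substituting a $\efix$-wrapped $\lambda$-abstraction for the bound variable (the $\erefname$ case being identified with a let via \rtreflect), so preservation for them follows by combining \rtfix, \rtfun, and the substitution lemma with the observation that reflection merely \emph{strengthens} the output refinement, a strengthening that \rtsub can discard when re-typing the unfolded term.

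The main obstacle is the selfification rule \rtexact, precisely because the strengthened refinement $v = e$ \emph{embeds the very term being reduced}. When $\goesto{e}{e'}$, the inductive hypothesis yields $\hastype{\emptyset}{e'}{\tref{v}{\btyp}{r}}$ and re-applying \rtexact gives $\hastype{\emptyset}{e'}{\tref{v}{\btyp}{r \land v = e'}}$, but the goal type still mentions the \emph{old} expression $e$. To bridge this gap I would prove a small lemma that a single reduction step implies $\betaeq{e}{e'}$, and that $\beta$-equivalent expressions induce refinements with identical denotations; hence $\tref{v}{\btyp}{r \land v = e'}$ and $\tref{v}{\btyp}{r \land v = e}$ denote the same set of expressions and are mutual subtypes under \rsubbase. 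A final use of \rtsub then re-types $e'$ at the original $\typ$. The remaining congruence (contextual) reductions reduce to this core argument by a standard decomposition of the evaluation context $C$, so the bulk of the genuine work lies in the $\beta$-equivalence argument guarding \rtexact.
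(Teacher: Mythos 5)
Your proof skeleton coincides with the paper's: prove single-step preservation, lift it to $\evalsto{\prog}{w}$ by induction on the length of the reduction sequence, and obtain the single-step result by reduction to the induction-on-typing-derivations proof of \undeclang~\citep{Vazou14-tech}, re-verifying only the modified rules. Your \rtexact case is also the paper's argument, spelled out in more detail: after inversion and the IH, re-apply \rtexact to $\prog'$, then cross from $\tref{v}{\btyp}{r \land v = \prog'}$ back to the goal type $\tref{v}{\btyp}{r \land v = \prog}$ by \rtsub. The paper justifies the needed subtyping with the one-line assertion that ``subtyping is closed under evaluation''; your lemma that $\goesto{\prog}{\prog'}$ implies $\betaeq{\prog}{\prog'}$, and hence equal denotations for the two selfified refinements, is exactly the content of that assertion.

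The genuine gap is in the $\eletname$/$\erefname$ unfoldings. You propose that the reflection-induced strengthening ``can be discarded by \rtsub when re-typing the unfolded term.'' That is the wrong direction, and the step fails as stated. Inverting \rtreflect yields the premises of a let at the \emph{strengthened} binder type $\gtyp' \equiv \exacttype{\gtyp}{e}$; in particular $\tbind{f}{\gtyp'} \vdash \prog : \typ$. The contractum is $\SUBST{\prog}{f}{\efix{(\efun{f}{\gtyp}{e})}}$, so the substitution lemma requires the substituted fix-term to be typed at $\gtyp'$ itself. If you instead weaken it to $\gtyp$ by subsumption, you are left needing to re-type $\prog$ under the \emph{weaker} hypothesis $\tbind{f}{\gtyp}$ --- anti-narrowing, which refinement typing does not admit. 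The paper's proof \emph{retains} the strengthening: from the inverted premise $\tbind{f}{\gtyp'} \vdash e : \gtyp'$, rule \rtfix types the fix-term at $\gtyp'$, and then Lemma 6 of~\citep{Vazou14-tech} (substitution) together with well-formedness of $\typ$ (which guarantees $\typ$ does not mention $f$) closes the case; nothing is ever discarded. Two smaller slips: in your fix case it is $\efix{}$, not $\prog$, that has type $(\typ \to \typ) \to \typ$ (with $\prog : \typ \to \typ$); and the paper dispatches the \rtfix typing case simply by noting that the constant $\efix{}$ by itself never reduces.
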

\begin{proof}
In~\citep{Vazou14-tech} proof proceeds by iterative application 
of Type Preservation Lemma 7. 
Thus, it suffices to ensure Type Preservation in \corelan, which 
it true by the following Lemma.
\end{proof}

\begin{lemma}
If \hastype{\emptyset}{\prog}{\typ}
       and $\evals{\prog}{\prog'}$ then $\hastype{\emptyset}{\prog'}{\typ}$.
\end{lemma}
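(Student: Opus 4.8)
The plan is to establish single-step subject reduction by case analysis on the reduction rule justifying $\evals{\prog}{\prog'}$, with inversion on the derivation of $\hastype{\emptyset}{\prog}{\typ}$ in each case. This suffices for the Preservation theorem, which already reduces the multi-step claim along $\evalsto{\prog}{w}$ to repeated applications of this lemma. Before the case split I would record the two workhorses of the proof: a \emph{substitution lemma}, stating that $\hastype{\env, \tbind{x}{\typ_x}}{e}{\typ}$ and $\hastype{\env}{e_x}{\typ_x}$ imply $\hastype{\env}{\SUBST{e}{x}{e_x}}{\SUBST{\typ}{x}{e_x}}$, and the Constant Typing Lemma~\ref{lemma:constants}. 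Together these discharge the reduction rules that \corelan inherits from \undeclang~\cite{Vazou14} with no change.

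The contextual rule $C[\prog] \hookrightarrow C[\prog']$ I would handle by induction on the shape of $C$ (Figure~\ref{fig:semantics}): inverting the typing judgment at the hole gives a type for $\prog$, the induction hypothesis yields the same type for $\prog'$, and reassembling the surrounding elimination form (application, constructor argument, or case scrutinee) re-derives the type of $C[\prog']$, inserting \rtsub only where a premise fixes the shape of the subterm's type. The primitive reductions then split into the standard cases and the three new to \corelan. For $c\ v \hookrightarrow \ceval{c}{v}$ I invert \rtapp and apply Constant Typing to conclude that $\ceval{c}{v}$ inhabits the substituted result type; for $(\efun{x}{\typ}{e})\ e' \hookrightarrow \SUBST{e}{x}{e'}$ I invert \rtapp and \rtfun and apply the substitution lemma; and for case reduction I invert \rtcase and use the measure-refined constructor types of \S~\ref{subsec:appendix:measures} to show the selected branch has the goal type under the substitution of the scrutinee's fields.

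The three \corelan-specific reductions all introduce a $\efix{}$ term. For $\eletb{x}{\gtyp}{\bd_x}{\bd} \hookrightarrow \SUBST{\bd}{x}{\efix{(\efun{x}{\gtyp}{\bd_x})}}$, inverting \rtlet supplies $\hastype{\env, \tbind{x}{\gtyp}}{\bd_x}{\gtyp}$ and $\hastype{\env, \tbind{x}{\gtyp}}{\bd}{\typ}$; I type the fixpoint at $\gtyp$ via \rtfix (sound because $\efix{}_\typ \in \interp{(\typ \rightarrow \typ) \rightarrow \typ}$, already shown in the Denotations proof) and finish by the substitution lemma. The reflect-unfolding $\erefb{f}{\gtyp}{e}{\prog} \hookrightarrow \SUBST{\prog}{f}{\efix{(\efun{f}{\gtyp}{e})}}$ is identical once \rtreflect lets me re-read the binder as a let whose type is the strengthened $\exacttype{\gtyp}{e}$, and $\efix{\prog} \hookrightarrow \prog\ (\efix{\prog})$ follows by inverting \rtfix and applying \rtapp.

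I expect the reflect-unfolding step to be the main obstacle. The difficulty is circular: the body $e$ is checked against a type whose refinement, via \rtexact, already contains the singleton equation $v = f$ that mentions the name being defined, so preservation of $\exacttype{\gtyp}{e}$ across the unrolling of the recursive definition cannot be read off a plain substitution. I would discharge it exactly as in the Denotations argument above, which already treats \rtexact, \rtlet, \rtreflect, and \rtfix uniformly: the exactness refinement is justified denotationally, and the accompanying termination/totality side-conditions on reflected binders (\S~\ref{subsec:logicalannotations}) are what keep the equation $v = f$ consistent under unfolding, preventing the inconsistency a diverging definition would introduce. This reduces the remaining obligation to the corresponding type-preservation argument of \undeclang~\cite{Vazou14-tech}, now carried out on the reflected, exactness-strengthened types.
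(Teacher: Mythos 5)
Your high-level route differs from the paper's mainly in bookkeeping: the paper proves this lemma by induction on the typing derivation, delegating every rule that \corelan shares with \undeclang to the preservation proof of~\citep{Vazou14-tech} and re-checking only the modified rules (\rtexact, \rtlet, \rtreflect, \rtfix), whereas you do case analysis on the reduction with inversion on typing. For the genuinely new reductions your plan coincides with the paper's: invert \rtlet (resp.\ \rtreflect, re-read as a let at the strengthened type $\exacttype{\gtyp}{e}$), type the unrolled fixpoint by \rtfix, and apply the substitution lemma (Lemma~6 of~\citep{Vazou14-tech}) --- though note you also need well-formedness of $\typ$ in the outer environment so that substituting the fixpoint into $\typ$ leaves it unchanged, a step the paper records explicitly.

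The genuine gap is \rtexact. That rule types any expression $\prog$ at a selfified type $\tref{v}{\btyp}{r \land v = \prog}$ whose refinement mentions the program itself, and the lemma demands preservation at that \emph{same} type. After the step, re-deriving \rtexact for $\prog'$ only produces $\tref{v}{\btyp}{r \land v = \prog'}$; to get back to the stated type the paper uses the fact that \emph{subtyping is closed under evaluation}: since $\evals{\prog}{\prog'}$, the two selfified refinements agree on all values, so the denotational rule \rsubbasecore yields $\tref{v}{\btyp}{r \land v = \prog'} \preceq \tref{v}{\btyp}{r \land v = \prog}$, and \rtsub concludes. Your proposal never performs this step, and in your reduction-directed organization it cannot be confined to one case: \rtexact, like \rtsub, is not syntax-directed, so every one of your inversions --- including the contextual case, where the type of $C[\prog]$ may mention the term $C[\prog]$ itself --- must pass through it, and with your decomposition there is no typing-derivation induction hypothesis available to handle the premise of \rtexact. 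Your one gesture toward the issue, discharging the reflect case ``denotationally, as in the Denotations argument,'' does not by itself produce the required syntactic judgment: a denotational fact re-enters the type system only through \rsubbasecore and \rtsub, which is exactly the step that is missing. Once you state subtyping-closure-under-evaluation as a lemma and thread it through every inversion (or reorganize as induction on the typing derivation, so \rtexact becomes a single case as in the paper), the rest of your argument goes through.
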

\begin{proof}
Since Type Preservation in \undeclang is proved by induction on the type derivation tree, 
we need to ensure that all the modified rules satisfy the statement. 
\begin{itemize}
\item\rtexact
 Assume 
 	\hastype{\emptyset}{\prog}{\tref{v}{\btyp}{\reft_r\land v = \prog}}.
 By inversion
	\hastype{\emptyset}{\prog}{\tref{v}{\btyp}{\reft_r}}.
 By IH we get 
	\hastype{\emptyset}{\prog'}{\tref{v}{\btyp}{\reft_r}}.
 By rule \rtexact we get 
 	\hastype{\emptyset}{\prog'}{\tref{v}{\btyp}{\reft_r\land v = \prog'}}.
 Since subtyping is closed under evaluation, we get 
 	\issubtype{\emptyset}{\tref{v}{\btyp}{\reft_r\land v = \prog'}}
 	                {\tref{v}{\btyp}{\reft_r\land v = \prog}}.
 By rule \rtsub we get 
 	\hastype{\emptyset}{\prog'}{\tref{v}{\btyp}{\reft_r\land v = \prog}}.

\item\rtlet
  Assume 
	\hastype{\emptyset}{\eletb{x}{\gtyp_x}{e_x}{\prog}}{\typ}.
 By inversion, 
   \hastype{\tbind{x}{\gtyp_x}}{e_x}{\gtyp_x}  (1), 
   \hastype{\tbind{x}{\gtyp_x}}{\prog}{\gtyp} (2), and
   \iswellformed{\env}{\typ} (3). 
 By rule \rtfix
   \hastype{\tbind{x}{\gtyp_x}}{\efix{x}\ {e_x}}{\gtyp_x}  (1').
 By (1'), (2) and Lemma 6 of~\citep{Vazou14-tech}, we get 
   \hastype{}{\SUBST{\prog}{x}{\efix{x}\ {e_x}}}{\SUBST{\gtyp}{x}{\efix{x}\ e_x}}. 
 By (3)
   $ \SUBST{\gtyp}{x}{\efix{x}\ e_x} \equiv \gtyp$.    
 Since 
   $\prog' \equiv \SUBST{\prog}{x}{\efix{x}\ {e_x}}$, 
 we have 
 \hastype{\emptyset}{\prog'}{\gtyp}. 

\item\rtreflect
  Assume 
	\hastype{\emptyset}{\erefb{x}{\gtyp_x}{e_x}{\prog}}{\typ}.
 By double inversion, with $\gtyp_x' \equiv \exacttype{\gtyp_x}{e_x} $; 
   \hastype{\tbind{x}{\gtyp_x'}}{e_x}{\gtyp_x'}  (1), 
   \hastype{\tbind{x}{\gtyp_x'}}{\prog}{\gtyp} (2), and
   \iswellformed{\env}{\typ} (3). 
 By rule \rtfix
   \hastype{\tbind{x}{\gtyp_x'}}{\efix{x}\ {e_x}}{\gtyp_x'}  (1').
 By (1'), (2) and Lemma 6 of~\citep{Vazou14-tech}, we get 
   \hastype{}{\SUBST{\prog}{x}{\efix{x}\ {e_x}}}{\SUBST{\gtyp}{x}{\efix{x}\ e_x}}. 
 By (3)
   $ \SUBST{\gtyp}{x}{\efix{x}\ e_x} \equiv \gtyp$.    
 Since 
   $\prog' \equiv \SUBST{\prog}{x}{\efix{x}\ {e_x}}$, 
 we have 
 \hastype{\emptyset}{\prog'}{\gtyp}. 

\item\rtfix
  This case cannot occur, as $\efix{}$ does not evaluate to any program. 
\end{itemize}
\end{proof}

\section{Algorithmic Checking \smtlan: Extended Version}
\label{sec:appendix:algorithmic}

Next, we describe \smtlan, a conservative approximation
of \corelan where the undecidable type subsumption rule
is replaced with a decidable one, yielding an SMT-based
algorithmic type system that enjoys the same soundness
guarantees.

\subsection{The SMT logic \smtlan}

\begin{figure}[t!]
\centering
$$
\begin{array}{rrcl}
\emphbf{Predicates} \quad
  & \pred & ::= &
    \pred \binop \pred \spmid
    \unop \pred \\
  && \spmid & n \spmid b \spmid x \spmid \dc \spmid  x\ \overline{\pred}\\
%%  && \spmid & \forall \overline{\tbind{x}{\sort}}. \pred
  && \spmid & \eif{\pred}{\pred}{\pred}
\\[0.03in]

\emphbf{Integers} \quad
  & n
  & ::= & 0, -1, 1, \dots
\\[0.03in]

\emphbf{Booleans} \quad
  & b
  & ::= & \etrue \spmid \efalse
\\[0.03in]

\emphbf{Bin Operators} \quad
  & \binop
  & ::= & = \spmid < \spmid \land \spmid + \spmid - \spmid \dots
\\[0.03in]

\emphbf{Un Operators} \quad
  & \unop
  & ::= & \lnot \spmid \dots 
\\[0.03in]

\emphbf{Model} \quad
  & \sigma
  & ::= & \sigma, (x:\pred) \spmid \emptyset
\\[0.03in]

\emphbf{Sort Arguments} \quad
  & \sort_a
  & ::= & \tint \spmid \tbool \spmid \tuniv 
         \spmid \tsmtfun{\sort_a}{\sort_a}
\\[0.03in]
\emphbf{Sorts} \quad
  & \sort
  & ::=  & \sort_a \rightarrow \sort
\end{array}
$$
\caption{\textbf{Syntax of \smtlan}}
\label{fig:appendix:smtsyntax}
\end{figure}

\mypara{Syntax: Terms \& Sorts}
Figure~\ref{fig:appendix:smtsyntax} summarizes the syntax
of \smtlan, the \emph{sorted} (SMT-)
decidable logic of quantifier-free equality,
uninterpreted functions and linear
arithmetic (QF-EUFLIA) ~\citep{Nelson81,SMTLIB2}.
The \emph{terms} of \smtlan include
integers $n$,
booleans $b$,
variables $x$,
data constructors $\dc$ (encoded as constants),
fully applied unary \unop and binary \binop operators,
and application $x\ \overline{\pred}$ of an uninterpreted function $x$.
The \emph{sorts} of \smtlan include built-in
integer \tint and \tbool for representing
integers and booleans.
%
%% NV reflected functions and measures are first order
%% NV because
%% NV 1. they can be partially applied
%% NV 2. they can be passed as arguments
The interpreted functions of \smtlan, \eg
the logical constants $=$ and $<$,
%% NV and the uninterpreted functions app and lam
%% NV but we have not introduced these yet
have the function sort $\sort \rightarrow \sort$.
Other functional values in \corelan, \eg
reflected \corelan functions and
$\lambda$-expressions, are represented as
first-order values with
uninterpreted sort \tsmtfun{\sort}{\sort}.
%
%%The uninterpreted functions of \smtlan, which
%%correspond to reflected \corelan functions,
%%have the function sort $\sort \rightarrow \sort$.
%%%
%%Other functional values in \corelan, \eg
%%$\lambda$-expressions, are represented as
%%first-order values in \smtlan with
%%uninterpreted sort \tsmtfun{\sort}{\sort}.
%%%
The universal sort \tuniv represents all other values.

\mypara{Semantics: Satisfaction \& Validity}
An assignment $\sigma$ is a mapping from
variables to terms
${\sigma \defeq \{ \assignto{x_1}{\pred_1}, \ldots, \assignto{x_n}{\pred_n} \}}$.
We write
${\sigma \models \pred}$
if the assignment $\sigma$ is a
\emph{model of} $\pred$, intuitively
if $\sigma\ \pred$ ``is true''~\cite{Nelson81}.
A predicate $\pred$ \emph{is satisfiable} if
there exists ${\sigma\models\pred}$.
A predicate $\pred$ \emph{is valid} if
for all assignments ${\sigma\models\pred}$.

\subsection{Transforming \corelan into \smtlan}
\label{subsec:appendix:embedding}

\newcommand\emptyaxioms{\ensuremath{\emptyset}\xspace}
\newcommand\andaxioms[2]{\ensuremath{{#1}\cup {#2}}\xspace}

\begin{figure}
\emphbf{Transformation}\hfill{\fbox{\tologicshort{\Gamma}{e}{\typ}{\pred}{\sort}{\smtenv}{\axioms}}}
$$
\inference{
}{
	\tologicshort{\env}{b}{\tbool}{b}{\tbool}{\emptyset}{\emptyaxioms}
}[\lgbool]
\qquad
\inference{
}{
	\tologicshort{\env}{n}{\tint}{n}{\tint}{\emptyset}{\emptyaxioms}
}[\lgint]
$$

$$
\inference{
    \tologicshort{\env}{e_1}{\typ}{\pred_1}{\embed{\typ}}{\smtenv}{\axioms_1} &
    \tologicshort{\env}{e_2}{\typ}{\pred_2}{\embed{\typ}}{\smtenv}{\axioms_2}
}{
	\tologicshort{\env}{e_1\binop e_2}{\tbool}{\pred_1 \binop\pred_2}{\tbool}{\smtenv}{\andaxioms{\axioms_1}{\axioms_2}}
}[\lgbinGEN]
$$

$$
\inference{
	\tologicshort{\env}{e}{\tbool}{\pred}{\tbool}{\smtenv}{\axioms}
}{
	\tologicshort{\env}{\unop e}{\tbool}{\unop\pred}{\tbool}{\smtenv}{\axioms}
}[\lgun]
\qquad
\inference{
}{
	\tologicshort{\env}{x}{\env(x)}{x}{\embed{\env(x)}}{\emptyset}{\emptyaxioms}
}[\lgvar]
$$

$$
\inference{
}{
	\tologicshort{\env}{c}{\constty{\odot}}{\smtvar{c}}{\embed{\constty{\odot}}}{\emptyset}{\emptyaxioms}
}[\lgpop]
\qquad
\inference{
}{
	\tologicshort{\env}{\dc}{\constty{\dc}}{\smtvar{\dc}}{\embed{\constty{\dc}}}{\emptyset}{\emptyaxioms}
}[\lgdc]
$$

%%$$
%%\inference{
%%  	\axioms_{f_1} = \forall \tbind{x}{\sort_x}.\smtappname{\sort_x}{\sort}\ f\ x = \pred \\
%%  	\axioms_{f_2} = \forall \tbind{g}{\sort'},\tbind{x}{\sort_x}.
%%  	\smtappname{\sort_x}{\sort}\ f\ x = \smtappname{\sort_x}{\sort}\ g\ x \Rightarrow f = g \\
%% 	f\ \text{fresh} &
%% 	\sort' = \embed{\tfun{x}{\typ_x}{\typ}} &
%% 	\sort  = \embed{\typ} &
%% 	\sort_x = \embed{\typ_x} \\
%% 	\tologicshort{\env,\tbind{x}{\typ_x}}{e}{\typ}{\pred}{\sort}{\smtenv, \tbind{x}{\sort_x}}{\axioms} &
%% 	\hastype{\env}{(\efun{x}{}{e})}{(\tfun{x}{\typ_x}{\typ})}\\
%%}{
%%	\tologicshort{\env}{\efun{x}{}{e}}{(\tfun{x}{\typ_x}{\typ})}
%%	        {f}{\sort'}{\smtenv, \tbind{f}{\sort'}}{\andaxioms{\{\axioms_{f_1}, \axioms_{f_2}\}}{\axioms}}
%%}[\lgfun]
%%$$

$$
\inference{
    \tologicshort{\env, \tbind{x}{\typ_x}}{e}{}{\pred}{}{}{} &
  	\hastype{\env}{(\efun{x}{}{e})}{(\tfun{x}{\typ_x}{\typ})}\\
}{
	\tologicshort{\env}{\efun{x}{}{e}}{(\tfun{x}{\typ_x}{\typ})}
	        {\smtlamname{\embed{\typ_x}}{\embed{\typ}}\ {x}\ {\pred}}
	        {\sort'}{\smtenv, \tbind{f}{\sort'}}{\andaxioms{\{\axioms_{f_1}, \axioms_{f_2}\}}{\axioms}}
}[\lgfun]
$$

$$
\inference{
	\tologicshort{\env}{e'}{\typ_x}{\pred'}{\embed{\typ_x}}{\smtenv}{\axioms'}
	&
	\tologicshort{\env}{e}{\tfun{x}{\typ_x}{\typ}}{\pred}{\tsmtfun{\embed{\typ_x}}{\embed{\typ}}}{\smtenv}{\axioms}
	& 
	\hastype{\env}{e}{{\typ_x}\rightarrow{\typ}}
}{
	\tologicshort{\env}{e\ e'}{\typ}{\smtappname{\embed{\typ_x}}{\embed{\typ}}\ {\pred}\ {\pred'}}{\embed{\typ}}{\smtenv}{\andaxioms{\axioms}{\axioms'}}
}[\lgapp]
$$

$$
\inference{
	\tologicshort{\env}{e}{\tbool}{\pred}{\tbool}{\smtenv}{\axioms} & 
	\tologicshort{\env}{e_i\subst{x}{e}}{\typ}{\pred_i}{\embed{\typ}}{\smtenv}{\axioms_i}
}{
	\tologicshorttwolines{\env}{\ecaseexp{x}{e}{\etrue \rightarrow e_1; \efalse \rightarrow e_2}}{\typ}
	 {\eif{\pred}{\pred_1}{\pred_2}}{\embed{\typ}}{\smtenv}{\andaxioms{\axioms}{\axioms_i}}
}[\lgcaseBool]
$$

$$
\inference{
	\tologicshort{\env}{e}{\typ_e}{\pred}{\embed{\typ_e}}{\smtenv}{\axioms}\\
	\tologicshort{\env}{e_i\subst{\overline{y_i}}{\overline{\selector{\dc_i}{}\ x}}\subst{x}{e}}{\typ}{\pred_i}{\embed{\typ}}{\smtenv}{\axioms_i}
}{
	\tologicshorttwolines{\env}{\ecase{x}{e}{\dc_i}{\overline{y_i}}{e_i}}{\typ}
	 {\eif{\smtappname{}{}\ \checkdc{\dc_1}\ \pred}{\pred_1}{\ldots} \ \mathtt{else}\ \pred_n}{\embed{\typ}}{\smtenv}
	 {\andaxioms{\axioms}{\axioms_i}}
}[\lgcase]
$$
\caption{\textbf{Transforming \corelan terms into \smtlan.}}
\label{fig:defunc}
\end{figure}

The judgment
\tologicshort{\env}{e}{\typ}{\pred}{\sort}{\smtenv}{\axioms}
states that a $\corelan$ term $e$ is transformed,
under an environment $\env$, into a
$\smtlan$ term $\pred$.
The transformation rules are summarized in Figure~\ref{fig:defunc}.

\mypara{Embedding Types}
We embed \corelan types into \smtlan sorts as:
$$
\begin{array}{rclcrcl}
\embed{\tint}                       & \defeq &  \tint &  &
\embed{T}                           & \defeq &  \tuniv \\
\embed{\tbool}                      & \defeq &  \tbool & &
\embed{\tfun{x}{\typ_x}{\typ}} & \defeq & \tsmtfun{\embed{\typ_x}}{\embed{\typ}}
\end{array}
$$
%%%The embedding extends to typing environments:
%%%% by embedding the types of the environment
%%%$$
%%%\embedsort{\{\tbind{x_1}{\typ_1}, \dots, \tbind{x_n}{\typ_n}\}}
%%%  \defeq
%%%  \{\tbind{x_1}{\embed{\typ_1}}, \dots, \tbind{x_n}{\embed{\typ_n}}
%%%  \}
%%%$$

\mypara{Embedding Constants}
Elements shared on both \corelan and \smtlan
translate to themselves.
These elements include
booleans (\lgbool),
integers (\lgint),
variables (\lgvar),
binary (\lgbinGEN)
and unary (\lgun)
operators.
SMT solvers do not support currying,
and so in \smtlan, all function symbols
must be fully applied.
Thus, we assume that all applications
to primitive constants and data
constructors are \emph{saturated},
%% NV eta converted
\ie fully applied, \eg by converting
source level terms like "(+ 1)" to
"(\z -> z + 1)".
%

%%% Thus, to translate \corelan's partially applied operators,
%%% we define an uninterpreted function
%%% $$
%%% \tbind{\smtvar{c}}{\embed{\constty{c}}}
%%% $$
%%% for every functional constant $c$ in \corelan.
%%% %
%%% For example, $+ 1$ will be translated to application of $\smtvar{+}$ to $1$, while
%%% $1+2$ will be translated to the identical $1+2$.

%%\spara{Lambda Lifting}
%%%
%%Since \smtlan does not support $\lambda$-functions.
%%the translation lifts function to axiomatized variables.
%%%
%%Rule~\lgfun
%%translates the term $\efun{x}{\typ}{e}$ to
%%a fresh variable $f$ that satisfies two axioms:
%%(1). $\beta$-reduction,
%%that is $f$ applied to $x$ is equal to $e$, and
%%(2). extentionality,
%%that is for every other function $g$ and argument $x$,
%%if $f$ applied to $x$ is equal to $g$ applied to $x$,
%%then $f = g$.

\mypara{Embedding Functions}
As \smtlan is a first-order logic, we
embed $\lambda$-abstraction and application
using the uninterpreted functions
\smtlamname{}{} and \smtappname{}{}.
We embed $\lambda$-abstractions
using $\smtlamname{}{}$ as shown in rule~\lgfun.
The term $\efun{x}{}{e}$ of type
${\typ_x \rightarrow \typ}$ is transformed
to
${\smtlamname{\sort_x}{\sort}\ x\ \pred}$
of sort
${\tsmtfun{\sort_x}{\sort}}$, where
$\sort_x$ and $\sort$ are respectively
$\embed{\typ_x}$ and $\embed{\typ}$,
${\smtlamname{\sort_x}{\sort}}$
is a special uninterpreted function
of sort
${\sort_x \rightarrow \sort\rightarrow\tsmtfun{\sort_x}{\sort}}$,
and
$x$ of sort $\sort_x$ and $r$ of sort $\sort$ are
the embedding of the binder and body, respectively.
As $\smtlamname{}{}$ is just an SMT-function,
it \emph{does not} create a binding for $x$.
Instead, the binder $x$ is renamed to
a \emph{fresh} name pre-declared in
the SMT environment.

\mypara{Embedding Applications}
Dually, we embed applications via
defunctionalization~\citep{Reynolds72}
using an uninterpreted \emph{apply}
function
$\smtappname{}{}$ as shown in rule~\lgapp.
The term ${e\ e'}$, where $e$ and $e'$ have
types ${\typ_x \rightarrow \typ}$ and $\typ_x$,
is transformed to
${\tbind{\smtappname{\sort_x}{\sort}\ \pred\ \pred'}{\sort}}$
where
$\sort$ and $\sort_x$ are respectively $\embed{\typ}$ and $\embed{\typ_x}$,
the
${\smtappname{\sort_x}{\sort}}$
is a special uninterpreted function of sort
${\tsmtfun{\sort_x}{\sort} \rightarrow \sort_x \rightarrow \sort}$,
and
$\pred$ and $\pred'$ are the respective translations of $e$ and $e'$.

\mypara{Embedding Data Types}
Rule~\lgdc translates each data constructor to a
predefined \smtlan constant ${\smtvar{\dc}}$ of
sort ${\embed{\constty{\dc}}}$.
Let $\dc_i$ be a non-boolean data constructor such that
$$
\constty{\dc_i} \defeq \typ_{i,1} \rightarrow \dots \rightarrow \typ_{i,n} \rightarrow \typ
$$
Then the \emph{check function}
${\checkdc{{\dc_i}}}$ has the sort
$\tsmtfun{\embed{\typ}}{\tbool}$,
and the \emph{select function}
${\selector{\dc}{i,j}}$ has the sort
$\tsmtfun{\embed{\typ}}{\embed{\typ_{i,j}}}$.
Rule~\lgcase translates case-expressions
of \corelan into nested $\mathtt{if}$
terms in \smtlan, by using the check
functions in the guards, and the
select functions for the binders
of each case.
%
%\mypara{Reflecting DataTypes}
%
% The above approach  makes it straightforward
% to reflect functions over datatypes into \smtlan.
%
For example, following the above, the body of the list append function
%
%%% reflect (++) :: xs:[Int] -> ys:[Int] -> [Int]
\begin{code}
  []     ++ ys = ys
  (x:xs) ++ ys = x : (xs ++ ys)
\end{code}
is reflected into the \smtlan refinement:
$$
\ite{\mathtt{isNil}\ \mathit{xs}}
    {\mathit{ys}}
    {\mathtt{sel1}\ \mathit{xs}\
       \dcons\
       (\mathtt{sel2}\ \mathit{xs} \ \mathtt{++}\  \mathit{ys})}
$$
We favor selectors to the axiomatic translation of
HALO~\citep{HALO} and \fstar~\cite{fstar} to avoid
universally quantified formulas and the resulting
instantiation unpredictability.

%% $$
%% \tbind{\checkdc{\dc}}{\embed{\typ \rightarrow \tbool}}
%% \ \text{with}\ \constty{\dc} = \typ_1 \rightarrow \dots \rightarrow \typ_n\rightarrow\typ
%% $$
%% and the field selector is used to substitute the data constructor quantified variables $\overline{y_i}$:
%% eg. if \dc is [] then i == 0
%%     if \dc is (:) :: a -> [a] -> [a] then
%%         \dc_1 = head :: [a] -> a
%%         \dc_2 = tail :: [a] -> [a]
%% $$
%% \tbind
      %% {\embed{\typ \rightarrow \typ_i}}
%% \ \text{with}\ \constty{\dc} = \typ_1 \rightarrow \dots \rightarrow \typ_n\rightarrow\typ, i \leq n
%% $$
%% %
%% For example, the body of the "length" function from~\S~\ref{sec:examples}
%% translates to the condition $\eif{\isN\ xs}{0}{1+\texttt{length} (\etail\ xs)}$,
%% as $\etail \defeq \selector{\dcons}{2}$.

\subsection{Correctness of Translation}

Informally, the translation relation $\tologicshort{\env}{e}{}{\pred}{}{}{}$
is correct in the sense that if $e$ is a terminating boolean expression
then $e$ reduces to \etrue \textit{iff} $\pred$ is SMT-satisfiable
by a model that respects $\beta$-equivalence.

\begin{definition}[$\beta$-Model]\label{def:beta-model}
A $\beta-$model $\bmodel$ is an extension of a model $\sigma$
where $\smtlamname{}{}$ and $\smtappname{}{}$
satisfy the axioms of $\beta$-equivalence:
$$
\begin{array}{rcl}
\forall x\ y\ e. \smtlamname{}{}\ x\ e
  & = & \smtlamname{}{}\ y\ (e\subst{x}{y}) \\
\forall x\ e_x\ e. (\smtappname{}{}\ (\smtlamname{}{}\ x\ e)\ e_x
  & = &  e\subst{x}{e_x}
\end{array}
$$
\end{definition}

\mypara{Semantics Preservation}
We define the translation of a \corelan term
into \smtlan under the empty environment as
${\embed{e} \defeq \pred}$
if ${\tologicshort{\emptyset}{\refa}{}{\pred}{}{}{}}$.
A \emph{lifted substitution}
$\theta^\perp$ is a set of models $\sigma$
where each ``bottom'' in the substitution
$\theta$ is mapped to an arbitrary logical
value of the respective sort~\citep{Vazou14}.
We connect the semantics of \corelan and translated
\smtlan via the following theorems:
% terms can connect evaluation of boolean
% \corelan expression to \smtlan predicates.

\begin{theorem}\label{thm:embedding-general}
If ${\tologicshort{\env}{\refa}{}{\pred}{}{}{}}$,
then for every ${\sub\in\interp{\env}}$
and every ${\sigma\in {\sub^\perp}}$,
if $\evalsto{\applysub{\sub^\perp}{\refa}}{v}$
then $\sigma^\beta \models \pred = \embed{v}$.
\end{theorem}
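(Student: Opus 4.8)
The plan is to prove the statement by induction on the length of the evaluation $\evalsto{\applysub{\sub^\perp}{\refa}}{v}$, performing a case analysis on the syntactic shape of $\refa$ (the translation judgment of Figure~\ref{fig:defunc} is syntax-directed on $\refa$). Induction on the evaluation, rather than on the structure of $\refa$, is essential because reducing an application produces a substituted body that is not a subterm of the original expression. Fixing $\sub\in\interp{\env}$ and $\sigma\in\sub^\perp$, in each case the goal is $\sigma^\beta\models\pred = \embed{v}$. The base cases --- booleans, integers, data constructors, and variables --- are essentially immediate: literals and constructors translate to themselves, so $\pred\equiv\embed{v}$ syntactically, while for a variable $x$ the equality holds because $\sigma$ is, by construction of the lifted substitution $\sub^\perp$, consistent with the value that $\sub^\perp(x)$ reduces to.

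For the interpreted operators I would use the induction hypotheses on the operand translations together with the fact that the SMT interpretation of each $\binop$ and $\unop$ agrees with CoreLang's primitive reduction $\ceval{c}{v}$; since evaluation of $\refa_1\binop\refa_2$ first reduces the arguments to values and then applies the primitive, congruence of the model plus operator agreement yields the result. The case-expression rules reduce to showing that the translated nested \texttt{if} selects the same branch as operational evaluation: here I would use that evaluation reduces the scrutinee to a value headed by some constructor $\dc_j$, that the check function $\checkdc{\dc_j}$ is true exactly on that constructor, and that the select functions $\selector{\dc_j}{}$ recover its fields, so that the chosen branch is translated faithfully (invoking the IH on that branch after substituting selectors for the pattern binders).

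The crux, and the step I expect to be the main obstacle, is reconciling the two notions of computation in the function and application cases. CoreLang computes by operational $\beta$-reduction $(\efun{x}{}{e})\ e'\hookrightarrow e\subst{x}{e'}$, whereas the translation defunctionalizes abstraction and application into the \emph{uninterpreted} symbols $\smtlamname{}{}$ and $\smtappname{}{}$, which a priori carry no computational content. The purpose of the $\beta$-model of Definition~\ref{def:beta-model} is precisely to inject just enough equational content --- the renaming axiom and the $\beta$-axiom $\smtappname{}{}\ (\smtlamname{}{}\ x\ e)\ e_x = e\subst{x}{e_x}$ --- to mirror each operational step inside the logic. For an application $\eapp{e}{e'}$ I would first reduce $e$ to a lambda value $\efun{x}{}{e_0}$, apply the IH to obtain $\sigma^\beta\models\pred = \smtlamname{}{}\ x\ \embed{e_0}$, then use the $\beta$-axiom of $\sigma^\beta$ to rewrite $\smtappname{}{}\ \pred\ \pred'$ to $\embed{e_0}\subst{x}{\pred'}$, and finally appeal to the IH on the shorter, continued evaluation of $e_0\subst{x}{e'}$.

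This last move forces two supporting lemmas that I would isolate. First, a translation-commutes-with-substitution lemma showing that $\embed{e_0}\subst{x}{\pred'}$ coincides, modulo the $\beta$-model axioms, with the translation of $e_0\subst{x}{e'}$, so that the IH on the post-$\beta$ term applies; the $\alpha$-renaming axiom is needed exactly to justify the fresh naming of binders performed during translation. Second, I must confirm that a $\beta$-model $\sigma^\beta$ extending an arbitrary $\sigma\in\sub^\perp$ always exists and still satisfies $\pred=\embed{v}$ --- i.e. that the $\beta$- and renaming axioms are consistent and do not over-constrain the uninterpreted $\smtlamname{}{}$ and $\smtappname{}{}$ symbols. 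The fiddly bookkeeping that ties the concrete and logical semantics together is tracking how $\sub^\perp$ assigns logical values to erased or bottom components, ensuring the chosen $\sigma$ remains compatible with the value each forced subterm actually reduces to under call-by-name.
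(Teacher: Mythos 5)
Your proposal is correct and rests on the same key idea as the paper's proof: every operational step is mirrored in the logic by an equality that the $\beta$-model of Definition~\ref{def:beta-model} validates --- congruence for contextual steps, agreement of the interpreted operators for primitive steps, the check/select measure semantics for case-of-constructor steps, and the $\beta$-axiom precisely at $\beta$-redexes --- and these equalities are chained along the evaluation. The difference is organizational. The paper factors the argument into a single-step ``equivalence preservation'' lemma (Lemma~\ref{lemma:approximation}), proved by case analysis on the small-step reduction relation, and then obtains the theorem by iterating that lemma along the reduction sequence; the evolving bindings of the call-by-name semantics are handled by the ``tracked substitutions'' machinery imported from the earlier technical report, which is exactly the bookkeeping you flag informally at the end of your proposal. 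You instead do induction on the length of the evaluation with case analysis on the shape of $\refa$, a big-step organization which forces you to invoke the induction hypothesis twice in the application case (once for the function position, once for the contractum) and to isolate explicitly a translation-commutes-with-substitution lemma. That lemma is the most notable point of divergence: the paper's $\beta$-reduction case silently identifies the translation of $e\subst{x}{e_x}$ with $\embed{e}\subst{x}{\embed{e_x}}$ when it applies the $\beta$-axiom, so your decomposition makes visible a step the paper leaves implicit. Conversely, the paper's step-wise organization buys a cleaner, more rigorous treatment of variable lookup and of bindings created or forced during evaluation, which your proposal handles only by an appeal to the construction of $\sub^\perp$; if you carried your plan out in full you would end up reconstructing something equivalent to the tracked-substitution cases of Lemma~\ref{lemma:approximation}.
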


% For Boolean expressions we specialize the above to

\begin{corollary}\label{thm:appendix:embedding}
If ${\hastype{\env}{\refa}{\tbool}}$, $e$ reduces to a value and
${\tologicshort{\env}{\refa}{\tbool}{\pred}{\tbool}{\smtenv}{\axioms}}$,
then for every ${\sub\in\interp{\env}}$
and every ${\sigma\in {\sub^\perp}}$,
$\evalsto{\applysub{\sub^\perp}{\refa}}{\etrue}$ iff
$\sigma^\beta \models \pred$.
\end{corollary}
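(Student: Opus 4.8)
The plan is to obtain this corollary almost directly from Theorem~\ref{thm:embedding-general}, which already supplies the one-directional link ``if $\applysub{\sub^\perp}{\refa}$ evaluates to $v$ then $\sigma^\beta \models \pred = \embed{v}$''. The only additional work is to upgrade this to the biconditional of the corollary, which I would do by specializing $v$ to the two possible boolean values and invoking the elementary fact that a single model cannot satisfy both a predicate and its negation.

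First I would record two small facts that make the specialization go through. Since $\hastype{\env}{\refa}{\tbool}$ and $\refa$ is assumed to reduce to a value, type soundness (the Preservation part of Theorem~\ref{thm:appendix:safety}) guarantees that $\applysub{\sub^\perp}{\refa}$ reduces to a \emph{boolean} value, so $\evalsto{\applysub{\sub^\perp}{\refa}}{v}$ for some $v \in \{\etrue, \efalse\}$. Second, by the translation rule for boolean constants (\lgbool in Figure~\ref{fig:defunc}) booleans embed to themselves, i.e.\ $\embed{\etrue} = \etrue$ and $\embed{\efalse} = \efalse$.

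For the forward direction I would assume $\evalsto{\applysub{\sub^\perp}{\refa}}{\etrue}$ and apply Theorem~\ref{thm:embedding-general} with $v = \etrue$, obtaining $\sigma^\beta \models \pred = \embed{\etrue}$, i.e.\ $\sigma^\beta \models \pred = \etrue$, which is exactly $\sigma^\beta \models \pred$. For the backward direction I would assume $\sigma^\beta \models \pred$ and argue by case analysis on the value $v$ fixed above. The case $v = \etrue$ is the claim. To exclude $v = \efalse$, Theorem~\ref{thm:embedding-general} would yield $\sigma^\beta \models \pred = \efalse$, hence $\sigma^\beta \models \neg\pred$, contradicting the assumption $\sigma^\beta \models \pred$ by consistency of the model; thus $v = \etrue$ and the biconditional holds.

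The step I expect to require the most care is the first one: justifying that the normal form $v$ is genuinely boolean under the \emph{lifted} substitution $\sub^\perp$ rather than the ordinary closing substitution $\sub$. Here I would appeal to the fact that $\sub^\perp$ differs from $\sub$ only by instantiating divergent bindings at arbitrary values of the appropriate sort, so the well-typedness of $\refa$ --- and hence the boolean shape of its value --- is preserved. Everything else is a routine instantiation of the preceding theorem together with the observation that a model assigns $\pred$ a single truth value.
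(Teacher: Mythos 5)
Your proposal is correct and matches the paper's own proof essentially step for step: the forward direction is a direct instantiation of Theorem~\ref{thm:embedding-general} at $v \equiv \etrue$, and the backward direction uses exactly the paper's case split on the boolean normal form, ruling out $v \equiv \efalse$ by deriving $\sigma^\beta \models \lnot\pred$ and contradicting $\sigma^\beta \models \pred$. The extra care you take in justifying that the value is boolean under the lifted substitution $\sub^\perp$ is a detail the paper glosses over (it just asserts the value is $\etrue$ or $\efalse$ from termination and typing), so no gap remains.
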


\subsection{Decidable Type Checking}
\begin{figure}[t!]
\centering
$$
\begin{array}{rrcl}
\emphbf{Refined Types} \quad
  & \typ
  & ::=   & \tref{v}{\btyp^{[\tlabel]}}{\reft} \spmid \tfun{x}{\typ}{\typ}
\\[0.10in]
\end{array}
$$
\emphbf{Well Formedness}\hfill{\fbox{\aiswellformed{\env}{\typ}}}\\
$$
\inference{
  \ahastype{\env,\tbind{v}{\btyp}}{\refa}{\tbool^{\tlabel}}
}{
  \aiswellformed{\env}{\tref{v}{\btyp}{\refa}}
}[\rwbase]
$$
\emphbf{Subtyping}\hfill{\fbox{\aissubtype{\env}{\typ}{\typ'}}}\\
$$
\inference{
\env' \defeq \env,\tbind{v}{\{\btyp^\tlabel | \refa\}} &
\tologicshort{\env'}{\refa'}{\tbool}{\pred'}{}{}{} &
\smtvalid{\vcond{\env'}{\pred'}}
}{
 \aissubtype{\env}{\tref{v}{\btyp}{\refa}}{\tref{v}{\btyp}{\refa'}}
}[\rsubbase]
$$
%%%% %\NV{REVERT TO OLD DEFINITIONS, what is e'?}
%%%% $$
%%%% \inference{
%%%% \tologicshort{\env'}{\refa_1}{\tbool}{\pred_1}{\tbool}{\smtenv_1}{\axioms_1} &
%%%% \tologicshort{\env'}{\refa_2}{\tbool}{\pred_2}{\tbool}{\smtenv_1}{\axioms_1} \\
%%%% % \isvalid{\env,\tbind{v}{\btyp}}{\refa_1}{\refa_2}
%%%% \env' \defeq \env,\tbind{v}{\btyp^\tlabel} &
%%%% % \tologicshort{\env'}{\refa'}{\tbool}{\pred'}{\tbool}{\smtenv'}{\axioms'} &
%%%% \smtvalid{\vcond{\env'}{\pred_1 \Rightarrow \pred_2}}
%%%% %
%%%% }{
  %%%% \aissubtype{\env}{\tref{v}{\btyp}{\refa_1}}{\tref{v}{\btyp}{\refa_2}}
%%%% }[\rsubbase]
%%%% $$
%%% \emphbf{Implication}\hfill{\isvalid{\env}{\refa_1}{\refa_2}}\\
%%% $$
%%% \inference{
  %%% \tologicshort{\env}{\refa_1}{\tbool}{\pred_1}{\tbool}{\smtenv_1}{\axioms_1} &
  %%% \tologicshort{\env}{\refa_2}{\tbool}{\pred_2}{\tbool}{\smtenv_2}{\axioms_i} \\
  %%% \text{is SMT-valid}\ (\embedexpr{\env} \Rightarrow \pred_1 \Rightarrow \pred_2)
%%% }{
  %%% \isvalid{\env}{\refa_1}{\refa_2}
%%% }
%%% $$
%%% \emphbf{Typing}\hfill{\ahastype{\env}{\prog}{\typ}}\\
\caption{\textbf{Algorithmic Typing (other rules in Figs~\ref{fig:appendix:syntax} and \ref{fig:appendix:typing}.)}}
\label{fig:modifications}
\end{figure}

Figure~\ref{fig:modifications} summarizes the modifications required
to obtain decidable type checking.
Namely, basic types are extended with labels that track termination
and subtyping is checked via an SMT solver.

\mypara{Termination}
Under arbitrary beta-reduction semantics
(which includes lazy evaluation), soundness
of refinement type checking requires checking
termination, for two reasons:
(1)~to ensure that refinements cannot diverge, and
(2)~to account for the environment during subtyping~\citep{Vazou14}.
We use \tlabel to mark provably terminating
computations, and extend the rules to use
refinements to ensure that if
${\ahastype{\env}{e}{\tref{v}{\btyp^\tlabel}{r}}}$,
then $e$ terminates~\citep{Vazou14}.
%
%% Here we assume termination is checked by an oracle,
%% but we can use refinement types themselves to prove
%% correctness of the termination labeling

\mypara{Verification Conditions}
The \emph{verification condition} (VC)
${\vcond{\env}{\pred}}$
is \emph{valid} only if the set of values
described by $\env$, is subsumed by
the set of values described by $\pred$.
$\env$ is embedded into logic by conjoining
(the embeddings of) the refinements of
provably terminating binders~\cite{Vazou14}:
%
%% We only trust refinements of terminating
%% expressions, as every diverging expression
%% can be unsoundly refined \efalse.
%% $$
%% \embed{\env} \defeq
  %% \bigwedge\{ p \mid \tbind{x}{\tref{v}{\btyp^{\tlabel}}{e}} \in \env
   %% \land \tologicshort{\env}{e\subst{v}{x}}{\btyp}{p}{\embed{\btyp}}{\smtenv}{\axioms}
   %% \}
%% $$
\begin{align*}
\embed{\env} \defeq & \bigwedge_{x \in \env} \embed{\env, x} \\
\intertext{where we embed each binder as}
\embed{\env, x} \defeq & \begin{cases}
                           \pred  & \text{if } \env(x)=\tref{v}{\btyp^{\tlabel}}{e},\
                                    \tologicshort{\env}{e\subst{v}{x}}{\btyp}{\pred}{\embed{\btyp}}{\smtenv}{\axioms} \\
                           \etrue & \text{otherwise}.
                         \end{cases}
\end{align*}

%We use the embedding of environment to decidably check subtyping.
%As defined in Figure~\ref{fig:modifications},
%\tref{v}{\btyp}{\refa_1} is subtype of \tref{v}{\btyp}{\refa_1}
%under the environment \env, when
%$\refa_i$ transforms to $\pred_i$ with axioms $\axioms_i$
%and assuming $\embedexpr{\env}$ and the axioms $\axioms_i$
%$\pred_i$ implies $\pred_2$.

\mypara{Subtyping via SMT Validity}
We make subtyping, and hence, typing decidable,
by replacing the denotational base subtyping
rule $\rsubbase$ with a conservative,
algorithmic version that uses an SMT
solver to check the validity of the subtyping VC.
We use Corollary~\ref{thm:appendix:embedding} to prove
soundness of subtyping. 
\begin{lemma}\label{lem:appendix:subtyping} %[Conservative Subtyping]
If {\aissubtype{\env}{\tref{v}{\btyp}{e_1}}{\tref{v}{\btyp}{e_2}}}
then {\issubtype{\env}{\tref{v}{\btyp}{e_1}}{\tref{v}{\btyp}{e_2}}}.
\end{lemma}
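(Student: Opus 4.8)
The plan is to derive the denotational containment $\issubtype{\env}{\tref{v}{\btyp}{e_1}}{\tref{v}{\btyp}{e_2}}$ from the algorithmic judgment by using Corollary~\ref{thm:appendix:embedding}, which is exactly the bridge between the operational meaning of a refinement (``reduces to $\etrue$'') and its logical meaning (``satisfied by the SMT model''). Inverting the algorithmic rule $\rsubbase$ supplies a translation $\tologicshort{\env'}{e_2}{\tbool}{\pred'}{}{}{}$ under the extended environment $\env' \defeq \env, \tbind{v}{\tref{v}{\btyp^\tlabel}{e_1}}$ together with SMT validity $\smtvalid{\vcond{\env'}{\pred'}}$. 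After unfolding $\interp{\cdot}$ on basic refinement types, the denotational goal reduces to the following: for every $\sub \in \interp{\env}$ and every $e$ with $\bhastype{}{e}{\btyp}$, if $e$ reduces to a value $w$ and $\applysub{\sub}{e_1}\subst{v}{w}$ reduces to $\etrue$, then $\applysub{\sub}{e_2}\subst{v}{w}$ also reduces to $\etrue$. The base-typing side condition for the right-hand denotation is immediate, since both types carry the identical base type $\btyp$.

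First I would fix such $\sub$, $e$, $w$ and extend the substitution to $\sub^* \defeq \sub, \gbind{v}{w}$; because $w$ is a value of base type satisfying $e_1$ under $\sub$, we have $\sub^* \in \interp{\env'}$. Next I would choose any model $\sigma \in (\sub^*)^\perp$ and its $\beta$-model extension $\sigma^\beta$ (Definition~\ref{def:beta-model}). The forward direction of Corollary~\ref{thm:appendix:embedding}, applied to each terminating binder of $\env'$ — in particular to the binder $v$, whose refinement $e_1$ reduces to $\etrue$ under $\sub^*$ — yields $\sigma^\beta \models \embed{\env'}$. Since the verification condition $\vcond{\env'}{\pred'}$ is valid, this model must satisfy its conclusion, i.e. $\sigma^\beta \models \pred'$. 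Finally, the backward (``only if'') direction of Corollary~\ref{thm:appendix:embedding} applied to $e_2$ converts $\sigma^\beta \models \pred'$ back into $\evalsto{\applysub{\sub^*}{e_2}}{\etrue}$, which is precisely $\evalsto{\applysub{\sub}{e_2}\subst{v}{w}}{\etrue}$, discharging the goal. (The corresponding arrow case of subtyping, handled by $\rsubfun$, is a routine contravariant/covariant induction and is not needed for this base-case lemma.)

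The main obstacle is the faithful crossing between the operational and logical worlds, and it is localized entirely in the two invocations of Corollary~\ref{thm:appendix:embedding}. That corollary is an \emph{iff}, but each direction carries the side condition that the refinement in question \emph{reduces to a value}; this is exactly where the termination label $\tlabel$ is essential. Well-formedness of $\env'$ ensures that $e_1$ and $e_2$ are $\tbool$-typed, provably terminating expressions, so both reduce to values and the equivalence applies in the required direction. A second delicacy is that VC validity quantifies over \emph{arbitrary} SMT models, whereas the denotation quantifies over closing substitutions arising from evaluation; the lifted substitution $\sub^\perp$ together with the $\beta$-model construction reconciles the two, guaranteeing that the uninterpreted symbols used to defunctionalize higher-order refinements respect $\beta$-equivalence so that no spurious model can refute the translation.
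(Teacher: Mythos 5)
Your proposal is correct and follows essentially the same route as the paper's own proof: both reduce the goal to denotational containment, fix a closing substitution and a value $w$, use the lifted-substitution/$\beta$-model machinery with Corollary~\ref{thm:appendix:embedding} in the forward direction to show the model satisfies the environment embedding and $e_1$, apply VC validity to obtain satisfaction of the translated $e_2$, and invoke the corollary's backward direction to return to $\evalsto{\applysub{\sub}{e_2\subst{v}{w}}}{\etrue}$. The only cosmetic difference is that you fold $w$ into an extended substitution for $\env'$ while the paper applies the substitution $\subst{v}{w}$ directly, and your explicit attention to the termination label and $\beta$-model side conditions is a welcome refinement of steps the paper leaves implicit.
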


\mypara{Soundness of \smtlan}
Lemma~\ref{lem:appendix:subtyping} directly implies the soundness of \smtlan.
\begin{theorem}[Soundness of \smtlan]\label{thm:appendix:soundness-smt}
If \ahastype{\env}{e}{\typ} then \hastype{\env}{e}{\typ}.
\end{theorem}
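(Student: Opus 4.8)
The plan is to prove the theorem by a routine structural induction on the algorithmic typing derivation $\ahastype{\env}{e}{\typ}$, observing that the two systems coincide everywhere except at base subtyping, where Lemma~\ref{lem:appendix:subtyping} bridges the gap. The guiding observation is that the algorithmic judgment $\vdash_S$ is obtained from the declarative judgment $\vdash$ by replacing \emph{only} the denotational base-subtyping rule $\rsubbase$ with its SMT-discharged counterpart; every typing rule ($\rtvar$, $\rtconst$, $\rtsub$, $\rtexact$, $\rtfun$, $\rtapp$, $\rtlet$, $\rtreflect$, $\rtcase$), every well-formedness rule, and the function-subtyping rule $\rsubfun$ are literally identical in both systems. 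Consequently, the induction handles every case except subsumption by appealing to the induction hypothesis on each subderivation and re-applying the identical declarative rule.

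The only case with real content is $\rtsub$, which discharges a side condition $\aissubtype{\env}{\typ'}{\typ}$; here I must show $\issubtype{\env}{\typ'}{\typ}$. I would establish this by a separate (inner) induction on the structure of the subtyping derivation. In the base case, $\typ' = \tref{v}{\btyp}{e_1}$ and $\typ = \tref{v}{\btyp}{e_2}$, and Lemma~\ref{lem:appendix:subtyping} gives exactly $\issubtype{\env}{\tref{v}{\btyp}{e_1}}{\tref{v}{\btyp}{e_2}}$. In the function case, the rule $\rsubfun$ is structurally identical in both systems, so the inductive hypotheses on the contravariant argument subtyping and the covariant result subtyping reassemble directly into a declarative $\rsubfun$ instance. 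This reduces the whole theorem to Lemma~\ref{lem:appendix:subtyping}, matching the remark in the text that the lemma ``directly implies'' soundness.

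The genuine difficulty is therefore not in the theorem's induction but in the subtyping lemma it rests on. That lemma must turn validity of the SMT verification condition $\vcond{\env'}{\pred'}$ into the denotational containment $\interp{\applysub{\sto}{\typ'}} \subseteq \interp{\applysub{\sto}{\typ}}$ demanded by the declarative rule, and the bridge is the correctness of the $\corelan$-to-$\smtlan$ translation: Corollary~\ref{thm:appendix:embedding} (and the underlying Theorem~\ref{thm:embedding-general}) relate reduction of a refinement to $\etrue$ with satisfaction of its embedding under a $\beta$-model. I expect the subtle point there --- and hence the main obstacle inherited by this proof --- to be the treatment of defunctionalized $\lambda$- and application terms (via $\smtlamname{}{}$ and $\smtappname{}{}$) together with partiality, i.e. ensuring the lifted-substitution / $\beta$-model correspondence lines up with the operational semantics so that SMT validity faithfully \emph{over}-approximates denotational inclusion. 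Given the lemma, however, the proof of the theorem itself is a mechanical induction carrying no further semantic content.
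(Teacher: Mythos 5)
Your proposal is correct and follows essentially the same route as the paper: the paper's proof of this theorem is precisely the reduction you describe — it observes that the two systems differ only at base subtyping and cites Lemma~\ref{lem:appendix:subtyping} (proved via the embedding correctness, Corollary~\ref{thm:appendix:embedding}) as the sole content, leaving the routine outer induction on typing derivations and inner induction on subtyping derivations implicit. You have merely spelled out that mechanical induction explicitly, which is a faithful elaboration rather than a different argument.
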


\begin{comment}
\begin{proof}
By rule \rsubbase, we need to show that
$\forall \sub\in\interp{\env}.
  \interp{\applysub{\sub}{\tref{v}{\btyp}{\refa_1}}}
  \subseteq
  \interp{\applysub{\sub}{\tref{v}{\btyp}{\refa_2}}}$.
%
We fix a $\sub\in\interp{\env}$.
and get that forall bindings
$(\tbind{x_i}{\tref{v}{\btyp^{\downarrow}}{\refa_i}}) \in \env$,
$\evalsto{\applysub{\sub}{e_i\subst{v}{x_i}}}{\etrue}$.

Then need to show that for each $e$,
if $e \in \interp{\applysub{\sub}{\tref{v}{\btyp}{\refa_1}}}$,
then $e \in \interp{\applysub{\sub}{\tref{v}{\btyp}{\refa_2}}}$.

If $e$ diverges then the statement trivially holds.
Assume $\evalsto{e}{w}$.
We need to show that
if $\evalsto{\applysub{\sub}{e_1\subst{v}{w}}}{\etrue}$
then $\evalsto{\applysub{\sub}{e_2\subst{v}{w}}}{\etrue}$.

Let \vsub the lifted substitution that satisfies the above.
Then  by Lemma~\ref{thm:appendix:embedding}
for each model $\bmodel \in \interp{\vsub}$,
$\bmodel\models\pred_i$, and $\bmodel\models q_1$
for
$\tologicshort{\env}{e_i\subst{v}{x_i}}{\btyp}{\pred_i}{\embed{\btyp}}{\smtenv_i}{\axioms_i}$
$\tologicshort{\env}{e_i\subst{v}{w}}{\btyp}{q_i}{\embed{\btyp}}{\smtenv_i}{\beta_i}$.
%
Since \aissubtype{\env}{\tref{v}{\btyp}{e_1}}{\tref{v}{\btyp}{e_2}} we get
$$
\bigwedge_i \pred_i
\Rightarrow q_1 \Rightarrow q_2
$$
thus $\bmodel\models q_2$.
%
By Theorem~\ref{thm:appendix:embedding} we get $\evalsto{\applysub{\sub}{\refa_2\subst{v}{w}}}{\etrue}$.
\end{proof}
\end{comment}

\section{Soundness of Algorithmic Verification}
In this section we prove soundness of Algorithmic verification, 
by proving the theorems of \S~\ref{sec:algorithmic}
by referring to the proofs in~\citep{Vazou14-tech}. 

\subsection{Transformation}

\begin{definition}[Initial Environment]\label{def:initialsmt}
 We define the initial SMT environment \smtenvinit to include
 $$
 \begin{array}{rcll}
 \smtvar{c}  &\colon &\embed{\constty{c}}
   &\forall c\in \corelan\\
 \smtlamname{\sort_x}{\sort}&\colon&\sort_x \rightarrow \sort\rightarrow\tsmtfun{\sort_x}{\sort}
   &\forall \sort_x, \sort\in \smtlan\\
 \smtappname{\sort_x}{\sort}&\colon&\tsmtfun{\sort_x}{\sort} \rightarrow \sort_x \rightarrow \sort
   &\forall \sort_x, \sort\in \smtlan\\
 \smtvar{\dc}&\colon&\embed{\constty{\dc}}
   &\forall\dc\in\corelan\\
 \checkdc{\dc}&\colon&\embed{T \rightarrow \tbool}
   &\forall \dc\in \corelan\ \text{of data type}\ T \\
 \selector{\dc}{i}&\colon&\embed{T \rightarrow \typ_i}
   &\forall \dc\in \corelan\ \text{of data type}\ T \\
   &&&\text{and}\ i\text{-th argument}\ \typ_i \\
 {x^{\sort}_i} & \colon&{\sort}& \forall \sort \in \smtlan \text{and} 1\leq i\leq \maxlamarg\\
 \end{array}
 $$
Where $x^{\sort}_i$ are $\maxlamarg$ global names that only appear as lambda arguments.
\end{definition}

We modify the $\lgfun$ rule to ensure that 
logical abstraction is performed 
using the minimum globally defined lambda argument that is not already abstracted. 
We do so, using the helper function \maxlam{\sort}{\pred}:

\begin{align*}
\maxlam{\sort}{\smtlamname{\sort}{\sort'}\ {x^{\sort}_i}\ \pred} =& \mathtt{max}(i, \maxlam{\sort}{\pred})\\
\maxlam{\sort}{r\ \overline{r}} =& \mathtt{max}(\maxlam{\sort}{\pred, \overline{\pred}}) \\
\maxlam{\sort}{\pred_1 \binop \pred_2} = &  \mathtt{max}(\maxlam{\sort}{\pred_1, \pred_2})\\
\maxlam{\sort}{\unop \pred} =& \maxlam{\sort}{\pred}\\
\maxlam{\sort}{\eif{\pred}{\pred_1}{\pred_2}} =& \mathtt{max}(\maxlam{\sort}{\pred, \pred_1, \pred_2}) \\
\maxlam{\sort}{\pred} =& 0 \\
\end{align*}
$$
\inference{
    i = \maxlam{\embed{\typ_x}}{\pred} & i < \maxlamarg & y = x^{\embed{\typ_x}}_{i+1} \\ 
    \tologicshort{\env, \tbind{y}{\typ_x}}{e\subst{x}{y}}{}{\pred}{}{}{} &
  	\hastype{\env}{(\efun{x}{}{e})}{(\tfun{x}{\typ_x}{\typ})}\\
}{
	\tologicshort{\env}{\efun{x}{}{e}}{}
	        {\smtlamname{\embed{\typ_x}}{\embed{\typ}}\ {y}\ {\pred}}
	        {}{}{}
}[\lgfun]
$$

\begin{lemma}[Type Transformation]
If \tologicshort{\env}{e}{\typ}{p}{\sort}{\smtenv}{\axioms},
and \hastype{\env}{e}{\typ}, then
\smthastype{\smtenvinit, \embed{\env}}{p}{\embed{\typ}}.
\end{lemma}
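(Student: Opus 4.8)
The plan is to proceed by induction on the derivation of the transformation judgment $\tologicshort{\env}{e}{\typ}{p}{\sort}{\smtenv}{\axioms}$, which is syntax-directed on $e$. In each case I would invert the typing derivation $\hastype{\env}{e}{\typ}$ to recover the types of the immediate subterms and then assemble the target sorting derivation $\smthastype{\smtenvinit, \embed{\env}}{p}{\embed{\typ}}$ from the corresponding appeals to the induction hypothesis. The two workhorses are Definition~\ref{def:initialsmt}, which fixes the sorts of every constant, data constructor, $\smtlamname{}{}$, $\smtappname{}{}$, check, and select symbol in $\smtenvinit$, together with the $\maxlamarg$ pre-declared lambda names; and the sort embedding $\embed{\cdot}$, which erases refinements so that $\embed{\tref{v}{\btyp}{e}} = \embed{\btyp}$ and $\embed{\tfun{x}{\typ_x}{\typ}} = \tsmtfun{\embed{\typ_x}}{\embed{\typ}}$. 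Here $\embed{\env}$ is read as the pointwise sort environment assigning $\tbind{x}{\embed{\typ}}$ to each $\tbind{x}{\typ} \in \env$. Because the type embedding ignores refinements, any uses of the non-structural rules $\rtsub$ and $\rtexact$ that I must peel off while inverting the typing derivation are harmless: they preserve $\shape{\typ}$ and hence $\embed{\typ}$.

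The leaf and first-order cases are immediate. For $\lgbool$, $\lgint$, and $\lgvar$ the embedded sort is read directly off the literal, or, for a variable, off the binding $\tbind{x}{\embed{\env(x)}}$ recorded in $\embed{\env}$. For $\lgpop$ and $\lgdc$ the symbols $\smtvar{c}$ and $\smtvar{\dc}$ are declared in $\smtenvinit$ at exactly $\embed{\constty{c}}$ and $\embed{\constty{\dc}}$, so the conclusion is definitional. For $\lgbinGEN$ and $\lgun$ the induction hypothesis gives the operands their embedded sorts and the operator yields $\tbool = \embed{\tbool}$. For $\lgapp$ I would apply the induction hypothesis to both premises, obtaining $\pred$ at $\tsmtfun{\embed{\typ_x}}{\embed{\typ}}$ and $\pred'$ at $\embed{\typ_x}$, and then compose with the declared sort of $\smtappname{\embed{\typ_x}}{\embed{\typ}}$ to land at $\embed{\typ}$. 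The case rules $\lgcaseBool$ and $\lgcase$ are similar: the guards are well-sorted because $\checkdc{\dc_i}$ has sort $\tsmtfun{\embed{T}}{\tbool}$ and each selector $\selector{\dc_i}{j}$ has the sort promised by Definition~\ref{def:initialsmt}, while every branch receives sort $\embed{\typ}$ by the induction hypothesis, so the nested if-then-else is well-sorted at $\embed{\typ}$.

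The main obstacle is the abstraction case $\lgfun$, where the subtlety is reconciling the environment of the induction hypothesis with the target environment. Inverting $\rtfun$ (modulo $\rtsub$) gives $\hastype{\env,\tbind{x}{\typ_x}}{e}{\typ}$, hence $\hastype{\env,\tbind{y}{\typ_x}}{e\subst{x}{y}}{\typ}$ after renaming, so the induction hypothesis applies to the premise $\tologicshort{\env,\tbind{y}{\typ_x}}{e\subst{x}{y}}{}{\pred}{}{}{}$ and yields a sorting of $\pred$ at $\embed{\typ}$ under $\smtenvinit, \embed{\env}, \tbind{y}{\embed{\typ_x}}$. The key observation is that the binder $y = x^{\embed{\typ_x}}_{i+1}$ chosen by the rule is one of the \emph{global} lambda-argument names already declared in $\smtenvinit$ at sort $\embed{\typ_x}$, so adjoining $\tbind{y}{\embed{\typ_x}}$ is redundant and the hypothesis in fact holds under $\smtenvinit, \embed{\env}$ itself. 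Composing with the declared sort of $\smtlamname{\embed{\typ_x}}{\embed{\typ}}$, which consumes $y$ at $\embed{\typ_x}$ and $\pred$ at $\embed{\typ}$, then delivers $\tsmtfun{\embed{\typ_x}}{\embed{\typ}}$, which is exactly $\embed{\tfun{x}{\typ_x}{\typ}}$.

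The one genuinely delicate point, and where I expect the argument to need care, is justifying that this reuse of a global lambda name is sound: the $\maxlam{}{}$ computation must guarantee that $y$ does not already occur as a same-sorted lambda binder inside $\pred$, so no shadowing or accidental capture arises, and the side condition $i < \maxlamarg$ must hold, i.e. the nesting depth of same-sorted abstractions never exceeds the number of pre-declared names. I would discharge the first obligation with a small auxiliary lemma stating that $\maxlam{\embed{\typ_x}}{\pred}$ bounds the indices of all sort-$\embed{\typ_x}$ lambda arguments occurring in $\pred$, so that index $i+1$ is fresh; the second is an assumption built into well-formedness of the translation. With freshness in hand the environment reconciliation above is exact rather than merely up to renaming, and the induction goes through.
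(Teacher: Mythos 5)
Your proposal is correct and takes essentially the same route as the paper's own proof: induction on the transformation derivation, inverting the typing judgment in each case, and discharging every case from the sort declarations in the initial SMT environment \smtenvinit (constants, constructors, checkers, selectors, and the uninterpreted lam/app symbols), with the \lgfun case resolved by the globally pre-declared lambda-argument names. Your additional care about binder freshness (bounding indices of same-sorted lambda arguments) and about peeling off \rtsub and \rtexact during inversion only makes explicit what the paper's terse case analysis leaves implicit.
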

\begin{proof}
We proceed by induction on the translation
\begin{itemize}
\item \lgbool : Since $\embed{\tbool} = \tbool$,
If \hastype{\env}{b}{\tbool}, then  
\smthastype{\smtenvinit, \embed{\env}}{b}{\embed{\tbool}}.

\item \lgint :  Since $\embed{\tint} = \tint$,
If \hastype{\env}{n}{\tint}, then  
\smthastype{\smtenvinit, \embed{\env}}{n}{\embed{\tint}}.

\item \lgun : 
Since $\hastype{\env}{\lnot\ e}{\typ}$, then it should be 
$\hastype{\env}{e}{\tbool}$ and $\typ \equiv \tbool$.
By IH,  
\smthastype{\smtenvinit, \embed{\env}}{\pred}{\embed{\tbool}}, 
thus 
\smthastype{\smtenvinit, \embed{\env}}{\lnot \pred}{\embed{\tbool}}. 

\item \lgbinGEN
Assume 	
 $\tologicshort{\env}{e_1\binop e_2}{\tbool}{\pred_1 \binop\pred_2}{\tbool}{\smtenv}{\andaxioms{\axioms_1}{\axioms_2}}$.  
By inversion
    \tologicshort{\env}{e_1}{\typ}{\pred_1}{\embed{\typ}}{\smtenv}{\axioms_1}, and
    \tologicshort{\env}{e_2}{\typ}{\pred_2}{\embed{\typ}}{\smtenv}{\axioms_2}.
Since 
  \hastype{\env}{e_1\binop e_2}{\typ}, then 
  \hastype{\env}{e_1}{\typ_1} and  
  \hastype{\env}{e_1}{\typ_2}.
By IH,  
\smthastype{\smtenvinit, \embed{\env}}{\pred_1}{\embed{\typ_1}} and  
\smthastype{\smtenvinit, \embed{\env}}{\pred_2}{\embed{\typ_2}}.
We split cases on $\binop$
\begin{itemize}
\item If $\binop \equiv =$, then 
  $\typ_1 = \typ_2$, thus $\embed{\typ_1} = \embed{\typ_2}$
  and $\embed{\typ} = \typ = \tbool$.

\item If $\binop \equiv <$, then 
  $\typ_1 = \typ_2 = \tint$, thus $\embed{\typ_1} = \embed{\typ_2} = \tint$
  and $\embed{\typ} = \typ = \tbool$.

\item If $\binop \equiv \land$, then 
  $\typ_1 = \typ_2 = \tbool$, thus $\embed{\typ_1} = \embed{\typ_2} = \tbool$
  and $\embed{\typ} = \typ = \tbool$.
\item If $\binop \equiv +$ or $\binop \equiv -$, then 
  $\typ_1 = \typ_2 = \tint$, thus $\embed{\typ_1} = \embed{\typ_2} = \tint$
  and $\embed{\typ} = \typ = \tint$.
\end{itemize}

\item \lgvar : 
Assume 
	\tologicshort{\env}{x}{\env(x)}{x}{\embed{\env(x)}}{\emptyset}{\emptyaxioms}
Then 
   \hastype{\env}{x}{\env (x)} and 
   \smthastype{\smtenvinit, \embed{\env}}{x}{\embed{\env} (x)}.
But by definition 
  $(\embed{\env}) (x) = \embed{\env(x)}$. 

\item \lgpop : 
Assume 
	\tologicshort{\env}{c}{\constty{\odot}}{\smtvar{c}}{\embed{\constty{\odot}}}{\emptyset}{\emptyaxioms}
Also, 
   \hastype{\env}{c}{\constty{c}} and 
   \smthastype{\smtenvinit, \embed{\env}}{\smtvar{c}}{\smtenvinit (\smtvar{c})}.
But by Definition~\ref{def:initialsmt}
  $\smtenvinit (\smtvar{c}) = \embed{\constty{c}}$. 

\item \lgdc : 
Assume 
	\tologicshort{\env}{\dc}{\constty{\dc}}{\smtvar{\dc}}{\embed{\constty{\dc}}}{\emptyset}{\emptyaxioms}
Also, 
   \hastype{\env}{\dc}{\constty{\dc}} and 
   \smthastype{\smtenvinit, \embed{\env}}{\smtvar{\dc}}{\smtenvinit (\smtvar{\dc})}.
But by Definition~\ref{def:initialsmt}
  $\smtenvinit(\smtvar{\dc}) = \embed{\constty{c}}$. 

\item \lgfun : 
Assume 
	\tologicshort{\env}{\efun{x}{}{e}}{(\tfun{x}{\typ_x}{\typ})}
	        {\smtlamname{\embed{\typ_x}}{\embed{\typ}}\ {x^{\embed{\typ_x}}_{i}}\ {\pred}}
	        {\sort'}{\smtenv, \tbind{f}{\sort'}}{\andaxioms{\{\axioms_{f_1}, \axioms_{f_2}\}}{\axioms}}. 
By inversion 
    $i \leq \maxlamarg$
	\tologicshort{\env, \tbind{x^{\embed{\typ_x}}_{i}}{\typ_x}}{e\subst{x}{x^{\embed{\typ_x}}_{i}}}{}{\pred}{}{}{}, and
  	\hastype{\env}{(\efun{x}{}{e})}{(\tfun{x}{\typ_x}{\typ})}.
By the Definition~\ref{def:initialsmt} on $\smtlamname{}{}$, $x^{\sort}_i$ and induction, we get
   \smthastype{\smtenvinit, \embed{\env}}
     {\smtlamname{\embed{\typ_x}}{\embed{\typ}}\ {x^{\embed{\typ_x}}_{i}}\ {\pred}}
     {\tsmtfun{\embed{\typ_x}}{\embed{\typ}}}.
By the definition of the type embeddings we have
$\embed{\tfunbasic{x}{\typ_x}{\typ}} \defeq \tsmtfun{\embed{\typ_x}}{\embed{\typ}}$.

\item \lgapp : 
Assume 
\tologicshort{\env}{e\ e'}{\typ}
  {\smtappname{\embed{\typ_x}}{\embed{\typ}}\ {\pred}\ {\pred'}}{\embed{\typ}}{\smtenv}{\andaxioms{\axioms}{\axioms'}}. 
By inversion, 
	\tologicshort{\env}{e'}{\typ_x}{\pred'}{\embed{\typ_x}}{\smtenv}{\axioms'}, 
	\tologicshort{\env}{e}{\tfun{x}{\typ_x}{\typ}}{\pred}{\tsmtfun{\embed{\typ_x}}{\embed{\typ}}}{\smtenv}{\axioms},
	\hastype{\env}{e}{\tfun{x}{\typ_x}{\typ}}. 
By IH and the type of $\smtappname{}{}$ we get that 
   \smthastype{\smtenvinit, \embed{\env}}
     {\smtappname{\embed{\typ_x}}{\embed{\typ}}\ {\pred}\ {\pred'}}
     {\embed{\typ}}.

\item \lgcaseBool : 
Assume
	\tologicshort{\env}{\ecaseexp{x}{e}{\etrue \rightarrow e_1; \efalse \rightarrow e_2}}{\typ}
	 {\eif{\pred}{\pred_1}{\pred_2}}{\embed{\typ}}{\smtenv}{\andaxioms{\axioms}{\axioms_i}}
Since 
  \hastype{\env}{\ecaseexp{x}{e}{\etrue \rightarrow e_1; \efalse \rightarrow e_2}}{\typ}, then 
  \hastype{\env}{e}{\tbool}, 
  \hastype{\env}{e_1}{\typ}, and
  \hastype{\env}{e_2}{\typ}.
By inversion and IH, 
  \smthastype{\smtenvinit, \embed{\env}}{\pred}{\tbool}, 
  \smthastype{\smtenvinit, \embed{\env}}{\pred_1}{\embed{\typ}}, and
  \smthastype{\smtenvinit, \embed{\env}}{\pred_2}{\embed{\typ}}.
Thus, 
  \smthastype{\smtenvinit, \embed{\env}}{\eif{\pred}{\pred_1}{\pred_2}}{\embed{\typ}}.

\item \lgcase : 
Assume 
	\tologicshort{\env}{\ecase{x}{e}{\dc_i}{\overline{y_i}}{e_i}}{\typ}
	 {\eif{\checkdc{\dc_1}\ \pred}{\pred_1}{\ldots} \ \mathtt{else}\ \pred_n}{\embed{\typ}}{\smtenv}
	 {\andaxioms{\axioms}{\axioms_i}} and
	 \hastype{\env}{\ecase{x}{e}{\dc_i}{\overline{y_i}}{e_i}}{\typ}.
By inversion we get 
	\tologicshort{\env}{e}{\typ_e}{\pred}{\embed{\typ_e}}{\smtenv}{\axioms} and
	\tologicshort{\env}{e_i\subst{\overline{y_i}}{\overline{\selector{\dc_i}{}\ x}}\subst{x}{e}}{\typ}{\pred_i}{\embed{\typ}}{\smtenv}{\axioms_i}.
By IH and the Definition~\ref{def:initialsmt} on the checkers and selectors, we get
  \smthastype{\smtenvinit, \embed{\env}}{\eif{\checkdc{\dc_1}\ \pred}{\pred_1}{\ldots} \ \mathtt{else}\ \pred_n}{\embed{\typ}}.
\end{itemize}
\end{proof}

\newcommand\tsub[1]{\ensuremath{{\theta_{#1}^\perp}}\xspace}
\newcommand\track[2]{\ensuremath{\langle #1; #2\rangle}\xspace}
\begin{theorem}\label{thm:approximation}
If \tologicshort{\env}{\refa}{}{\pred}{}{}{},
then for every substitution $\sub\in\interp{\env}$
and every model $\sigma\in\interp{\theta^\perp}$,
if $\evalsto{\applysub{\theta^\perp}{\refa}}{v}$ then $\sigma^\beta \models \pred = \interp{v}$.
\end{theorem}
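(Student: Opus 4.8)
The plan is to prove the statement by induction on the derivation of the translation judgment \tologicshort{\env}{\refa}{}{\pred}{}{}{}, whose rules (Figure~\ref{fig:defunc}) are syntax-directed, so this amounts to an induction on the structure of $\refa$ with the operational semantics of \corelan supplying the link between the concrete reduct $v$ and the logical term $\pred$. Throughout I fix $\sub\in\interp{\env}$ and a model $\sigma\in\interp{\theta^\perp}$, and work with its $\beta$-model extension $\sigma^\beta$ (Definition~\ref{def:beta-model}), which additionally interprets $\smtlamname{}{}$ and $\smtappname{}{}$ so that the two $\beta$-equivalence axioms hold; I write $\embed{v}$ for the \smtlan translation of $v$ appearing on the right-hand side (written $\interp{v}$ in the statement). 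Two auxiliary facts are used repeatedly: that translation commutes with value substitution, $\embed{e\subst{x}{w}} = \embed{e}\subst{x}{\embed{w}}$ (a separate structural induction on $e$), and that $\models$ is a congruence, so equalities established for subterms may be pushed through the uninterpreted symbols $\smtlamname{}{}$, $\smtappname{}{}$, $\checkdc{\dc}$ and $\selector{\dc}{i,j}$.

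The base cases (\lgbool, \lgint, \lgvar, \lgpop, \lgdc) are immediate: booleans, integers and variables translate to themselves, while constants and data constructors translate to SMT symbols whose interpretation is fixed by the initial environment to agree with the \corelan semantics. For the operator cases (\lgbinGEN, \lgun) I would reduce $\applysub{\theta^\perp}{\refa}$ past its arguments: each argument reduces to a value, the induction hypothesis equates its translation with the translation of that value under $\sigma^\beta$, and since QF-EUFLIA interprets $\binop$ and $\unop$ exactly as the primitive operations $\ceval{\cdot}{\cdot}$ of \corelan, congruence closes the case.

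The function and application cases are the technical heart. For \lgfun a $\lambda$-abstraction is already a value, so $\applysub{\theta^\perp}{\refa}$ reduces to itself and $\embed{\refa}$ is literally $\smtlamname{}{}\ x\ \pred$; the only subtlety is the renaming of the binder to a fresh, pre-declared name, discharged by the first $\beta$-model axiom (bound-variable renaming for $\smtlamname{}{}$). For \lgapp, $\applysub{\theta^\perp}{(e\ e')}$ reduces by first reducing $e$ to some $\efun{x}{}{e_0}$ and $e'$ to some $v'$, and then reducing $e_0\subst{x}{v'}$ to $v$. The induction hypotheses on $e$ and $e'$ give $\sigma^\beta\models \pred = \smtlamname{}{}\ x\ \embed{e_0}$ and $\sigma^\beta\models \pred' = \embed{v'}$; congruence then yields $\sigma^\beta \models \smtappname{}{}\ \pred\ \pred' = \smtappname{}{}\ (\smtlamname{}{}\ x\ \embed{e_0})\ \embed{v'}$, and the second $\beta$-model axiom rewrites the right-hand side to $\embed{e_0}\subst{x}{\embed{v'}}$, which by the substitution lemma equals $\embed{e_0\subst{x}{v'}}$. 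It remains to equate this with $\embed{v}$, which is exactly the claim for the reduct $\evalsto{e_0\subst{x}{v'}}{v}$.

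Here lies the main obstacle: $e_0\subst{x}{v'}$ is not a structural subterm of $e\ e'$, so a plain induction on the translation derivation does not justify the last step. I would resolve this by strengthening the induction to a lexicographic one on the pair consisting of the length of the reduction $\evalsto{\applysub{\theta^\perp}{\refa}}{v}$ and the size of the translation derivation (a standard logical-relations-style nested induction): in \lgapp the reduct is reached by a strictly shorter remaining reduction, so the hypothesis applies to it. The case-expression cases (\lgcaseBool, \lgcase) are then routine but must connect pattern matching to the logical encoding: when the scrutinee reduces to $\dc_j\ \overline{w}$, the induction hypothesis together with the fixed interpretation of the checkers $\checkdc{\dc_i}$ makes exactly the $j$-th guard true, the selectors $\selector{\dc_j}{}$ recover the $\overline{w}$, and the induction hypothesis on the chosen branch (again reached by a shorter reduction) finishes the case. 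The only remaining care is that the lifted substitution $\theta^\perp$ assigns arbitrary logical values only to genuinely unforced, possibly diverging subterms, so that every subterm actually demanded during evaluation is pinned down by its own induction hypothesis.
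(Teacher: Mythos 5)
Your proposal reaches the right conclusion but by a genuinely different route from the paper. The paper's proof factors the computation $\evalsto{\applysub{\theta^\perp}{\refa}}{v}$ into a chain of \emph{single} reduction steps over \emph{tracking substitutions} (machinery imported from the \undeclang technical report), observes that every intermediate term is well formed and therefore has its own translation, and then iterates the single-step Equivalence Preservation result (Lemma~\ref{lemma:approximation}) $n-1$ times, chaining the resulting equalities; that lemma is in turn proved by case analysis on the \emph{reduction step} --- the congruence/context cases plus the $\delta$-, $\beta$-, case-of-constructor, and variable-lookup redexes --- with the $\beta$-model axioms of Definition~\ref{def:beta-model} discharging exactly the $\beta$ case, and with variable bindings reduced in place so that the model is extended as constructor bindings unfold. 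You instead apply the closing substitution once, up front, and run a lexicographic induction on the pair (reduction length, size of the translation derivation), doing the case analysis on the \emph{translation rules} of Figure~\ref{fig:defunc}; your strengthened measure is exactly what legitimizes recursing on a $\beta$-redex body or a selected case branch, which are not structural subterms. The trade-off is this: the paper's small-step formulation never has to decompose a long reduction into head evaluation, redex firing, and continuation --- the contextual step relation localizes the redex for free, and the tracking substitutions dispose of call-by-name bindings without further argument --- while your big-step formulation avoids that machinery entirely and makes rule coverage evident, but must invoke a standardization-style decomposition of the reduction in the application and case cases. Both routes need preservation of well-formedness so that the terms recursed on still possess translations; the paper states this explicitly and you should too.

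Two repairs are needed before your argument is airtight. First, the semantics is call-by-name: in the \lgapp case the argument $e'$ is substituted \emph{unevaluated}, so you must not reduce it to a value $v'$; instead apply the $\beta$-model axiom and your substitution-commutation lemma directly to the translation of $e'$ --- both are stated for arbitrary terms, so this is a phrasing fix, not a gap, but as written your case analysis follows the wrong evaluation order. Second, your variable ``base case'' is not purely notational: it rests on the definition of $\interp{\theta^\perp}$, namely that a model of a lifted substitution assigns to each binder with a terminating binding (the translation of) its value, and arbitrary logical values only to bottoms. This is precisely the consistency that the paper's tracked-substitution cases establish step by step, so in your formulation it must be surfaced as an explicit assumed property of models of lifted substitutions rather than bundled into ``variables translate to themselves.''
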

\begin{proof}
We proceed using the notion of tracking substitutions from Figure 8 of~\citep{Vazou14-tech}. 
Since $\evalsto{\applysub{\theta^\perp}{\refa}}{v}$, 
there exists a sequence of evaluations via tracked substitutions, 
$$
\track{\tsub{1}}{e_1} \hookrightarrow \dots \track{\tsub{i}}{e_i} \dots \hookrightarrow \track{\tsub{n}}{e_n}
$$
with $\tsub{1}\equiv\tsub{}$, $e_1\equiv e$, and $e_n\equiv v$. 
Moreover, each $e_{i+1}$ is well formed under $\Gamma$, 
thus it has a translation 
$\tologicshort{\Gamma}{\refa_{i+1}}{}{\pred_{i+1}}{}{}{}$. 
Thus we can iteratively apply Lemma~\ref{lemma:approximation} $n-1$ times and 
since $v$ is a value the extra variables in $\tsub{n}$ are irrelevant, thus we
get the required
$\sigma^\beta \models \pred = \interp{v}$. 
\end{proof}

For Boolean expressions we specialize the above to
\begin{corollary}\label{corollary:embedding}
If \hastype{\env}{\refa}{\tbool^\downarrow} and
\tologicshort{\env}{\refa}{\tbool}{\pred}{\tbool}{\smtenv}{\axioms},
then for every substitution $\sub\in\interp{\env}$
and every model $\sigma\in\interp{\theta^\perp}$,
$\evalsto{\applysub{\theta^\perp}{\refa}}{\etrue} \iff \sigma^\beta \models \pred $
\end{corollary}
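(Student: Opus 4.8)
The plan is to obtain the corollary as a specialization of Theorem~\ref{thm:approximation} to boolean-typed terms, leveraging two facts that hold only in this case: that boolean literals embed to themselves (rule \lgbool yields $\interp{\etrue} = \etrue$ and $\interp{\efalse} = \efalse$), and that the termination label $\downarrow$ forces $\refa$ to reduce to a value. The forward direction is immediate: assuming $\evalsto{\applysub{\theta^\perp}{\refa}}{\etrue}$, I would instantiate Theorem~\ref{thm:approximation} at $v = \etrue$ to get $\sigma^\beta \models \pred = \interp{\etrue}$, and since $\interp{\etrue} = \etrue$ this is exactly $\sigma^\beta \models \pred$.

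For the backward direction I first need that $\applysub{\theta^\perp}{\refa}$ actually evaluates to a boolean value. Since $\hastype{\env}{\refa}{\tbool^\downarrow}$ and $\sub \in \interp{\env}$, denotational soundness (Theorem~\ref{thm:appendix:safety}) puts $\applysub{\sub}{\refa}$ into $\interp{\tbool^\downarrow}$; the $\downarrow$ conjunct of this denotation guarantees termination, and preservation pins the normal form to be a value of base type $\tbool$, hence some $v \in \{\etrue, \efalse\}$. Assuming $\sigma^\beta \models \pred$, Theorem~\ref{thm:approximation} gives $\sigma^\beta \models \pred = \interp{v}$. If $v$ were $\efalse$ we would obtain $\sigma^\beta \models \pred = \efalse$, i.e. $\sigma^\beta \models \lnot \pred$, contradicting $\sigma^\beta \models \pred$ by consistency of the model; therefore $v = \etrue$ and $\evalsto{\applysub{\theta^\perp}{\refa}}{\etrue}$ as required.

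The only genuinely delicate point is this termination step, which is precisely where the label $\downarrow$ is indispensable: for a diverging $\refa$ the left-hand side of the biconditional would be false while $\pred$ could still be satisfiable, breaking the equivalence. A secondary subtlety is the passage between the plain substitution $\sub$ used by the soundness theorem and the lifted substitution $\theta^\perp$ used by Theorem~\ref{thm:approximation}; because $\refa$ is provably terminating it contains no bottoms, so $\theta^\perp$ agrees with $\sub$ on the evaluation of $\refa$ and introduces no spurious values. With these observations in place the remaining reasoning is routine.
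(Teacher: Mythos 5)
Your proposal is correct and follows essentially the same route as the paper's own proof: the forward direction instantiates Theorem~\ref{thm:approximation} at $v \equiv \etrue$, and the backward direction uses termination to obtain a boolean value $v$ and rules out $v \equiv \efalse$ by deriving the contradiction $\sigma^\beta \models \lnot \pred$. Your additional justification of the termination step via the $\downarrow$ label and the remark on lifted substitutions simply spell out details the paper leaves implicit.
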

\begin{proof}
We prove the left and right implication separately:
\begin{itemize}
\item $\Rightarrow$
By direct application of Theorem~\ref{thm:approximation} for $v \equiv \etrue$. 

\item $\Leftarrow$ 
Since $\refa$ is terminating, 
$\evalsto{\applysub{\theta^\perp}{\refa}}{v}$. 
with either $v \equiv \etrue$ or $v \equiv \efalse$. 
Assume $v \equiv \efalse$, then by Theorem~\ref{thm:approximation}, 
$\bmodel \models \lnot \pred$, which is a contradiction. 
Thus, $v \equiv \etrue$.
\end{itemize}
\end{proof}

\begin{lemma}[Equivalence Preservation]\label{lemma:approximation}
If \tologicshort{\env}{\refa}{}{\pred}{}{}{},
then for every substitution $\sub\in\interp{\env}$
and every model $\sigma\in\interp{\theta^\perp}$,
if  
$\track{\tsub{}}{\refa}\hookrightarrow\track{\tsub{2}}{\refa_2}$
and
for $\Gamma \subseteq \Gamma_2$ so that $\tsub{2} \in \interp{\Gamma_2}$
and $\bmodel_2\in\interp{\tsub{2}}$,
$\tologicshort{\Gamma_2}{\refa_2}{}{\pred_2}{}{}{}$
then 
$\bmodel \cup (\bmodel_2 \setminus \bmodel) \models \pred = \pred_2$.
\end{lemma}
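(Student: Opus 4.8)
The plan is to prove this single-step equivalence lemma by case analysis on the reduction rule used to derive $\track{\tsub{}}{\refa}\hookrightarrow\track{\tsub{2}}{\refa_2}$ in the tracking-substitution operational semantics of~\citep{Vazou14-tech}, after which Theorem~\ref{thm:approximation} will simply iterate it along a full evaluation sequence. The reduction rules of Figure~\ref{fig:semantics} split naturally into \emph{redex} steps (primitive-constant application, $\beta$-reduction, and $\mathtt{case}$ reduction on a fully-applied constructor) and \emph{congruence} steps (reduction inside an evaluation context $C$). I would dispatch the congruence steps first: when the redex fires inside a proper subterm, I apply the induction hypothesis to that subterm to obtain equality of its two translations, and then lift this equality through $C$ using the fact that equality is a congruence in every model. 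This lifting is a routine sub-induction over the context formers of Figure~\ref{fig:semantics} (application, constructor argument, and $\mathtt{case}$ scrutinee), each of which is translated homomorphically by the rules $\lgapp$, $\lgdc$, and $\lgcase$ of Figure~\ref{fig:defunc}.

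The genuine work lies in the redex cases. For primitive application $\ceval{c}{v}$ and for the fully-applied operator forms ($\lgbinGEN$, $\lgun$, and the constant rules), the translated operators are \emph{interpreted} symbols of QF-EUFLIA, so their defining equations over arithmetic, $=$, $<$, $\land$, and $\lnot$ hold in every model; hence $\tosmt{c\ v}=\tosmt{\ceval{c}{v}}$ is immediate. For $\mathtt{case}$ reduction on $\dc_j\ \overline{w}$, I use that the checkers $\checkdc{\dc_i}$ and selectors $\selector{\dc_j}{i}$ are constrained by the SMT theory of algebraic datatypes so that $\checkdc{\dc_j}$ is true, each other $\checkdc{\dc_i}$ is false, and the selectors recover the constructor arguments; this collapses the translated nested-$\mathtt{if}$ produced by $\lgcase$ to the translation of the selected branch.

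The main obstacle, and the reason the model must be a $\beta$-model rather than an arbitrary model, is the $\beta$-reduction step $(\efun{x}{\typ}{e})\ e' \hookrightarrow e\subst{x}{e'}$. Here the translation of the redex is $\smtappname{}{}\ (\smtlamname{}{}\ y\ \pred_e)\ \pred_{e'}$, where $y$ is the global lambda-argument name assigned by the modified $\lgfun$ rule. The second axiom of Definition~\ref{def:beta-model} gives exactly $\smtappname{}{}\ (\smtlamname{}{}\ y\ \pred_e)\ \pred_{e'} = \pred_e\subst{y}{\pred_{e'}}$, so it remains to show that $\pred_e\subst{y}{\pred_{e'}}$ coincides with $\tosmt{e\subst{x}{e'}}$. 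I would establish this with a \emph{substitution lemma} stating that translation commutes with substitution modulo the global renaming of binders. The delicate points are (i) proving that lemma against the fixed pool of global binder names $x^{\sort}_i$, using the first axiom of Definition~\ref{def:beta-model} to realign any names that clash, and (ii) discharging the side condition $i < \maxlamarg$ of $\lgfun$, which asks that we never exhaust the name pool; this needs an auxiliary observation that the $\lambda$-nesting depth of refinements arising during evaluation stays bounded.

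Finally, the model extension $\bmodel \cup (\bmodel_2 \setminus \bmodel)$ is bookkeeping: a step may introduce or discard occurrences of the shared global names, and these fresh assignments are chosen so as to respect $\tsub{2}\in\interp{\Gamma_2}$. Since they name only variables that do not occur free on both sides of the asserted equality, they leave the truth of $\pred = \pred_2$ unchanged. I expect the congruence and primitive cases to be short, with essentially all of the technical weight concentrated in the substitution lemma and the accompanying bounded-depth argument for the $\beta$-reduction case.
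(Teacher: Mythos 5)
Your proposal follows essentially the same route as the paper's proof: case analysis on the tracked reduction step, with the induction hypothesis plus congruence of equality handling context steps, theory-interpreted operators handling primitive application, the ADT checker/selector axioms collapsing the translated nested-\texttt{if} in the case-of-constructor step, and the $\beta$-model axiom discharging the $\beta$-redex. The main difference is that you explicitly isolate (and worry about) a substitution lemma showing that translation commutes with substitution over the global binder pool, a step the paper's one-line $\beta$-case silently identifies with the axiom's right-hand side, whereas the paper is more explicit than you about the variable-lookup/tracking cases of the semantics, which it dispatches individually.
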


\begin{proof}
We proceed by case analysis on the derivation 
$\track{\tsub{}}{\refa}\hookrightarrow\track{\tsub{2}}{\refa_2}$.
\begin{itemize}
\item 
Assume
	$\track{\tsub{}}{\refa_1\ \refa_2}\hookrightarrow\track{\tsub{2}}{\refa_1'\ \refa_2}$. 
By inversion 
	$\track{\tsub{}}{\refa_1}\hookrightarrow\track{\tsub{2}}{\refa_1'}$.
Assume
  $\tologicshort{\Gamma}{\refa_1}{}{\pred_1}{}{}{}$, 	
  $\tologicshort{\Gamma}{\refa_2}{}{\pred_2}{}{}{}$, 	
  $\tologicshort{\Gamma_2}{\refa_1'}{}{\pred_1'}{}{}{}$.  	
By IH 
  $\bmodel \cup (\bmodel_2 \setminus \bmodel)
    \models \pred_1 = \pred_1'$, 
thus  
  $\bmodel \cup (\bmodel_2 \setminus \bmodel)
    \models \smapp{\pred_1}{\pred_2} = \smapp{\pred_1'}{\pred_2}$. 
    
\item 
Assume
	$\track{\tsub{}}{c\ \refa}\hookrightarrow\track{\tsub{2}}{c\ \refa'}$. 
By inversion 
	$\track{\tsub{}}{\refa}\hookrightarrow\track{\tsub{2}}{\refa'}$.
Assume
  $\tologicshort{\Gamma}{\refa}{}{\pred}{}{}{}$,	
  $\tologicshort{\Gamma}{\refa'}{}{\pred'}{}{}{}$.  	
By IH 
  $\bmodel \cup (\bmodel_2 \setminus \bmodel)
    \models \pred = \pred'$, 
thus  
  $\bmodel \cup (\bmodel_2 \setminus \bmodel)
    \models \smapp{c}{\pred} = \smapp{c}{\pred'}$. 

\item 
Assume
	$\track{\tsub{}}{\ecase{x}{\refa}{\dc_i}{\overline{y_i}}{e_i}}
	  \hookrightarrow\track
	  {\tsub{2}}{\ecase{x}{\refa'}{\dc_i}{\overline{y_i}}{e_i}}$. 
By inversion 
	$\track{\tsub{}}{\refa}\hookrightarrow\track{\tsub{2}}{\refa'}$.
Assume
  $\tologicshort{\Gamma}{\refa}{}{\pred}{}{}{}$,	
  $\tologicshort{\Gamma}{\refa'}{}{\pred'}{}{}{}$.  	
  $\tologicshort{\env}{e_i\subst{\overline{y_i}}{\overline{\selector{\dc_i}{}\ x}}\subst{x}{e}}{\typ}{\pred_i}{\embed{\typ}}{\smtenv}{\axioms_i}$.
By IH 
  $\bmodel \cup (\bmodel_2 \setminus \bmodel)
    \models \pred = \pred'$, 
thus  
  $\bmodel \cup (\bmodel_2 \setminus \bmodel)
   \models
	 {\eif{\checkdc{\dc_1}\ \pred}{\pred_1}{\ldots} \ \mathtt{else}\ \pred_n}{\embed{\typ}}
 $\\ $=
	 {\eif{\checkdc{\dc_1}\ \pred'}{\pred_1}{\ldots} \ \mathtt{else}\ \pred_n}{\embed{\typ}}
  $.	 

\item 
Assume
	$\track{\tsub{}}{D\ \overline{\refa_i}\ \refa\ \overline{\refa_j}}
	\hookrightarrow\track{\tsub{2}}{D\ \overline{\refa_i}\ \refa'\ \overline{\refa_j}}$. 
By inversion 
	$\track{\tsub{}}{\refa}\hookrightarrow\track{\tsub{2}}{\refa'}$.
Assume
  $\tologicshort{\Gamma}{\refa}{}{\pred}{}{}{}$,	
  $\tologicshort{\Gamma}{\refa_i}{}{\pred_i}{}{}{}$,	
  $\tologicshort{\Gamma}{\refa'}{}{\pred'}{}{}{}$.  	
By IH 
  $\bmodel \cup (\bmodel_2 \setminus \bmodel)
    \models \pred = \pred'$, 
thus  
  $\bmodel \cup (\bmodel_2 \setminus \bmodel)
    \models \smapp{D}{\overline{\pred_i}\ \pred\ \overline{\pred_j}} 
          = \smapp{D}{\overline{\pred_i}\ \pred'\ \overline{\pred_j}}$. 

\item 
Assume
	$\track{\tsub{}}{c\ w}
	\hookrightarrow\track{\tsub{}}{\delta(c, w)}$. 
By the definition of the syntax, $c\ w$ is a fully applied logical operator, thus
  $\bmodel \cup (\bmodel_2 \setminus \bmodel)
    \models {c\ w} = \interp{\delta(c, w)}$

\item 
Assume
	$\track{\tsub{}}{(\efun{x}{}{\refa}) \refa_x}
	\hookrightarrow\track{\tsub{}}{\refa\subst{x}{\refa_x}}$. 
Assume
  $\tologicshort{\Gamma}{\refa}{}{\pred}{}{}{}$,	
  $\tologicshort{\Gamma}{\refa_x}{}{\pred_x}{}{}{}$. 
Since $\bmodel$ is defined to satisfy the $\beta$-reduction axiom, 
  $\bmodel \cup (\bmodel_2 \setminus \bmodel)
    \models \smapp{(\smlam{x}{\refa})}{\pred_x} =\pred\subst{x}{\pred_x}$. 
  
\item 
Assume
	$\track{\tsub{}}{\ecase{x}{D_j\ \overline{e}}{\dc_i}{\overline{y_i}}{e_i}}
	\hookrightarrow\track{\tsub{}}{e_j\subst{x}{D_j\ \overline{e}}\subst{y_i}{\overline{e}}}$. 
Also, let 
  $\tologicshort{\Gamma}{\refa}{}{\pred}{}{}{}$,	
  $\tologicshort{\Gamma}{\refa_i\subst{x}{D_j\ \overline{\refa}}\subst{y_i}{\overline{\refa_i}}}{}{\pred_i}{}{}{}$.
By the axiomatic behavior of the measure selector $\checkdc{\dc_j\ \overline{\pred}}$, we get 
  $\bmodel\models \checkdc{\dc_j\ \overline{\pred}}$.
Thus, 
  $\bmodel
	 {\eif{\checkdc{\dc_1}\ \pred}{\pred_1}{\ldots} \ \mathtt{else}\ \pred_n}
	= \pred_j
  $.

\item 
Assume
	$\track{(x, e_x)\tsub{}}{x}
	\hookrightarrow
	\track{(x, e_x')\tsub{2}}{x}$.
By inversion
	$\track{\tsub{}}{e_x}
	\hookrightarrow
	\track{\tsub{2}}{e_x'}$.
By identity of equality, 
  $(x, \pred_x)\bmodel \cup (\bmodel_2 \setminus \bmodel)
    \models x = x$. 

\item 
Assume
	$\track{(y, e_y)\tsub{}}{x}
	\hookrightarrow
	\track{(y, e_y)\tsub{2}}{e_x}$.
By inversion
	$\track{\tsub{}}{x}
	\hookrightarrow
	\track{\tsub{2}}{e_x}$.
Assume 
  $\tologicshort{\Gamma}{\refa_x}{}{\pred_x}{}{}{}$.
By IH
  $\bmodel \cup (\bmodel_2 \setminus \bmodel)
    \models x = \pred_x$. 
Thus  
  $(y,\pred_y)\bmodel \cup (\bmodel_2 \setminus \bmodel)
    \models x = \pred_x$. 

\item 
Assume
	$\track{(x, w)\tsub{}}{x}
	\hookrightarrow
	\track{(x, w)\tsub{}}{w}$.
Thus  
  $(x,\interp{w})\bmodel 
    \models x = \interp{w}$. 

\item 
Assume
	$\track{(x, D\ \overline{y})\tsub{}}{x}
	\hookrightarrow
	\track{(x, D\ \overline{y})\tsub{}}{D\ \overline{y}}$.
Thus  
  $(x,\smapp{D}{\overline{y}})\bmodel 
    \models x = \smapp{D}{\overline{y}}$. 

\item 
Assume
	$\track{(x, D\ \overline{e})\tsub{}}{x}
	\hookrightarrow
	\track{(x, D\ \overline{y}), \overline{(y_i, e_i)}\tsub{}}{D\ \overline{y}}$.
Assume 
  $\tologicshort{\Gamma}{\refa_i}{}{\pred_i}{}{}{}$.
Thus  
  $(x,\smapp{D}{\overline{y}}), \overline{(y_i, \pred_i)}\bmodel 
    \models x = \smapp{D}{\overline{y}}$. 
\end{itemize}
\end{proof}

\subsection{Soundness of Approximation}
\begin{theorem}[Soundness of Algorithmic]
If \ahastype{\env}{e}{\typ} then \hastype{\env}{e}{\typ}.
\end{theorem}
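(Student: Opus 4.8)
The plan is to prove the statement by induction on the derivation of \ahastype{\env}{e}{\typ}. The guiding observation is that the algorithmic system \smtlan and the denotational system \corelan share \emph{every} typing rule---\rtvar, \rtconst, \rtexact, \rtfun, \rtapp, \rtlet, \rtreflect, and \rtcase---together with every well-formedness rule; the two systems diverge \emph{only} at the base subtyping rule, where \smtlan discharges a verification condition with the SMT solver while \corelan requires denotational containment. Consequently, for every typing rule other than \rtsub the argument is immediate: each premise is itself an algorithmic typing judgment, so the induction hypothesis converts it to the corresponding denotational judgment, and the identical \corelan rule then re-derives \hastype{\env}{e}{\typ}.

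The only case that requires work is \rtsub, whose premise is an algorithmic subtyping judgment \aissubtype{\env}{\typ'}{\typ}. Here it suffices to show that algorithmic subtyping implies denotational subtyping, i.e.\ that \aissubtype{\env}{\typ'}{\typ} entails \issubtype{\env}{\typ'}{\typ}, after which the denotational \rtsub applies. I would establish this implication by a secondary induction on the subtyping derivation. The function case \rsubfun is structurally identical in both systems, so it follows directly from the inductive hypotheses applied to the contravariant domain premise and the covariant codomain premise. The base case is exactly Lemma~\ref{lem:appendix:subtyping}, which I expect to be the main obstacle.

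Lemma~\ref{lem:appendix:subtyping} is where purely syntactic SMT reasoning must be reconciled with the denotational semantics, and its proof rests on Corollary~\ref{thm:appendix:embedding}, the semantic-preservation property of the \corelan-to-\smtlan translation: for a terminating, boolean-typed refinement, \corelan evaluation to $\etrue$ holds \emph{iff} the translated predicate is satisfied in every $\beta$-model. The plan for the base case is to fix a closing substitution $\sto \in \interp{\env}$ and a value $w$ with $w \in \interp{\applysub{\sto}{\tref{v}{\btyp}{\refa_1}}}$, so that $\refa_1$ instantiated at $w$ evaluates to $\etrue$. The forward direction of the corollary then places its embedding in every $\beta$-model; the SMT validity of the verification condition supplied by the algorithmic subtyping rule forces the embedding of $\refa_2$ to hold as well; and the backward direction of the corollary returns $\refa_2$ evaluating to $\etrue$, yielding $w \in \interp{\applysub{\sto}{\tref{v}{\btyp}{\refa_2}}}$ and hence the required denotational containment.

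The subtle point---and the reason I flag Lemma~\ref{lem:appendix:subtyping} as the hard part---is the role of the termination labels \tlabel carried by the algorithmic base types: they guarantee that the refinements actually reduce to a value, which is precisely the hypothesis under which the embedding corollary is applicable, so that both directions of the equivalence are available. Discharging the corollary itself would in turn reduce, via Theorem~\ref{thm:approximation}, to the equivalence-preservation Lemma~\ref{lemma:approximation}, proved by case analysis on the tracked-substitution reduction relation. With the subtyping implication in hand, the \rtsub case closes, the outer induction completes, and the theorem follows.
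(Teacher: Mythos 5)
Your proposal is correct and follows essentially the same route as the paper: the paper reduces the theorem to the key lemma that algorithmic base subtyping implies denotational subtyping (Lemma~\ref{lem:appendix:subtyping}), leaving the outer induction over the shared typing rules implicit, and proves that lemma exactly as you describe --- fix a closing substitution, dispatch diverging expressions trivially, and for values use both directions of the embedding corollary (Corollary~\ref{thm:appendix:embedding}) together with SMT validity of the verification condition, with the termination labels supplying the hypothesis under which the corollary applies. Your only additions are making the outer induction and the secondary induction on subtyping derivations (the \rsubfun case) explicit, which the paper elides; the substance is identical.
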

\begin{proof}
To prove soundness it suffices to prove that subtyping is appropriately approximated, 
as stated by the following lemma.
\end{proof}

\begin{lemma}
If \aissubtype{\env}{\tref{v}{\btyp}{e_1}}{\tref{v}{\btyp}{e_2}}
then \issubtype{\env}{\tref{v}{\btyp}{e_1}}{\tref{v}{\btyp}{e_2}}.
\end{lemma}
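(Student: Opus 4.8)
The conclusion $\issubtype{\env}{\tref{v}{\btyp}{e_1}}{\tref{v}{\btyp}{e_2}}$ is, by the definition of denotational base subtyping, the containment
$$\forall \theta\in\interp{\env}.\ \interp{\applysub{\theta}{\tref{v}{\btyp}{e_1}}} \subseteq \interp{\applysub{\theta}{\tref{v}{\btyp}{e_2}}}.$$
The plan is to fix an arbitrary $\theta\in\interp{\env}$ and an expression $e\in\interp{\applysub{\theta}{\tref{v}{\btyp}{e_1}}}$, and then show $e\in\interp{\applysub{\theta}{\tref{v}{\btyp}{e_2}}}$. Unfolding the definition of $\interp{\cdot}$, both memberships share the same base-typing obligation $\bhastype{}{e}{\btyp}$, which we already have; so the only new goal is the refinement obligation: if $\evalsto{e}{w}$, then $\evalsto{\applysub{\theta}{e_2}\subst{v}{w}}{\etrue}$. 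If $e$ diverges there is nothing to prove, so I would assume $\evalsto{e}{w}$, at which point membership of $e$ in the left denotation already supplies $\evalsto{\applysub{\theta}{e_1}\subst{v}{w}}{\etrue}$.

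The heart of the argument is to move this operational fact into the logic, apply the validity hypothesis supplied by \aissubtype, and move the result back out. Concretely, I would extend $\theta$ to $\theta^\star \defeq \theta, \gbind{v}{w}$ and verify that $\theta^\star\in\interp{\env'}$, where $\env' \defeq \env,\tbind{v}{\{\btyp^\tlabel | e_1\}}$: this uses that $w$ is a value of base type $\btyp$, that it terminates (discharging the $\tlabel$ marker), and that $\evalsto{\applysub{\theta}{e_1}\subst{v}{w}}{\etrue}$. Taking the lifted substitution $\theta^{\star\perp}$ and any $\beta$-model $\sigma^\beta\in\interp{\theta^{\star\perp}}$, I would invoke the embedding correctness (Corollary~\ref{corollary:embedding}, direction $\Rightarrow$) on each terminating binder of $\env'$ to conclude $\sigma^\beta\models\embed{\env'}$; in particular the binder $v$ contributes the embedding of $e_1$. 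The hypothesis $\aissubtype{\env}{\tref{v}{\btyp}{e_1}}{\tref{v}{\btyp}{e_2}}$ gives SMT validity of $\vcond{\env'}{\pred'}$, where $\pred'$ is the translation of $e_2$; instantiating this validity at $\sigma^\beta$ yields $\sigma^\beta\models\pred'$. Finally, applying Corollary~\ref{corollary:embedding} in the direction $\Leftarrow$ (legitimate because well-formedness forces $e_2$ to be a terminating, $\tbool$-typed expression under $\env'$) turns $\sigma^\beta\models\pred'$ back into $\evalsto{\applysub{\theta^{\star\perp}}{e_2}}{\etrue}$, which is exactly the goal $\evalsto{\applysub{\theta}{e_2}\subst{v}{w}}{\etrue}$.

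The main obstacle is the faithful crossing between the operational semantics (evaluation to $\etrue$) and the SMT semantics (satisfaction by models), which is precisely where an unsound approximation could leak. All the weight rests on Corollary~\ref{corollary:embedding} and its underlying Theorem~\ref{thm:approximation}, so I would have to be careful about two subtleties these results are designed to absorb. First, the lifted substitution $\theta^{\star\perp}$ replaces any diverging components by arbitrary sort-correct values, so I must ensure the refinement reasoning depends only on the terminating binders that actually appear in $\embed{\env'}$. Second, the $\beta$-model requirement is essential: since functions and applications are embedded by defunctionalization through the uninterpreted symbols $\smtlamname{}{}$ and $\smtappname{}{}$, the model must validate the $\beta$-equivalence axioms so that function-typed binders and applications on the logical side agree with operational (extensional) equality. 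A final routine point is checking that the environment embedding $\embed{\env'}$ genuinely conjoins the translation of $e_1$, so that the universally quantified VC validity can be instantiated at the particular model arising from $\theta^\star$.
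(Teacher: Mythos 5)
Your proposal is correct and follows essentially the same route as the paper's own proof: fix $\sub\in\interp{\env}$, dispense with the diverging case, assume $\evalsto{e}{w}$, pass to the lifted substitution and its $\beta$-models, use the embedding correspondence (Corollary~\ref{corollary:embedding}) in the forward direction to satisfy the embedded environment and the embedding of $e_1$, instantiate the SMT-valid VC supplied by \rsubbase, and use the correspondence in the reverse direction to recover $\evalsto{\applysub{\sub}{e_2\subst{v}{w}}}{\etrue}$. Your version is in fact slightly more explicit than the paper's (e.g., in constructing $\theta^\star$ and checking $\theta^\star\in\interp{\env'}$, and in flagging the $\beta$-model and termination side conditions), but it is the same argument.
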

\begin{proof}
By rule \rsubbase, we need to show that
$\forall \sub\in\interp{\env}.
  \interp{\applysub{\sub}{\tref{v}{\btyp}{\refa_1}}}
  \subseteq
  \interp{\applysub{\sub}{\tref{v}{\btyp}{\refa_2}}}$.
We fix a $\sub\in\interp{\env}$.
and get that forall bindings
$(\tbind{x_i}{\tref{v}{\btyp^{\downarrow}}{\refa_i}}) \in \env$,
$\evalsto{\applysub{\sub}{e_i\subst{v}{x_i}}}{\etrue}$.

Then need to show that for each $e$,
if $e \in \interp{\applysub{\sub}{\tref{v}{\btyp}{\refa_1}}}$,
then $e \in \interp{\applysub{\sub}{\tref{v}{\btyp}{\refa_2}}}$.

If $e$ diverges then the statement trivially holds.
Assume $\evalsto{e}{w}$.
We need to show that
if $\evalsto{\applysub{\sub}{e_1\subst{v}{w}}}{\etrue}$
then $\evalsto{\applysub{\sub}{e_2\subst{v}{w}}}{\etrue}$.

Let \vsub the lifted substitution that satisfies the above.
Then  by Lemma~\ref{corollary:embedding}
for each model $\bmodel \in \interp{\vsub}$,
$\bmodel\models\pred_i$, and $\bmodel\models q_1$
for
$\tologicshort{\env}{e_i\subst{v}{x_i}}{\btyp}{\pred_i}{\embed{\btyp}}{\smtenv_i}{\axioms_i}$
$\tologicshort{\env}{e_i\subst{v}{w}}{\btyp}{q_i}{\embed{\btyp}}{\smtenv_i}{\beta_i}$.
Since \aissubtype{\env}{\tref{v}{\btyp}{e_1}}{\tref{v}{\btyp}{e_2}} we get
$$
\bigwedge_i \pred_i
\Rightarrow q_1 \Rightarrow q_2
$$
thus $\bmodel\models q_2$.
By Theorem~\ref{thm:appendix:embedding} we get $\evalsto{\applysub{\sub}{\refa_2\subst{v}{w}}}{\etrue}$.
\end{proof}

\section{Reasoning About Lambdas}\label{sec:lambdas}

Encoding of $\lambda$-abstractions and applications via
uninterpreted functions, while sound, is \emph{imprecise}
as it makes it hard to prove theorems that require $\alpha$-
and $\beta$-equivalence or extensional equality.
Using the universally quantified $\alpha$- and
$\beta$- equivalence axioms would let the type checker
accept more programs, but would render validity, and
hence, type checking undecidable.
Next, we identify a middle ground by describing an
not provably complete, but sound and decidable approach to
increase the precision of type checking by
strengthening the VCs with instances of
the $\alpha$- and $\beta$- equivalence axioms
~\S~\ref{subsec:equivalences} and by introducing
a combinator for safely asserting extensional
equality~\S~\ref{subsec:extensionality}.
In the sequel, we omit $\smtappname{}{}$
when it is clear from the context.

\subsection{Equivalence}\label{subsec:equivalences}
As soundness relies on satisfiability under a
\bmodel  (see Definition~\ref{def:beta-model}),
we can safely \emph{instantiate} the axioms of
$\alpha$- and $\beta$-equivalence on any set of
terms of our choosing and still preserve soundness
(Theorem~\ref{thm:soundness-smt}).
That is, instead of checking the validity
of a VC
${p \Rightarrow q}$,
we check the validity of a \emph{strengthened VC},
${a \Rightarrow p \Rightarrow q}$,
where $a$ is a (finite) conjunction
of \emph{equivalence instances}
derived from $p$ and $q$,
as discussed below.

\mypara{Representation Invariant}
The lambda binders,
for each SMT sort, are drawn from a
pool of names $x_i$ where the index
$i=1,2,\ldots$.
When representing
$\lambda$ terms we enforce
a \emph{normalization invariant}
that for each lambda term
$\slam{x_i}{e}$, the index $i$
% $x_i$
is greater than any lambda
argument appearing in $e$.

\mypara{$\alpha$-instances}
For each syntactic term
${\slam{x_i}{e}}$ and $\lambda$-binder
$x_j$ such that $i < j$ appearing in the VC,
we generate an \emph{$\alpha$-equivalence instance predicate}
(or \emph{$\alpha$-instance}):
$$\slam{x_i}{e} = \slam{x_j}{e \subst{x_i}{x_j}}$$

% , x_i < x_j \leq \maxlamarg$$
%\leq \maxlamarg$
% that follow the ordering
% $i < j \Leftrightarrow x_i < x_j$.
%
% The number of distinct valid lambda
% arguments is determined by the
% maximum number of $\lambda$s
% syntactically in program's refinements.
% In the implementation, we bound
% this number by $\maxlamarg = 7$.
% %
% In practice we only encountered
% 4 nested lambdas when verifying
% monadic left identity of the
% reader monad.

The conjunction of $\alpha$-instances
can be more precise than De Bruijn
representation, as they let the SMT
solver deduce more equalities via
congruence.
For example, this VC is needed
to prove the applicative laws for "Reader":
$$
d = \slam{x_1}{(\sapp{x}{x_1})}
  \quad
  \Rightarrow
  \quad
  \slam{x_2}{(\sapp{(\slam{x_1}{(\sapp{x}{x_1})})}{x_2})}
  = \slam{x_1}{(\sapp{d}{x_1})}
$$
The $\alpha$ instance
${\slam{x_1}{(\sapp{d}{x_1})} = \slam{x_2}{(\sapp{d}{x_2})}}$
derived from the VC's hypothesis,
combined with congruence immediately
yields the VC's consequence.

\mypara{$\beta$-instances}
For each syntactic term $\smapp{(\slam{x}{e})}{e_x}$,
with $e_x$ not containing any $\lambda$-abstractions,
appearing in the VC, we generate a \emph{$\beta$-equivalence
instance predicate} (or \emph{$\beta$-instance}):
$$
\smapp{(\slam{x_i}{e})}{e_x} = \SUBST{e}{x_i}{e_x},
  \mbox{ s.t. } e_x \mbox{ is $\lambda$-free}
$$
The $\lambda$-free restriction is a simple way
to enforce that the reduced term ${\SUBST{e}{x_i}{e'}}$
enjoys the representation invariant.
%
%%\NV{This restriction is not implemented, but neither has a normalization function}
%%\RJ{Then why are we discussing it?}
%
For example, consider the following VC
needed to prove that the bind operator for
lists satisfies the monadic associativity law.
$$(\sapp{f}{x} \ebind g) = \smapp{(\slam{y}{(\sapp{f}{y} \ebind g)})}{x}$$
The right-hand side of the above VC generates
a $\beta$-instance that corresponds directly
to the equality, allowing the SMT solver to
prove the (strengthened) VC.

%% immediately yields
%% $\beta$-equivalence is required in our benchmarks to for example prove
%% monadic associativity for lists.
%% %
%% An intermediate step in the proof is to verify that
%%
%% Assuming that "h x = f x >>= g" the above
%% translates to the logical query
%% $$
%% \sapp{h}{x} = \sapp{(\slam{x_1}{(\sapp{h}{x_1})})}{x}
%% $$
%%
%% The right hand side of the query will fire a
%% $\beta$-equivalence instantiation
%% and the assumption
%% $
%% \sapp{(\slam{x_1}{(\sapp{h}{x_1})})}{x} = \sapp{h}{x}
%% $
%% will be added in the environment, allowing SMT to prove the equivalence.

\mypara{Normalization}
The combination of $\alpha$- and $\beta$-instances
is often required to discharge proof obligations.
For example, when proving that the bind operator
for the "Reader" monad is associative, we need
to prove the VC:
$$\slam{x_2}{(\slam{x_1}{w})} =
  \slam{x_3}{(\smapp{(\slam{x_2}{(\slam{x_1}{w})})}{w})}$$
The SMT solver proves the VC via the equalities
corresponding to an $\alpha$ and then $\beta$-instance:
$$
\slam{x_2}{(\slam{x_1}{w})}
  \ =_{\alpha}\  \slam{x_3}{(\slam{x_1}{w})} \\
  \ =_{\beta}\  \slam{x_3}{(\smapp{(\slam{x_2}{(\slam{x_1}{w})})}{w})}
$$
%%% of monadic
%%% associativity for the reader monad requires to prove a $\lambda$ equality
%%% simplified to
%%% %
%%% \begin{code}
  %%% w:a -> {(\x y -> w) = (\x -> (\z y -> w) w)}
%%% \end{code}
%%% %
%%% The proof for the above code is "trivial" as
%%% in the logic, the $\lambda$ arguments are renamed to
%%% "(\x2 x1 -> w) = (\x3 -> (\x2 x1 -> w) w)".
%%% Due to $\alpha$- equivalence the
%%% right hand side is equal to
%%% "\x3 x1 -> w".
%%% Due to $\beta$-equivalence we get
%%% "((\x2 x1 -> w) w) = \x1 -> w",
%%% by which and due to congruence axiom
%%% we get the desired equality,
%%% \begin{code}
  %%% \x y -> w               -- lhs
  %%% \x2 x1 -> w             -- representation
  %%% \x3 x1 -> w             -- alpha
  %%% \x3 ->((\x2 x1 -> w) w) -- beta
  %%% \x -> ((\z y -> w) w)   -- rhs
%%% \end{code}

\subsection{Extensionality} \label{subsec:extensionality}

Often, we need to prove that two
functions are equal, given the
definitions of reflected binders.
Consider
\begin{code}
    reflect id
    id x = x
\end{code}
\toolname accepts the proof that
"id x = x" for all "x":
\begin{code}
    id_x_eq_x :: x:a -> {id x = x}
    id_x_eq_x = \x -> #id# x =. x ** QED
\end{code}
as ``calling'' "id" unfolds its definition,
completing the proof.
However, consider this $\eta$-expanded variant of
the above proposition:
\begin{code}
    type Id_eq_id = {(\x -> id x) = (\y -> y)}
\end{code}
\toolname \emph{rejects} the proof:
\begin{code}
    fails :: Id_eq_id
    fails =  (\x -> #id# x) =. (\y -> y) ** QED
\end{code}
The invocation of "id" unfolds the
definition, but the resulting equality
refinement "{id x = x}" is \emph{trapped}
under the $\lambda$-abstraction.
That is, the equality is absent from the
typing environment at the \emph{top} level,
where the left-hand side term is compared to "\y -> y".
%
% NV LHS reads like LiquidHaskell
Note that the above equality requires
the definition of "id" and hence is
outside the scope of purely the
$\alpha$- and $\beta$-instances.

\mypara{An Exensionality Operator}
To allow function equality via
extensionality, we provide the
user with a (family of)
\emph{function comparison operator(s)}
that transform an \emph{explanation} "p"
which is a proof that "f x = g x" for every
argument "x", into a proof that "f = g".
\begin{mcode}
    =* :: f:(a -> b) -> g:(a -> b) -> exp:(x:a -> {f x = g x}) -> {f = g}
\end{mcode}
Of course, "=*" cannot be implemented;
its type is \emph{assumed}. We can use
"=*" to prove "Id_eq_id" by providing
a suitable explanation:
\begin{mcode}
    pf_id_id :: Id_eq_id
    pf_id_id = (\y -> y) =* (\x -> id x) $\because$ expl ** QED  where expl = (\x -> #id# x =. x ** QED)
\end{mcode}
The explanation is the second argument to $\because$ which has
the following type that syntactically fires $\beta$-instances:
\begin{code}
    x:a -> {(\x -> id x) x = ((\x -> x) x}
\end{code}

\section{Implementation} % \href{https://github.com/ucsd-progsys/liquidhaskell/tree/popl18}{Liquid Haskell}}
\label{appendix:impl}
Refinement reflection and \pbesym are implemented in
Liquid Haskell.
The implementation can be found in the
\href{https://git.io/vQ6tv}{Liquid Haskell GitHub repository},
all the benchmarks of
\S~\ref{sec:overview} and~\S~\ref{sec:evaluation}
are included in the
\href{https://git.io/vQ6tf}{nople}
and
\href{https://git.io/vQ6tJ}{ple}
test directories.
The benchmarks for deterministic parallelism can be found at
\href{https://git.io/vQ6Lz}{class-laws} and
\href{https://git.io/vQ6L2}{detpar-laws}.

%%%of \S~\ref{sec:set}
%%%are in the
%%%\href{https://git.io/vXMKX}{LVars repo},
%%%and the rest benchmarks of \S~\ref{sec:eval-parallelism}
%%%are in the
%%%\href{https://git.io/vXMKa}{verified-instances} repo.

%
Next, we describe the file
\href{https://git.io/vQ6tU}{ProofCombinators.hs},
the library of proof combinators used by our benchmarks
and discuss known limitations of our implementation.

\newcommand\libname{\ensuremath{\texttt{ProofCombinators}}\xspace}

\subsection{\libname: The Proof Combinators Library}
\label{subsec:library}

In this section we present \libname,
a Haskell library used to structure proof terms.
\libname is inspired by Equational Reasoning Data Types
in Adga~\citep{agdaequational}, providing operators to
construct proofs for equality and linear arithmetic in Haskell.
The constructed proofs are checked by an SMT-solver via Liquid Types.

\mypara{Proof terms} are defined in \libname as a type alias for unit,
a data type that curries no run-time information
\begin{code}
  type Proof = ()
\end{code}
Proof types are refined to express theorems about program functions.
For example, the following "Proof" type expresses that
"fib 2 == 1"
\begin{code}
  fib2 :: () -> {v:Proof | fib 2 == 1}
\end{code}
We simplify the above type by omitting the irrelevant
basic type "Proof" and variable "v"
\begin{code}
  fib2 :: () -> { fib 2 == 1 }
\end{code}

\libname provides primitives to construct proof terms
by casting expressions to proofs.
To resemble mathematical proofs,
we make this casting post-fix.
We write "p *** QED" to cast "p" to a proof term,
by defining two operators "QED" and "***" as
\begin{code}
  data QED = QED

  (***) :: a -> QED -> Proof
  _ *** _  = ()
\end{code}

\mypara{Proof construction.}
To construct proof terms, \libname
provides a proof constructor "op."
for logical operators of the theory of
linear arithmetic and equality:
$\{=, \not =, \leq, <, \geq, > \} \in \odot$.
"op. x y" ensures that $x \odot y$ holds, and returns "x"
\begin{code}
  op.:: x:a -> y:{a| x op y} -> {v:a| v==x}
  op. x _ = x

  -- for example
  ==.:: x:a -> y:{a| x==y} -> {v:a| v==x}
\end{code}
For instance, using "==."
we construct a proof, in terms of Haskell code,
that "fib 2 == 1":
\begin{code}
  fib2 _
    =   fib 2
    ==. fib 1 + fib 0
    ==. 1
    *** QED
\end{code} %$

\mypara{Reusing proofs: Proofs as optional arguments.}
Often, proofs require reusing existing proof terms.
For example, to prove "fib 3 == 2" we can reuse the above
"fib2" proof.
We extend the proof combinators, to receive
an \textit{optional} third argument of "Proof" type.
\begin{code}
  op.:: x:a -> y:a -> {x op y} -> {v:a|v==x}
  op. x _ _ = x
\end{code}
"op. x y p" returns "x" while the third argument "p"
explicitly proves $x \odot y$.

\mypara{Optional Arguments.}
The proof term argument is optional.
To implement optional arguments in Haskell we use the standard technique
where for each operator "op!" we define a type class "Optop"
that takes as input two expressions "a" and returns a result "r",
which will be instantiated with either the result value "r:=a"
or a function form a proof to the result "r:=Proof ->  a".
\begin{code}
  class Optop a r where
    (op.) :: a -> a -> r
\end{code}
When no explicit proof argument is required,
the result type is just an "y:a" that curries the proof "x op y"
\begin{code}
  instance Optop a a where
  (op.) :: x:a->y:{a| x op y}->{v:a | v==x }
  (op.) x _ = x
\end{code}
Note that Haskell's type inference~\citep{Sulzmann06}
requires both type class parameters "a" and "r" to be constrainted at class instance
matching time.
In most our examples, the result type parameter "r" is not constrained
at instance matching time, thus
due to the Open World Assumption
the matching instance could not be determined.
To address the above, we used another common Haskell trick,
of generalizing the instance to type arguments "a" and "b" and then
constraint "a" and "b" to be equal "a~b".
This generalization allows the instance to always match and
imposed the equality constraint after matching.
\begin{code}
  instance (a~b)=>Optop a b where
  (op.) :: x:a->y:{x op y}->{v:b | v==x }
  (op.) x _ = x
\end{code}

To explicitly provide a proof argument,
the result type "r" is instantiated to "r:= Proof -> a".
For the same instance matching restrictions as above,
the type is further generalized to return some "b"
that is constraint to be equal to "a".
\begin{code}
  instance (a~b)=>Optop a (Proof->b) where
  (op.) :: x:a->y:a->{x op y}->{v:b | v==x }
  (op.) x _ _ = x
\end{code}
As a concrete example, we define the equality operator "==."
via the type class "OptEq" as
\begin{code}
  class OptEq a r where
   (==.):: a -> a -> r

  instance (a~b)=>OptEq a b where
   (==.)::x:a->y:{a|x==y}->{v:b|v==x}
   (==.) x _ = x

  instance (a~b)=>OptEq a (Proof->b) where
   (==.)::x:a->y:a->{x==y}->{v:b|v==x}
   (==.) x _ _ = x
\end{code}

\mypara{Explanation Operator.}
The ``explanation operator'' "(?)", or "($\because$)",
is used to better structure the proofs.
"(?)" is an infix operator with same fixity as "(op.)"
that allows for the equivalence
" x op. y ? p == (op.) x y p"
\begin{code}
  (?) :: (Proof -> a) -> Proof -> a
  f ? y = f y
\end{code}

\mypara{Putting it all together}
Using the above operators,
we prove that "fib 3 == 2",
reusing the previous proof of "fib 2 == 1",
in a Haskell term that resembles mathematical proofs
\begin{code}
  fib3 :: () ->  {fib 3 == 2}
  fib3 _
    =   fib 3
    ==. fib 2 + fib 1
    ==. 2             ? fib2 ()
    *** QED
\end{code}

\mypara{Unverified Operators}
All operators in \libname, but two are implemented in Haskell
with implementations verified by Liquid Haskell.
The ''unsound`` operators are the assume
(1). "(==?)" that eases proof construction by assuming equalities, to be proven later
and (2). "(=*)" extentional proof equality.

\mypara{Assume Operator} "(==?)" eases proof construction by
assuming equalities while the proof is in process.
It is not implemented in that its body is "undefined".
Thus, if we run proof terms including assume operator, the proof will merely crash
(instead of returning "()").
Proofs including the assume operator are not considered complete,
as via assume operator any statement can be proven,

\mypara{Function Extensional Equality}
Unlike the assume operator that is undefined and
included in unfinished thus unsound proofs,
the functions extensionality is included in valid proofs
that assume function extensionality, an axioms that is assumed,
as it cannot be proven by our logic.

To allow function equality via extensionality,
we provide the user with a function comparison operator
that for each function "f" and "g" it transforms a proof
that for every argument "x", "f x = g x" to a proof
on function equality "f = g".
\begin{code}
(=*) :: Arg a => f:(a -> b) -> g:(a -> b)
     -> p:(x:a -> {f x = g x})
     -> {f = g}
\end{code}
The function "(=*)" is not implemented in the library:
it returns () and its type is assumed.
But soundness of its usage requires the argument type variable "a"
to be constrained by a type class constraint "Arg a",
for both operational and type theoretic reasons.

From \textit{operational} point of view,
an implementation of "(=*)" would require checking
equality of "f x = g x" \textit{forall} arguments "x" of type "a".
This equality would hold due to the proof argument "p".
The only missing point is a way to enumerate all the argument "a",
but this could be provided by a method of the type clas "Arg a".
Yet, we have not implement "(=*)" because we do not know how to
provide such an implementation that can provably satisfy "(=*)"'s type.

From \textit{type theoretic} point of view,
the type variable argument "a"
appears only on negative positions.
Liquid type inference is smart enough to infer that
since "a" appears only negative "(=*)" cannot use any "a"
and thus will not call any of its argument arguments "f", "g", nor the "p".
Thus, at each call site of "(=*)" the type variable `a` is instantiated
with the refinement type "{v:a | false}" indicating dead-code
(since "a"s will not be used by the callee.)
Refining the argument "x:a" with false at each call-site though
leads to unsoundness, as each proof argument "p" is a valid proof under
the false assumption.
What Liquid inference cannot predict is our intention to call
"f", "g" and "p" at \textit{every possible argument}.
This information is capture by the type class constraint "Arg a"
that (as discussed before~\citep{Vazou13}) states that methods of
the type class
"Arg a" may create values of type "a", thus,
due to lack of information on the values that are created by the
methods of "Arg a", "a" can only be refined with "True".

With extensional equality, we can prove
that "\x -> x" is equal to "\x -> id x",
by providing an explicit explanation that
if we call both these functions with the same
argument "x", they return the same result, for each "x".
\begin{mcode}
 safe :: Arg a => a
       -> {(\x -> id x) = (\x -> x)}
 safe _  = (\x -> x)
         =*(\x -> id x) $\because$ (exp ())

 exp :: Arg a => a -> x:a
     -> {(\x -> id x) x = (\x -> x) x}
 exp _ x =  id x
         ==. x
         *** QED
\end{mcode}
Note that the result of "exp"
is an equality of the redexes
"(\x -> id x) x" and "((\x -> x) x".
Extentional function equality requires
as argument an equality on such redexes.
Via $\beta$ equality instantiations,
both such redexes will automatically reduce,
requiring "exp" to prove "id x = x",
with is direct.

Admittedly, proving function equality via extensionality
is requires a cumbersome indirect proof.
For each function equality in the main proof
one needs to define an explanation function
that proves the equality for every argument.

\subsection{Engineering Limitations}
The theory of refinement reflection is fully implemented in Liquid Haskell.
Yet, to make this extension \textit{usable} in \textit{real world applications}
there are four known engineering limitations that need to be addressed.
All these limitations seem straightforward to address
and we plan to fix them soon.

\mypara{The language of refinements}
is typed lambda calculus. That is the types of the lambda arguments are
explicitly specified
instead of being inferred.
As another minor limitation, the refinement language parser
requires the argument to be enclosed in parenthesis
in applications where the function is not a variable.
Thus the Haskell expression
"(\x -> x) e" should be written as "(\x:a -> x) (e)"
in the refinement logic,

\mypara{Class instances methods} can not be reflected.
Instead, the methods we want to use in the theorems/propositions
should be defined as Haskell functions.
This restriction has two major implications.
Firstly, we can not verify correctness of library provided
instances but we need to redifine them ourselves.
Secondly, we cannot really verify class instances with class preconditions.
For example, during verification of monoid associativity of the Maybe instance
\begin{code}
  instance (Monoid a) => Monoid (Maybe a)
\end{code}
there is this "Monoid a" class constraint assumption we needed to raise to
proceed verification.

\mypara{Only user defined data types}
can currently used in verification.
The reason for this limitation is that
reflection of case expressions
requires checker and projector measures for each
data type used in reflected functions.
Thus, not only should these data types be defined in
the verified module, but also should be
be injected in the logic by providing a refined version of
the definition that can (or may not) be trivially refined.

For example, to reflect a function that uses
"Peano" numbers, the Haskell \textit{and} the refined "Peano"
definitions should be provided
\begin{code}
data Peano = Z | S Peano

{-@ data Peano [toInt]
     = Z
     | S {prev :: Peano}
  @-}
\end{code}
Note that the termination function "toInt"
that maps "Peano" numbers to natural numbers
is also crucial for soundness of reflection.

\mypara{There is no module support.}
All reflected definitions,
including, measures (automatically generated checkers and selector,
but also the classic lifted Haskell functions to measures)
and the reflected types of the reflected functions,
are not exposed outside of the module they are defined.
Thus all definitions and propositions should exist in the same module.

\section{Verified Deterministic Parallelism}\label{sec:eval-parallelism}
%=============================================================================

%% \RN{Remember: Figure % \ref{fig:eval-summary} and
%%   \ref{fig:laws} needs to be updated.}

Finally, we evaluate our deterministic
parallelism prototypes.  Aside from the lines of proof code added,
we evaluate the impact on runtime performance.
Were we using a proof tool external to Haskell, this would not be necessary.
But our proofs are Haskell programs---they are necessarily visible to the
compiler.  In particular, this means a proliferation of unit values and functions
returning unit values.  Also, typeclass instances are witnessed at runtime by
``dictionary'' data structures passed between functions.  Layering proof methods
on top of existing classes like "Ord"
could potentially add indirection or change the
code generated, depending on the details of the optimizer.
%
% Whole program compiler would be especially well suited
%
%% We do some evaluation to claim that verification has little to no effect on
%% runtime performance.
In our experiments we find little or no effect on runtime performance.
Benchmarks were run on a single-socket Intel{\textregistered}
Xeon{\textregistered} CPU E5-2699 v3 with 18 physical cores and 64GiB RAM.

\subsection{LVish: Concurrent Sets}
\label{sec:set}

% \RGS{What are PureSet and SLSet here? Needs context.}
First, we use the "verifiedInsert" operation
to observe the runtime slowdown imposed by the extra proof methods
of "VerifiedOrd".
We benchmark concurrent sets storing 64-bit integers.
Figure~\ref{fig:set} compares the parallel speedups
for a fixed number of parallel "insert" operations
against parallel "verifiedInsert" operations, varying the number of concurrent
threads.
There is a slight observable difference between the two
lines because the extra proof methods do exist at runtime.
We repeat the experiment for two set implementations: a concurrent skiplist
(SLSet) and a purely functional set inside an atomic reference (PureSet)
as described in~\citet{kuper2014freeze}.

% \NV{What is HasPut? ISet, Par}
% \NV{Add reference to Fig 10 and say with words what you conclude from this fig. ie
% how proof obligations affect speedups}
% \NV{Other than speed ups we should comment on proof burden that is added when we write verifiedInsert
% instead of insert.
% Maybe how many lines of code verified Insert and insert are?}

\begin{figure}
  \begin{center}
    \includegraphics[width=3.7in]{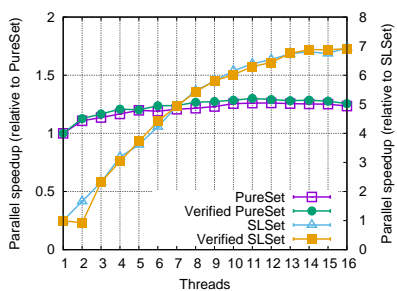}
  \end{center}
  \caption{Parallel speedup for doing 1 million parallel inserts over 10
    iterations, verified and unverified, relative to the unverified version,
    for PureSet and SLSet.}
  \label{fig:set}
\end{figure}

\subsection{\texttt{monad-par}: $n$-body simulation}
\label{sec:nbody}
Next, we verify deterministic behavior of an
$n$-body simulation program that leverages "monad-par", a Haskell library which
provides deterministic parallelism for pure code \cite{monad-par}.

Each simulated particle is represented by a
type "Body" that stores its position, velocity, and mass.
The function "accel" computes the relative acceleration
between two bodies:
\begin{mcode}
  accel :: Body -> Body -> Accel
\end{mcode}
where "Accel" represents the three-dimensional acceleration
\begin{mcode}
  data Accel = $\texttt{Accel}$ Real Real Real
\end{mcode}
To compute the total acceleration
of a body "b" we
(1) compute the relative acceleration between "b"
and each body of the system ("Vec Body") and
(2) we add each acceleration component.
For efficiency, we use a parallel "mapReduce" for the above
computation that
first \textit{maps} each vector body to get the acceleration relative to "b"
("accel b") and then adds each "Accel" value by pointwise addition.
"mapReduce" is only deterministic if the element is a "VerifiedMonoid".
%%
%%To compute the velocity of each body in the simulation
%%we map reduce
%%
%%
%%An $n$-body simulation is used to model the behavior of particles in some system
%%in which forces are acting on the particles. The code which we verify is adapted
%%from an implementation of the all-pairs variant (a traditional, brute-force
%%approach) to $n$-body simulation from \cite{parallel-n-body}. The datatype which
%%we are verifying represents a particle:
%%
%%
\begin{mcode}
  mapReduce :: VerifiedMonoid b => (a -> b) -> Vec a -> b
\end{mcode}
To enforce the determinism of an $n$-body simulation, we need to provide a
"VerifiedMonoid" instance for "Accel".
We can prove that ("Real", "+", "0.0")
is a monoid.
By product proof composition, we get a verified monoid instance for
\begin{mcode}
  type Accel' = (Real, (Real, Real))
\end{mcode}
which is isomorphic to "Accel" (\ie "Iso Accel' Accel").

Figure~\ref{fig:nbody} shows the results of running two versions of the $n$-body
simulation with 2,048 bodies over 5 iterations, with and without verification,
using floating point doubles for \texttt{Real}\footnote{Floating point numbers
  notoriously violate associativity, but we use this approximation because
  Haskell does net yet have an implementation of {\em
    superaccumulators}~\cite{superaccumulation}.}. Notably, the two programs have almost
identical runtime performance.  This demonstrates that even when verifying code
that is run in a tight loop (like "accel"), we can expect that our programs
will not be slowed down by an unacceptable amount.

\begin{figure}
  \begin{center}
    \includegraphics[width=3.7in]{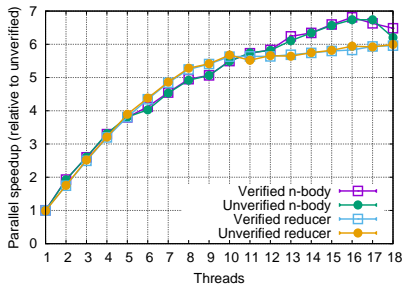}
  \end{center}
  \caption{Parallel speedup for doing a parallel $n$-body simulation and
      parallel array reduction. The speedup is relative to the
    unverified version of each respective class of program.}
  \label{fig:nbody}
\end{figure}

\subsection{DPJ: Parallel Reducers}
\label{sec:reducer}
%% Our last benchmark illustrates how one can enforce various deterministic properties
%% that are different from the standard class properties (\ie monoid associativity).
%% For example,
The Deterministic Parallel Java (DPJ) project provides
a deterministic-by-default semantics for the Java programming language~\cite{DPJ}.
In DPJ, one can declare a method as "commutative" and thus {\em assert} that
racing instances of that method result in a deterministic outcome.
For example:
%
% private region AccumRgn, static int sum in AccumRgn;
%% private static commutative synchronized
%%  void updateSum(int n) writes AccumRgn { sum += n; }
\begin{code}
  commutative void updateSum(int n) writes R { sum += n; }
\end{code}
But, DPJ provides no means to formally prove commutativity and thus
determinism of parallel reduction.
In Liquid Haskell, we specified commutativity as an extra proof method
that extends the "VerifiedMonoid" class.
\begin{mcode}
  class VerifiedMonoid a => VerifiedCommutativeMonoid a where
    commutes :: x:a -> y:a -> { x <> y = y <> x }
\end{mcode}
Provably commutative appends can be used to deterministically
update a reducer variable, since the result is the same regardless
of the order of appends.
We used LVish~\cite{kuper2014freeze} to
encode a reducer variable with a value "a" and a region "s"
as "RVar s a".
\begin{mcode}
  newtype RVar s a
\end{mcode}
We specify that safe (\ie deterministic)
parallel updates require provably commutative appending.
\begin{mcode}
  updateRVar :: VerifiedCommutativeMonoid a => a -> RVar s a -> Par s ()
\end{mcode}
Following the DPJ program, we used "updateRVar"'s provably deterministic interface
to compute, in parallel, the sum of an array with $3$x$10^9$ elements by updating
a single, global reduction variable using a varying number of threads.
Each thread sums segments of an array, sequentially, and updates the variable
with these partial sums.
In Figure \ref{fig:nbody}, we compare the verified and unverified versions
of our implementation to observe no appreciable
difference in performance.

\end{document}